\newif\iflncs \lncstrue
\newif\iffull \fulltrue
\newif\iforcid \orcidfalse
\newif\ifrestate \restatetrue
\newcommand{\ottdrule}[4][]{{\displaystyle\frac{\begin{array}{l}#2\end{array}}{#3}\quad\ottdrulename{#4}}}
\newcommand{\ottpremise}[1]{ #1 \\}
\newenvironment{ottdefnblock}[3][]{ \framebox{\mbox{#2}} \quad #3 \\[0pt]}{}
\newcommand{\ottnt}[1]{\mathit{#1}}
\newcommand{\ottmv}[1]{\mathit{#1}}
\newcommand{\ottsym}[1]{#1}
\newcommand{\ottdrulename}[1]{\textsc{#1}}
\DeclareSymbolFont{symbols2}{LS1}{stixfrak}{m}{n}
\DeclareMathSymbol{\typecolon}{\mathbin}{symbols2}{"25} 
\newif\ifvector \vectorfalse
\newcommand{\algeffseqover}[1]{\ifvector\overrightarrow{#1}\else\bm{#1}\fi}
\newcommand{\algefftransarrow}[1]{ \vartriangleright^{#1} }
\newcommand{\algeffseqoverindex}[2]{ \algeffseqover{#1}^{#2} }
\newcommand{\ottdruleElabXXVar}[1]{\ottdrule[#1]{%
\ottpremise{ \vdash  \Gamma  \quad   \mathit{x} \,  \mathord{:}  \,  \text{\unboldmath$\forall$}  \,  \algeffseqover{ \alpha }   \ottsym{.}  \ottnt{A} \,  \in  \, \Gamma  \quad  \Gamma  \vdash   \algeffseqover{ \ottnt{B} }   }%
}{
 \Gamma ;  \ottnt{R}   \vdash   \mathit{x}  :   \ottnt{A}    [   \algeffseqover{ \ottnt{B} }   \ottsym{/}   \algeffseqover{ \alpha }   ]    \,  |   \, \epsilon  \mathrel{\algefftransarrow{ \ottnt{S} } }  \ottnt{S}  \ottsym{(}  \mathit{x}  \ottsym{)} \,  \algeffseqover{ \ottnt{B} }  }{%
{\ottdrulename{Elab\_Var}}{}%
}}
\newcommand{\ottdruleElabXXConst}[1]{\ottdrule[#1]{%
\ottpremise{\vdash  \Gamma}%
}{
 \Gamma ;  \ottnt{R}   \vdash   \ottnt{c}  :   \mathit{ty}  (  \ottnt{c}  )   \,  |   \, \epsilon  \mathrel{\algefftransarrow{ \ottnt{S} } }  \ottnt{c} }{%
{\ottdrulename{Elab\_Const}}{}%
}}
\newcommand{\ottdruleElabXXAbs}[1]{\ottdrule[#1]{%
\ottpremise{ \Gamma  \ottsym{,}  \mathit{x} \,  \mathord{:}  \, \ottnt{A} ;  \ottnt{R}   \vdash   \ottnt{M}  :  \ottnt{B}  \,  |   \, \epsilon'  \mathrel{\algefftransarrow{  \ottnt{S}  \,\circ\, \{  \mathit{x}  \, {\mapsto} \,  \mathit{x}  \}  } }  \ottnt{e} }%
}{
 \Gamma ;  \ottnt{R}   \vdash    \lambda\!  \, \mathit{x}  \ottsym{.}  \ottnt{M}  :   \ottnt{A}   \rightarrow  \!  \epsilon'  \;  \ottnt{B}   \,  |   \, \epsilon  \mathrel{\algefftransarrow{ \ottnt{S} } }   \lambda\!  \, \mathit{x}  \ottsym{.}  \ottnt{e} }{%
{\ottdrulename{Elab\_Abs}}{}%
}}
\newcommand{\ottdruleElabXXApp}[1]{\ottdrule[#1]{%
\ottpremise{  \Gamma ;  \ottnt{R}   \vdash   \ottnt{M_{{\mathrm{1}}}}  :   \ottnt{A}   \rightarrow  \!  \epsilon'  \;  \ottnt{B}   \,  |   \, \epsilon  \mathrel{\algefftransarrow{ \ottnt{S} } }  \ottnt{e_{{\mathrm{1}}}}   \quad    \Gamma ;  \ottnt{R}   \vdash   \ottnt{M_{{\mathrm{2}}}}  :  \ottnt{A}  \,  |   \, \epsilon  \mathrel{\algefftransarrow{ \ottnt{S} } }  \ottnt{e_{{\mathrm{2}}}}   \quad  \epsilon' \,  \subseteq  \, \epsilon  }%
}{
 \Gamma ;  \ottnt{R}   \vdash   \ottnt{M_{{\mathrm{1}}}} \, \ottnt{M_{{\mathrm{2}}}}  :  \ottnt{B}  \,  |   \, \epsilon  \mathrel{\algefftransarrow{ \ottnt{S} } }  \ottnt{e_{{\mathrm{1}}}} \, \ottnt{e_{{\mathrm{2}}}} }{%
{\ottdrulename{Elab\_App}}{}%
}}
\newcommand{\ottdruleElabXXOp}[1]{\ottdrule[#1]{%
\ottpremise{ \mathit{ty} \, \ottsym{(}  \mathsf{op}  \ottsym{)} \,  =  \,   \text{\unboldmath$\forall$}     \algeffseqover{ \alpha }   .  \ottnt{A}  \hookrightarrow  \ottnt{B}   \quad   \mathsf{op} \,  \in  \, \epsilon  \quad    \Gamma ;  \ottnt{R}   \vdash   \ottnt{M}  :   \ottnt{A}    [   \algeffseqover{ \ottnt{C} }   \ottsym{/}   \algeffseqover{ \alpha }   ]    \,  |   \, \epsilon  \mathrel{\algefftransarrow{ \ottnt{S} } }  \ottnt{e}   \quad  \Gamma  \vdash   \algeffseqover{ \ottnt{C} }    }%
}{
 \Gamma ;  \ottnt{R}   \vdash    \textup{\texttt{\#}\relax}  \mathsf{op}   \ottsym{(}   \ottnt{M}   \ottsym{)}   :   \ottnt{B}    [   \algeffseqover{ \ottnt{C} }   \ottsym{/}   \algeffseqover{ \alpha }   ]    \,  |   \, \epsilon  \mathrel{\algefftransarrow{ \ottnt{S} } }   \textup{\texttt{\#}\relax}  \mathsf{op}   \ottsym{(}    \algeffseqover{ \ottnt{C} }    \ottsym{,}   \ottnt{e}   \ottsym{)}  }{%
{\ottdrulename{Elab\_Op}}{}%
}}
\newcommand{\ottdruleElabXXHandle}[1]{\ottdrule[#1]{%
\ottpremise{  \Gamma ;  \ottnt{R}   \vdash   \ottnt{M}  :  \ottnt{A}  \,  |   \, \epsilon  \mathrel{\algefftransarrow{ \ottnt{S} } }  \ottnt{e}   \quad   \Gamma ;  \ottnt{R}   \vdash   \ottnt{H}  :  \ottnt{A}  \,  |   \, \epsilon  \, \Rightarrow   \ottnt{B}  \,  |   \, \epsilon'  \mathrel{\algefftransarrow{ \ottnt{S} } }  \ottnt{h}  }%
}{
 \Gamma ;  \ottnt{R}   \vdash   \mathsf{handle} \, \ottnt{M} \, \mathsf{with} \, \ottnt{H}  :  \ottnt{B}  \,  |   \, \epsilon'  \mathrel{\algefftransarrow{ \ottnt{S} } }  \mathsf{handle} \, \ottnt{e} \, \mathsf{with} \, \ottnt{h} }{%
{\ottdrulename{Elab\_Handle}}{}%
}}
\newcommand{\ottdruleElabXXLet}[1]{\ottdrule[#1]{%
\ottpremise{  \Gamma  \ottsym{,}   \algeffseqover{ \alpha }  ;  \ottnt{R}   \vdash   \ottnt{M_{{\mathrm{1}}}}  :  \ottnt{A}  \,  |   \, \epsilon  \mathrel{\algefftransarrow{ \ottnt{S} } }  \ottnt{e_{{\mathrm{1}}}}   \quad  \ottnt{S'} \,  =  \,  \ottnt{S}  \,\circ\, \{  \mathit{x}  \, {\mapsto} \,  \mathit{x}  \}  }%
\ottpremise{ \Gamma  \ottsym{,}  \mathit{x} \,  \mathord{:}  \,  \text{\unboldmath$\forall$}  \,  \algeffseqover{ \alpha }   \ottsym{.}  \ottnt{A} ;  \ottnt{R}   \vdash   \ottnt{M_{{\mathrm{2}}}}  :  \ottnt{B}  \,  |   \, \epsilon  \mathrel{\algefftransarrow{ \ottnt{S'} } }  \ottnt{e_{{\mathrm{2}}}} }%
}{
 \Gamma ;  \ottnt{R}   \vdash   \mathsf{let} \, \mathit{x}  \ottsym{=}  \ottnt{M_{{\mathrm{1}}}} \,  \mathsf{in}  \, \ottnt{M_{{\mathrm{2}}}}  :  \ottnt{B}  \,  |   \, \epsilon  \mathrel{\algefftransarrow{ \ottnt{S} } }  \mathsf{let} \, \mathit{x}  \ottsym{=}   \Lambda\!  \,  \algeffseqover{ \alpha }   \ottsym{.}  \ottnt{e_{{\mathrm{1}}}} \,  \mathsf{in}  \, \ottnt{e_{{\mathrm{2}}}} }{%
{\ottdrulename{Elab\_Let}}{}%
}}
\newcommand{\ottdruleElabXXResume}[1]{\ottdrule[#1]{%
\ottpremise{ \ottnt{R} \,  =  \, \ottsym{(}   \algeffseqover{ \alpha }   \ottsym{,}  \mathit{x} \,  \mathord{:}  \, \ottnt{A}  \ottsym{,}   \ottnt{B}   \rightarrow  \!  \epsilon  \;  \ottnt{C}   \ottsym{)}  \quad   \vdash  \Gamma_{{\mathrm{1}}}  \ottsym{,}  \mathit{x} \,  \mathord{:}  \, \ottnt{D}  \ottsym{,}  \Gamma_{{\mathrm{2}}}  \quad    \algeffseqover{ \alpha }  \,  \in  \, \Gamma_{{\mathrm{1}}}  \quad  \epsilon \,  \subseteq  \, \epsilon'   }%
\ottpremise{  \mathit{y}  \text{ is fresh}   \quad  \ottnt{S'} \,  =  \,  \ottnt{S}  \,\circ\, \{  \mathit{x}  \, {\mapsto} \,  \mathit{y}  \}  }%
\ottpremise{ \Gamma_{{\mathrm{1}}}  \ottsym{,}  \Gamma_{{\mathrm{2}}}  \ottsym{,}    \algeffseqover{ \beta }   \ottsym{,}   \mathit{x} \,  \mathord{:}  \,    \ottnt{A}    [   \algeffseqover{ \beta }   \ottsym{/}   \algeffseqover{ \alpha }   ]       ;  \ottnt{R}   \vdash   \ottnt{M}  :   \ottnt{B}    [   \algeffseqover{ \beta }   \ottsym{/}   \algeffseqover{ \alpha }   ]    \,  |   \, \epsilon'  \mathrel{\algefftransarrow{ \ottnt{S'} } }  \ottnt{e} }%
}{
 \Gamma_{{\mathrm{1}}}  \ottsym{,}  \mathit{x} \,  \mathord{:}  \, \ottnt{D}  \ottsym{,}  \Gamma_{{\mathrm{2}}} ;  \ottnt{R}   \vdash   \mathsf{resume} \, \ottnt{M}  :  \ottnt{C}  \,  |   \, \epsilon'  \mathrel{\algefftransarrow{ \ottnt{S} } }  \mathsf{resume} \,  \algeffseqover{ \beta }  \, \mathit{y}  \ottsym{.}  \ottnt{e} }{%
{\ottdrulename{Elab\_Resume}}{}%
}}
\newcommand{\ottdruleElabXXWeak}[1]{\ottdrule[#1]{%
\ottpremise{  \Gamma ;  \ottnt{R}   \vdash   \ottnt{M}  :  \ottnt{A}  \,  |   \, \epsilon'  \mathrel{\algefftransarrow{ \ottnt{S} } }  \ottnt{e}   \quad  \epsilon' \,  \subseteq  \, \epsilon }%
}{
 \Gamma ;  \ottnt{R}   \vdash   \ottnt{M}  :  \ottnt{A}  \,  |   \, \epsilon  \mathrel{\algefftransarrow{ \ottnt{S} } }  \ottnt{e} }{%
{\ottdrulename{Elab\_Weak}}{}%
}}
\newcommand{\ottdruleElabHXXReturn}[1]{\ottdrule[#1]{%
\ottpremise{  \Gamma  \ottsym{,}  \mathit{x} \,  \mathord{:}  \, \ottnt{A} ;  \ottnt{R}   \vdash   \ottnt{M}  :  \ottnt{B}  \,  |   \, \epsilon'  \mathrel{\algefftransarrow{  \ottnt{S}  \,\circ\, \{  \mathit{x}  \, {\mapsto} \,  \mathit{x}  \}  } }  \ottnt{e}   \quad  \epsilon \,  \subseteq  \, \epsilon' }%
}{
 \Gamma ;  \ottnt{R}   \vdash   \mathsf{return} \, \mathit{x}  \rightarrow  \ottnt{M}  :  \ottnt{A}  \,  |   \, \epsilon  \, \Rightarrow   \ottnt{B}  \,  |   \, \epsilon'  \mathrel{\algefftransarrow{ \ottnt{S} } }  \mathsf{return} \, \mathit{x}  \rightarrow  \ottnt{e} }{%
{\ottdrulename{ElabH\_Return}}{}%
}}
\newcommand{\ottdruleElabHXXOp}[1]{\ottdrule[#1]{%
\ottpremise{ \mathit{ty} \, \ottsym{(}  \mathsf{op}  \ottsym{)} \,  =  \,   \text{\unboldmath$\forall$}     \algeffseqover{ \alpha }   .  \ottnt{C}  \hookrightarrow  \ottnt{D}   \quad   \Gamma ;  \ottnt{R}   \vdash   \ottnt{H}  :  \ottnt{A}  \,  |   \, \epsilon  \, \Rightarrow   \ottnt{B}  \,  |   \, \epsilon'  \mathrel{\algefftransarrow{ \ottnt{S} } }  \ottnt{h}  }%
\ottpremise{ \Gamma  \ottsym{,}   \algeffseqover{ \alpha }   \ottsym{,}  \mathit{x} \,  \mathord{:}  \, \ottnt{C} ;  \ottsym{(}   \algeffseqover{ \alpha }   \ottsym{,}  \mathit{x} \,  \mathord{:}  \, \ottnt{C}  \ottsym{,}   \ottnt{D}   \rightarrow  \!  \epsilon'  \;  \ottnt{B}   \ottsym{)}   \vdash   \ottnt{M}  :  \ottnt{B}  \,  |   \, \epsilon'  \mathrel{\algefftransarrow{  \ottnt{S}  \,\circ\, \{  \mathit{x}  \, {\mapsto} \,  \mathit{x}  \}  } }  \ottnt{e} }%
}{
 \Gamma ;  \ottnt{R}   \vdash   \ottnt{H}  \ottsym{;}  \mathsf{op}  \ottsym{(}  \mathit{x}  \ottsym{)}  \rightarrow  \ottnt{M}  :  \ottnt{A}  \,  |   \,  \epsilon    \mathbin{\uplus}    \ottsym{\{}  \mathsf{op}  \ottsym{\}}   \, \Rightarrow   \ottnt{B}  \,  |   \, \epsilon'  \mathrel{\algefftransarrow{ \ottnt{S} } }  \ottnt{h}  \ottsym{;}   \Lambda\!  \,  \algeffseqover{ \alpha }   \ottsym{.}  \mathsf{op}  \ottsym{(}  \mathit{x}  \ottsym{)}  \rightarrow  \ottnt{e} }{%
{\ottdrulename{ElabH\_Op}}{}%
}}
\newcommand{\ottdruleElabGXXEmpty}[1]{\ottdrule[#1]{%
}{
  \emptyset   \mathrel{\algefftransarrow{ \ottnt{S} } }   \emptyset  }{%
{\ottdrulename{ElabG\_Empty}}{}%
}}
\newcommand{\ottdruleElabGXXVar}[1]{\ottdrule[#1]{%
\ottpremise{ \Gamma  \mathrel{\algefftransarrow{ \ottnt{S} } }  \Gamma' }%
}{
 \Gamma  \ottsym{,}  \mathit{x} \,  \mathord{:}  \, \sigma  \mathrel{\algefftransarrow{ \ottnt{S} } }  \Gamma'  \ottsym{,}  \ottnt{S}  \ottsym{(}  \mathit{x}  \ottsym{)} \,  \mathord{:}  \, \sigma }{%
{\ottdrulename{ElabG\_Var}}{}%
}}
\newcommand{\ottdruleElabGXXTyVar}[1]{\ottdrule[#1]{%
\ottpremise{ \Gamma  \mathrel{\algefftransarrow{ \ottnt{S} } }  \Gamma' }%
}{
 \Gamma  \ottsym{,}  \alpha  \mathrel{\algefftransarrow{ \ottnt{S} } }  \Gamma'  \ottsym{,}  \alpha }{%
{\ottdrulename{ElabG\_TyVar}}{}%
}}
\newcommand{\ottdruleEXXEval}[1]{\ottdrule[#1]{%
\ottpremise{\ottnt{e_{{\mathrm{1}}}}  \rightsquigarrow  \ottnt{e_{{\mathrm{2}}}}}%
}{
 \ottnt{E}  [  \ottnt{e_{{\mathrm{1}}}}  ]   \longrightarrow   \ottnt{E}  [  \ottnt{e_{{\mathrm{2}}}}  ] }{%
{\ottdrulename{E\_Eval}}{}%
}}
\newcommand{\ottdruleWFXXEmpty}[1]{\ottdrule[#1]{%
}{
\vdash   \emptyset }{%
{\ottdrulename{WF\_Empty}}{}%
}}
\newcommand{\ottdruleWFXXVar}[1]{\ottdrule[#1]{%
\ottpremise{ \vdash  \Gamma  \quad   \mathit{x} \,  \not\in  \,  \mathit{dom}  (  \Gamma  )   \quad  \Gamma  \vdash  \sigma  }%
}{
\vdash  \Gamma  \ottsym{,}  \mathit{x} \,  \mathord{:}  \, \sigma}{%
{\ottdrulename{WF\_Var}}{}%
}}
\newcommand{\ottdruleWFXXTyVar}[1]{\ottdrule[#1]{%
\ottpremise{ \vdash  \Gamma  \quad  \alpha \,  \not\in  \,  \mathit{dom}  (  \Gamma  )  }%
}{
\vdash  \Gamma  \ottsym{,}  \alpha}{%
{\ottdrulename{WF\_TyVar}}{}%
}}
\newcommand{\ottdruleTSXXVar}[1]{\ottdrule[#1]{%
\ottpremise{ \vdash  \Gamma  \quad   \mathit{x} \,  \mathord{:}  \,  \text{\unboldmath$\forall$}  \,  \algeffseqover{ \alpha }   \ottsym{.}  \ottnt{A} \,  \in  \, \Gamma  \quad  \Gamma  \vdash   \algeffseqover{ \ottnt{B} }   }%
}{
\Gamma  \ottsym{;}  \ottnt{R}  \vdash  \mathit{x}  \ottsym{:}   \ottnt{A}    [   \algeffseqover{ \ottnt{B} }   \ottsym{/}   \algeffseqover{ \alpha }   ]   \,  |  \, \epsilon}{%
{\ottdrulename{TS\_Var}}{}%
}}
\newcommand{\ottdruleTSXXConst}[1]{\ottdrule[#1]{%
\ottpremise{\vdash  \Gamma}%
}{
\Gamma  \ottsym{;}  \ottnt{R}  \vdash  \ottnt{c}  \ottsym{:}   \mathit{ty}  (  \ottnt{c}  )  \,  |  \, \epsilon}{%
{\ottdrulename{TS\_Const}}{}%
}}
\newcommand{\ottdruleTSXXAbs}[1]{\ottdrule[#1]{%
\ottpremise{\Gamma  \ottsym{,}  \mathit{x} \,  \mathord{:}  \, \ottnt{A}  \ottsym{;}  \ottnt{R}  \vdash  \ottnt{M}  \ottsym{:}  \ottnt{B} \,  |  \, \epsilon'}%
}{
\Gamma  \ottsym{;}  \ottnt{R}  \vdash   \lambda\!  \, \mathit{x}  \ottsym{.}  \ottnt{M}  \ottsym{:}   \ottnt{A}   \rightarrow  \!  \epsilon'  \;  \ottnt{B}  \,  |  \, \epsilon}{%
{\ottdrulename{TS\_Abs}}{}%
}}
\newcommand{\ottdruleTSXXApp}[1]{\ottdrule[#1]{%
\ottpremise{ \Gamma  \ottsym{;}  \ottnt{R}  \vdash  \ottnt{M_{{\mathrm{1}}}}  \ottsym{:}   \ottnt{A}   \rightarrow  \!  \epsilon'  \;  \ottnt{B}  \,  |  \, \epsilon  \quad   \Gamma  \ottsym{;}  \ottnt{R}  \vdash  \ottnt{M_{{\mathrm{2}}}}  \ottsym{:}  \ottnt{A} \,  |  \, \epsilon  \quad  \epsilon' \,  \subseteq  \, \epsilon  }%
}{
\Gamma  \ottsym{;}  \ottnt{R}  \vdash  \ottnt{M_{{\mathrm{1}}}} \, \ottnt{M_{{\mathrm{2}}}}  \ottsym{:}  \ottnt{B} \,  |  \, \epsilon}{%
{\ottdrulename{TS\_App}}{}%
}}
\newcommand{\ottdruleTSXXOp}[1]{\ottdrule[#1]{%
\ottpremise{ \mathit{ty} \, \ottsym{(}  \mathsf{op}  \ottsym{)} \,  =  \,   \text{\unboldmath$\forall$}     \algeffseqover{ \alpha }   .  \ottnt{A}  \hookrightarrow  \ottnt{B}   \quad   \mathsf{op} \,  \in  \, \epsilon  \quad   \Gamma  \ottsym{;}  \ottnt{R}  \vdash  \ottnt{M}  \ottsym{:}   \ottnt{A}    [   \algeffseqover{ \ottnt{C} }   \ottsym{/}   \algeffseqover{ \alpha }   ]   \,  |  \, \epsilon  \quad  \Gamma  \vdash   \algeffseqover{ \ottnt{C} }    }%
}{
\Gamma  \ottsym{;}  \ottnt{R}  \vdash   \textup{\texttt{\#}\relax}  \mathsf{op}   \ottsym{(}   \ottnt{M}   \ottsym{)}   \ottsym{:}   \ottnt{B}    [   \algeffseqover{ \ottnt{C} }   \ottsym{/}   \algeffseqover{ \alpha }   ]   \,  |  \, \epsilon}{%
{\ottdrulename{TS\_Op}}{}%
}}
\newcommand{\ottdruleTSXXHandle}[1]{\ottdrule[#1]{%
\ottpremise{ \Gamma  \ottsym{;}  \ottnt{R}  \vdash  \ottnt{M}  \ottsym{:}  \ottnt{A} \,  |  \, \epsilon  \quad  \Gamma  \ottsym{;}  \ottnt{R}  \vdash  \ottnt{H}  \ottsym{:}  \ottnt{A} \,  |  \, \epsilon  \Rightarrow  \ottnt{B} \,  |  \, \epsilon' }%
}{
\Gamma  \ottsym{;}  \ottnt{R}  \vdash  \mathsf{handle} \, \ottnt{M} \, \mathsf{with} \, \ottnt{H}  \ottsym{:}  \ottnt{B} \,  |  \, \epsilon'}{%
{\ottdrulename{TS\_Handle}}{}%
}}
\newcommand{\ottdruleTSXXLet}[1]{\ottdrule[#1]{%
\ottpremise{ \Gamma  \ottsym{,}   \algeffseqover{ \alpha }   \ottsym{;}  \ottnt{R}  \vdash  \ottnt{M_{{\mathrm{1}}}}  \ottsym{:}  \ottnt{A} \,  |  \, \epsilon  \quad  \Gamma  \ottsym{,}  \mathit{x} \,  \mathord{:}  \,  \text{\unboldmath$\forall$}  \,  \algeffseqover{ \alpha }   \ottsym{.}  \ottnt{A}  \ottsym{;}  \ottnt{R}  \vdash  \ottnt{M_{{\mathrm{2}}}}  \ottsym{:}  \ottnt{B} \,  |  \, \epsilon }%
}{
\Gamma  \ottsym{;}  \ottnt{R}  \vdash  \mathsf{let} \, \mathit{x}  \ottsym{=}  \ottnt{M_{{\mathrm{1}}}} \,  \mathsf{in}  \, \ottnt{M_{{\mathrm{2}}}}  \ottsym{:}  \ottnt{B} \,  |  \, \epsilon}{%
{\ottdrulename{TS\_Let}}{}%
}}
\newcommand{\ottdruleTSXXResume}[1]{\ottdrule[#1]{%
\ottpremise{ \vdash  \Gamma_{{\mathrm{1}}}  \ottsym{,}  \mathit{x} \,  \mathord{:}  \, \ottnt{D}  \ottsym{,}  \Gamma_{{\mathrm{2}}}  \quad    \algeffseqover{ \alpha }  \,  \in  \, \Gamma_{{\mathrm{1}}}  \quad  \epsilon \,  \subseteq  \, \epsilon'  }%
\ottpremise{\Gamma_{{\mathrm{1}}}  \ottsym{,}  \Gamma_{{\mathrm{2}}}  \ottsym{,}    \algeffseqover{ \beta }   \ottsym{,}   \mathit{x} \,  \mathord{:}  \,    \ottnt{A}    [   \algeffseqover{ \beta }   \ottsym{/}   \algeffseqover{ \alpha }   ]        \ottsym{;}  \ottsym{(}   \algeffseqover{ \alpha }   \ottsym{,}  \mathit{x} \,  \mathord{:}  \, \ottnt{A}  \ottsym{,}   \ottnt{B}   \rightarrow  \!  \epsilon  \;  \ottnt{C}   \ottsym{)}  \vdash  \ottnt{M}  \ottsym{:}   \ottnt{B}    [   \algeffseqover{ \beta }   \ottsym{/}   \algeffseqover{ \alpha }   ]   \,  |  \, \epsilon'}%
}{
\Gamma_{{\mathrm{1}}}  \ottsym{,}  \mathit{x} \,  \mathord{:}  \, \ottnt{D}  \ottsym{,}  \Gamma_{{\mathrm{2}}}  \ottsym{;}  \ottsym{(}   \algeffseqover{ \alpha }   \ottsym{,}  \mathit{x} \,  \mathord{:}  \, \ottnt{A}  \ottsym{,}   \ottnt{B}   \rightarrow  \!  \epsilon  \;  \ottnt{C}   \ottsym{)}  \vdash  \mathsf{resume} \, \ottnt{M}  \ottsym{:}  \ottnt{C} \,  |  \, \epsilon'}{%
{\ottdrulename{TS\_Resume}}{}%
}}
\newcommand{\ottdruleTSXXWeak}[1]{\ottdrule[#1]{%
\ottpremise{ \Gamma  \ottsym{;}  \ottnt{R}  \vdash  \ottnt{M}  \ottsym{:}  \ottnt{A} \,  |  \, \epsilon'  \quad  \epsilon' \,  \subseteq  \, \epsilon }%
}{
\Gamma  \ottsym{;}  \ottnt{R}  \vdash  \ottnt{M}  \ottsym{:}  \ottnt{A} \,  |  \, \epsilon}{%
{\ottdrulename{TS\_Weak}}{}%
}}
\newcommand{\ottdruleTHSXXReturn}[1]{\ottdrule[#1]{%
\ottpremise{ \Gamma  \ottsym{,}  \mathit{x} \,  \mathord{:}  \, \ottnt{A}  \ottsym{;}  \ottnt{R}  \vdash  \ottnt{M}  \ottsym{:}  \ottnt{B} \,  |  \, \epsilon'  \quad  \epsilon \,  \subseteq  \, \epsilon' }%
}{
\Gamma  \ottsym{;}  \ottnt{R}  \vdash  \mathsf{return} \, \mathit{x}  \rightarrow  \ottnt{M}  \ottsym{:}  \ottnt{A} \,  |  \, \epsilon  \Rightarrow  \ottnt{B} \,  |  \, \epsilon'}{%
{\ottdrulename{THS\_Return}}{}%
}}
\newcommand{\ottdruleTHSXXOp}[1]{\ottdrule[#1]{%
\ottpremise{\Gamma  \ottsym{;}  \ottnt{R}  \vdash  \ottnt{H}  \ottsym{:}  \ottnt{A} \,  |  \, \epsilon  \Rightarrow  \ottnt{B} \,  |  \, \epsilon'}%
\ottpremise{ \mathit{ty} \, \ottsym{(}  \mathsf{op}  \ottsym{)} \,  =  \,   \text{\unboldmath$\forall$}     \algeffseqover{ \alpha }   .  \ottnt{C}  \hookrightarrow  \ottnt{D}   \quad  \Gamma  \ottsym{,}   \algeffseqover{ \alpha }   \ottsym{,}  \mathit{x} \,  \mathord{:}  \, \ottnt{C}  \ottsym{;}  \ottsym{(}   \algeffseqover{ \alpha }   \ottsym{,}  \mathit{x} \,  \mathord{:}  \, \ottnt{C}  \ottsym{,}   \ottnt{D}   \rightarrow  \!  \epsilon'  \;  \ottnt{B}   \ottsym{)}  \vdash  \ottnt{M}  \ottsym{:}  \ottnt{B} \,  |  \, \epsilon' }%
}{
\Gamma  \ottsym{;}  \ottnt{R}  \vdash  \ottnt{H}  \ottsym{;}  \mathsf{op}  \ottsym{(}  \mathit{x}  \ottsym{)}  \rightarrow  \ottnt{M}  \ottsym{:}  \ottnt{A} \,  |  \,  \epsilon    \mathbin{\uplus}    \ottsym{\{}  \mathsf{op}  \ottsym{\}}   \Rightarrow  \ottnt{B} \,  |  \, \epsilon'}{%
{\ottdrulename{THS\_Op}}{}%
}}
\newcommand{\ottdruleTXXVar}[1]{\ottdrule[#1]{%
\ottpremise{ \vdash  \Gamma  \quad   \mathit{x} \,  \mathord{:}  \,  \text{\unboldmath$\forall$}  \,  \algeffseqover{ \alpha }   \ottsym{.}  \ottnt{A} \,  \in  \, \Gamma  \quad  \Gamma  \vdash   \algeffseqover{ \ottnt{B} }   }%
}{
\Gamma  \ottsym{;}  \ottnt{r} \,   \vdash  \mathit{x} \,  \algeffseqover{ \ottnt{B} }  \,   \ottsym{:}   \ottnt{A}    [   \algeffseqover{ \ottnt{B} }   \ottsym{/}   \algeffseqover{ \alpha }   ]   \,  |  \, \epsilon}{%
{\ottdrulename{T\_Var}}{}%
}}
\newcommand{\ottdruleTXXConst}[1]{\ottdrule[#1]{%
\ottpremise{\vdash  \Gamma}%
}{
\Gamma  \ottsym{;}  \ottnt{r} \,   \vdash  \ottnt{c} \,   \ottsym{:}   \mathit{ty}  (  \ottnt{c}  )  \,  |  \, \epsilon}{%
{\ottdrulename{T\_Const}}{}%
}}
\newcommand{\ottdruleTXXAbs}[1]{\ottdrule[#1]{%
\ottpremise{\Gamma  \ottsym{,}  \mathit{x} \,  \mathord{:}  \, \ottnt{A}  \ottsym{;}  \ottnt{r} \,   \vdash  \ottnt{e} \,   \ottsym{:}  \ottnt{B} \,  |  \, \epsilon'}%
}{
\Gamma  \ottsym{;}  \ottnt{r} \,   \vdash   \lambda\!  \, \mathit{x}  \ottsym{.}  \ottnt{e} \,   \ottsym{:}   \ottnt{A}   \rightarrow  \!  \epsilon'  \;  \ottnt{B}  \,  |  \, \epsilon}{%
{\ottdrulename{T\_Abs}}{}%
}}
\newcommand{\ottdruleTXXApp}[1]{\ottdrule[#1]{%
\ottpremise{ \Gamma  \ottsym{;}  \ottnt{r} \,   \vdash  \ottnt{e_{{\mathrm{1}}}} \,   \ottsym{:}   \ottnt{A}   \rightarrow  \!  \epsilon'  \;  \ottnt{B}  \,  |  \, \epsilon  \quad   \Gamma  \ottsym{;}  \ottnt{r} \,   \vdash  \ottnt{e_{{\mathrm{2}}}} \,   \ottsym{:}  \ottnt{A} \,  |  \, \epsilon  \quad  \epsilon' \,  \subseteq  \, \epsilon  }%
}{
\Gamma  \ottsym{;}  \ottnt{r} \,   \vdash  \ottnt{e_{{\mathrm{1}}}} \, \ottnt{e_{{\mathrm{2}}}} \,   \ottsym{:}  \ottnt{B} \,  |  \, \epsilon}{%
{\ottdrulename{T\_App}}{}%
}}
\newcommand{\ottdruleTXXOp}[1]{\ottdrule[#1]{%
\ottpremise{ \mathit{ty} \, \ottsym{(}  \mathsf{op}  \ottsym{)} \,  =  \,   \text{\unboldmath$\forall$}     \algeffseqover{ \alpha }   .  \ottnt{A}  \hookrightarrow  \ottnt{B}   \quad   \mathsf{op} \,  \in  \, \epsilon  \quad   \Gamma  \ottsym{;}  \ottnt{r} \,   \vdash  \ottnt{e} \,   \ottsym{:}   \ottnt{A}    [   \algeffseqover{ \ottnt{C} }   \ottsym{/}   \algeffseqover{ \alpha }   ]   \,  |  \, \epsilon  \quad  \Gamma  \vdash   \algeffseqover{ \ottnt{C} }    }%
}{
\Gamma  \ottsym{;}  \ottnt{r} \,   \vdash   \textup{\texttt{\#}\relax}  \mathsf{op}   \ottsym{(}    \algeffseqover{ \ottnt{C} }    \ottsym{,}   \ottnt{e}   \ottsym{)}  \,   \ottsym{:}   \ottnt{B}    [   \algeffseqover{ \ottnt{C} }   \ottsym{/}   \algeffseqover{ \alpha }   ]   \,  |  \, \epsilon}{%
{\ottdrulename{T\_Op}}{}%
}}
\newcommand{\ottdruleTXXOpCont}[1]{\ottdrule[#1]{%
\ottpremise{ \mathit{ty} \, \ottsym{(}  \mathsf{op}  \ottsym{)} \,  =  \,   \text{\unboldmath$\forall$}     \algeffseqoverindex{ \alpha }{ \text{\unboldmath$\mathit{I}$} }   .  \ottnt{A}  \hookrightarrow  \ottnt{B}   \quad   \mathsf{op} \,  \in  \, \epsilon  \quad  \Gamma  \vdash   \algeffseqoverindex{  \text{\unboldmath$\forall$}  \,  \algeffseqoverindex{ \beta }{ \text{\unboldmath$\mathit{J}$} }   \ottsym{.}  \ottnt{C} }{ \text{\unboldmath$\mathit{I}$} }   }%
\ottpremise{ \Gamma  \ottsym{,}   \algeffseqoverindex{ \beta }{ \text{\unboldmath$\mathit{J}$} }   \ottsym{;}  \ottnt{r} \,   \vdash  \ottnt{v} \,   \ottsym{:}   \ottnt{A}    [   \algeffseqoverindex{ \ottnt{C} }{ \text{\unboldmath$\mathit{I}$} }   \ottsym{/}   \algeffseqoverindex{ \alpha }{ \text{\unboldmath$\mathit{I}$} }   ]   \,  |  \, \epsilon  \quad   \Gamma   \vdash    \ottnt{E} ^{  \algeffseqoverindex{ \beta }{ \text{\unboldmath$\mathit{J}$} }  }    \ottsym{:}     \text{\unboldmath$\forall$}  \,  \algeffseqoverindex{ \beta }{ \text{\unboldmath$\mathit{J}$} }   \ottsym{.}  \ottsym{(}   \ottnt{B}    [   \algeffseqoverindex{ \ottnt{C} }{ \text{\unboldmath$\mathit{I}$} }   \ottsym{/}   \algeffseqoverindex{ \alpha }{ \text{\unboldmath$\mathit{I}$} }   ]    \ottsym{)}  \multimap  \ottnt{D}   \,  |  \, \epsilon  }%
}{
\Gamma  \ottsym{;}  \ottnt{r} \,   \vdash   \textup{\texttt{\#}\relax}  \mathsf{op}   \ottsym{(}    \algeffseqoverindex{  \text{\unboldmath$\forall$}  \,  \algeffseqoverindex{ \beta }{ \text{\unboldmath$\mathit{J}$} }   \ottsym{.}  \ottnt{C} }{ \text{\unboldmath$\mathit{I}$} }    \ottsym{,}    \Lambda\!  \,  \algeffseqoverindex{ \beta }{ \text{\unboldmath$\mathit{J}$} }   \ottsym{.}  \ottnt{v}   \ottsym{,}    \ottnt{E} ^{  \algeffseqoverindex{ \beta }{ \text{\unboldmath$\mathit{J}$} }  }    \ottsym{)}  \,   \ottsym{:}  \ottnt{D} \,  |  \, \epsilon}{%
{\ottdrulename{T\_OpCont}}{}%
}}
\newcommand{\ottdruleTXXHandle}[1]{\ottdrule[#1]{%
\ottpremise{ \Gamma  \ottsym{;}  \ottnt{r} \,   \vdash  \ottnt{e} \,   \ottsym{:}  \ottnt{A} \,  |  \, \epsilon  \quad  \Gamma  \ottsym{;}  \ottnt{r} \,   \vdash  \ottnt{h}  \ottsym{:}  \ottnt{A} \,  |  \, \epsilon  \Rightarrow  \ottnt{B} \,  |  \, \epsilon' }%
}{
\Gamma  \ottsym{;}  \ottnt{r} \,   \vdash  \mathsf{handle} \, \ottnt{e} \, \mathsf{with} \, \ottnt{h} \,   \ottsym{:}  \ottnt{B} \,  |  \, \epsilon'}{%
{\ottdrulename{T\_Handle}}{}%
}}
\newcommand{\ottdruleTXXLet}[1]{\ottdrule[#1]{%
\ottpremise{ \Gamma  \ottsym{,}   \algeffseqover{ \alpha }   \ottsym{;}  \ottnt{r} \,   \vdash  \ottnt{e_{{\mathrm{1}}}} \,   \ottsym{:}  \ottnt{A} \,  |  \, \epsilon  \quad  \Gamma  \ottsym{,}  \mathit{x} \,  \mathord{:}  \,  \text{\unboldmath$\forall$}  \,  \algeffseqover{ \alpha }   \ottsym{.}  \ottnt{A}  \ottsym{;}  \ottnt{r} \,   \vdash  \ottnt{e_{{\mathrm{2}}}} \,   \ottsym{:}  \ottnt{B} \,  |  \, \epsilon }%
}{
\Gamma  \ottsym{;}  \ottnt{r} \,   \vdash  \mathsf{let} \, \mathit{x}  \ottsym{=}   \Lambda\!  \,  \algeffseqover{ \alpha }   \ottsym{.}  \ottnt{e_{{\mathrm{1}}}} \,  \mathsf{in}  \, \ottnt{e_{{\mathrm{2}}}} \,   \ottsym{:}  \ottnt{B} \,  |  \, \epsilon}{%
{\ottdrulename{T\_Let}}{}%
}}
\newcommand{\ottdruleTXXResume}[1]{\ottdrule[#1]{%
\ottpremise{  \algeffseqover{ \alpha }  \,  \in  \, \Gamma  \quad   \Gamma  \ottsym{,}    \algeffseqover{ \beta }   \ottsym{,}   \mathit{x} \,  \mathord{:}  \,    \ottnt{A}    [   \algeffseqover{ \beta }   \ottsym{/}   \algeffseqover{ \alpha }   ]        \ottsym{;}  \ottsym{(}   \algeffseqover{ \alpha }   \ottsym{,}  \ottnt{A}  \ottsym{,}   \ottnt{B}   \rightarrow  \!  \epsilon  \;  \ottnt{C}   \ottsym{)} \,   \vdash  \ottnt{e} \,   \ottsym{:}   \ottnt{B}    [   \algeffseqover{ \beta }   \ottsym{/}   \algeffseqover{ \alpha }   ]   \,  |  \, \epsilon'  \quad  \epsilon \,  \subseteq  \, \epsilon'  }%
}{
\Gamma  \ottsym{;}  \ottsym{(}   \algeffseqover{ \alpha }   \ottsym{,}  \ottnt{A}  \ottsym{,}   \ottnt{B}   \rightarrow  \!  \epsilon  \;  \ottnt{C}   \ottsym{)} \,   \vdash  \mathsf{resume} \,  \algeffseqover{ \beta }  \, \mathit{x}  \ottsym{.}  \ottnt{e} \,   \ottsym{:}  \ottnt{C} \,  |  \, \epsilon'}{%
{\ottdrulename{T\_Resume}}{}%
}}
\newcommand{\ottdruleTXXWeak}[1]{\ottdrule[#1]{%
\ottpremise{ \Gamma  \ottsym{;}  \ottnt{r} \,   \vdash  \ottnt{e} \,   \ottsym{:}  \ottnt{A} \,  |  \, \epsilon'  \quad  \epsilon' \,  \subseteq  \, \epsilon }%
}{
\Gamma  \ottsym{;}  \ottnt{r} \,   \vdash  \ottnt{e} \,   \ottsym{:}  \ottnt{A} \,  |  \, \epsilon}{%
{\ottdrulename{T\_Weak}}{}%
}}
\newcommand{\ottdruleTEXXHole}[1]{\ottdrule[#1]{%
}{
 \Gamma   \vdash    [\,]    \ottsym{:}    \ottnt{A}  \multimap  \ottnt{A}   \,  |  \, \epsilon }{%
{\ottdrulename{TE\_Hole}}{}%
}}
\newcommand{\ottdruleTEXXAppOne}[1]{\ottdrule[#1]{%
\ottpremise{  \Gamma   \vdash   \ottnt{E}   \ottsym{:}    \sigma  \multimap  \ottsym{(}   \ottnt{A}   \rightarrow  \!  \epsilon'  \;  \ottnt{B}   \ottsym{)}   \,  |  \, \epsilon   \quad   \Gamma  \ottsym{;}   \mathsf{none}  \,   \vdash  \ottnt{e_{{\mathrm{2}}}} \,   \ottsym{:}  \ottnt{A} \,  |  \, \epsilon  \quad  \epsilon' \,  \subseteq  \, \epsilon  }%
}{
 \Gamma   \vdash   \ottnt{E} \, \ottnt{e_{{\mathrm{2}}}}   \ottsym{:}    \sigma  \multimap  \ottnt{B}   \,  |  \, \epsilon }{%
{\ottdrulename{TE\_App1}}{}%
}}
\newcommand{\ottdruleTEXXAppTwo}[1]{\ottdrule[#1]{%
\ottpremise{ \Gamma  \ottsym{;}   \mathsf{none}  \,   \vdash  \ottnt{v_{{\mathrm{1}}}} \,   \ottsym{:}  \ottsym{(}   \ottnt{A}   \rightarrow  \!  \epsilon'  \;  \ottnt{B}   \ottsym{)} \,  |  \, \epsilon  \quad    \Gamma   \vdash   \ottnt{E}   \ottsym{:}    \sigma  \multimap  \ottnt{A}   \,  |  \, \epsilon   \quad  \epsilon' \,  \subseteq  \, \epsilon  }%
}{
 \Gamma   \vdash   \ottnt{v_{{\mathrm{1}}}} \, \ottnt{E}   \ottsym{:}    \sigma  \multimap  \ottnt{B}   \,  |  \, \epsilon }{%
{\ottdrulename{TE\_App2}}{}%
}}
\newcommand{\ottdruleTEXXOp}[1]{\ottdrule[#1]{%
\ottpremise{ \mathit{ty} \, \ottsym{(}  \mathsf{op}  \ottsym{)} \,  =  \,   \text{\unboldmath$\forall$}     \algeffseqover{ \alpha }   .  \ottnt{A}  \hookrightarrow  \ottnt{B}   \quad   \mathsf{op} \,  \in  \, \epsilon  \quad    \Gamma   \vdash   \ottnt{E}   \ottsym{:}    \sigma  \multimap   \ottnt{A}    [   \algeffseqover{ \ottnt{C} }   \ottsym{/}   \algeffseqover{ \alpha }   ]     \,  |  \, \epsilon   \quad  \Gamma  \vdash   \algeffseqover{ \ottnt{C} }    }%
}{
 \Gamma   \vdash    \textup{\texttt{\#}\relax}  \mathsf{op}   \ottsym{(}    \algeffseqover{ \ottnt{C} }    \ottsym{,}   \ottnt{E}   \ottsym{)}    \ottsym{:}    \sigma  \multimap   \ottnt{B}    [   \algeffseqover{ \ottnt{C} }   \ottsym{/}   \algeffseqover{ \alpha }   ]     \,  |  \, \epsilon }{%
{\ottdrulename{TE\_Op}}{}%
}}
\newcommand{\ottdruleTEXXHandle}[1]{\ottdrule[#1]{%
\ottpremise{  \Gamma   \vdash   \ottnt{E}   \ottsym{:}    \sigma  \multimap  \ottnt{A}   \,  |  \, \epsilon   \quad  \Gamma  \ottsym{;}   \mathsf{none}  \,   \vdash  \ottnt{h}  \ottsym{:}  \ottnt{A} \,  |  \, \epsilon  \Rightarrow  \ottnt{B} \,  |  \, \epsilon' }%
}{
 \Gamma   \vdash   \mathsf{handle} \, \ottnt{E} \, \mathsf{with} \, \ottnt{h}   \ottsym{:}    \sigma  \multimap  \ottnt{B}   \,  |  \, \epsilon' }{%
{\ottdrulename{TE\_Handle}}{}%
}}
\newcommand{\ottdruleTEXXLet}[1]{\ottdrule[#1]{%
\ottpremise{  \Gamma  \ottsym{,}   \algeffseqover{ \alpha }    \vdash   \ottnt{E}   \ottsym{:}    \sigma  \multimap  \ottnt{A}   \,  |  \, \epsilon   \quad  \Gamma  \ottsym{,}  \mathit{x} \,  \mathord{:}  \,  \text{\unboldmath$\forall$}  \,  \algeffseqover{ \alpha }   \ottsym{.}  \ottnt{A}  \ottsym{;}   \mathsf{none}  \,   \vdash  \ottnt{e} \,   \ottsym{:}  \ottnt{B} \,  |  \, \epsilon }%
}{
 \Gamma   \vdash   \mathsf{let} \, \mathit{x}  \ottsym{=}   \Lambda\!  \,  \algeffseqover{ \alpha }   \ottsym{.}  \ottnt{E} \,  \mathsf{in}  \, \ottnt{e}   \ottsym{:}     \text{\unboldmath$\forall$}  \,  \algeffseqover{ \alpha }   \ottsym{.}  \sigma  \multimap  \ottnt{B}   \,  |  \, \epsilon }{%
{\ottdrulename{TE\_Let}}{}%
}}
\newcommand{\ottdruleTEXXWeak}[1]{\ottdrule[#1]{%
\ottpremise{  \Gamma   \vdash   \ottnt{E}   \ottsym{:}    \sigma  \multimap  \ottnt{A}   \,  |  \, \epsilon'   \quad  \epsilon' \,  \subseteq  \, \epsilon }%
}{
 \Gamma   \vdash   \ottnt{E}   \ottsym{:}    \sigma  \multimap  \ottnt{A}   \,  |  \, \epsilon }{%
{\ottdrulename{TE\_Weak}}{}%
}}
\newcommand{\ottdruleTHXXReturn}[1]{\ottdrule[#1]{%
\ottpremise{ \Gamma  \ottsym{,}  \mathit{x} \,  \mathord{:}  \, \ottnt{A}  \ottsym{;}  \ottnt{r} \,   \vdash  \ottnt{e} \,   \ottsym{:}  \ottnt{B} \,  |  \, \epsilon'  \quad  \epsilon \,  \subseteq  \, \epsilon' }%
}{
\Gamma  \ottsym{;}  \ottnt{r} \,   \vdash  \mathsf{return} \, \mathit{x}  \rightarrow  \ottnt{e}  \ottsym{:}  \ottnt{A} \,  |  \, \epsilon  \Rightarrow  \ottnt{B} \,  |  \, \epsilon'}{%
{\ottdrulename{TH\_Return}}{}%
}}
\newcommand{\ottdruleTHXXOp}[1]{\ottdrule[#1]{%
\ottpremise{\Gamma  \ottsym{;}  \ottnt{r} \,   \vdash  \ottnt{h}  \ottsym{:}  \ottnt{A} \,  |  \, \epsilon  \Rightarrow  \ottnt{B} \,  |  \, \epsilon'}%
\ottpremise{ \mathit{ty} \, \ottsym{(}  \mathsf{op}  \ottsym{)} \,  =  \,   \text{\unboldmath$\forall$}     \algeffseqover{ \alpha }   .  \ottnt{C}  \hookrightarrow  \ottnt{D}   \quad  \Gamma  \ottsym{,}   \algeffseqover{ \alpha }   \ottsym{,}  \mathit{x} \,  \mathord{:}  \, \ottnt{C}  \ottsym{;}  \ottsym{(}   \algeffseqover{ \alpha }   \ottsym{,}  \ottnt{C}  \ottsym{,}   \ottnt{D}   \rightarrow  \!  \epsilon'  \;  \ottnt{B}   \ottsym{)} \,   \vdash  \ottnt{e} \,   \ottsym{:}  \ottnt{B} \,  |  \, \epsilon' }%
}{
\Gamma  \ottsym{;}  \ottnt{r} \,   \vdash  \ottnt{h}  \ottsym{;}   \Lambda\!  \,  \algeffseqover{ \alpha }   \ottsym{.}  \mathsf{op}  \ottsym{(}  \mathit{x}  \ottsym{)}  \rightarrow  \ottnt{e}  \ottsym{:}  \ottnt{A} \,  |  \,  \epsilon    \mathbin{\uplus}    \ottsym{\{}  \mathsf{op}  \ottsym{\}}   \Rightarrow  \ottnt{B} \,  |  \, \epsilon'}{%
{\ottdrulename{TH\_Op}}{}%
}}
\renewcommand{\ottdrule}[4][]{{\displaystyle\frac{\begin{array}{c}#2\end{array}}{#3}\ \ottdrulename{#4}}}
\newenvironment{caseanalysis}
{\begin{description}[leftmargin=1.2em]}
{\end{description}}
\def\case#1:{\item[\textmd{Case {#1}:}]}
\newcommand{\reffig}[1]{Figure~\ref{fig:#1}}
\newcommand{\refsec}[1]{Section~\ref{sec:#1}}
\newcommand{\surfacelang}{$\lambda_\text{eff}^\text{let}$}
\newcommand{\interlang}{$\lambda_\text{eff}^{\Lambda}$}
\newcommand{\failp}{\texttt{fail}$^\forall$}
\newcommand{\choosep}{\texttt{choose}$^\forall$}
\definecolor{gray96}{gray}{.9}
\newcommand{\graybox}[1]{\!\colorbox{gray96}{\(\!#1\!\)}\!}
\spnewtheorem{defn}{Definition}{\bfseries}{\itshape}
\spnewtheorem{assum}{Assumption}{\bfseries}{\rmfamily}
\spnewtheorem{conv}{Convention}{\bfseries}{\rmfamily}
\renewenvironment{lemma}[1]
{\begin{lemm} \label{lem:#1} \noindent}
{\end{lemm}}
\newenvironment{lemmap}[2]
{\begin{lemm}[#1] \label{lem:#2} \noindent}
{\end{lemm}}
\newcommand{\reflem}[1]{Lemma~\ref{lem:#1}}
\spnewtheorem{thm}{Theorem}{\bfseries}{\itshape}
\renewenvironment{theorem}[2][]
{\begin{thm}[#1] \label{thm:#2} \noindent}
 {\end{thm}}
\newtheorem{defn}{Definition}
\newtheorem{assum}{Assumption}
\newtheorem{conv}{Convention}
\newtheorem{lemm}{Lemma}
\newenvironment{lemmap}[2]
{\begin{lemm}[#1] \label{lem:#2} \noindent}
{\end{lemm}}
\newenvironment{lemma}[1]
{\begin{lemm} \label{lem:#1} \noindent}
{\end{lemm}}
\newcommand{\reflem}[1]{Lemma~\ref{lem:#1}}
\newtheorem{thm}{Theorem}
\newenvironment{theorem}[2][]
{\begin{thm}[#1] \label{thm:#2} \noindent}
 {\end{thm}}
\newcommand{\Rule}[2]{\ensuremath{\text{({\sc{{#1}\_{#2}}})}}}
\newcommand{\WF}[1]{\Rule{WF}{#1}}
\newcommand{\TSrule}[1]{\Rule{TS}{#1}}
\newcommand{\THSrule}[1]{\Rule{THS}{#1}}
\newcommand{\R}[1]{\Rule{R}{#1}}
\newcommand{\E}[1]{\Rule{E}{#1}}
\newcommand{\T}[1]{\Rule{T}{#1}}
\newcommand{\THrule}[1]{\Rule{TH}{#1}}
\newcommand{\TE}[1]{\Rule{TE}{#1}}
\newcommand{\Elab}[1]{\Rule{Elab}{#1}}
\newcommand{\ElabG}[1]{\Rule{ElabG}{#1}}
\newcommand{\defeq}{\stackrel{\rm \tiny def}{=}}
\begin{document}
\title{Handling polymorphic algebraic effects}
%
%
\author{
Taro Sekiyama\inst{1}\iforcid\orcidID{0000-0001-9286-230X}\fi
\and
Atsushi Igarashi\inst{2}\iforcid\orcidID{0000-0002-5143-9764}\fi
}
\authorrunning{T.\ Sekiyama and A.\ Igarashi}
\institute{
National Institute of Informatics, Tokyo, Japan
\email{tsekiyama@acm.org}
\\ \and
Kyoto University, Kyoto, Japan
\email{igarashi@kuis.kyoto-u.ac.jp}
}
\maketitle              
\begin{abstract}
 Algebraic effects and handlers are a powerful abstraction mechanism to
 represent and implement control effects.
 In this work, we study their extension with parametric polymorphism that allows
 abstracting not only expressions but also effects and handlers.
 Although polymorphism makes it possible to reuse and reason about effect
 implementations more effectively, it has long been known that a naive combination
 of polymorphic effects and let-polymorphism breaks type safety.
 Although type safety can often be gained by restricting let-bound
 expressions---e.g., by adopting value restriction or weak polymorphism---we
 propose a complementary approach that restricts handlers instead of let-bound
 expressions.
 Our key observation is that, informally speaking, a handler is safe
 if resumptions from the handler do not interfere with each other.
 To formalize our idea, we define a call-by-value lambda calculus
 {\surfacelang} that supports let-polymorphism and polymorphic
 algebraic effects and handlers, design a type system that rejects
 interfering handlers, and prove type safety of our calculus.
\end{abstract}

\section{Introduction}

Algebraic effects~\cite{Plotkin/Power_2003_ACS} and handlers~\cite{Plotkin/Pretnar_2009_ESOP} are a powerful abstraction
mechanism to represent and implement control effects, such as
exceptions, interactive I/O, mutable states, and nondeterminism.
They are growing in popularity, thanks to their success in achieving modularity of
effects, especially the clear separation between their interfaces and
their implementations.
An interface of effects is given as a set of \emph{operations}---e.g., an
interface of mutable states consists of two operations, namely, \textsf{put} and
\textsf{get}---with their signatures.
An implementation is given by a \emph{handler} $\ottnt{H}$, which provides
a set of interpretations of the operations (called \emph{operation
  clauses}), and a $ \mathsf{handle} $--$ \mathsf{with} $ expression
$\mathsf{handle} \, \ottnt{M} \, \mathsf{with} \, \ottnt{H}$ associates effects invoked during the computation of
$\ottnt{M}$ with handler $\ottnt{H}$.
Algebraic effects and handlers work as \emph{resumable exceptions}: when an
effect operation is invoked, the run-time system tries to find the nearest
handler that handles the invoked operation; if it is found, the corresponding
operation clause is evaluated by using the argument to the operation invocation
and the continuation up to the handler.
The continuation gives the ability to resume the computation from the
point where the operation was invoked, using the result from the
operation clause.
Another modularity that algebraic effects provide is flexible
composition: multiple algebraic effects can be combined
freely~\cite{Kammar/Lindley/Oury_2013_ICFP}.

In this work, we study an extension of algebraic effects and handlers with
another type-based abstraction mechanism---parametric polymorphism~\cite{Reynolds_IFIP_1983}.
In general, parametric polymorphism is a basis of generic programming and
enhance code reusability by abstracting expressions over types.
This work allows abstracting not only expressions but also effect operations
and handlers, which makes it possible to reuse and reason about effect
implementations that are independent of concrete type representations.
Like in many functional languages, we introduce polymorphism in the form of
\emph{let-polymorphism} for its practically desirable properties such as
decidable typechecking and type inference.
%

As is well known, however, a naive combination of polymorphic effects and
let-polymorphism breaks type safety~\cite{Tofte_1990_IC,Harper/Lillibridge_1993_LSC}.
Many researchers have attacked this classical problem~\cite{Tofte_1990_IC,Leroy/Weis_1991_POPL,Appel/MacQueen_1991_PLILP,Hoang/Mitchell/Viswanathan_1993_LICS,Wright_1995_LSC,Garrigue_2004_FLOPS,Asai/Kameyama_2007_APLAS,Kammar/Pretnar_2017_JFP}, and their common
idea is to restrict the form of let-bound expressions.
For example, value restriction~\cite{Tofte_1990_IC,Wright_1995_LSC}, which is the standard way to make
ML-like languages with imperative features and let-polymorphism type safe,
allows only syntactic values to be polymorphic.
%

In this work, we propose a new approach to achieving type safety in a
language with let-polymorphic and polymorphic effects and handlers:
the idea is to restrict handlers instead of let-bound expressions.
Since a handler gives an implementation of an effect, our work can be viewed as
giving a criterion that suggests what effects can cooperate safely with
(unrestricted) let-polymorphism and what effects cannot.
Our key observation for type safety is that, informally speaking, an
invocation of a polymorphic effect in a let-bound expression is safe
if resumptions in the corresponding operation clause do not interfere
with each other.
We formalize this discipline into a type system and show that typeable programs
do not get stuck.

Our contributions are summarized as follows.
\begin{itemize}
 \item We introduce a call-by-value, statically typed lambda calculus
       {\surfacelang} that supports let-polymorphism and polymorphic algebraic
       effects and handlers.
       The type system of {\surfacelang} allows any let-bound expressions involving
       effects to be polymorphic, but, instead, disallows handlers where resumptions interfere with each other.
       %

 \item To give the semantics of {\surfacelang}, we formalize an intermediate
       language {\interlang} wherein type information is made explicit and define
       a formal elaboration from {\surfacelang} to {\interlang}.

 \item We prove type safety of {\surfacelang} by type preservation of the
       elaboration and type soundness of {\interlang}.
\end{itemize}
We believe that our approach is complementary to the usual approach of
restricting let-bound expressions: for handlers that are considered
unsafe by our criterion, the value restriction can still be used.

The rest of this paper is organized as follows.
\refsec{overview} provides an overview of our work, giving motivating examples
of polymorphic effects and handlers, a problem in naive combination of
polymorphic effects and let-polymorphism, and our solution to gain type safety
with those features.
\refsec{surfacelang} defines the surface language {\surfacelang}, and \refsec{interlang}
defines the intermediate language {\interlang} and the elaboration from {\surfacelang} to
{\interlang}.
%
%
We also state that the elaboration is type-preserving and that {\interlang}
is type sound in \refsec{interlang}.
Finally, we discuss related work in \refsec{relwork} and conclude in
\refsec{conclusion}.
The proofs of the stated properties and the full definition of the elaboration
are given in
\iffull
Appendix.
\else
the full version at \url{https://arxiv.org/abs/1811.07332}.
\fi

\section{Overview}
\label{sec:overview}

We start with reviewing how monomorphic algebraic effects and handlers work
through examples and then extend them to a polymorphic version.
We also explain why polymorphic effects are inconsistent with let-polymorphism,
if naively combined, and how we resolve it.

\subsection{Monomorphic algebraic effects and handlers}
\label{sec:overview-mono}
\paragraph{Exception.}
Our first example is exception handling, shown in an ML-like language below.
\lstinputlisting[numbers=left,xleftmargin=1.3\parindent]{source/mono_exception.ml}
\texttt{Some} and \texttt{None} are constructors of datatype
$\alpha\;$\texttt{option}.
Line 1 declares an effect operation \texttt{fail}, which signals that an
anomaly happens, with its signature
$\texttt{unit} \hookrightarrow \texttt{unit}$, which means that the operation is invoked
with the unit value \texttt{()}, causes some effect, and may return the unit value.
The function \texttt{div100}, defined in Lines 3--5, is an example that
uses \texttt{fail}; it returns the number obtained by dividing 100 by
argument \texttt{x} if \texttt{x} is not zero; otherwise, if
\texttt{x} is zero, it raises an exception by calling effect operation
\texttt{fail}.\footnote{%
  Here, ``\texttt{; -1}'' is necessary to make the types of both branches the
same; it becomes unnecessary when we introduce polymorphic effects.}
In general, we write \texttt{\#op($\ottnt{M}$)} for
invoking effect operation \texttt{op} with argument $\ottnt{M}$.  
The function \texttt{f} (Lines 7--10) calls \texttt{div\_100} inside a
\texttt{handle}--\texttt{with} expression, which returns \texttt{Some}
$n$ if \texttt{div\_100} returns integer $n$ normally and returns
\texttt{None} if it invokes \texttt{fail}.

An expression of the form \texttt{handle} $\ottnt{M}$ \texttt{with} $\ottnt{H}$ handles effect
operations invoked in $\ottnt{M}$ (which we call \emph{handled expression}) according to the effect interpretations
given by handler $\ottnt{H}$.
A handler $\ottnt{H}$ consists of two parts: a single \emph{return clause} and zero or more
\emph{operation clauses}.
A return clause \texttt{return x} $\rightarrow$ $\ottnt{M'}$ will be executed
if the evaluation of $\ottnt{M}$ results in a value $\ottnt{v}$.  Then, the value
of $\ottnt{M'}$ (where \texttt{x} is bound to $\ottnt{v}$) will be the value
of the entire \texttt{handle}--\texttt{with} expression.
For example, in the program above, if a nonzero number $n$ is passed
to \texttt{f}, the \texttt{handle}--\texttt{with} expression would
return \texttt{Some $(100/n)$} because \texttt{div100 $n$} returns
$100 / n$.
An operation clause $\mathsf{op}$ \texttt{x} $\rightarrow$ $\ottnt{M'}$ defines an
implementation of effect \texttt{op}: if the evaluation of handled expression
$\ottnt{M}$ invokes effect \texttt{op} with argument $\ottnt{v}$, expression $\ottnt{M'}$
will be evaluated after substituting $\ottnt{v}$ for \texttt{x} and the value of
$\ottnt{M'}$ will be the value of the entire \texttt{handle}--\texttt{with} expression.
In the program example above, if zero is given to \texttt{f}, then
\texttt{None} will be returned because \texttt{div100 0} invokes
\texttt{fail}.

As shown above, algebraic effect handling is similar to exception handling.
However, a distinctive feature of algebraic effect handling is that it allows
\emph{resumption} of the computation from the point where
an effect operation was invoked.
The next example demonstrates such an ability of algebraic effect handlers.

\paragraph{Choice.}
The next example is effect \texttt{choose}, which returns one of the
given two arguments.
\lstinputlisting[numbers=left,xleftmargin=1.3\parindent]{source/mono_choice.ml}
As usual, $\ottnt{A_{{\mathrm{1}}}} \times \ottnt{A_{{\mathrm{2}}}}$ is a product type, \texttt{($\ottnt{M_{{\mathrm{1}}}}$,$\ottnt{M_{{\mathrm{2}}}}$)} is
a pair expression, and \texttt{fst} is the first projection function.
The first line declares that effect \texttt{choose} is for choosing integers.
The handled expression \texttt{\#choose(1,2) +
\#choose(10,20)} intuitively suggests that there would be four possible results---11, 21, 12, and 22---depending on which value each invocation of
\texttt{choose} returns.
The handler in this example always chooses the first element of a
given pair\footnote{%
  We can think of more practical implementations, which choose one of
  the two arguments by other means, say, random values.} %
and returns it by using a \texttt{resume} expression, and, as a result, the
expression in Lines 3--5 evaluates to 11.

A resumption expression \texttt{resume} $\ottnt{M}$ in an operation clause makes it
possible to return a value of $\ottnt{M}$ to the point where an effect operation was
invoked.
This behavior is realized by constructing a \emph{delimited continuation} from
the point of the effect invocation up to the \texttt{handle}--\texttt{with} expression that deals with
the effect and passing the value of $\ottnt{M}$ to the continuation.
We illustrate it by using the program above.
When the handled expression \texttt{\#choose(1,2) + \#choose(10,20)} is evaluated,
continuation $c \defeq$ \texttt{$ [\,] $ + \#choose(10,20)} is constructed.
%
Then, the body \texttt{resume (fst x)} of the operation clause is evaluated after
binding \texttt{x} to the invocation argument \texttt{(1,2)}.
Receiving the value \texttt{1} of \texttt{fst (1,2)}, the resumption expression
passes it to the continuation $c$ and $c[\texttt{1}] = \texttt{1 + \#choose(10,20)}$ is evaluated under the same handler.
Next, \texttt{choose} is invoked with argument \texttt{(10,20)}.
Similarly, continuation $c' \defeq \texttt{1 + $ [\,] $}$ is
constructed and the operation clause for \texttt{choose} is executed again.
Since \texttt{fst (10,20)} evaluates to \texttt{10},
$c'[\texttt{10}] = \texttt{1 + 10}$ is evaluated under the same handler.
Since the \texttt{return} clause returns what it receives, the entire 
expression evaluates to \texttt{11}.

Finally, we briefly review how an operation clause involving resumption expressions is
typechecked~\cite{Kammar/Lindley/Oury_2013_ICFP,Bauer/Pretnar_2015_JLAMP,Leijen_2017_POPL}.
Let us consider operation clause \texttt{op(x)$\; \rightarrow M$} for $\mathsf{op}$ of
type signature $A \hookrightarrow B$.
The typechecking is performed as follows.
First, argument \texttt{x} is assigned the domain type $\ottnt{A}$ of the signature as it
will be bound to an argument of an effect invocation.
Second, for resumption expression \texttt{resume} $\ottnt{M'}$ in $\ottnt{M}$, (1)
$\ottnt{M'}$ is required to have the codomain type $\ottnt{B}$ of the signature because its
value will be passed to the continuation as the result of the invocation and (2)
the resumption expression is assigned the same type as the return clause.
Third, the type of the body $\ottnt{M}$ has to be the same as that of the return
clause because the value of $\ottnt{M}$ is the result of the entire \texttt{handle}--\texttt{with}
expression.
For example, the above operation clause for \texttt{choose} is
typechecked as follows: first, argument \texttt{x} is assigned type
\texttt{int$\;\times\;$int}; second, it is checked whether the
argument \texttt{fst x} of the resumption expression has \texttt{int},
the codomain type of \texttt{choose}; third, it is checked whether the
body \texttt{resume (fst x)} of the clause has the same type as the return clause,
i.e., \texttt{int}.
If all the requirements are satisfied, the clause is well typed.

\subsection{Polymorphic algebraic effects and handlers}
\label{sec:poly_effects}
This section discusses motivation for polymorphism in algebraic effects and handlers.
There are two ways to introduce polymorphism: by \emph{parameterized effects}
and by \emph{polymorphic effects}.

The former is used to parameterize the declaration of an effect by types.
For example, one might declare:
\lstinputlisting[xleftmargin=1.3\parindent]{source/parameterized_choice.ml}
An invocation \texttt{\#choose} involves a parameterized effect of the form
$A\;\texttt{choose}$ (where $A$ denotes a type), 
according to the type of arguments: For example, \texttt{\#choose(true,false)} has the effect \texttt{bool choose}
and \texttt{\#choose(1,-1)} has \texttt{int choose}.  Handlers are required for each effect $A\;\texttt{choose}$.

The latter is used to give a polymorphic type to an effect.  For example,
one may declare
\lstinputlisting[xleftmargin=1.3\parindent]{source/poly_choice_short.ml}
In this case, the effect can be invoked with different types, but all
invocations have the same effect \texttt{choose}.  One can implement a
single operation clause that can handle all invocations of
\texttt{choose}, regardless of argument types.  Koka supports both
styles~\cite{Leijen_2017_POPL} (with the value restriction); we focus, however,
on the latter in this paper.  A type system for parameterized effects
lifting the value restriction is studied
by Kammar and Pretnar~\cite{Kammar/Pretnar_2017_JFP} (see \refsec{relwork} for comparison).

In what follows, we show a polymorphic version of the examples we have
seen, along with brief discussions on how polymorphic effects help with
reasoning about effect implementations.  Other practical examples of
polymorphic effects can be found in Leijen's work~\cite{Leijen_2017_POPL}.

\paragraph{Polymorphic exception.}
First, we extend the exception effect \texttt{fail} with polymorphism.
\lstinputlisting[numbers=left,xleftmargin=1.3\parindent]{source/poly_exception.ml}
The polymorphic type signature of effect {\failp}, given in Line 1, means that
the codomain type $\alpha$ can be any.
Thus, we do not need to append the dummy value \texttt{-1} to the invocation
of {\failp} by instantiating the bound type variable $\alpha$ with \texttt{int}
(the shaded part).

\paragraph{Choice.}
Next, let us make \texttt{choose} polymorphic.
\lstinputlisting[numbers=left,xleftmargin=1.3\parindent]{source/poly_choice.ml}
%
%
The function \texttt{random\_walk} implements random walk;
it takes the current coordinate \texttt{x}, chooses whether it stops,
and, if it decides to continue, recursively calls itself with a new coordinate.  In the
definition, {\choosep} is used twice with different types:
\texttt{bool} and \texttt{int}.
Lines 11--12 give {\choosep} an interpretation, which calls
\texttt{rand} to obtain a random \texttt{float},\footnote{%
  One might implement \texttt{rand} as another effect operation.  }
and returns either the first or the second element of \texttt{y}.

Typechecking of operation clauses could be extended in a straightforward manner.
That is, an operation clause \texttt{op(x)$\;\rightarrow M$} for an effect
operation of signature $\forall \alpha. A \hookrightarrow B$ would be typechecked as
follows:
first, $\alpha$ is locally bound in the clause and $\mathit{x}$ is assigned type $\ottnt{A}$; second, an argument of a resumption expression must have type
$\ottnt{B}$ (which may contain type variable $\alpha$); third, $\ottnt{M}$ must
have the same type as that of the return clause (its type cannot contain $\alpha$ as $\alpha$ is local)
under the assumption that resumption expressions have the same type as the
return clause.
For example, let us consider typechecking of the above operation clause for
{\choosep}.
First, the typechecking algorithm allocates a local type variable $\alpha$ and
assigns type $\alpha \times \alpha$ to \texttt{y}.
The body has two resumption expressions, and it is checked whether the arguments
\texttt{fst y} and \texttt{snd y} have the codomain type $\alpha$ of the signature.
Finally, it is checked whether the body is typed at \texttt{int} assuming that
the resumption expressions have type \texttt{int}.
The operation clause meets all the requirements, and, therefore, it would be well typed.

An obvious advantage of polymorphic effects is reusability.  Without
polymorphism, one has to declare many versions of \texttt{choose} for
different types.

Another pleasant effect of polymorphic effects is that, thanks to
parametricity, inappropriate implementations for an effect operation
can be excluded.  For example, it is not possible for an
implementation of {\choosep} to resume with values other than the
first or second element of \texttt{y}.  In the monomorphic version,
however, it is possible to resume with any integer, as opposed to what
the name of the operation suggests.  A similar argument applies to
{\failp}; since the codomain type is $\alpha$, which does not appear
in the domain type, it is not possible to resume!  In other words, the
signature $\forall \alpha. \;\texttt{unit} \hookrightarrow \alpha$
enforces that no invocation of {\failp} will return.



\subsection{Problem in naive combination with let-polymorphism}

Although polymorphic effects and handlers provide an ability to abstract and
restrict effect implementations, one may easily expect that their
unrestricted use with naive \emph{let-polymorphism}, which allows any let-bound
expressions to be polymorphic, breaks type safety.  Indeed, it does.

We develop a counterexample, inspired by Harper and
Lillibridge~\cite{Harper/Lillibridge_1993_LSC}, below.
\lstinputlisting[xleftmargin=1.3\parindent]{source/let_problem.ml}
The function \texttt{f} first binds \texttt{g} to the invocation result of
\texttt{op}.
The expression \texttt{\#get\_id()} is given type $\alpha \rightarrow \alpha$
and the naive let-polymorphism would assign type scheme
$\forall \alpha. \alpha \rightarrow \alpha$ to \texttt{g}, which makes
both \texttt{g true} and \texttt{g 0} (and thus the definition of
\texttt{f}) well typed.

An intended use of \texttt{f} is as follows:
\lstinputlisting[xleftmargin=1.3\parindent]{source/let_problem2.ml}
The operation clause for \texttt{get\_id} resumes with the identity
function \texttt{$\lambda$z.z}.  It would be well typed under the
typechecking procedure described in \refsec{poly_effects} and
it safely returns \texttt{1}.

However, the following strange expression
\lstinputlisting[xleftmargin=1.3\parindent]{source/let_problem3.ml}
%
will get stuck, although this expression would be well typed:
both \texttt{$\lambda$z1. $\cdots$ ;z1} and
\texttt{$\lambda$z2. z1} could be given type $\alpha\rightarrow\alpha$
by assigning both \texttt{z1} and \texttt{z2} type $\alpha$, which is
the type variable local to this clause.
Let us see how the evaluation gets stuck in detail.
When the handled expression \texttt{f ()} invokes effect \texttt{get\_id},
the following continuation will be constructed:
\[   c \defeq
  \texttt{let g = $ [\,] $ in if (g true) then ((g 0) + 1) else 2}\ .
\]
Next, the body of the operation clause \texttt{get\_id} is evaluated.
It immediately resumes and reduces to
\[
 c'[ \texttt{($\lambda$z1. $c' [$($\lambda$z2.z1)$]$; z1)} ]
\]
where
\[
 c' \defeq 
  \begin{array}{l}
 \texttt{handle } c \texttt{ with}\\
 \texttt{\quad return x $\rightarrow$ x} \\
  \texttt{\quad get\_id y $\rightarrow$ resume ($\lambda$z1. (resume ($\lambda$z2.z1)); z1)}\ ,
  \end{array}\]
which is the continuation $c$ under the same handler.
The evaluation proceeds as follows (here,
$k \defeq \texttt{$\lambda$z1. $c' [$($\lambda$z2.z1)$]$; z1}$):
\[\begin{array}{cl}
 &
 c'[ \texttt{($\lambda$z1. $c' [$($\lambda$z2.z1)$]$; z1)} ]
 \\ = &
 \texttt{handle let g = $k$ in if (g true) then ((g 0) + 1) else 2 with } \ldots
 \\ \longrightarrow &
 \texttt{handle if ($k$ true) then (($k$ 0) + 1) else 2 with } \ldots
 \\ \longrightarrow &
 \texttt{handle if $c' [$($\lambda$z2.true)$]$; true then (($k$ 0) + 1) else 2 with } \ldots
  \end{array}\]
Here, the hole in $c'$ is filled by function (\texttt{$\lambda$z2.true}),
which returns a Boolean value, \emph{though the hole is supposed to be filled
  by a function of $\forall \alpha.\,\alpha \rightarrow \alpha$}.
This weird gap triggers a run-time error:
\[\begin{array}{cl}
 &
 c' [\texttt{($\lambda$z2.true)}]
 \\[1ex] &
 \texttt{handle } \\ = & \quad
   \texttt{let g = $\lambda$z2.true in if (g true) then ((g 0) + 1) else 2} \\ &
 \texttt{with } \ldots
 \\[1ex] \longrightarrow^* &
 \texttt{handle if true then ((($\lambda$z2.true) 0) + 1) else 2 with } \ldots
 \\ \longrightarrow &
 \texttt{handle (($\lambda$z2.true) 0) + 1 with } \ldots
 \\ \longrightarrow &
 \texttt{handle true + 1 with } \ldots
  \end{array}\]
We stop here because \texttt{true + 1} cannot reduce.

\subsection{Our solution}

A standard approach to this problem is to restrict the form of
let-bound
expressions by some means such as the (relaxed) value
restriction~\cite{Tofte_1990_IC,Wright_1995_LSC,Garrigue_2004_FLOPS} or weak
polymorphism~\cite{Appel/MacQueen_1991_PLILP,Hoang/Mitchell/Viswanathan_1993_LICS}.
This approach amounts to restricting how effect operations can be \emph{used}.

In this paper, we seek for a complementary approach, which is to
restrict how effect operations can be \emph{implemented}.\footnote{%
  We compare our approach with the standard approaches in
  \refsec{relwork} in detail.}
  More concretely, we develop a type
system such that let-bound expressions are polymorphic as long as they
invoke only ``safe'' polymorphic effects and the notion of safe
polymorphic effects is formalized in terms of typing rules (for
handlers).

To see what are ``safe'' effects, let us examine the above
counterexample to type safety.
The crux of the counterexample is that
\begin{enumerate}
\item continuation $c$ uses \texttt{g} polymorphically, namely,
as $\texttt{bool} \rightarrow \texttt{bool}$
in \texttt{g true} and as $\texttt{int} \rightarrow \texttt{int}$ in \texttt{g 1};
\item $c$ is invoked twice; and
\item the use of \texttt{g} as
  $\texttt{bool} \rightarrow \texttt{bool}$ in the first invocation of
  $c$---where \texttt{g} is bound to
  \texttt{$\lambda$z1.$\cdots$; z1}---``alters'' the type of
  \texttt{$\lambda$z2. z1} (passed to \texttt{resume}) from
  $\alpha \rightarrow \alpha$ to $\alpha \rightarrow \texttt{bool}$,
  contradicting the second use of \texttt{g} as
  $\texttt{int} \rightarrow \texttt{int}$ in the second invocation of $c$.
\end{enumerate}
The last point is crucial---if \texttt{$\lambda$z2.z1} were, say,
$\lambda\texttt{z2}.\texttt{z2}$, there would be no influence from the
first invocation of $c$ and the evaluation would succeed.  The problem
we see here is that the naive type system mistakenly allows
\emph{interference} between the arguments to the two resumptions by
assuming that \texttt{z1} and \texttt{z2} share the same type.

Based on this observation, the typing rule for resumption is revised
to disallow interference between different resumptions by separating
their types: for each \texttt{resume $M$} in the operation clause for
$\texttt{op} : \forall \alpha_1 \cdots \alpha_n.A \hookrightarrow B$, $M$
has to have type $B'$ obtained by renaming all type variables
$\alpha_i$ in $B$ with \emph{fresh} type variables $\alpha_i'$.  In
the case of \texttt{get\_id}, the two resumptions should be called
with \(\beta \rightarrow \beta\) and \(\gamma \rightarrow \gamma\) for
fresh \(\beta\) and \(\gamma\); for the first \texttt{resume} to be well
typed, \texttt{z1} has to be of type \(\beta\), although it means that the
return type of \texttt{$\lambda$z2.z1} (given to the second
resumption) is \(\beta\), making the entire clause ill typed, as we
expect.  If a clause does not have interfering resumptions like
\begin{center}
\texttt{get\_id y $\rightarrow$ resume ($\lambda$z1.z1)}
\end{center}
or
\begin{center}
\texttt{get\_id y $\rightarrow$ resume ($\lambda$z1. (resume ($\lambda$z2.z2)); z1)},
\end{center}
it will be well typed.

\section{Surface language: {\surfacelang}}
\label{sec:surfacelang}

We define a lambda calculus {\surfacelang} that supports
let-polymorphism, polymorphic algebraic effects, and handlers without
interfering resumptions.
This section introduces the syntax and the type system of {\surfacelang}.
The semantics is given by a formal elaboration to intermediate calculus {\interlang},
which will be introduced in \refsec{interlang}.

\subsection{Syntax}

\begin{figure}[t]
 \[
  \begin{array}{l@{\quad}l@{\quad}r@{\quad}l}
   \textbf{Effect operations} & \mathsf{op} &&
   \textbf{Type variables} \quad \alpha, \beta, \gamma \\[.5ex]
   \textbf{Effects} & \epsilon & ::= & \text{sets of effect operations} \\[.5ex]
   \textbf{Base types} & \iota & ::= &  \mathsf{bool}  \mid  \mathsf{int}  \mid ... \\[.5ex]
   \textbf{Types} & \ottnt{A}, \ottnt{B}, \ottnt{C}, \ottnt{D} & ::= & \alpha \mid \iota \mid  \ottnt{A}   \rightarrow  \!  \epsilon  \;  \ottnt{B}  \\[.5ex]
   \textbf{Type schemes} & \sigma & ::= & \ottnt{A} \mid  \text{\unboldmath$\forall$}  \, \alpha  \ottsym{.}  \sigma \\[.5ex]
   \textbf{Constants} & \ottnt{c} & ::= &  \mathsf{true}  \mid  \mathsf{false}  \mid  0  \mid  \mathsf{+}  \mid ... \\[.5ex]
   \textbf{Terms} & \ottnt{M} & ::= &
    \mathit{x} \mid \ottnt{c} \mid  \lambda\!  \, \mathit{x}  \ottsym{.}  \ottnt{M} \mid \ottnt{M_{{\mathrm{1}}}} \, \ottnt{M_{{\mathrm{2}}}} \mid \mathsf{let} \, \mathit{x}  \ottsym{=}  \ottnt{M_{{\mathrm{1}}}} \,  \mathsf{in}  \, \ottnt{M_{{\mathrm{2}}}} \mid \\ &&&
     \textup{\texttt{\#}\relax}  \mathsf{op}   \ottsym{(}   \ottnt{M}   \ottsym{)}  \mid \mathsf{handle} \, \ottnt{M} \, \mathsf{with} \, \ottnt{H} \mid \mathsf{resume} \, \ottnt{M}
    \\[.5ex]
   \textbf{Handlers} & \ottnt{H} & ::= & \mathsf{return} \, \mathit{x}  \rightarrow  \ottnt{M} \mid \ottnt{H}  \ottsym{;}  \mathsf{op}  \ottsym{(}  \mathit{x}  \ottsym{)}  \rightarrow  \ottnt{M}
    \\[.5ex]
   \textbf{Typing contexts} & \Gamma & ::= &  \emptyset  \mid
    \Gamma  \ottsym{,}  \mathit{x} \,  \mathord{:}  \, \sigma \mid \Gamma  \ottsym{,}  \alpha
    \\[.5ex]
  \end{array}
 \]
 \caption{Syntax of {\surfacelang}.}
 \label{fig:surface-syntax}
\end{figure}

The syntax of {\surfacelang} is given in \reffig{surface-syntax}.
Effect operations are denoted by $\mathsf{op}$ and type variables by $\alpha$,
$\beta$, and $\gamma$.
An effect, denoted by $\epsilon$, is a finite set of effect operations.
We write $ \langle \rangle $ for the empty effect set.
A type, denoted by $\ottnt{A}$, $\ottnt{B}$, $\ottnt{C}$, and $\ottnt{D}$, is a type variable; a base type
$\iota$, which includes, e.g., $ \mathsf{bool} $ and $ \mathsf{int} $; or a function type
$ \ottnt{A}   \rightarrow  \!  \epsilon  \;  \ottnt{B} $, which is given to functions that take an argument of type $\ottnt{A}$
and compute a value of type $\ottnt{B}$ possibly with effect $\epsilon$.
A type scheme $\sigma$ is obtained by abstracting type variables.
Terms, denoted by $\ottnt{M}$, consist of variables; constants (including primitive
operations); lambda abstractions $ \lambda\!  \, \mathit{x}  \ottsym{.}  \ottnt{M}$, which bind $\mathit{x}$ in $\ottnt{M}$;
function applications; let-expressions $\mathsf{let} \, \mathit{x}  \ottsym{=}  \ottnt{M_{{\mathrm{1}}}} \,  \mathsf{in}  \, \ottnt{M_{{\mathrm{2}}}}$, which bind
$\mathit{x}$ in $\ottnt{M_{{\mathrm{2}}}}$; effect invocations $ \textup{\texttt{\#}\relax}  \mathsf{op}   \ottsym{(}   \ottnt{M}   \ottsym{)} $; $ \mathsf{handle} $--$ \mathsf{with} $ expressions
$\mathsf{handle} \, \ottnt{M} \, \mathsf{with} \, \ottnt{H}$; and resumption expressions $\mathsf{resume} \, \ottnt{M}$.
All type information in {\surfacelang} is implicit; thus the terms have no
type annotations.
A handler $\ottnt{H}$ has a single return clause $\mathsf{return} \, \mathit{x}  \rightarrow  \ottnt{M}$, where $\mathit{x}$ is
bound in $\ottnt{M}$, and zero or more operation clauses of the form $\mathsf{op}  \ottsym{(}  \mathit{x}  \ottsym{)}  \rightarrow  \ottnt{M}$, where $\mathit{x}$ is bound in $\ottnt{M}$.
A typing context $\Gamma$ binds a sequence of variable declarations
$\mathit{x} \,  \mathord{:}  \, \sigma$ and type variable declarations $\alpha$.


We introduce the following notations used throughout this paper.
We write $ \text{\unboldmath$\forall$}  \,  \algeffseqoverindex{ \alpha }{ \ottmv{i}  \in  \text{\unboldmath$\mathit{I}$} }   \ottsym{.}  \ottnt{A}$ for $ \text{\unboldmath$\forall$}  \, \alpha_{{\mathrm{1}}}  \ottsym{.}  ...   \text{\unboldmath$\forall$}  \, \alpha_{\ottmv{n}}  \ottsym{.}  \ottnt{A}$
where $\text{\unboldmath$\mathit{I}$} = \{ 1, ..., n \}$.
%
We often omit indices ($\ottmv{i}$ and $\ottmv{j}$) and index sets ($\text{\unboldmath$\mathit{I}$}$ and $\text{\unboldmath$\mathit{J}$}$) if
they are not important: e.g., we often abbreviate $ \text{\unboldmath$\forall$}  \,  \algeffseqoverindex{ \alpha }{ \ottmv{i}  \in  \text{\unboldmath$\mathit{I}$} }   \ottsym{.}  \ottnt{A}$ to
$ \text{\unboldmath$\forall$}  \,  \algeffseqoverindex{ \alpha }{ \text{\unboldmath$\mathit{I}$} }   \ottsym{.}  \ottnt{A}$ or even to $ \text{\unboldmath$\forall$}  \,  \algeffseqover{ \alpha }   \ottsym{.}  \ottnt{A}$.
%
Similarly, we use a bold font for other sequences
($ \algeffseqoverindex{ \ottnt{A} }{ \ottmv{i}  \in  \text{\unboldmath$\mathit{I}$} } $ for a sequence of types, $ \algeffseqoverindex{ \ottnt{v} }{ \ottmv{i}  \in  \text{\unboldmath$\mathit{I}$} } $ for a sequence of values, etc.).
We sometimes write $\ottsym{\{}   \algeffseqover{ \alpha }   \ottsym{\}}$ to view the sequence $ \algeffseqover{ \alpha } $ as a set by ignoring the order.
Free type variables $ \mathit{ftv}  (  \sigma  ) $ in a type scheme $\sigma$ and type
substitution $ \ottnt{B}    [   \algeffseqover{ \ottnt{A} }   \ottsym{/}   \algeffseqover{ \alpha }   ]  $ of $ \algeffseqover{ \ottnt{A} } $ for type variables $ \algeffseqover{ \alpha } $ in
$\ottnt{B}$ are defined as usual (with the understanding that the omitted index sets for $ \algeffseqover{ \ottnt{A} } $ and $ \algeffseqover{ \alpha } $ are the same).

We suppose that each constant $\ottnt{c}$ is assigned
a first-order closed type $ \mathit{ty}  (  \ottnt{c}  ) $ of the form
$\iota_{{\mathrm{1}}}  \rightarrow \! \langle \rangle \; \cdots  \rightarrow \! \langle \rangle  \; \iota_{\ottmv{n}}$
and that each effect operation $\mathsf{op}$ is assigned
a signature of the form $  \text{\unboldmath$\forall$}     \algeffseqover{ \alpha }   .  \ottnt{A}  \hookrightarrow  \ottnt{B} $, which
means that an invocation of $\mathsf{op}$ with type instantiation $ \algeffseqover{ \ottnt{C} } $
takes an argument of $ \ottnt{A}    [   \algeffseqover{ \ottnt{C} }   \ottsym{/}   \algeffseqover{ \alpha }   ]  $ and returns a value of
$ \ottnt{B}    [   \algeffseqover{ \ottnt{C} }   \ottsym{/}   \algeffseqover{ \alpha }   ]  $.
We also assume that, for $\mathit{ty} \, \ottsym{(}  \mathsf{op}  \ottsym{)} \,  =  \,   \text{\unboldmath$\forall$}     \algeffseqover{ \alpha }   .  \ottnt{A}  \hookrightarrow  \ottnt{B} $, $ \mathit{ftv}  (  \ottnt{A}  )  \,  \subseteq  \, \ottsym{\{}   \algeffseqover{ \alpha }   \ottsym{\}}$ and
$ \mathit{ftv}  (  \ottnt{B}  )  \,  \subseteq  \, \ottsym{\{}   \algeffseqover{ \alpha }   \ottsym{\}}$.

\subsection{Type system}

The type system of {\surfacelang} consists of four judgments:
well-formedness of typing contexts $\vdash  \Gamma$;
well formedness of type schemes $\Gamma  \vdash  \sigma$;
term typing judgment $\Gamma  \ottsym{;}  \ottnt{R}  \vdash  \ottnt{M}  \ottsym{:}  \ottnt{A} \,  |  \, \epsilon$, which means that
$\ottnt{M}$ computes a value of $\ottnt{A}$ possibly with effect $\epsilon$
under typing context $\Gamma$ and resumption type $\ottnt{R}$ (discussed below); and
handler typing judgment $\Gamma  \ottsym{;}  \ottnt{R}  \vdash  \ottnt{H}  \ottsym{:}  \ottnt{A} \,  |  \, \epsilon  \Rightarrow  \ottnt{B} \,  |  \, \epsilon'$,
which means that $\ottnt{H}$ handles a computation that produces a value of $\ottnt{A}$
with effect $\epsilon$ and that the clauses in $\ottnt{H}$ compute a value of $\ottnt{B}$
possibly with effect $\epsilon'$ under $\Gamma$ and $\ottnt{R}$.

A resumption type $\ottnt{R}$ contains type information for resumption.
\begin{defn}[Resumption type]
 Resumption types in {\surfacelang}, denoted by $\ottnt{R}$, are defined as
 follows:
 \[\begin{array}{lll}
  \ottnt{R} & ::= &
    \mathsf{none}  \mid
   \ottsym{(}   \algeffseqover{ \alpha }   \ottsym{,}  \mathit{x} \,  \mathord{:}  \, \ottnt{A}  \ottsym{,}   \ottnt{B}   \rightarrow  \!  \epsilon  \;  \ottnt{C}   \ottsym{)} \\ &&
   \qquad\qquad \text{(if $ \mathit{ftv}  (  \ottnt{A}  )  \,  \mathbin{\cup}  \,  \mathit{ftv}  (  \ottnt{B}  )  \,  \subseteq  \, \ottsym{\{}   \algeffseqover{ \alpha }   \ottsym{\}}$ and $ \mathit{ftv}  (  \ottnt{C}  )  \,  \mathbin{\cap}  \, \ottsym{\{}   \algeffseqover{ \alpha }   \ottsym{\}} \,  =  \,  \emptyset $)}
   \end{array}\]
\end{defn}
If $\ottnt{M}$ is not a subterm of an operation clause, it is typechecked under
$\ottnt{R} =  \mathsf{none} $, which means that $\ottnt{M}$ cannot contain resumption
expressions.
Otherwise, suppose that $\ottnt{M}$ is a subterm of an operation clause
$\mathsf{op}  \ottsym{(}  \mathit{x}  \ottsym{)}  \rightarrow  \ottnt{M'}$ that handles effect $\mathsf{op}$ of signature $  \text{\unboldmath$\forall$}     \algeffseqover{ \alpha }   .  \ottnt{A}  \hookrightarrow  \ottnt{B} $
and computes a value of $\ottnt{C}$ possibly with effect $\epsilon$.
Then, $\ottnt{M}$ is typechecked under $\ottnt{R} = \ottsym{(}   \algeffseqover{ \alpha }   \ottsym{,}  \mathit{x} \,  \mathord{:}  \, \ottnt{A}  \ottsym{,}   \ottnt{B}   \rightarrow  \!  \epsilon  \;  \ottnt{C}   \ottsym{)}$, which
means that argument $\mathit{x}$ to the operation clause has type $\ottnt{A}$ and
that resumptions in $\ottnt{M}$ are effectful functions from $\ottnt{B}$ to $\ottnt{C}$ with
effect $\epsilon$.
Note that type variables $ \algeffseqover{ \alpha } $ occur free only in $\ottnt{A}$ and
$\ottnt{B}$ but not in $\ottnt{C}$.

\begin{figure}[t!]
\textbf{Well-formed rules for typing contexts} \\[1.5ex]
\framebox{$\vdash  \Gamma$}
\begin{center}
 $\ottdruleWFXXEmpty{}$ \hfil
 $\ottdruleWFXXVar{}$ \\[2ex]
 $\ottdruleWFXXTyVar{}$ \\[2.5ex]
\end{center}
\textbf{Typing rules} \\[1.5ex]
\framebox{$\Gamma  \ottsym{;}  \ottnt{R}  \vdash  \ottnt{M}  \ottsym{:}  \ottnt{A} \,  |  \, \epsilon$}
\begin{center}
 $\ottdruleTSXXVar{}$ \hfil
 $\ottdruleTSXXConst{}$ \\[2ex]
 $\ottdruleTSXXAbs{}$ \\[2ex]
 $\ottdruleTSXXApp{}$ \\[2ex]
 $\ottdruleTSXXLet{}$ \\[2ex]
 $\ottdruleTSXXWeak{}$ \\[2ex]
 $\ottdruleTSXXOp{}$ \\[2ex]
 $\ottdruleTSXXHandle{}$ \\[2ex]
 $\ottdruleTSXXResume{}$ \\[2.5ex]
\end{center}
\framebox{$\Gamma  \ottsym{;}  \ottnt{R}  \vdash  \ottnt{H}  \ottsym{:}  \ottnt{A} \,  |  \, \epsilon  \Rightarrow  \ottnt{B} \,  |  \, \epsilon'$}
\begin{center}
 $\ottdruleTHSXXReturn{}$ \\[2ex]
 $\ottdruleTHSXXOp{}$
\end{center}
 \caption{Typing rules.}
 \label{fig:surface-typing}
\end{figure}

\reffig{surface-typing} shows the inference rules of the judgments
(except for $\Gamma  \vdash  \sigma$, which is defined by:
$\Gamma  \vdash  \sigma$ if and only if all free type variables in
$\sigma$ are bound by $\Gamma$).
For a sequence of type schemes $ \algeffseqover{ \sigma } $, we write
$\Gamma  \vdash   \algeffseqover{ \sigma } $ if and only if every type scheme in $ \algeffseqover{ \sigma } $ is well formed under
$\Gamma$.

Well-formedness rules for typing contexts, shown at the top of
\reffig{surface-typing}, are standard.
A typing context is well formed if it is empty \WF{Empty} or a variable in the
typing context is associated with a type scheme that is well formed in the
remaining typing context \WF{Var} and a type variable in the typing context is
not declared \WF{TVar}.
For typing context $\Gamma$, $ \mathit{dom}  (  \Gamma  ) $ denotes the set of type and term
variables declared in $\Gamma$.

Typing rules for terms are given in the middle of \reffig{surface-typing}.
The first six rules are standard for the lambda calculus with let-polymorphism
and a type-and-effect system. 
If a variable $\mathit{x}$ is introduced by a let-expression and has type scheme
$ \text{\unboldmath$\forall$}  \,  \algeffseqover{ \alpha }   \ottsym{.}  \ottnt{A}$ in $\Gamma$, it is given type $ \ottnt{A}    [   \algeffseqover{ \ottnt{B} }   \ottsym{/}   \algeffseqover{ \alpha }   ]  $, 
obtained by instantiating type variables $ \algeffseqover{ \alpha } $ with well-formed types
$ \algeffseqover{ \ottnt{B} } $.
If $\mathit{x}$ is bound by other constructors (e.g., a lambda abstraction), $\mathit{x}$
is always bound to a monomorphic type and both $ \algeffseqover{ \alpha } $ and $ \algeffseqover{ \ottnt{B} } $ are
the empty sequence.
Note that \TSrule{Var} gives any effect $\epsilon$ to the typing judgment for
$\mathit{x}$.
In general, $\epsilon$ in judgment $\Gamma  \ottsym{;}  \ottnt{R}  \vdash  \ottnt{M}  \ottsym{:}  \ottnt{A} \,  |  \, \epsilon$ means that
the evaluation of $\ottnt{M}$ \emph{may} invoke effect operations in $\epsilon$.
Since a reference to a variable involves no effect, it is given any effect; for the
same reason, value constructors are also given any effect.
The rule \TSrule{Const} means that the type of a constant is given by
(meta-level) function $ \mathit{ty} $.
The typing rules for lambda abstractions and function applications are standard
in the lambda calculus equipped with a type-and-effect system.
The rule \TSrule{Abs} gives lambda abstraction $ \lambda\!  \, \mathit{x}  \ottsym{.}  \ottnt{M}$ function type $ \ottnt{A}   \rightarrow  \!  \epsilon'  \;  \ottnt{B} $ if $\ottnt{M}$ computes a value of $\ottnt{B}$ possibly with effect
$\epsilon'$ by using $\mathit{x}$ of type $\ottnt{A}$.
The rule \TSrule{App} requires that (1) the argument type of function part
$\ottnt{M_{{\mathrm{1}}}}$ be equivalent to the type of actual argument $\ottnt{M_{{\mathrm{2}}}}$ and (2) effect
$\epsilon'$ invoked by function $\ottnt{M_{{\mathrm{1}}}}$ be contained in the whole effect
$\epsilon$.
The rule \TSrule{Weak} allows weakening of effects.

The next two rules are mostly standard for algebraic effects and handlers.
The rule \TSrule{Op} is applied to effect invocations.
Since {\surfacelang} supports implicit polymorphism, an invocation $ \textup{\texttt{\#}\relax}  \mathsf{op}   \ottsym{(}   \ottnt{M}   \ottsym{)} $
of polymorphic effect $\mathsf{op}$ of signature $  \text{\unboldmath$\forall$}     \algeffseqover{ \alpha }   .  \ottnt{A}  \hookrightarrow  \ottnt{B} $ also accompanies
implicit type substitution of well-formed types $ \algeffseqover{ \ottnt{C} } $ for $ \algeffseqover{ \alpha } $.
Thus, the type of argument $\ottnt{M}$ has to be $ \ottnt{A}    [   \algeffseqover{ \ottnt{C} }   \ottsym{/}   \algeffseqover{ \alpha }   ]  $ and the result
of the invocation is given type $ \ottnt{B}    [   \algeffseqover{ \ottnt{C} }   \ottsym{/}   \algeffseqover{ \alpha }   ]  $.
In addition, effect $\epsilon$ contains $\mathsf{op}$.
The typeability of $ \mathsf{handle} $--$ \mathsf{with} $ expressions depends on the typing of handlers \TSrule{Handle},
which will be explained below shortly.

The last typing rule \TSrule{Resume} is the key to gaining type safety in this
work.
Suppose that we are given resumption type $\ottsym{(}   \algeffseqover{ \alpha }   \ottsym{,}  \mathit{x} \,  \mathord{:}  \, \ottnt{A}  \ottsym{,}   \ottnt{B}   \rightarrow  \!  \epsilon  \;  \ottnt{C}   \ottsym{)}$.
%
Intuitively, $ \ottnt{B}   \rightarrow  \!  \epsilon  \;  \ottnt{C} $ is the type of the continuation for resumption
and, therefore, argument $\ottnt{M}$ to $ \mathsf{resume} $ is required to have type $\ottnt{B}$.
As we have discussed in \refsec{overview}, we avoid interference between
different resumptions by renaming $ \algeffseqover{ \alpha } $, the type parameters to the
effect operation, to fresh type variables $ \algeffseqover{ \beta } $, in typechecking $\ottnt{M}$.
Freshness of $ \algeffseqover{ \beta } $ will be ensured when well-formedness of typing
contexts $\Gamma_{{\mathrm{1}}}  \ottsym{,}  \Gamma_{{\mathrm{2}}}  \ottsym{,}   \algeffseqover{ \beta } , \ldots$ is checked at the leaves of the type
derivation.
The type variables $ \algeffseqover{ \alpha } $ in the type of $\mathit{x}$, the parameter to the
operation, are also renamed for $\mathit{x}$ to be useful in $\ottnt{M}$.
To see why this renaming is useful, let us consider an extension of
the calculus with pairs and typechecking
of an operation clause for $ \mathsf{choose}^{\forall} $ of signature
$  \text{\unboldmath$\forall$}    \alpha  .   \alpha  \times  \alpha   \hookrightarrow  \alpha $:
\[
  \mathsf{choose}^{\forall}   \ottsym{(}  \mathit{x}  \ottsym{)}  \rightarrow  \mathsf{resume} \, \ottsym{(}   \mathsf{fst}  \, \mathit{x}  \ottsym{)}
\]
Variable $\mathit{x}$ is assigned product type $ \alpha  \times  \alpha $ for fresh type variable
$\alpha$ and the body $\mathsf{resume} \, \ottsym{(}   \mathsf{fst}  \, \mathit{x}  \ottsym{)}$ is typechecked under the resumption type
$\ottsym{(}  \alpha  \ottsym{,}  \mathit{x} \,  \mathord{:}  \,  \alpha  \times  \alpha   \ottsym{,}   \alpha   \rightarrow  \!  \epsilon  \;  \ottnt{A}   \ottsym{)}$ for some $\epsilon$ and $\ottnt{A}$ (see the typing rules
for handlers for details).
To typecheck $\mathsf{resume} \, \ottsym{(}   \mathsf{fst}  \, \mathit{x}  \ottsym{)}$, the argument $ \mathsf{fst}  \, \mathit{x}$ is required
to have type $\beta$, freshly generated for this $ \mathsf{resume} $.
Without applying renaming also to $\mathit{x}$, the clause would not
typecheck.
Finally, \TSrule{Resume} also requires that (1) the typing context
contains $ \algeffseqover{ \alpha } $, which should have been declared at an application
of the typing rule for the operation clause that surrounds this
$ \mathsf{resume} $ and (2) effect $\epsilon$, which may be invoked by
resumption of a continuation, be contained in the whole effect
$\epsilon'$.
The binding $\mathit{x} \,  \mathord{:}  \, \ottnt{D}$ in the conclusion means that parameter $\mathit{x}$ to the
operation clause is declared outside the resumption expression.

The typing rules for handlers are standard~\cite{Kammar/Lindley/Oury_2013_ICFP,Bauer/Pretnar_2015_JLAMP,Leijen_2017_POPL}.
The rule \THSrule{Return} for a return clause $\mathsf{return} \, \mathit{x}  \rightarrow  \ottnt{M}$
checks that the body $\ottnt{M}$ is given a type under the assumption that
argument $\mathit{x}$ has type $\ottnt{A}$, which is the type of the handled
expression.
The effect $\epsilon$ stands for effects that are not handled by the
operation clauses that follow the return clause and it must be a
subset of the effect $\epsilon'$ that $\ottnt{M}$ may cause.\footnote{%
  Thus, handlers in {\surfacelang} are open~\cite{Kammar/Lindley/Oury_2013_ICFP} in the sense that
  a $ \mathsf{handle} $--$ \mathsf{with} $ expression does not have to handle
  \emph{all} effects caused by the handled expression.
}
A handler having operation clauses is typechecked by \THSrule{Op},
which checks that the body of the operation clause $\mathsf{op}  \ottsym{(}  \mathit{x}  \ottsym{)}  \rightarrow  \ottnt{M}$
for $\mathsf{op}$ of signature $  \text{\unboldmath$\forall$}     \algeffseqover{ \alpha }   .  \ottnt{C}  \hookrightarrow  \ottnt{D} $ is typed at the result
type $\ottnt{B}$, which is the same as the type of the return clause,
under the typing context extended with fresh assigned type variables
$ \algeffseqover{ \alpha } $ and argument $\mathit{x}$ of type $\ottnt{C}$, together with the
resumption type $\ottsym{(}   \algeffseqover{ \alpha }   \ottsym{,}  \mathit{x} \,  \mathord{:}  \, \ottnt{C}  \ottsym{,}   \ottnt{D}   \rightarrow  \!  \epsilon'  \;  \ottnt{B}   \ottsym{)}$.
The effect $ \epsilon    \mathbin{\uplus}    \ottsym{\{}  \mathsf{op}  \ottsym{\}} $ in the conclusion means that
the effect operation $\mathsf{op}$ is handled by this clause
and no other clauses (in the present handler) handle it.
Our semantics adopts deep handlers~\cite{Kammar/Lindley/Oury_2013_ICFP}, i.e., when a handled
expression invokes an effect operation, the continuation, which passed
to the operation clause, is wrapped by the same handler.
Thus, resumption may invoke the same effect $\epsilon'$ as the one
possibly invoked by the clauses of the handler, hence $ \ottnt{D}   \rightarrow  \!  \epsilon'  \;  \ottnt{B} $
in the resumption type.

Finally, we show how the type system rejects the counterexample given in
\refsec{overview}.
The problem is in the following operation clause.
\[
 \mathsf{op}  \ottsym{(}  \mathit{y}  \ottsym{)}  \rightarrow  \mathsf{resume} \,  \lambda\!  \, \mathit{z_{{\mathrm{1}}}}  \ottsym{.}  \ottsym{(}  \mathsf{resume} \,  \lambda\!  \, \mathit{z_{{\mathrm{2}}}}  \ottsym{.}  \mathit{z_{{\mathrm{1}}}}  \ottsym{)}  \ottsym{;}  \mathit{z_{{\mathrm{1}}}}
\]
where $\mathsf{op}$ has effect signature $  \text{\unboldmath$\forall$}    \alpha  .   \mathsf{unit}   \hookrightarrow  \ottsym{(}   \alpha   \rightarrow  \!   \langle \rangle   \;  \alpha   \ottsym{)} $.
This clause is typechecked under resumption type $\ottsym{(}  \alpha  \ottsym{,}  \mathit{y} \,  \mathord{:}  \,  \mathsf{unit}   \ottsym{,}   \alpha   \rightarrow  \!  \epsilon  \;  \alpha   \ottsym{)}$
for some $\epsilon$.
By \TSrule{Resume}, the two resumption expressions are assigned two different
type variables $\gamma_{{\mathrm{1}}}$ and $\gamma_{{\mathrm{2}}}$, and the arguments
$ \lambda\!  \, \mathit{z_{{\mathrm{1}}}}  \ottsym{.}  \ottsym{(}  \mathsf{resume} \,  \lambda\!  \, \mathit{z_{{\mathrm{2}}}}  \ottsym{.}  \mathit{z_{{\mathrm{1}}}}  \ottsym{)}  \ottsym{;}  \mathit{z_{{\mathrm{1}}}}$ and $ \lambda\!  \, \mathit{z_{{\mathrm{2}}}}  \ottsym{.}  \mathit{z_{{\mathrm{1}}}}$ are required to have
$ \gamma_{{\mathrm{1}}}   \rightarrow  \!  \epsilon  \;  \gamma_{{\mathrm{1}}} $ and $ \gamma_{{\mathrm{2}}}   \rightarrow  \!  \epsilon  \;  \gamma_{{\mathrm{2}}} $, respectively.
However, $ \lambda\!  \, \mathit{z_{{\mathrm{2}}}}  \ottsym{.}  \mathit{z_{{\mathrm{1}}}}$ cannot because $\mathit{z_{{\mathrm{1}}}}$ is associated with $\gamma_{{\mathrm{1}}}$ but
not with $\gamma_{{\mathrm{2}}}$.

\paragraph{Remark.}
The rule \TSrule{Resume} allows only the type of the argument to an
operation clause to be renamed.
Thus, other variables bound by, e.g., lambda abstractions and let-expressions
outside the resumption expression cannot be used as such a type.
As a result, more care may be required as to where to introduce a new variable.
For example, let us consider the following operation clause (which is a variant
of the example of $ \mathsf{choose}^{\forall} $ above).
\[
  \mathsf{choose}^{\forall}   \ottsym{(}  \mathit{x}  \ottsym{)}  \rightarrow  \mathsf{let} \, \mathit{y}  \ottsym{=}   \mathsf{fst}  \, \mathit{x} \,  \mathsf{in}  \, \mathsf{resume} \, \mathit{y}
\]
The variable $\mathit{x}$ is assigned $ \alpha  \times  \alpha $ first and the resumption
requires $\mathit{y}$ to be typed at fresh type variable $\beta$.
This clause would be rejected in the current type system because
$ \mathsf{fst}  \, \mathit{x}$ appears outside $ \mathsf{resume} $ and, therefore, $\mathit{y}$ is given type
$\alpha$, not $\beta$.
This inconvenience may be addressed by moving down the let-binding
in some cases: e.g., $\mathsf{resume} \, \ottsym{(}  \mathsf{let} \, \mathit{y}  \ottsym{=}   \mathsf{fst}  \, \mathit{x} \,  \mathsf{in}  \, \mathit{y}  \ottsym{)}$ is well
typed.


\section{Intermediate language: {\interlang}}
\label{sec:interlang}

The semantics of {\surfacelang} is given by a formal elaboration to an
intermediate language {\interlang}, wherein type abstraction and type
application appear explicitly.
We define the syntax, operational semantics, and type system of {\interlang}
and the formal elaboration from {\surfacelang} to {\interlang}.
Finally, we show type safety of {\surfacelang} via type preservation of the
elaboration and type soundness of {\interlang}.

\subsection{Syntax}

\begin{figure}[t]
\[
  \begin{array}{l@{\quad}l@{\quad}r@{\quad}l}
   \textbf{Values} & \ottnt{v} & ::= & \ottnt{c} \mid  \lambda\!  \, \mathit{x}  \ottsym{.}  \ottnt{e} \\[.5ex]
   \textbf{Polymorphic values} & \ottnt{w} & ::= & \ottnt{v} \mid  \Lambda\!  \, \alpha  \ottsym{.}  \ottnt{w} \\[.5ex]
   \textbf{Terms} & \ottnt{e} & ::= &
     \mathit{x} \,  \algeffseqover{ \ottnt{A} }  \mid \ottnt{c} \mid  \lambda\!  \, \mathit{x}  \ottsym{.}  \ottnt{e} \mid \ottnt{e_{{\mathrm{1}}}} \, \ottnt{e_{{\mathrm{2}}}} \mid
     \mathsf{let} \, \mathit{x}  \ottsym{=}   \Lambda\!  \,  \algeffseqover{ \alpha }   \ottsym{.}  \ottnt{e_{{\mathrm{1}}}} \,  \mathsf{in}  \, \ottnt{e_{{\mathrm{2}}}} \mid \\ &&&
      \textup{\texttt{\#}\relax}  \mathsf{op}   \ottsym{(}    \algeffseqover{ \ottnt{A} }    \ottsym{,}   \ottnt{e}   \ottsym{)}  \mid
      \textup{\texttt{\#}\relax}  \mathsf{op}   \ottsym{(}    \algeffseqover{ \sigma }    \ottsym{,}   \ottnt{w}   \ottsym{,}   \ottnt{E}   \ottsym{)}  \mid
     \mathsf{handle} \, \ottnt{e} \, \mathsf{with} \, \ottnt{h} \mid \\ &&&
     \mathsf{resume} \,  \algeffseqover{ \alpha }  \, \mathit{x}  \ottsym{.}  \ottnt{e}
     \\[.5ex]
   \textbf{Handlers} & \ottnt{h} & ::= &
     \mathsf{return} \, \mathit{x}  \rightarrow  \ottnt{e} \mid
     \ottnt{h}  \ottsym{;}   \Lambda\!  \,  \algeffseqover{ \alpha }   \ottsym{.}  \mathsf{op}  \ottsym{(}  \mathit{x}  \ottsym{)}  \rightarrow  \ottnt{e} \\[.5ex]
   \textbf{Evaluation contexts} &
     \ottnt{E} ^{  \algeffseqoverindex{ \alpha }{ \text{\unboldmath$\mathit{I}$} }  }  & ::= &  [\,]  \ \text{(if $ \algeffseqoverindex{ \alpha }{ \text{\unboldmath$\mathit{I}$} }  \,  =  \,  \emptyset $)} \mid
                   \ottnt{E} ^{  \algeffseqoverindex{ \alpha }{ \text{\unboldmath$\mathit{I}$} }  }  \, \ottnt{e_{{\mathrm{2}}}} \mid \ottnt{v_{{\mathrm{1}}}} \,  \ottnt{E} ^{  \algeffseqoverindex{ \alpha }{ \text{\unboldmath$\mathit{I}$} }  }  \mid \\ &&&
                  \mathsf{let} \, \mathit{x}  \ottsym{=}   \Lambda\!  \,  \algeffseqoverindex{ \beta }{ \text{\unboldmath$\mathit{J_{{\mathrm{1}}}}$} }   \ottsym{.}   \ottnt{E} ^{  \algeffseqoverindex{ \gamma }{ \text{\unboldmath$\mathit{J_{{\mathrm{2}}}}$} }  }  \,  \mathsf{in}  \, \ottnt{e_{{\mathrm{2}}}} \ \text{(if $ \algeffseqoverindex{ \alpha }{ \text{\unboldmath$\mathit{I}$} }  \,  =  \,  \algeffseqoverindex{ \beta }{ \text{\unboldmath$\mathit{J_{{\mathrm{1}}}}$} }   \ottsym{,}   \algeffseqoverindex{ \gamma }{ \text{\unboldmath$\mathit{J_{{\mathrm{2}}}}$} } $)} \mid \\ &&&
                   \textup{\texttt{\#}\relax}  \mathsf{op}   \ottsym{(}    \algeffseqoverindex{ \ottnt{A} }{ \text{\unboldmath$\mathit{J}$} }    \ottsym{,}    \ottnt{E} ^{  \algeffseqoverindex{ \alpha }{ \text{\unboldmath$\mathit{I}$} }  }    \ottsym{)}  \mid \mathsf{handle} \,  \ottnt{E} ^{  \algeffseqoverindex{ \alpha }{ \text{\unboldmath$\mathit{I}$} }  }  \, \mathsf{with} \, \ottnt{h} \\[.5ex]
  \end{array}
\]
 \caption{Syntax of {\interlang}.}
 \label{fig:inter-syntax}
\end{figure}

The syntax of {\interlang} is shown in \reffig{inter-syntax}.
Values, denoted by $\ottnt{v}$, consist of constants and lambda abstractions.
Polymorphic values, denoted by $\ottnt{w}$, are values abstracted over types.
Terms, denoted by $\ottnt{e}$, and handlers, denoted by $\ottnt{h}$, are the same as those of {\surfacelang} except for the following three points.
First, type abstraction and type arguments are explicit in
{\interlang}: variables and effect invocations are accompanied by a sequence of
types and let-bound expressions, resumption expressions, and
operation clauses bind type variables.
Second, a new term constructor of the form $ \textup{\texttt{\#}\relax}  \mathsf{op}   \ottsym{(}    \algeffseqover{ \sigma }    \ottsym{,}   \ottnt{w}   \ottsym{,}   \ottnt{E}   \ottsym{)} $ is
added.  It represents an intermediate state in which an effect
invocation is capturing the continuation up to the closest handler for
$\mathsf{op}$.
Here, $\ottnt{E}$ is an evaluation context~\cite{Felleisen/Hieb_1992_TCS} and denotes a
continuation to be resumed by an operation clause handling $\mathsf{op}$.
In the operational semantics, an operation invocation $ \textup{\texttt{\#}\relax}  \mathsf{op}   \ottsym{(}    \algeffseqover{ \ottnt{A} }    \ottsym{,}   \ottnt{v}   \ottsym{)} $
is first transformed to $ \textup{\texttt{\#}\relax}  \mathsf{op}   \ottsym{(}    \algeffseqover{ \ottnt{A} }    \ottsym{,}   \ottnt{v}   \ottsym{,}    [\,]    \ottsym{)} $ (where $ [\,] $ denotes the
empty context or the identity continuation) and then it bubbles up
by  capturing its context and pushing it onto the third argument.
Note that $ \algeffseqover{ \sigma } $ and $\ottnt{w}$ of $ \textup{\texttt{\#}\relax}  \mathsf{op}   \ottsym{(}    \algeffseqover{ \sigma }    \ottsym{,}   \ottnt{w}   \ottsym{,}   \ottnt{E}   \ottsym{)} $ become
polymorphic when it bubbles up from the body of a type abstraction.
Third, each resumption expression $\mathsf{resume} \,  \algeffseqover{ \alpha }  \, \mathit{x}  \ottsym{.}  \ottnt{e}$ declares
distinct (type) variables $ \algeffseqover{ \alpha } $ and $\mathit{x}$ to denote the (type)
argument to an operation clause, whereas a single variable
declared at $\mathsf{op}  \ottsym{(}  \mathit{x}  \ottsym{)}  \rightarrow  \ottnt{M}$ and implicit type variables are used
for the same purpose in {\surfacelang}.
For example, the {\surfacelang} operation clause $ \mathsf{choose}^{\forall}   \ottsym{(}  \mathit{x}  \ottsym{)}  \rightarrow  \mathsf{resume} \, \ottsym{(}   \mathsf{fst}  \, \mathit{x}  \ottsym{)}$
is translated to $ \Lambda\!  \, \alpha  \ottsym{.}   \mathsf{choose}^{\forall}   \ottsym{(}  \mathit{x}  \ottsym{)}  \rightarrow  \mathsf{resume} \, \beta \, \mathit{y}  \ottsym{.}  \ottsym{(}   \mathsf{fst}  \, \mathit{y}  \ottsym{)}$.
This change simplifies the semantics.

Evaluation contexts, denoted by $ \ottnt{E} ^{  \algeffseqover{ \alpha }  } $, are standard for the lambda calculus
with call-by-value, left-to-right evaluation except for two points.
First, they contain the form $\mathsf{let} \, \mathit{x}  \ottsym{=}   \Lambda\!  \,  \algeffseqover{ \alpha }   \ottsym{.}   \ottnt{E} ^{  \algeffseqover{ \beta }  }  \,  \mathsf{in}  \, \ottnt{e_{{\mathrm{2}}}}$, which
allows the body of a type abstraction to be evaluated.
%
%
Second, the metavariable $\ottnt{E}$ for evaluation contexts is indexed
by type variables $ \algeffseqover{ \alpha } $, meaning that the hole in the context
appears under type abstractions binding $ \algeffseqover{ \alpha } $.
For example, $\mathsf{let} \, \mathit{x}  \ottsym{=}   \Lambda\!  \, \alpha  \ottsym{.}  \mathsf{let} \, \mathit{y}  \ottsym{=}   \Lambda\!  \, \beta  \ottsym{.}   [\,]  \,  \mathsf{in}  \, \ottnt{e_{{\mathrm{2}}}} \,  \mathsf{in}  \, \ottnt{e_{{\mathrm{1}}}}$ is denoted
by $\ottnt{E}^{\alpha,\beta}$ and, more generally, $\mathsf{let} \, \mathit{x}  \ottsym{=}   \Lambda\!  \,  \algeffseqoverindex{ \beta }{ \text{\unboldmath$\mathit{J_{{\mathrm{1}}}}$} }   \ottsym{.}   \ottnt{E} ^{  \algeffseqoverindex{ \gamma }{ \text{\unboldmath$\mathit{J_{{\mathrm{2}}}}$} }  }  \,  \mathsf{in}  \, \ottnt{e}$ is denoted by
$\ottnt{E}^{ \algeffseqoverindex{ \beta }{ \text{\unboldmath$\mathit{J_{{\mathrm{1}}}}$} } , \algeffseqoverindex{ \gamma }{ \text{\unboldmath$\mathit{J_{{\mathrm{2}}}}$} } }$.  (Here, $ \algeffseqoverindex{ \beta }{ \text{\unboldmath$\mathit{J_{{\mathrm{1}}}}$} }   \ottsym{,}   \algeffseqoverindex{ \gamma }{ \text{\unboldmath$\mathit{J_{{\mathrm{2}}}}$} } $ stands for the
concatenation of the two sequences $ \algeffseqoverindex{ \beta }{ \text{\unboldmath$\mathit{J_{{\mathrm{1}}}}$} } $ and $ \algeffseqoverindex{ \gamma }{ \text{\unboldmath$\mathit{J_{{\mathrm{2}}}}$} } $.)  If
$ \algeffseqover{ \alpha } $ is not important, we simply write $\ottnt{E}$ for $ \ottnt{E} ^{  \algeffseqover{ \alpha }  } $.
We often use the term ``continuation'' to mean ``evaluation context,''
especially when it is expected to be resumed.

As usual, substitution $ \ottnt{e}    [  \ottnt{w}  \ottsym{/}  \mathit{x}  ]  $ of $\ottnt{w}$ for $\mathit{x}$ in $\ottnt{e}$ is defined
in a capture-avoiding manner.
Since variables come along with type arguments, the case for variables is
defined as follows:
\[
  \ottsym{(}  \mathit{x} \,  \algeffseqover{ \ottnt{A} }   \ottsym{)}    [   \Lambda\!  \,  \algeffseqover{ \alpha }   \ottsym{.}  \ottnt{v}  \ottsym{/}  \mathit{x}  ]   \defeq  \ottnt{v}    [   \algeffseqover{ \ottnt{A} }   \ottsym{/}   \algeffseqover{ \alpha }   ]  
\]
Application of substitution $ [   \Lambda\!  \,  \algeffseqoverindex{ \alpha }{ \text{\unboldmath$\mathit{I}$} }   \ottsym{.}  \ottnt{v}  \ottsym{/}  \mathit{x}  ] $ to $\mathit{x} \,  \algeffseqoverindex{ \ottnt{A} }{ \text{\unboldmath$\mathit{J}$} } $, where $\text{\unboldmath$\mathit{I}$}
\neq \text{\unboldmath$\mathit{J}$}$, is undefined.
We define free type variables $ \mathit{ftv}  (  \ottnt{e}  ) $ and $ \mathit{ftv}  (  \ottnt{E}  ) $ in $\ottnt{e}$ and
$\ottnt{E}$, respectively, as usual.

\subsection{Semantics}
\label{sec:inter-semantics}

\begin{figure}[t]
 \textbf{Reduction rules} \quad \framebox{$\ottnt{e_{{\mathrm{1}}}}  \rightsquigarrow  \ottnt{e_{{\mathrm{2}}}}$}
 \[\begin{array}{rcll@{\quad}rcll}
  \ottnt{c_{{\mathrm{1}}}} \, \ottnt{c_{{\mathrm{2}}}}               &  \rightsquigarrow  &  \zeta  (  \ottnt{c_{{\mathrm{1}}}}  ,  \ottnt{c_{{\mathrm{2}}}}  )  & \R{Const} &
  \ottsym{(}   \lambda\!  \, \mathit{x}  \ottsym{.}  \ottnt{e}  \ottsym{)} \, \ottnt{v}            &  \rightsquigarrow  &  \ottnt{e}    [  \ottnt{v}  \ottsym{/}  \mathit{x}  ]        & \R{Beta} \\[1.5ex]
  \multirow{2}{*}{\ensuremath{\mathsf{let} \, \mathit{x}  \ottsym{=}   \Lambda\!  \,  \algeffseqover{ \alpha }   \ottsym{.}  \ottnt{v} \,  \mathsf{in}  \, \ottnt{e}}} & \multirow{2}{*}{$ \rightsquigarrow $} & \multirow{2}{*}{$ \ottnt{e}    [   \Lambda\!  \,  \algeffseqover{ \alpha }   \ottsym{.}  \ottnt{v}  \ottsym{/}  \mathit{x}  ]  $} & \multirow{2}{*}{\R{Let}} &
  \mathsf{handle} \, \ottnt{v} \, \mathsf{with} \, \ottnt{h}     &  \rightsquigarrow  &  \ottnt{e}    [  \ottnt{v}  \ottsym{/}  \mathit{x}  ]   & \R{Return} \\
    &&&& \multicolumn{4}{r}{\text{(where $ \ottnt{h} ^\mathsf{return}  \,  =  \, \mathsf{return} \, \mathit{x}  \rightarrow  \ottnt{e}$)}} \\[1.5ex]
   \textup{\texttt{\#}\relax}  \mathsf{op}   \ottsym{(}    \algeffseqover{ \ottnt{A} }    \ottsym{,}   \ottnt{v}   \ottsym{)}            &  \rightsquigarrow  &  \textup{\texttt{\#}\relax}  \mathsf{op}   \ottsym{(}    \algeffseqover{ \ottnt{A} }    \ottsym{,}   \ottnt{v}   \ottsym{,}    [\,]    \ottsym{)}  & \R{Op} \\
   \end{array}\]
 \[\begin{array}{rcl@{\ \ }l}
   \textup{\texttt{\#}\relax}  \mathsf{op}   \ottsym{(}    \algeffseqover{ \sigma }    \ottsym{,}   \ottnt{w}   \ottsym{,}   \ottnt{E}   \ottsym{)}  \, \ottnt{e_{{\mathrm{2}}}}   &  \rightsquigarrow  &  \textup{\texttt{\#}\relax}  \mathsf{op}   \ottsym{(}    \algeffseqover{ \sigma }    \ottsym{,}   \ottnt{w}   \ottsym{,}   \ottnt{E} \, \ottnt{e_{{\mathrm{2}}}}   \ottsym{)}  & \R{OpApp1} \\[1ex]
  \ottnt{v_{{\mathrm{1}}}} \,  \textup{\texttt{\#}\relax}  \mathsf{op}   \ottsym{(}    \algeffseqover{ \sigma }    \ottsym{,}   \ottnt{w}   \ottsym{,}   \ottnt{E}   \ottsym{)}    &  \rightsquigarrow  &  \textup{\texttt{\#}\relax}  \mathsf{op}   \ottsym{(}    \algeffseqover{ \sigma }    \ottsym{,}   \ottnt{w}   \ottsym{,}   \ottnt{v_{{\mathrm{1}}}} \, \ottnt{E}   \ottsym{)}  & \R{OpApp2} \\[1ex]
   \textup{\texttt{\#}\relax}  \mathsf{op}'   \ottsym{(}    \algeffseqoverindex{ \ottnt{A} }{ \text{\unboldmath$\mathit{I}$} }    \ottsym{,}    \textup{\texttt{\#}\relax}  \mathsf{op}   \ottsym{(}    \algeffseqoverindex{ \sigma }{ \text{\unboldmath$\mathit{J}$} }    \ottsym{,}   \ottnt{w}   \ottsym{,}   \ottnt{E}   \ottsym{)}    \ottsym{)}  &  \rightsquigarrow  &  \textup{\texttt{\#}\relax}  \mathsf{op}   \ottsym{(}    \algeffseqoverindex{ \sigma }{ \text{\unboldmath$\mathit{J}$} }    \ottsym{,}   \ottnt{w}   \ottsym{,}    \textup{\texttt{\#}\relax}  \mathsf{op}'   \ottsym{(}    \algeffseqoverindex{ \ottnt{A} }{ \text{\unboldmath$\mathit{I}$} }    \ottsym{,}   \ottnt{E}   \ottsym{)}    \ottsym{)}  & \R{OpOp} \\[1ex]
  \mathsf{handle} \,  \textup{\texttt{\#}\relax}  \mathsf{op}   \ottsym{(}    \algeffseqover{ \sigma }    \ottsym{,}   \ottnt{w}   \ottsym{,}   \ottnt{E}   \ottsym{)}  \, \mathsf{with} \, \ottnt{h} &  \rightsquigarrow  &  \textup{\texttt{\#}\relax}  \mathsf{op}   \ottsym{(}    \algeffseqover{ \sigma }    \ottsym{,}   \ottnt{w}   \ottsym{,}   \mathsf{handle} \, \ottnt{E} \, \mathsf{with} \, \ottnt{h}   \ottsym{)}  & \multirow{2}{*}{\R{OpHandle}} \\
    && \multicolumn{1}{r@{\quad}}{\text{(where $\mathsf{op} \,  \not\in  \,  \mathit{ops}  (  \ottnt{h}  ) $)}} \\[1ex]
  \mathsf{let} \, \mathit{x}  \ottsym{=}   \Lambda\!  \,  \algeffseqoverindex{ \alpha }{ \text{\unboldmath$\mathit{I}$} }   \ottsym{.}   \textup{\texttt{\#}\relax}  \mathsf{op}   \ottsym{(}    \algeffseqoverindex{ \sigma }{ \text{\unboldmath$\mathit{J}$} }    \ottsym{,}   \ottnt{w}   \ottsym{,}   \ottnt{E}   \ottsym{)}  \,  \mathsf{in}  \, \ottnt{e_{{\mathrm{2}}}} &  \rightsquigarrow  & & \multirow{2}{*}{\R{OpLet}} \\
    \multicolumn{3}{r@{\quad}}{ \textup{\texttt{\#}\relax}  \mathsf{op}   \ottsym{(}    \algeffseqoverindex{  \text{\unboldmath$\forall$}  \,  \algeffseqoverindex{ \alpha }{ \text{\unboldmath$\mathit{I}$} }   \ottsym{.}  \sigma }{ \text{\unboldmath$\mathit{J}$} }    \ottsym{,}    \Lambda\!  \,  \algeffseqoverindex{ \alpha }{ \text{\unboldmath$\mathit{I}$} }   \ottsym{.}  \ottnt{w}   \ottsym{,}   \mathsf{let} \, \mathit{x}  \ottsym{=}   \Lambda\!  \,  \algeffseqoverindex{ \alpha }{ \text{\unboldmath$\mathit{I}$} }   \ottsym{.}  \ottnt{E} \,  \mathsf{in}  \, \ottnt{e_{{\mathrm{2}}}}   \ottsym{)} } \\[1.5ex]
  \mathsf{handle} \,  \textup{\texttt{\#}\relax}  \mathsf{op}   \ottsym{(}    \algeffseqoverindex{  \text{\unboldmath$\forall$}  \,  \algeffseqoverindex{ \beta }{ \text{\unboldmath$\mathit{J}$} }   \ottsym{.}  \ottnt{A} }{ \text{\unboldmath$\mathit{I}$} }    \ottsym{,}    \Lambda\!  \,  \algeffseqoverindex{ \beta }{ \text{\unboldmath$\mathit{J}$} }   \ottsym{.}  \ottnt{v}   \ottsym{,}    \ottnt{E} ^{  \algeffseqoverindex{ \beta }{ \text{\unboldmath$\mathit{J}$} }  }    \ottsym{)}  \, \mathsf{with} \, \ottnt{h} &  \rightsquigarrow  & \\ \multicolumn{3}{r@{\quad}}{
      \ottnt{e}    [  \mathsf{handle} \,  \ottnt{E} ^{  \algeffseqoverindex{ \beta }{ \text{\unboldmath$\mathit{J}$} }  }  \, \mathsf{with} \, \ottnt{h}  /  \mathsf{resume}  ]^{  \algeffseqoverindex{  \text{\unboldmath$\forall$}  \,  \algeffseqoverindex{ \beta }{ \text{\unboldmath$\mathit{J}$} }   \ottsym{.}  \ottnt{A} }{ \text{\unboldmath$\mathit{I}$} }  }_{  \Lambda\!  \,  \algeffseqoverindex{ \beta }{ \text{\unboldmath$\mathit{J}$} }   \ottsym{.}  \ottnt{v} }      [    \algeffseqoverindex{ \ottnt{A} }{ \text{\unboldmath$\mathit{I}$} }     [   \algeffseqover{  \bot  }   \ottsym{/}   \algeffseqoverindex{ \beta }{ \text{\unboldmath$\mathit{J}$} }   ]    \ottsym{/}   \algeffseqoverindex{ \alpha }{ \text{\unboldmath$\mathit{I}$} }   ]      [   \ottnt{v}    [   \algeffseqover{  \bot  }   \ottsym{/}   \algeffseqoverindex{ \beta }{ \text{\unboldmath$\mathit{J}$} }   ]    \ottsym{/}  \mathit{x}  ]  } & \R{Handle} \\
   \multicolumn{3}{r@{\quad}}{\text{(where $ \ottnt{h} ^{ \mathsf{op} }  \,  =  \,  \Lambda\!  \,  \algeffseqoverindex{ \alpha }{ \text{\unboldmath$\mathit{I}$} }   \ottsym{.}  \mathsf{op}  \ottsym{(}  \mathit{x}  \ottsym{)}  \rightarrow  \ottnt{e}$)}}
   \end{array}\]
 \textbf{Evaluation rules} \quad \framebox{$\ottnt{e_{{\mathrm{1}}}}  \longrightarrow  \ottnt{e_{{\mathrm{2}}}}$}
 \begin{center}
  $\ottdruleEXXEval{}$
 \end{center}
 \caption{Semantics of {\interlang}.}
 \label{fig:inter-semantics}
\end{figure}

The semantics of {\interlang} is given in the small-step style and consists of two
relations: the reduction relation $ \rightsquigarrow $, which is for basic computation, and
the evaluation relation $ \longrightarrow $, which is for top-level execution.
\reffig{inter-semantics} shows the rules for these relations.
In what follows, we write $ \ottnt{h} ^\mathsf{return} $ for the return clause of handler $\ottnt{h}$,
$ \mathit{ops}  (  \ottnt{h}  ) $ for the set of effect operations handled by $\ottnt{h}$, and
$ \ottnt{h} ^{ \mathsf{op} } $ for the operation clause for $\mathsf{op}$ in $\ottnt{h}$.

Most of the reduction rules are standard~\cite{Kammar/Lindley/Oury_2013_ICFP,Leijen_2017_POPL}.
A constant application $\ottnt{c_{{\mathrm{1}}}} \, \ottnt{c_{{\mathrm{2}}}}$ reduces to $ \zeta  (  \ottnt{c_{{\mathrm{1}}}}  ,  \ottnt{c_{{\mathrm{2}}}}  ) $ \R{Const}, where
function $ \zeta $ maps a pair of constants to another constant.
A function application $\ottsym{(}   \lambda\!  \, \mathit{x}  \ottsym{.}  \ottnt{e}  \ottsym{)} \, \ottnt{v}$ and
a let-expression $\mathsf{let} \, \mathit{x}  \ottsym{=}   \Lambda\!  \,  \algeffseqover{ \alpha }   \ottsym{.}  \ottnt{v} \,  \mathsf{in}  \, \ottnt{e}$ reduce to
$ \ottnt{e}    [  \ottnt{v}  \ottsym{/}  \mathit{x}  ]  $ \R{Beta} and
$ \ottnt{e}    [   \Lambda\!  \,  \algeffseqover{ \alpha }   \ottsym{.}  \ottnt{v}  \ottsym{/}  \mathit{x}  ]  $ \R{Let}, respectively.
If a handled expression is a value $\ottnt{v}$, the $ \mathsf{handle} $--$ \mathsf{with} $ expression
reduces to the body of the return clause where $\ottnt{v}$ is substituted
for the parameter $\mathit{x}$ \R{Return}.
An effect invocation $ \textup{\texttt{\#}\relax}  \mathsf{op}   \ottsym{(}    \algeffseqover{ \ottnt{A} }    \ottsym{,}   \ottnt{v}   \ottsym{)} $ reduces to $ \textup{\texttt{\#}\relax}  \mathsf{op}   \ottsym{(}    \algeffseqover{ \ottnt{A} }    \ottsym{,}   \ottnt{v}   \ottsym{,}    [\,]    \ottsym{)} $
with the identity continuation, as explained above \R{Op}; the process
of capturing its evaluation context is expressed by the rules
\R{OpApp1}, \R{OpApp2}, \R{OpOp}, \R{OpHandle}, and \R{OpLet}.
The rule \R{OpHandle} can be applied only if the handler $\ottnt{h}$ does \emph{not}
handle $\mathsf{op}$.
The rule \R{OpLet} is applied to a let-expression where $ \textup{\texttt{\#}\relax}  \mathsf{op}   \ottsym{(}    \algeffseqoverindex{ \sigma }{ \text{\unboldmath$\mathit{J}$} }    \ottsym{,}   \ottnt{w}   \ottsym{,}   \ottnt{E}   \ottsym{)} $ appears under
a type abstraction with bound type variables $ \algeffseqoverindex{ \alpha }{ \text{\unboldmath$\mathit{I}$} } $.
Since $ \algeffseqoverindex{ \sigma }{ \text{\unboldmath$\mathit{J}$} } $ and $\ottnt{w}$ may refer to $ \algeffseqoverindex{ \alpha }{ \text{\unboldmath$\mathit{I}$} } $, the reduction
result binds $ \algeffseqoverindex{ \alpha }{ \text{\unboldmath$\mathit{I}$} } $ in both $ \algeffseqoverindex{ \sigma }{ \text{\unboldmath$\mathit{J}$} } $ and $\ottnt{w}$.
We write $ \algeffseqoverindex{  \text{\unboldmath$\forall$}  \,  \algeffseqoverindex{ \alpha }{ \text{\unboldmath$\mathit{I}$} }   \ottsym{.}  \sigma }{ \text{\unboldmath$\mathit{J}$} } $ for a sequence $ \text{\unboldmath$\forall$}  \,  \algeffseqoverindex{ \alpha }{ \text{\unboldmath$\mathit{I}$} }   \ottsym{.}  \sigma_{j_1}$,
\ldots, $ \text{\unboldmath$\forall$}  \,  \algeffseqoverindex{ \alpha }{ \text{\unboldmath$\mathit{I}$} }   \ottsym{.}  \sigma_{j_n}$ of type schemes (where
$\text{\unboldmath$\mathit{J}$} = \{j_1,\ldots,j_n\}$).

The crux of the semantics is \R{Handle}: it is applied when
$ \textup{\texttt{\#}\relax}  \mathsf{op}   \ottsym{(}    \algeffseqoverindex{ \sigma }{ \text{\unboldmath$\mathit{I}$} }    \ottsym{,}   \ottnt{w}   \ottsym{,}   \ottnt{E}   \ottsym{)} $ reaches the handler $\ottnt{h}$
that handles $\mathsf{op}$.
Since the handled term $ \textup{\texttt{\#}\relax}  \mathsf{op}   \ottsym{(}    \algeffseqoverindex{ \sigma }{ \text{\unboldmath$\mathit{I}$} }    \ottsym{,}   \ottnt{w}   \ottsym{,}   \ottnt{E}   \ottsym{)} $ is constructed from an effect
invocation $ \textup{\texttt{\#}\relax}  \mathsf{op}   \ottsym{(}    \algeffseqoverindex{ \ottnt{A} }{ \text{\unboldmath$\mathit{I}$} }    \ottsym{,}   \ottnt{v}   \ottsym{)} $, if the captured continuation $\ottnt{E}$
binds type variables $ \algeffseqoverindex{ \beta }{ \text{\unboldmath$\mathit{J}$} } $, the same type variables $ \algeffseqoverindex{ \beta }{ \text{\unboldmath$\mathit{J}$} } $
should have been added to $ \algeffseqoverindex{ \ottnt{A} }{ \text{\unboldmath$\mathit{I}$} } $ and $\ottnt{v}$ along the capture.
Thus, the handled expression on the left-hand side of the rule takes the
form $ \textup{\texttt{\#}\relax}  \mathsf{op}   \ottsym{(}    \algeffseqoverindex{  \text{\unboldmath$\forall$}  \,  \algeffseqoverindex{ \beta }{ \text{\unboldmath$\mathit{J}$} }   \ottsym{.}  \ottnt{A} }{ \text{\unboldmath$\mathit{I}$} }    \ottsym{,}    \Lambda\!  \,  \algeffseqoverindex{ \beta }{ \text{\unboldmath$\mathit{J}$} }   \ottsym{.}  \ottnt{v}   \ottsym{,}    \ottnt{E} ^{  \algeffseqoverindex{ \beta }{ \text{\unboldmath$\mathit{J}$} }  }    \ottsym{)} $ (with the same type variables
$ \algeffseqoverindex{ \beta }{ \text{\unboldmath$\mathit{J}$} } $).

The right-hand side of \R{Handle} involves three types of substitution: continuation
substitution
$ [  \mathsf{handle} \,  \ottnt{E} ^{  \algeffseqoverindex{ \beta }{ \text{\unboldmath$\mathit{J}$} }  }  \, \mathsf{with} \, \ottnt{h}  /  \mathsf{resume}  ]^{  \algeffseqoverindex{  \text{\unboldmath$\forall$}  \,  \algeffseqoverindex{ \beta }{ \text{\unboldmath$\mathit{J}$} }   \ottsym{.}  \ottnt{A} }{ \text{\unboldmath$\mathit{I}$} }  }_{  \Lambda\!  \,  \algeffseqoverindex{ \beta }{ \text{\unboldmath$\mathit{J}$} }   \ottsym{.}  \ottnt{v} } $
for resumptions, type substitution for $ \algeffseqoverindex{ \alpha }{ \text{\unboldmath$\mathit{I}$} } $, and value
substitution for $\mathit{x}$.
We explain them one by one below.
In the following, let $ \ottnt{h} ^{ \mathsf{op} }  \,  =  \,  \Lambda\!  \,  \algeffseqoverindex{ \alpha }{ \text{\unboldmath$\mathit{I}$} }   \ottsym{.}  \mathsf{op}  \ottsym{(}  \mathit{x}  \ottsym{)}  \rightarrow  \ottnt{e}$
and $ \ottnt{E'} ^{  \algeffseqoverindex{ \beta }{ \text{\unboldmath$\mathit{J}$} }  }  \,  =  \, \mathsf{handle} \,  \ottnt{E} ^{  \algeffseqoverindex{ \beta }{ \text{\unboldmath$\mathit{J}$} }  }  \, \mathsf{with} \, \ottnt{h}$.
\paragraph{Continuation substitution.}
%
%
Let us start with a simple case where the sequence $ \algeffseqoverindex{ \beta }{ \text{\unboldmath$\mathit{J}$} } $ is empty.
Intuitively, continuation substitution $ [  \ottnt{E'}  /  \mathsf{resume}  ]^{  \algeffseqoverindex{ \ottnt{A} }{ \text{\unboldmath$\mathit{I}$} }  }_{ \ottnt{v} } $
replaces a resumption expression $\mathsf{resume} \,  \algeffseqoverindex{ \gamma }{ \text{\unboldmath$\mathit{I}$} }  \, \mathit{z}  \ottsym{.}  \ottnt{e'}$ in the body $\ottnt{e}$ with
$ \ottnt{E'}  [  \ottnt{v'}  ] $, where $\ottnt{v'}$ is the value of $\ottnt{e'}$, and
substitutes $ \algeffseqoverindex{ \ottnt{A} }{ \text{\unboldmath$\mathit{I}$} } $ and $\ottnt{v}$ (arguments to the invocation of $\mathsf{op}$)
for $ \algeffseqoverindex{ \gamma }{ \text{\unboldmath$\mathit{I}$} } $ and $\mathit{z}$, respectively.
Therefore, assuming $ \mathsf{resume} $ does not appear in $\ottnt{e'}$,
we define $ \ottsym{(}  \mathsf{resume} \,  \algeffseqoverindex{ \gamma }{ \text{\unboldmath$\mathit{I}$} }  \, \mathit{z}  \ottsym{.}  \ottnt{e'}  \ottsym{)}    [  \ottnt{E'}  /  \mathsf{resume}  ]^{  \algeffseqoverindex{ \ottnt{A} }{ \text{\unboldmath$\mathit{I}$} }  }_{ \ottnt{v} }  $ to be
$\mathsf{let} \, \mathit{y}  \ottsym{=}    \ottnt{e'}    [   \algeffseqoverindex{ \ottnt{A} }{ \text{\unboldmath$\mathit{I}$} }   \ottsym{/}   \algeffseqoverindex{ \gamma }{ \text{\unboldmath$\mathit{I}$} }   ]      [  \ottnt{v}  \ottsym{/}  \mathit{z}  ]   \,  \mathsf{in}  \,  \ottnt{E'}  [  \mathit{y}  ] $ (for fresh $\mathit{y}$).  Note
that the evaluation of $\ottnt{e'}$ takes place outside of $\ottnt{E}$ so that
an invocation of an effect in $\ottnt{e'}$ is \emph{not} handled by handlers in $\ottnt{E}$.
When $ \algeffseqoverindex{ \beta }{ \text{\unboldmath$\mathit{J}$} } $ is not empty,
\begin{eqnarray*}
  &&  \ottsym{(}  \mathsf{resume} \,  \algeffseqoverindex{ \gamma }{ \text{\unboldmath$\mathit{I}$} }  \, \mathit{z}  \ottsym{.}  \ottnt{e'}  \ottsym{)}    [   \ottnt{E} ^{  \algeffseqoverindex{ \beta }{ \text{\unboldmath$\mathit{J}$} }  }   /  \mathsf{resume}  ]^{  \algeffseqoverindex{  \text{\unboldmath$\forall$}  \,  \algeffseqoverindex{ \beta }{ \text{\unboldmath$\mathit{J}$} }   \ottsym{.}  \ottnt{A} }{ \text{\unboldmath$\mathit{I}$} }  }_{  \Lambda\!  \,  \algeffseqoverindex{ \beta }{ \text{\unboldmath$\mathit{J}$} }   \ottsym{.}  \ottnt{v} }   \defeq \\
  && \qquad \mathsf{let} \, \mathit{y}  \ottsym{=}   \graybox{  \Lambda\!   \,   \algeffseqoverindex{ \beta }{ \text{\unboldmath$\mathit{J}$} }  .} \,    \ottnt{e'}    [   \algeffseqoverindex{ \ottnt{A} }{ \text{\unboldmath$\mathit{I}$} }   \ottsym{/}   \algeffseqoverindex{ \gamma }{ \text{\unboldmath$\mathit{I}$} }   ]      [  \ottnt{v}  \ottsym{/}  \mathit{z}  ]    \,  \mathsf{in}  \,   \ottnt{E} ^{  \algeffseqoverindex{ \beta }{ \text{\unboldmath$\mathit{J}$} }  }   [  \mathit{y} \,  \graybox{  \algeffseqoverindex{ \beta }{ \text{\unboldmath$\mathit{J}$} }  }\,   ] \ .
\end{eqnarray*}
(The differences from the simple case are shaded.)
The idea is to bind $ \algeffseqoverindex{ \beta }{ \text{\unboldmath$\mathit{J}$} } $ that appear free in
$ \algeffseqoverindex{ \ottnt{A} }{ \text{\unboldmath$\mathit{I}$} } $ and $\ottnt{v}$ by type abstraction at $ \mathsf{let} $ and
to instantiate with the same variables at $\mathit{y} \,  \algeffseqoverindex{ \beta }{ \text{\unboldmath$\mathit{J}$} } $, where
$ \algeffseqoverindex{ \beta }{ \text{\unboldmath$\mathit{J}$} } $ are bound by type abstractions in $ \ottnt{E} ^{  \algeffseqoverindex{ \beta }{ \text{\unboldmath$\mathit{J}$} }  } $.

Continuation substitution is formally defined as follows:

\begin{defn}[Continuation substitution]
 Substitution of continuation $ \ottnt{E} ^{  \algeffseqoverindex{ \beta }{ \text{\unboldmath$\mathit{J}$} }  } $ for resumptions in $\ottnt{e}$, written
 $ \ottnt{e}    [   \ottnt{E} ^{  \algeffseqoverindex{ \beta }{ \text{\unboldmath$\mathit{J}$} }  }   /  \mathsf{resume}  ]^{  \algeffseqoverindex{  \text{\unboldmath$\forall$}  \,  \algeffseqoverindex{ \beta }{ \text{\unboldmath$\mathit{J}$} }   \ottsym{.}  \ottnt{A} }{ \text{\unboldmath$\mathit{I}$} }  }_{  \Lambda\!  \,  \algeffseqoverindex{ \beta }{ \text{\unboldmath$\mathit{J}$} }   \ottsym{.}  \ottnt{v} }  $, is defined in a capture-avoiding
 manner, as follows (we describe only the important cases):
 \[\begin{array}{rcl}
   \ottsym{(}  \mathsf{resume} \,  \algeffseqoverindex{ \gamma }{ \text{\unboldmath$\mathit{I}$} }  \, \mathit{z}  \ottsym{.}  \ottnt{e}  \ottsym{)}    [   \ottnt{E} ^{  \algeffseqoverindex{ \beta }{ \text{\unboldmath$\mathit{J}$} }  }   /  \mathsf{resume}  ]^{  \algeffseqoverindex{  \text{\unboldmath$\forall$}  \,  \algeffseqoverindex{ \beta }{ \text{\unboldmath$\mathit{J}$} }   \ottsym{.}  \ottnt{A} }{ \text{\unboldmath$\mathit{I}$} }  }_{  \Lambda\!  \,  \algeffseqoverindex{ \beta }{ \text{\unboldmath$\mathit{J}$} }   \ottsym{.}  \ottnt{v} }   &\defeq& \\[1ex]
    \multicolumn{3}{r}{
     \mathsf{let} \, \mathit{y}  \ottsym{=}   \Lambda\!  \,  \algeffseqoverindex{ \beta }{ \text{\unboldmath$\mathit{J}$} }   \ottsym{.}     \ottnt{e}    [   \ottnt{E} ^{  \algeffseqoverindex{ \beta }{ \text{\unboldmath$\mathit{J}$} }  }   /  \mathsf{resume}  ]^{  \algeffseqoverindex{  \text{\unboldmath$\forall$}  \,  \algeffseqoverindex{ \beta }{ \text{\unboldmath$\mathit{J}$} }   \ottsym{.}  \ottnt{A} }{ \text{\unboldmath$\mathit{I}$} }  }_{  \Lambda\!  \,  \algeffseqoverindex{ \beta }{ \text{\unboldmath$\mathit{J}$} }   \ottsym{.}  \ottnt{v} }      [   \algeffseqoverindex{ \ottnt{A} }{ \text{\unboldmath$\mathit{I}$} }   \ottsym{/}   \algeffseqoverindex{ \gamma }{ \text{\unboldmath$\mathit{I}$} }   ]      [  \ottnt{v}  \ottsym{/}  \mathit{z}  ]   \,  \mathsf{in}  \,   \ottnt{E} ^{  \algeffseqoverindex{ \beta }{ \text{\unboldmath$\mathit{J}$} }  }   [  \mathit{y} \,  \algeffseqoverindex{ \beta }{ \text{\unboldmath$\mathit{J}$} }   ] 
    } \\[1ex]
    \multicolumn{3}{r}{
     \text{(if $\ottsym{(}   \mathit{ftv}  (  \ottnt{e}  )  \,  \mathbin{\cup}  \,  \mathit{ftv}  (   \ottnt{E} ^{  \algeffseqoverindex{ \beta }{ \text{\unboldmath$\mathit{J}$} }  }   )   \ottsym{)} \,  \mathbin{\cap}  \, \ottsym{\{}   \algeffseqoverindex{ \beta }{ \text{\unboldmath$\mathit{J}$} }   \ottsym{\}} \,  =  \,  \emptyset $ and $\mathit{y}$ is fresh)}
    }
    \\[1ex]
      \ottsym{(}  \mathsf{return} \, \mathit{x}  \rightarrow  \ottnt{e}  \ottsym{)}    [  \ottnt{E}  /  \mathsf{resume}  ]^{  \algeffseqover{ \sigma }  }_{ \ottnt{w} }   &\defeq& \mathsf{return} \, \mathit{x}  \rightarrow   \ottnt{e}    [  \ottnt{E}  /  \mathsf{resume}  ]^{  \algeffseqover{ \sigma }  }_{ \ottnt{w} }   \\[1ex]
     \ottsym{(}  \ottnt{h'}  \ottsym{;}   \Lambda\!  \,  \algeffseqoverindex{ \gamma }{ \text{\unboldmath$\mathit{J}$} }   \ottsym{.}  \mathsf{op}  \ottsym{(}  \mathit{x}  \ottsym{)}  \rightarrow  \ottnt{e}  \ottsym{)}    [  \ottnt{E}  /  \mathsf{resume}  ]^{  \algeffseqoverindex{ \sigma }{ \text{\unboldmath$\mathit{I}$} }  }_{ \ottnt{w} }   &\defeq&  \ottnt{h'}    [  \ottnt{E}  /  \mathsf{resume}  ]^{  \algeffseqoverindex{ \sigma }{ \text{\unboldmath$\mathit{I}$} }  }_{ \ottnt{w} }    \ottsym{;}   \Lambda\!  \,  \algeffseqoverindex{ \gamma }{ \text{\unboldmath$\mathit{J}$} }   \ottsym{.}  \mathsf{op}  \ottsym{(}  \mathit{x}  \ottsym{)}  \rightarrow  \ottnt{e} \\
   \end{array}\]
\end{defn}
The second and third clauses (for a handler) mean that continuation
substitution is applied only to return clauses.

\paragraph{Type and value substitution.}
The type and value substitutions $  \algeffseqoverindex{ \ottnt{A} }{ \text{\unboldmath$\mathit{I}$} }     [   \algeffseqoverindex{  \bot  }{ \text{\unboldmath$\mathit{J}$} }   \ottsym{/}   \algeffseqoverindex{ \beta }{ \text{\unboldmath$\mathit{J}$} }   ]  $ and
$ \ottnt{v}    [   \algeffseqoverindex{  \bot  }{ \text{\unboldmath$\mathit{J}$} }   \ottsym{/}   \algeffseqoverindex{ \beta }{ \text{\unboldmath$\mathit{J}$} }   ]  $, respectively, in \R{Handle} are for (type)
parameters in $ \ottnt{h} ^{ \mathsf{op} }  \,  =  \,  \Lambda\!  \,  \algeffseqoverindex{ \alpha }{ \text{\unboldmath$\mathit{I}$} }   \ottsym{.}  \mathsf{op}  \ottsym{(}  \mathit{x}  \ottsym{)}  \rightarrow  \ottnt{e}$.  The basic idea is
to substitute $ \algeffseqoverindex{ \ottnt{A} }{ \text{\unboldmath$\mathit{I}$} } $ for $ \algeffseqoverindex{ \alpha }{ \text{\unboldmath$\mathit{I}$} } $ and $\ottnt{v}$ for $\mathit{x}$---similarly to continuation substitution.  We erase free type
variables $ \algeffseqoverindex{ \beta }{ \text{\unboldmath$\mathit{J}$} } $ in $ \algeffseqoverindex{ \ottnt{A} }{ \text{\unboldmath$\mathit{I}$} } $ and $\ottnt{v}$ by substituting the
designated base type $ \bot $ for all of them.  (We write
$  \algeffseqoverindex{ \ottnt{A} }{ \text{\unboldmath$\mathit{I}$} }     [   \algeffseqoverindex{  \bot  }{ \text{\unboldmath$\mathit{J}$} }   \ottsym{/}   \algeffseqoverindex{ \beta }{ \text{\unboldmath$\mathit{J}$} }   ]  $ and $ \ottnt{v}    [   \algeffseqoverindex{  \bot  }{ \text{\unboldmath$\mathit{J}$} }   \ottsym{/}   \algeffseqoverindex{ \beta }{ \text{\unboldmath$\mathit{J}$} }   ]  $ for
the types and value, respectively, after the erasure.)

The evaluation rule is ordinary:
Evaluation of a term proceeds by reducing a subterm
under an evaluation context.

\subsection{Type system}

\begin{figure}[t]
 \textbf{Typing rules} \\[1ex]
\framebox{$\Gamma  \ottsym{;}  \ottnt{r} \,   \vdash  \ottnt{e} \,   \ottsym{:}  \ottnt{A} \,  |  \, \epsilon$}
\begin{center}
 $\ottdruleTXXVar{}$ \hfil
 $\ottdruleTXXConst{}$ \\[1.5ex]
 $\ottdruleTXXAbs{}$ \\[1.5ex]
 $\ottdruleTXXApp{}$ \\[1.5ex]
 $\ottdruleTXXOp{}$ \\[1.5ex]
 $\ottdruleTXXOpCont{}$ \\[1.5ex]
 $\ottdruleTXXWeak{}$ \\[1.5ex]
 $\ottdruleTXXHandle{}$ \\[1.5ex]
 $\ottdruleTXXLet{}$ \\[1.5ex]
 $\ottdruleTXXResume{}$ \hfil
\end{center}
 \caption{Typing rules for terms in {\interlang}.}
 \label{fig:inter-typing1}
\end{figure}

\begin{figure}[t]
\framebox{$\Gamma  \ottsym{;}  \ottnt{r} \,   \vdash  \ottnt{h}  \ottsym{:}  \ottnt{A} \,  |  \, \epsilon  \Rightarrow  \ottnt{B} \,  |  \, \epsilon'$}
\begin{center}
 $\ottdruleTHXXReturn{}$ \\[1.5ex]
 $\ottdruleTHXXOp{}$
\end{center}
\framebox{$ \Gamma   \vdash   \ottnt{E}   \ottsym{:}    \sigma  \multimap  \ottnt{A}   \,  |  \, \epsilon $}
\begin{center}
 $\ottdruleTEXXHole{}$ \\[1.5ex]
 $\ottdruleTEXXAppOne{}$ \\[1.5ex]
 $\ottdruleTEXXAppTwo{}$ \\[1.5ex]
 $\ottdruleTEXXOp{}$ \\[1.5ex]
 $\ottdruleTEXXHandle{}$ \\[1.5ex]
 $\ottdruleTEXXWeak{}$ \\[1.5ex]
 $\ottdruleTEXXLet{}$ \\[1.5ex]
\end{center}
 \caption{Typing rules for handlers and continuations in {\interlang}.}
 \label{fig:inter-typing2}
\end{figure}

The type system of {\interlang} is similar to
that of {\surfacelang} and has five judgments:
well-formedness of typing contexts $\vdash  \Gamma$;
well formedness of type schemes $\Gamma  \vdash  \sigma$;
term typing judgment $\Gamma  \ottsym{;}  \ottnt{r} \,   \vdash  \ottnt{e} \,   \ottsym{:}  \ottnt{A} \,  |  \, \epsilon$;
handler typing judgment $\Gamma  \ottsym{;}  \ottnt{r} \,   \vdash  \ottnt{h}  \ottsym{:}  \ottnt{A} \,  |  \, \epsilon  \Rightarrow  \ottnt{B} \,  |  \, \epsilon'$; and
continuation typing judgment $ \Gamma   \vdash   \ottnt{E}   \ottsym{:}     \text{\unboldmath$\forall$}  \,  \algeffseqover{ \alpha }   \ottsym{.}  \ottnt{A}  \multimap  \ottnt{B}   \,  |  \, \epsilon $.
The first two are defined in the same way as those of {\surfacelang}.
The last judgment means
that a term obtained by filling the hole of $\ottnt{E}$ with a term having $\ottnt{A}$
under $\Gamma  \ottsym{,}   \algeffseqover{ \alpha } $ is typed at $\ottnt{B}$ under $\Gamma$ and possibly involves effect
$\epsilon$.
A resumption type $\ottnt{r}$ is similar to $\ottnt{R}$ but does not contain an
argument variable.
\begin{defn}[Resumption type]
 Resumption types in {\interlang}, denoted by $\ottnt{r}$, are defined as
 follows:
 \[\begin{array}{lll}
  \ottnt{r} & ::= &
    \mathsf{none}  \mid
   \ottsym{(}   \algeffseqover{ \alpha }   \ottsym{,}  \ottnt{A}  \ottsym{,}   \ottnt{B}   \rightarrow  \!  \epsilon  \;  \ottnt{C}   \ottsym{)} \\ &&
   \qquad\qquad \text{(if $ \mathit{ftv}  (  \ottnt{A}  )  \,  \mathbin{\cup}  \,  \mathit{ftv}  (  \ottnt{B}  )  \,  \subseteq  \, \ottsym{\{}   \algeffseqover{ \alpha }   \ottsym{\}}$ and $ \mathit{ftv}  (  \ottnt{C}  )  \,  \mathbin{\cap}  \, \ottsym{\{}   \algeffseqover{ \alpha }   \ottsym{\}} \,  =  \,  \emptyset $)}
   \end{array}\]
\end{defn}

The typing rules for terms, shown in \reffig{inter-typing1}, and handlers, shown
in the upper half of \reffig{inter-typing2}, are similar to those of {\surfacelang}
except for a new rule \T{OpCont}, which is applied to an effect invocation $ \textup{\texttt{\#}\relax}  \mathsf{op}   \ottsym{(}    \algeffseqoverindex{  \text{\unboldmath$\forall$}  \,  \algeffseqoverindex{ \beta }{ \text{\unboldmath$\mathit{J}$} }   \ottsym{.}  \ottnt{C} }{ \text{\unboldmath$\mathit{I}$} }    \ottsym{,}    \Lambda\!  \,  \algeffseqoverindex{ \beta }{ \text{\unboldmath$\mathit{J}$} }   \ottsym{.}  \ottnt{v}   \ottsym{,}    \ottnt{E} ^{  \algeffseqoverindex{ \beta }{ \text{\unboldmath$\mathit{J}$} }  }    \ottsym{)} $ with a continuation.
Let $\mathit{ty} \, \ottsym{(}  \mathsf{op}  \ottsym{)} \,  =  \,   \text{\unboldmath$\forall$}     \algeffseqoverindex{ \alpha }{ \text{\unboldmath$\mathit{I}$} }   .  \ottnt{A}  \hookrightarrow  \ottnt{B} $.
Since $\mathsf{op}$ should have been invoked with $ \algeffseqoverindex{ \ottnt{C} }{ \text{\unboldmath$\mathit{I}$} } $ and $\ottnt{v}$ under type
abstractions with bound type variables $ \algeffseqoverindex{ \beta }{ \text{\unboldmath$\mathit{J}$} } $, the argument $\ottnt{v}$ has type
$ \ottnt{A}    [   \algeffseqoverindex{ \ottnt{C} }{ \text{\unboldmath$\mathit{I}$} }   \ottsym{/}   \algeffseqoverindex{ \alpha }{ \text{\unboldmath$\mathit{I}$} }   ]  $ under the typing context extended with $ \algeffseqoverindex{ \beta }{ \text{\unboldmath$\mathit{J}$} } $.
Similarly, the hole of $ \ottnt{E} ^{  \algeffseqoverindex{ \beta }{ \text{\unboldmath$\mathit{J}$} }  } $ expects to be filled with the result of the
invocation, i.e., a value of $ \ottnt{B}    [   \algeffseqoverindex{ \ottnt{C} }{ \text{\unboldmath$\mathit{I}$} }   \ottsym{/}   \algeffseqoverindex{ \alpha }{ \text{\unboldmath$\mathit{I}$} }   ]  $.
Since the continuation denotes the context before the evaluation, its result type
matches with the type of the whole term.

The typing rules for continuations are shown in the lower half of
\reffig{inter-typing2}.
They are similar to the corresponding typing rules for terms except that a
subterm is replaced with a continuation.
In \TE{Let}, the continuation $\mathsf{let} \, \mathit{x}  \ottsym{=}   \Lambda\!  \,  \algeffseqover{ \alpha }   \ottsym{.}  \ottnt{E} \,  \mathsf{in}  \, \ottnt{e}$ has type
$  \text{\unboldmath$\forall$}  \,  \algeffseqover{ \alpha }   \ottsym{.}  \sigma  \multimap  \ottnt{B} $ because the hole of $\ottnt{E}$ appears
inside the scope of $ \algeffseqover{ \alpha } $.

\subsection{Elaboration}

This section defines the elaboration from {\surfacelang} to {\interlang}.
The important difference between the two languages from the viewpoint
of elaboration is that, whereas the parameter of an operation clause is
referred to by a single variable in {\surfacelang}, it is done by
one or more variables in {\interlang}.
Therefore, one variable in {\surfacelang} is represented by multiple
variables (required for each $ \mathsf{resume} $) in {\interlang}.
We use $\ottnt{S}$, a mapping from variables to variables, to
make the correspondence between variable names.
We write $ \ottnt{S}  \,\circ\, \{  \mathit{x}  \, {\mapsto} \,  \mathit{y}  \} $ for the same mapping as $\ottnt{S}$ except that
$\mathit{x}$ is mapped to $\mathit{y}$.

Elaboration is defined by two judgments: term elaboration judgment $ \Gamma ;  \ottnt{R}   \vdash   \ottnt{M}  :  \ottnt{A}  \,  |   \, \epsilon  \mathrel{\algefftransarrow{ \ottnt{S} } }  \ottnt{e} $, which denotes elaboration from a typing derivation of
judgment $\Gamma  \ottsym{;}  \ottnt{R}  \vdash  \ottnt{M}  \ottsym{:}  \ottnt{A} \,  |  \, \epsilon$ to $\ottnt{e}$ with $\ottnt{S}$, and handler
elaboration judgment $ \Gamma ;  \ottnt{R}   \vdash   \ottnt{H}  :  \ottnt{A}  \,  |   \, \epsilon  \, \Rightarrow   \ottnt{B}  \,  |   \, \epsilon'  \mathrel{\algefftransarrow{ \ottnt{S} } }  \ottnt{h} $, which denotes
elaboration from a typing derivation of judgment $\Gamma  \ottsym{;}  \ottnt{R}  \vdash  \ottnt{H}  \ottsym{:}  \ottnt{A} \,  |  \, \epsilon  \Rightarrow  \ottnt{B} \,  |  \, \epsilon'$ to $\ottnt{h}$ with $\ottnt{S}$.

\begin{figure}[t!]
 \textbf{Term elaboration rules} \quad
 \framebox{$ \Gamma ;  \ottnt{R}   \vdash   \ottnt{M}  :  \ottnt{A}  \,  |   \, \epsilon  \mathrel{\algefftransarrow{ \ottnt{S} } }  \ottnt{e} $}
 \begin{center}
  $\ottdruleElabXXVar{}$ \\[1.5ex]
  $\ottdruleElabXXAbs{}$ \\[1.5ex]
  $\ottdruleElabXXHandle{}$ \\[1.5ex]
  $\ottdruleElabXXLet{}$ \\[1.5ex]
  $\ottdruleElabXXResume{}$ \\[1.5ex]
 \end{center}
 \textbf{Handler elaboration rules} \quad
 \framebox{$ \Gamma ;  \ottnt{R}   \vdash   \ottnt{H}  :  \ottnt{A}  \,  |   \, \epsilon  \, \Rightarrow   \ottnt{B}  \,  |   \, \epsilon'  \mathrel{\algefftransarrow{ \ottnt{S} } }  \ottnt{h} $}
 \begin{center}
  $\ottdruleElabHXXReturn{}$ \\[1.5ex]
  $\ottdruleElabHXXOp{}$
 \end{center}
 \caption{Elaboration rules (excerpt).}
 \label{fig:elaboration}
\end{figure}
Selected elaboration rules are shown in \reffig{elaboration}; the complete
set of the rules is found in the full version of the paper.
The elaboration rules are straightforward except for the use of $\ottnt{S}$.  A variable $\mathit{x}$ is
translated to $\ottnt{S}  \ottsym{(}  \mathit{x}  \ottsym{)}$ \Elab{Var} and, every time a new variable is
introduced, $\ottnt{S}$ is extended: see the rules other than \Elab{Var}
and \Elab{Handle}.

\subsection{Properties}
We show type safety of {\surfacelang}, i.e., a well-typed program in
{\surfacelang} does not get stuck, by proving (1) type preservation of the
elaboration from {\surfacelang} to {\interlang} and (2) type soundness of
{\interlang}.
Term $\ottnt{M}$ is a well-typed program of $\ottnt{A}$ if and only if
$ \emptyset   \ottsym{;}   \mathsf{none}   \vdash  \ottnt{M}  \ottsym{:}  \ottnt{A} \,  |  \,  \langle \rangle $.

The first can be shown easily.
We write $ \emptyset $ also for the identity mapping for variables.
\begin{restatable}[Elaboration is type-preserving]{thm}{thmElab}
 \label{thm:trans-preserving}
 If $\ottnt{M}$ is a well-typed program of $\ottnt{A}$,
 then $  \emptyset  ;   \mathsf{none}    \vdash   \ottnt{M}  :  \ottnt{A}  \,  |   \,  \langle \rangle   \mathrel{\algefftransarrow{  \emptyset  } }  \ottnt{e} $ and
 $ \emptyset   \ottsym{;}   \mathsf{none}  \,   \vdash  \ottnt{e} \,   \ottsym{:}  \ottnt{A} \,  |  \,  \langle \rangle $ for some $\ottnt{e}$.
\end{restatable}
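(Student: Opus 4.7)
The plan is to derive this corollary from a mutual structural induction on the surface-language typing derivations, simultaneously showing (a) existence of an elaboration and (b) well-typedness of the elaborated term in {\interlang}. The closed-program statement is the special case $\Gamma = \emptyset$, $R = \mathsf{none}$, $\epsilon = \langle\rangle$, with $S$ taken to be the identity, so the real work is in proving a suitably generalized invariant.

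The generalized statement I would set up reads roughly: for every derivation of $\Gamma;R \vdash M : A \mid \epsilon$ and every $S$ mapping $\mathit{dom}(\Gamma)$ injectively into fresh {\interlang} variables (and mapping the parameter of the enclosing operation clause, if any, to the current surface parameter name tracked by $R$), there exist $e$ and $\Gamma'$ with $\Gamma \vartriangleright^{S} \Gamma'$ such that $\Gamma;R \vdash M:A\mid\epsilon \vartriangleright^{S} e$ and $\Gamma'; r \vdash e : A \mid \epsilon$, where $r$ is the intermediate-language resumption type obtained from $R$ by dropping the parameter variable. A mutually recursive statement covers handlers: $\Gamma;R \vdash H : A \mid \epsilon \Rightarrow B \mid \epsilon' \vartriangleright^{S} h$ with $\Gamma'; r \vdash h : A \mid \epsilon \Rightarrow B \mid \epsilon'$. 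Alongside, I would prove a small auxiliary lemma that $\Gamma \vartriangleright^{S} \Gamma'$ preserves well-formedness of contexts and of all type schemes, so that every premise of the form $\Gamma \vdash \bar{B}$ in surface rules translates verbatim to the target.

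With this setup, most cases are routine: for \Elab{Var}, \Elab{Const}, \Elab{Abs}, \Elab{App}, \Elab{Op}, \Elab{Handle}, and \Elab{Weak}, the corresponding {\interlang} rule (\T{Var}, \T{Const}, etc.) has exactly the same shape, so one applies the IH to the subderivations and rebuilds the conclusion; extensions of $S$ by $\{x \mapsto x\}$ are handled by invoking the IH with the extended substitution. The \Elab{Let} case needs care because {\interlang} makes the quantifier at the binder explicit: one applies the IH to $M_1$ under $\Gamma,\bar\alpha$ to get $e_1$ with $\Gamma',\bar\alpha; \mathsf{none} \vdash e_1 : A \mid \epsilon$, then wraps with $\Lambda\bar\alpha.e_1$ and uses \T{Let}. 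The \Elab{Handle} and \ElabH{Op} cases combine the term and handler inductions, with the operation clause's body elaborated under the resumption type $(\bar\alpha, x:C, D \to_{\epsilon'} B)$.

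The main obstacle, as expected, is the \Elab{Resume} case. Here the surface derivation carries a single parameter $x:D$ in $\Gamma$ while the operation's true parameter type is $A$, recorded in $R=(\bar\alpha, x:A, B\to_\epsilon C)$; the rule renames $\bar\alpha$ to fresh $\bar\beta$ and substitutes $x:A[\bar\beta/\bar\alpha]$ before typechecking $M$, emitting $\mathsf{resume}\,\bar\beta\,y.e$ with $S' = S \circ \{x\mapsto y\}$ for fresh $y$. To align this with {\interlang}'s \T{Resume}, I need to verify that (i) the freshness conditions on $\bar\beta$ and $y$ really hold, which is ensured by invoking the IH in a context where $\bar\beta$ has been added and $y$ is fresh with respect to $\Gamma'$; (ii) the renamed substitution $S'$ still satisfies the invariant connecting it to the updated $R$; and (iii) the premise $\bar\alpha \in \Gamma_1$ from the surface judgment matches the analogous premise in \T{Resume} after elaboration preserves the context shape. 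Once these bookkeeping points are checked, the target derivation follows immediately from \T{Resume}, and the closed-program corollary then drops out by instantiating the generalized lemma at the empty context and identity $S$.
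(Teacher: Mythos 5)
Your proposal matches the paper's proof: the paper establishes exactly your generalized invariant (for any $ \ottnt{R}  \mathrel{\algefftransarrow{} }  \ottnt{r} $, $ \Gamma  \mathrel{\algefftransarrow{ \ottnt{S} } }  \Gamma' $ with $\vdash  \Gamma'$, an elaboration $\ottnt{e}$ exists with $\Gamma'  \ottsym{;}  \ottnt{r} \,   \vdash  \ottnt{e} \,   \ottsym{:}  \ottnt{A} \,  |  \, \epsilon$) by mutual induction on the term and handler typing derivations, using auxiliary lemmas that context elaboration preserves variable bindings, type variables, and admits the strengthening/decomposition needed for the \Elab{Resume} case, and then instantiates at the empty context and identity mapping. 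The cases you flag as routine versus delicate (\Elab{Let}, \Elab{Resume}) are the same ones the paper treats in detail, so this is essentially the same argument.
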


We show the second---type soundness of {\interlang}---via progress and subject
reduction~\cite{Wright/Felleisen_1994_IC}.
We write $\Delta$ for a typing context that consists only of type variables.
Progress can be shown as usual.
\begin{restatable}[Progress]{lemm}{lemmProgress}
\label{lem:progress}
 If $\Delta  \ottsym{;}   \mathsf{none}  \,   \vdash  \ottnt{e} \,   \ottsym{:}  \ottnt{A} \,  |  \, \epsilon$, then
 (1) $\ottnt{e}  \longrightarrow  \ottnt{e'}$ for some $\ottnt{e'}$,
 (2) $\ottnt{e}$ is a value, or
 (3) $\ottnt{e} \,  =  \,  \textup{\texttt{\#}\relax}  \mathsf{op}   \ottsym{(}    \algeffseqover{ \sigma }    \ottsym{,}   \ottnt{w}   \ottsym{,}   \ottnt{E}   \ottsym{)} $ for some $\mathsf{op} \,  \in  \, \epsilon$, $ \algeffseqover{ \sigma } $, $\ottnt{w}$, and $\ottnt{E}$.
\end{restatable}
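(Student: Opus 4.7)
The plan is to proceed by induction on the derivation of $\Delta ; \mathsf{none} \vdash e : A \mid \epsilon$, analyzing the last rule. Since $\Delta$ has only type-variable bindings, the \T{Var} case is vacuous, and since the resumption component is $\mathsf{none}$, \T{Resume} is vacuous as well. The \T{Const} and \T{Abs} cases land in clause (2) immediately; \T{OpCont} lands in clause (3) directly, using the premise $\mathsf{op} \in \epsilon$. The \T{Weak} case is handled by invoking the induction hypothesis, noting that enlarging $\epsilon$ preserves clause (3) since $\mathsf{op}$ stays in the larger effect.

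For the harder cases I would first prove a standard canonical-forms lemma: a closed value of base type is a constant, and a closed value of function type is either a $\lambda$-abstraction or a primitive constant on which $\zeta$ is defined at the relevant argument. Then the proof of each compound case follows a uniform pattern: apply the IH to the first ``active'' subterm; in the three possible outcomes either an \E{Eval} step is available (using the appropriate evaluation-context former), a reduction rule fires (by canonical forms), or we propagate an outstanding operation invocation one level outward. Concretely, \T{App} splits via the IHs on $e_1$ and $e_2$ using \R{OpApp1}/\R{OpApp2}, \R{Beta}, \R{Const}; \T{Op} uses \R{Op} once the argument is a value and \R{OpOp} if the argument is already an operation invocation; \T{Handle} uses \R{Return}, \R{Handle} when $\mathsf{op} \in \mathit{ops}(h)$, and \R{OpHandle} otherwise; and \T{Let} uses \R{Let} or \R{OpLet} after applying the IH under the extended context $\Delta, \overline{\alpha}$.

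The main obstacle is the \T{Handle} case when the body has bubbled up to $\#\mathsf{op}(\overline{\sigma},w,E^{\overline{\beta}})$ and $\mathsf{op}$ is handled by $h$: I need to know that the shape matches the left-hand side of \R{Handle} exactly, i.e.\ that $\overline{\sigma}$ is a sequence of polymorphic type schemes $\forall \overline{\beta}.A$ quantifying over precisely the type variables $\overline{\beta}$ decorating $w$ and $E$, with the same $\overline{\beta}$ throughout. This shape invariant is not syntactically obvious; I would obtain it from the typing derivation by an inversion argument on \T{OpCont} (modulo \T{Weak}), observing that the annotation $\overline{\beta}$ in $E^{\overline{\beta}}$ is forced by the continuation-typing judgment $\Gamma \vdash E^{\overline{\beta}} : \forall \overline{\beta}.(\cdots) \multimap D \mid \epsilon$ to coincide with the binders in both $\overline{\forall \overline{\beta}.C}$ and $\Lambda \overline{\beta}.v$.

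A secondary obstacle is the $\T{Let}$ case where $e_1$ is an operation invocation $\#\mathsf{op}(\overline{\sigma}_J,w,E)$ under the abstracted type variables $\overline{\alpha}_I$: here I must confirm that \R{OpLet} is indeed applicable and that its right-hand side is well-formed (the types $\forall \overline{\alpha}_I.\sigma$ and value $\Lambda \overline{\alpha}_I.w$ are syntactically legal because $\sigma$ ranges over type schemes and polymorphic values are closed under type abstraction). With that checked, the induction goes through.
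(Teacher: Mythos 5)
Your proposal is correct and follows essentially the same route as the paper: induction on the typing derivation, canonical-forms/inversion lemmas for constants and abstractions, the uniform three-outcome case analysis on the active subterm, and an inversion lemma on the typing of $ \textup{\texttt{\#}\relax}  \mathsf{op}   \ottsym{(}    \algeffseqover{ \sigma }    \ottsym{,}   \ottnt{w}   \ottsym{,}   \ottnt{E}   \ottsym{)} $ (the paper's Lemma~\ref{lem:cont-inversion}) to match the left-hand side of \R{Handle}. The two ``obstacles'' you flag are exactly the points the paper discharges via its continuation-inversion and handler-op-inheritance lemmas, so no gap remains.
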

{\iffull
\begin{proof}
 By induction on the derivation of $\Delta  \ottsym{;}   \mathsf{none}  \,   \vdash  \ottnt{e} \,   \ottsym{:}  \ottnt{A} \,  |  \, \epsilon$.
\end{proof}
\fi}
A key lemma to show subject reduction is type preservation of continuation
substitution.
\begin{restatable}[Continuation substitution]{lemm}{lemmContSubst}
 \label{lem:cont-subst}
 Suppose that
 $\Gamma  \vdash   \algeffseqoverindex{  \text{\unboldmath$\forall$}  \,  \algeffseqoverindex{ \beta }{ \text{\unboldmath$\mathit{J}$} }   \ottsym{.}  \ottnt{C} }{ \text{\unboldmath$\mathit{I}$} } $ and
 $ \Gamma   \vdash    \ottnt{E} ^{  \algeffseqoverindex{ \beta }{ \text{\unboldmath$\mathit{J}$} }  }    \ottsym{:}     \text{\unboldmath$\forall$}  \,  \algeffseqoverindex{ \beta }{ \text{\unboldmath$\mathit{J}$} }   \ottsym{.}  \ottsym{(}   \ottnt{B}    [   \algeffseqoverindex{ \ottnt{C} }{ \text{\unboldmath$\mathit{I}$} }   \ottsym{/}   \algeffseqoverindex{ \alpha }{ \text{\unboldmath$\mathit{I}$} }   ]    \ottsym{)}  \multimap  \ottnt{D}   \,  |  \, \epsilon $ and
 $\Gamma  \ottsym{,}   \algeffseqoverindex{ \beta }{ \text{\unboldmath$\mathit{J}$} }   \vdash  \ottnt{v}  \ottsym{:}   \ottnt{A}    [   \algeffseqoverindex{ \ottnt{C} }{ \text{\unboldmath$\mathit{I}$} }   \ottsym{/}   \algeffseqoverindex{ \alpha }{ \text{\unboldmath$\mathit{I}$} }   ]  $.
 {\sloppy
 \begin{enumerate}
  \item If $\Gamma  \ottsym{;}  \ottsym{(}   \algeffseqoverindex{ \alpha }{ \text{\unboldmath$\mathit{I}$} }   \ottsym{,}  \ottnt{A}  \ottsym{,}   \ottnt{B}   \rightarrow  \!  \epsilon  \;  \ottnt{D}   \ottsym{)} \,   \vdash  \ottnt{e} \,   \ottsym{:}  \ottnt{D'} \,  |  \, \epsilon'$, then
        $\Gamma  \ottsym{;}   \mathsf{none}  \,   \vdash   \ottnt{e}    [   \ottnt{E} ^{  \algeffseqoverindex{ \beta }{ \text{\unboldmath$\mathit{J}$} }  }   /  \mathsf{resume}  ]^{  \algeffseqoverindex{  \text{\unboldmath$\forall$}  \,  \algeffseqoverindex{ \beta }{ \text{\unboldmath$\mathit{J}$} }   \ottsym{.}  \ottnt{C} }{ \text{\unboldmath$\mathit{I}$} }  }_{  \Lambda\!  \,  \algeffseqoverindex{ \beta }{ \text{\unboldmath$\mathit{J}$} }   \ottsym{.}  \ottnt{v} }   \,   \ottsym{:}  \ottnt{D'} \,  |  \, \epsilon'$.

  \item If $\Gamma  \ottsym{;}  \ottsym{(}   \algeffseqoverindex{ \alpha }{ \text{\unboldmath$\mathit{I}$} }   \ottsym{,}  \ottnt{A}  \ottsym{,}   \ottnt{B}   \rightarrow  \!  \epsilon  \;  \ottnt{D}   \ottsym{)} \,   \vdash  \ottnt{h}  \ottsym{:}  \ottnt{D_{{\mathrm{1}}}} \,  |  \, \epsilon_{{\mathrm{1}}}  \Rightarrow  \ottnt{D_{{\mathrm{2}}}} \,  |  \, \epsilon_{{\mathrm{2}}}$, then
        $\Gamma  \ottsym{;}   \mathsf{none}  \,   \vdash   \ottnt{h}    [   \ottnt{E} ^{  \algeffseqoverindex{ \beta }{ \text{\unboldmath$\mathit{J}$} }  }   /  \mathsf{resume}  ]^{  \algeffseqoverindex{  \text{\unboldmath$\forall$}  \,  \algeffseqoverindex{ \beta }{ \text{\unboldmath$\mathit{J}$} }   \ottsym{.}  \ottnt{C} }{ \text{\unboldmath$\mathit{I}$} }  }_{  \Lambda\!  \,  \algeffseqoverindex{ \beta }{ \text{\unboldmath$\mathit{J}$} }   \ottsym{.}  \ottnt{v} }    \ottsym{:}  \ottnt{D_{{\mathrm{1}}}} \,  |  \, \epsilon_{{\mathrm{1}}}  \Rightarrow  \ottnt{D_{{\mathrm{2}}}} \,  |  \, \epsilon_{{\mathrm{2}}}$.
 \end{enumerate}
 }
\end{restatable}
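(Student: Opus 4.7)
I would prove both parts simultaneously by mutual induction on the derivations of $\Gamma;(\overline{\alpha}^I,A,B\rightarrow_\epsilon D)\vdash e:D'\mid \epsilon'$ and $\Gamma;(\overline{\alpha}^I,A,B\rightarrow_\epsilon D)\vdash h:D_1\mid \epsilon_1\Rightarrow D_2\mid\epsilon_2$. The substitution is defined structurally on $e$ and only reaches into the return clause of $h$, so most cases follow immediately by applying the induction hypothesis to the subterms and then reassembling the derivation with the same typing rule that originally produced it. In particular, \T{Var}, \T{Const}, \T{Abs}, \T{App}, \T{Op}, \T{OpCont}, \T{Handle}, \T{Let}, and \T{Weak} all work this way, since none of them interact with the resumption slot $r$ in an essential manner, and the substitution commutes with their term-formers.

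The only truly interesting case is \T{Resume}, where $e = \mathsf{resume}\,\overline{\gamma}^I\,z.e_0$. From the premises, $\overline{\alpha}^I\in\Gamma$ and $\Gamma,\overline{\gamma}^I,z:A[\overline{\gamma}^I/\overline{\alpha}^I];(\overline{\alpha}^I,A,B\rightarrow_\epsilon D)\vdash e_0:B[\overline{\gamma}^I/\overline{\alpha}^I]\mid \epsilon'$ with $\epsilon\subseteq\epsilon'$, and the resulting term is $\mathsf{let}\,y=\Lambda\overline{\beta}^J.e_0[E^{\overline{\beta}^J}/\mathsf{resume}]^{\ldots}[\overline{C}^I/\overline{\gamma}^I][v/z]\,\mathsf{in}\,E^{\overline{\beta}^J}[y\,\overline{\beta}^J]$. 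I would chain four auxiliary lemmas: first, the induction hypothesis on $e_0$ to eliminate the inner $\mathsf{resume}$-substitution while keeping $\overline{\gamma}^I$ and $z$ in the context with $r$ replaced by $\mathsf{none}$; second, a standard type-substitution lemma to replace $\overline{\gamma}^I$ by $\overline{C}^I$, using $\Gamma\vdash\overline{\forall\overline{\beta}^J.C}^I$ combined with weakening into $\Gamma,\overline{\beta}^J$; third, a value-substitution lemma to substitute $v$ for $z$, which requires typing $v$ under $\Gamma,\overline{\beta}^J$ at $A[\overline{C}^I/\overline{\alpha}^I]$, precisely the hypothesis given; fourth, \T{Let} to wrap the body under $\Lambda\overline{\beta}^J$ and type $y$ with the polymorphic type $\forall\overline{\beta}^J.B[\overline{C}^I/\overline{\alpha}^I]$. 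The continuation in the body of the $ \mathsf{let} $ is handled by a standard hole-filling lemma for $\ottnt{E}$ against its continuation typing derivation, using that $y\,\overline{\beta}^J$ has type $B[\overline{C}^I/\overline{\alpha}^I]$ under $\Gamma,y{:}\forall\overline{\beta}^J.B[\overline{C}^I/\overline{\alpha}^I],\overline{\beta}^J$, so that $E^{\overline{\beta}^J}[y\,\overline{\beta}^J]$ has type $D$ as required.

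For part 2, the two handler-typing rules decompose cleanly. In the \TH{Return} case, the substitution is pushed into the return body, and the induction hypothesis from part 1 finishes the job (noting that $\epsilon\subseteq\epsilon'$ is preserved). In the \TH{Op} case, by the definition of continuation substitution, the substitution is applied only to the sub-handler $h'$, so the induction hypothesis on $h'$ gives the result, and the operation clause is reattached unchanged using the same \TH{Op} derivation.

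The main obstacle I anticipate is managing the freshness side conditions on $\overline{\beta}^J$ (which must be fresh for $\ottnt{e}$ and $E^{\overline{\beta}^J}$ in the substitution clause), together with the careful accounting of variable renaming to ensure that the type-substitution and value-substitution lemmas compose in the right order. A secondary subtlety is that the continuation $\ottnt{E}$ originally typed in $\Gamma$ needs to be used inside the scope of $\overline{\beta}^J$ after weakening, which must be justified by a weakening lemma for the continuation-typing judgment. Once these auxiliaries are stated and proved in standard fashion, the \T{Resume} case goes through as outlined.
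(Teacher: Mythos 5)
Your proposal is correct and follows essentially the same route as the paper's proof: mutual induction on the typing derivations, routine cases by the induction hypotheses, and the \T{Resume} case assembled by chaining the induction hypothesis, the type-substitution and value-substitution lemmas (in that order), a hole-filling lemma for the continuation, and \T{Let}, with part 2 reducing to the return clause exactly as you describe. The freshness and weakening concerns you flag are handled in the paper by choosing $\algeffseqoverindex{ \gamma }{ \text{\unboldmath$\mathit{I}$} }$ disjoint from $\algeffseqoverindex{ \beta }{ \text{\unboldmath$\mathit{J}$} }$ and by the standard weakening lemmas, just as you anticipate.
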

{\iffull
\begin{proof}
 By mutual induction on the typing derivations.
\end{proof}
\fi}
Using the continuation substitution lemma as well as other lemmas, we show
subject reduction.
\begin{restatable}[Subject reduction]{lemm}{lemmSubjectRed}
 \label{lem:subject-red}
 \begin{enumerate}
  \item If $\Delta  \ottsym{;}   \mathsf{none}  \,   \vdash  \ottnt{e_{{\mathrm{1}}}} \,   \ottsym{:}  \ottnt{A} \,  |  \, \epsilon$ and $\ottnt{e_{{\mathrm{1}}}}  \rightsquigarrow  \ottnt{e_{{\mathrm{2}}}}$,
        then $\Delta  \ottsym{;}   \mathsf{none}  \,   \vdash  \ottnt{e_{{\mathrm{2}}}} \,   \ottsym{:}  \ottnt{A} \,  |  \, \epsilon$.
  \item If $\Delta  \ottsym{;}   \mathsf{none}  \,   \vdash  \ottnt{e_{{\mathrm{1}}}} \,   \ottsym{:}  \ottnt{A} \,  |  \, \epsilon$ and $\ottnt{e_{{\mathrm{1}}}}  \longrightarrow  \ottnt{e_{{\mathrm{2}}}}$,
        then $\Delta  \ottsym{;}   \mathsf{none}  \,   \vdash  \ottnt{e_{{\mathrm{2}}}} \,   \ottsym{:}  \ottnt{A} \,  |  \, \epsilon$.
 \end{enumerate}
\end{restatable}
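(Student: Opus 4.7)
The plan is to prove Part 2 from Part 1 using a standard decomposition/recomposition argument for evaluation contexts, and to prove Part 1 by case analysis on $\ottnt{e_{{\mathrm{1}}}}  \rightsquigarrow  \ottnt{e_{{\mathrm{2}}}}$. First, I would establish the usual auxiliary machinery: weakening, a value substitution lemma (if $\Gamma  \ottsym{,}  \mathit{x} \,  \mathord{:}  \,  \text{\unboldmath$\forall$}  \,  \algeffseqover{ \alpha }   \ottsym{.}  \ottnt{A}  \ottsym{;}  \ottnt{r} \,   \vdash  \ottnt{e} \,   \ottsym{:}  \ottnt{B} \,  |  \, \epsilon$ and $\Gamma  \ottsym{,}   \algeffseqover{ \alpha }   \vdash  \ottnt{w}  \ottsym{:}  \ottnt{A}$, then $\Gamma  \ottsym{;}  \ottnt{r} \,   \vdash   \ottnt{e}    [   \Lambda\!  \,  \algeffseqover{ \alpha }   \ottsym{.}  \ottnt{w}  \ottsym{/}  \mathit{x}  ]   \,   \ottsym{:}  \ottnt{B} \,  |  \, \epsilon$), a type substitution lemma, and the standard inversion lemmas for each term and handler typing rule. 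For Part 2, I would also prove a context replacement lemma: if $ \Delta   \vdash   \ottnt{E}   \ottsym{:}    \sigma  \multimap  \ottnt{A}   \,  |  \, \epsilon $ and $\ottnt{e_{{\mathrm{1}}}}$ fits into the hole with type $\sigma$ under the type variables bound by $\ottnt{E}$, then so does $\ottnt{e_{{\mathrm{2}}}}$ of the same type, yielding $\Delta  \ottsym{;}   \mathsf{none}  \,   \vdash   \ottnt{E}  [  \ottnt{e_{{\mathrm{2}}}}  ]  \,   \ottsym{:}  \ottnt{A} \,  |  \, \epsilon$; this follows by induction on the derivation of $ \Delta   \vdash   \ottnt{E}   \ottsym{:}    \sigma  \multimap  \ottnt{A}   \,  |  \, \epsilon $.

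For Part 1, each reduction rule is handled routinely: \R{Const} from the assumption on $ \zeta $; \R{Beta}, \R{Let}, \R{Return} from the substitution lemmas after inverting the corresponding typing rules (and, for \R{Return}, additionally inverting \T{Handle} to obtain typing of $\ottnt{h}$ and then \TH{Return}); \R{Op} by constructing a derivation with \T{OpCont} in which $ [\,] $ is typed by \TE{Hole}. The context-reorganization cases \R{OpApp1}, \R{OpApp2}, \R{OpOp}, \R{OpHandle}, and \R{OpLet} are all proved by inverting \T{OpCont}, locally rebuilding the new continuation with the appropriate context typing rule (\TE{App1}, \TE{App2}, \TE{Op}, \TE{Handle}, or \TE{Let}), and reapplying \T{OpCont}; for \R{OpLet}, the type variables $ \algeffseqoverindex{ \alpha }{ \text{\unboldmath$\mathit{I}$} } $ bound at the $ \mathsf{let} $ must be pushed inside the polymorphic value and the sequence of type schemes, which matches exactly the shape of \TE{Let}.

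The main obstacle is the \R{Handle} case. Here the reduct applies, in order, continuation substitution (for $ \mathsf{resume} $), type substitution for the operation's type parameters $ \algeffseqoverindex{ \alpha }{ \text{\unboldmath$\mathit{I}$} } $, and value substitution for $\mathit{x}$. My plan is: invert \T{Handle} and \TH{Op} to get $\Gamma  \ottsym{,}   \algeffseqoverindex{ \alpha }{ \text{\unboldmath$\mathit{I}$} }   \ottsym{,}  \mathit{x} \,  \mathord{:}  \, \ottnt{C}  \ottsym{;}  \ottsym{(}   \algeffseqoverindex{ \alpha }{ \text{\unboldmath$\mathit{I}$} }   \ottsym{,}  \ottnt{C}  \ottsym{,}   \ottnt{D}   \rightarrow  \!  \epsilon'  \;  \ottnt{B}   \ottsym{)} \,   \vdash  \ottnt{e} \,   \ottsym{:}  \ottnt{B} \,  |  \, \epsilon'$ for the clause body, and invert \T{OpCont} to obtain typings for $ \Lambda\!  \,  \algeffseqoverindex{ \beta }{ \text{\unboldmath$\mathit{J}$} }   \ottsym{.}  \ottnt{v}$ and the continuation $ \ottnt{E} ^{  \algeffseqoverindex{ \beta }{ \text{\unboldmath$\mathit{J}$} }  } $. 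Because the handler $\ottnt{h}$ is typable under $\Gamma$ (and hence under $\Gamma  \ottsym{,}   \algeffseqoverindex{ \beta }{ \text{\unboldmath$\mathit{J}$} } $), \TE{Handle} yields a continuation typing for $\mathsf{handle} \,  \ottnt{E} ^{  \algeffseqoverindex{ \beta }{ \text{\unboldmath$\mathit{J}$} }  }  \, \mathsf{with} \, \ottnt{h}$ with the right input/output types, so \reflem{cont-subst} applies and produces a derivation for the body after continuation substitution under the resumption type $\ottsym{(}   \algeffseqoverindex{ \alpha }{ \text{\unboldmath$\mathit{I}$} }   \ottsym{,}  \ottnt{C}  \ottsym{,}   \ottnt{D}   \rightarrow  \!  \epsilon'  \;  \ottnt{B}   \ottsym{)}$ replaced by $ \mathsf{none} $. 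The delicate step is eliminating the free occurrences of $ \algeffseqoverindex{ \beta }{ \text{\unboldmath$\mathit{J}$} } $ in $ \algeffseqoverindex{ \ottnt{A} }{ \text{\unboldmath$\mathit{I}$} } $ and $\ottnt{v}$ before substituting into $\Gamma  \ottsym{,}   \algeffseqoverindex{ \alpha }{ \text{\unboldmath$\mathit{I}$} }   \ottsym{,}  \mathit{x} \,  \mathord{:}  \, \ottnt{C}$: since $ \algeffseqoverindex{ \beta }{ \text{\unboldmath$\mathit{J}$} } $ are not bound in the target context but $ \bot $ is a well-formed closed base type, applying the type substitution lemma with $ \algeffseqover{  \bot  }  \ottsym{/}  \overline{\beta}$ gives $\Gamma  \vdash    \algeffseqoverindex{ \ottnt{A} }{ \text{\unboldmath$\mathit{I}$} }     [   \algeffseqover{  \bot  }   \ottsym{/}   \algeffseqoverindex{ \beta }{ \text{\unboldmath$\mathit{J}$} }   ]  $ and $\Gamma  \vdash  \ottnt{v}    [   \algeffseqover{  \bot  }   \ottsym{/}   \algeffseqoverindex{ \beta }{ \text{\unboldmath$\mathit{J}$} }   ]   \ottsym{:}   \ottnt{C}    [    \algeffseqoverindex{ \ottnt{A} }{ \text{\unboldmath$\mathit{I}$} }     [   \algeffseqover{  \bot  }   \ottsym{/}   \algeffseqoverindex{ \beta }{ \text{\unboldmath$\mathit{J}$} }   ]    \ottsym{/}   \algeffseqoverindex{ \alpha }{ \text{\unboldmath$\mathit{I}$} }   ]  $, after which the type and value substitution lemmas deliver the final typing at $\ottnt{B}$ (noting that $ \mathit{ftv}  (  \ottnt{B}  )  \,  \mathbin{\cap}  \, \ottsym{\{}   \algeffseqoverindex{ \alpha }{ \text{\unboldmath$\mathit{I}$} }   \ottsym{\}} \,  =  \,  \emptyset $, as required by the side condition on resumption types, so the result type is genuinely $\ottnt{B}$).

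Finally, Part 2 is immediate from Part 1 and context replacement: invert $\ottnt{e_{{\mathrm{1}}}} \,  =  \,  \ottnt{E}  [  \ottnt{e'_{{\mathrm{1}}}}  ] $, decompose the typing to extract a derivation of $\ottnt{e'_{{\mathrm{1}}}}$ at the hole type under the appropriate type-variable extension, apply Part 1 to $\ottnt{e'_{{\mathrm{1}}}}  \rightsquigarrow  \ottnt{e'_{{\mathrm{2}}}}$, and recompose using the context typing for $\ottnt{E}$.
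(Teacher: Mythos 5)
Your proof of Part~1 matches the paper's in all essentials: the same auxiliary lemmas (weakening, type and value substitution, per-rule inversion), the same routine rebuilding of the continuation typing followed by \T{OpCont} for the bubbling rules, and the same three-stage argument for \R{Handle} --- continuation substitution via \reflem{cont-subst} against the continuation $\mathsf{handle}\, \ottnt{E} \,\mathsf{with}\, \ottnt{h}$ typed by \TE{Weak}/\TE{Handle}, then erasure of $ \algeffseqover{ \beta } $ by substituting $ \bot $, then type and value substitution. Your remark that the clause body's type cannot mention $ \algeffseqover{ \alpha } $ is the same fact the paper extracts from well-formedness of that type under $\Delta$, just justified from the resumption-type side condition instead; both are sound. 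The only genuine divergence is organizational. For Part~1 you drive the case analysis by the reduction rule and discharge \T{Weak} through inversion lemmas, while the paper inducts on the typing derivation with a nested case split on the redex; these are interchangeable. For Part~2 the paper performs a direct induction on the typing derivation of $ \ottnt{E}  [  \ottnt{e'_{{\mathrm{1}}}}  ] $, threading the induction hypothesis into whichever subterm contains the hole, whereas you factor the argument through the continuation typing judgment with a decompose-then-recompose lemma. Your route works, but note one point of care the paper sidesteps: the judgment $ \Gamma   \vdash   \ottnt{E}   \ottsym{:}    \sigma  \multimap  \ottnt{A}   \,  |  \, \epsilon $ records only the effect of the whole term, not the effect at the hole (compare \TE{Handle}, where the two differ), so your decomposition lemma must existentially package the redex's effect along with its type --- or be stated as a replacement lemma quantified over all typings of the redex, which is in substance the paper's direct induction.
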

{\iffull
\begin{proof}
 We can show each item by induction on the typing derivation.
\end{proof}
\fi}
We write $\ottnt{e}  \centernot\longrightarrow$ if and only if $\ottnt{e}$ cannot evaluate further.
Moreover, $ \longrightarrow^{*} $ denotes the reflexive and transitive closure of the evaluation
relation $ \longrightarrow $.
\begin{restatable}[Type soundness of {\interlang}]{thm}{thmTypeSoundness}
 \label{thm:type-sound}
 If $\Delta  \ottsym{;}   \mathsf{none}  \,   \vdash  \ottnt{e} \,   \ottsym{:}  \ottnt{A} \,  |  \, \epsilon$ and $\ottnt{e}  \longrightarrow^{*}  \ottnt{e'}$ and
 $\ottnt{e'}  \centernot\longrightarrow$, then (1) $\ottnt{e'}$ is a value or (2) $\ottnt{e'} \,  =  \,  \textup{\texttt{\#}\relax}  \mathsf{op}   \ottsym{(}    \algeffseqover{ \sigma }    \ottsym{,}   \ottnt{w}   \ottsym{,}   \ottnt{E}   \ottsym{)} $
 for some $\mathsf{op} \,  \in  \, \epsilon$, $ \algeffseqover{ \sigma } $, $\ottnt{w}$, and $\ottnt{E}$.
\end{restatable}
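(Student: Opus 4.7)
The plan is to derive the theorem as a direct consequence of the already-stated lemmas \reflem{progress} (Progress) and \reflem{subject-red} (Subject reduction), following the standard Wright--Felleisen recipe. Since $ \longrightarrow^{*} $ is the reflexive--transitive closure of $ \longrightarrow $, I would first induct on the length of the evaluation sequence $\ottnt{e}  \longrightarrow^{*}  \ottnt{e'}$ and apply the second clause of \reflem{subject-red} at each step to propagate typeability. This yields $\Delta  \ottsym{;}   \mathsf{none}  \,   \vdash  \ottnt{e'} \,   \ottsym{:}  \ottnt{A} \,  |  \, \epsilon$ with the same typing context, type, and effect annotation.

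Next I would feed this typing judgment to \reflem{progress}, which gives three possibilities for $\ottnt{e'}$: either it steps, it is a value, or it is a top-level operation invocation $ \textup{\texttt{\#}\relax}  \mathsf{op}   \ottsym{(}    \algeffseqover{ \sigma }    \ottsym{,}   \ottnt{w}   \ottsym{,}   \ottnt{E}   \ottsym{)} $ with $\mathsf{op} \,  \in  \, \epsilon$. The hypothesis $\ottnt{e'}  \centernot\longrightarrow$ rules out the first case, so we land in case (2) or (3) of the theorem statement, exactly as required. Observe that the resumption type remains $ \mathsf{none} $ throughout: the initial judgment has it, and subject reduction preserves both $\ottnt{r}$ and $\epsilon$, so progress is applicable at the end.

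There is no real obstacle at this level; all of the genuine work has already been done in establishing \reflem{progress} and \reflem{subject-red} (which itself rests on \reflem{cont-subst} for the \R{Handle} case). The only minor subtlety worth noting explicitly is that the effect annotation $\epsilon$ reported in the conclusion is literally the same $\epsilon$ as in the hypothesis, which is why $\mathsf{op} \,  \in  \, \epsilon$ in case~(2) of the theorem matches $\mathsf{op} \,  \in  \, \epsilon$ in case~(3) of \reflem{progress}; this is guaranteed by the fact that subject reduction is stated without strengthening the effect. Hence the argument is a two-line combination of the preceding lemmas, and I would simply present it as such: induction on $ \longrightarrow^{*} $ using subject reduction, followed by a single application of progress.
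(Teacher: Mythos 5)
Your proposal is correct and is exactly the paper's argument: induction on the length of $\ottnt{e}  \longrightarrow^{*}  \ottnt{e'}$ using \reflem{subject-red} to obtain $\Delta  \ottsym{;}   \mathsf{none}  \,   \vdash  \ottnt{e'} \,   \ottsym{:}  \ottnt{A} \,  |  \, \epsilon$, followed by one application of \reflem{progress} with the hypothesis $\ottnt{e'}  \centernot\longrightarrow$ discharging the first case. Nothing further is needed.
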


Now, type safety of {\surfacelang} is obtained as a corollary
of Theorems~\ref{thm:trans-preserving} and \ref{thm:type-sound}.
\begin{corollary}[Type safety of {\surfacelang}]
 If $\ottnt{M}$ is a well-typed program of $\ottnt{A}$,
 there exists some $\ottnt{e}$ such that
 $  \emptyset  ;   \mathsf{none}    \vdash   \ottnt{M}  :  \ottnt{A}  \,  |   \,  \langle \rangle   \mathrel{\algefftransarrow{  \emptyset  } }  \ottnt{e} $ and $\ottnt{e}$ does not get stuck.
\end{corollary}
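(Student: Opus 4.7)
The proof is essentially a direct composition of the two principal theorems established in \refsec{interlang}. The plan is as follows. First, I apply \refthm{trans-preserving} to the well-typed program $\ottnt{M}$, obtaining some intermediate term $\ottnt{e}$ together with both the elaboration $  \emptyset  ;   \mathsf{none}    \vdash   \ottnt{M}  :  \ottnt{A}  \,  |   \,  \langle \rangle   \mathrel{\algefftransarrow{  \emptyset  } }  \ottnt{e} $ and the intermediate-language typing $ \emptyset   \ottsym{;}   \mathsf{none}  \,   \vdash  \ottnt{e} \,   \ottsym{:}  \ottnt{A} \,  |  \,  \langle \rangle $. This already discharges the existence conjunct of the corollary.

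Next, I apply \refthm{type-sound} (instantiating $\Delta$ with $ \emptyset $ and $\epsilon$ with $ \langle \rangle $) to the typing just obtained: for every $\ottnt{e'}$ with $\ottnt{e}  \longrightarrow^{*}  \ottnt{e'}$ and $\ottnt{e'}  \centernot\longrightarrow$, either $\ottnt{e'}$ is a value, or $\ottnt{e'}$ has the form $ \textup{\texttt{\#}\relax}  \mathsf{op}   \ottsym{(}    \algeffseqover{ \sigma }    \ottsym{,}   \ottnt{w}   \ottsym{,}   \ottnt{E}   \ottsym{)} $ for some $\mathsf{op} \,  \in  \,  \langle \rangle $. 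The latter alternative is vacuous, since $ \langle \rangle $ is the empty effect row and contains no operations. Hence every irreducible reduct of $\ottnt{e}$ is a value, which is precisely the meaning of ``$\ottnt{e}$ does not get stuck.''

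There is no real obstacle at this layer: all the work has already been done in proving \refthm{trans-preserving} (by induction on the surface typing derivation) and \refthm{type-sound}, which in turn relies on \reflem{progress}, \reflem{subject-red}, and most crucially \reflem{cont-subst}. The only conceptual point worth stating explicitly is that the empty effect assumption on the top-level program is exactly what rules out the ``unhandled operation'' clause of the soundness theorem, converting its disjunctive conclusion into a definite progress guarantee. Consequently, the corollary follows in just a few lines once the two theorems are quoted.
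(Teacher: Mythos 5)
Your proposal is correct and follows exactly the paper's intended argument: the corollary is obtained by composing Theorem~\ref{thm:trans-preserving} with Theorem~\ref{thm:type-sound} instantiated at $\Delta =  \emptyset $ and $\epsilon =  \langle \rangle $, where the empty effect row rules out the unhandled-operation disjunct. The explicit remark that $\mathsf{op} \,  \in  \,  \langle \rangle $ is vacuous is the only non-trivial step, and you state it correctly.
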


\section{Related work}
\label{sec:relwork}

\subsection{Polymorphic effects and let-polymorphism}
Many researchers have attacked the problem of combining effects---not necessarily
algebraic---and let-polymorphism so far~\cite{Tofte_1990_IC,Leroy/Weis_1991_POPL,Appel/MacQueen_1991_PLILP,Hoang/Mitchell/Viswanathan_1993_LICS,Wright_1995_LSC,Garrigue_2004_FLOPS,Asai/Kameyama_2007_APLAS,Kammar/Pretnar_2017_JFP}.
In particular, most of them have focused on ML-style polymorphic references.
The algebraic effect handlers dealt with in this paper seem to be unable to
implement general ML-style references---i.e., give an appropriate implementation
to a set of effect operations \texttt{new} with the signature $  \text{\unboldmath$\forall$}    \alpha  .  \alpha  \hookrightarrow   \alpha  \, \mathsf{ref}  $,
\texttt{get} with $  \text{\unboldmath$\forall$}    \alpha  .   \alpha  \, \mathsf{ref}   \hookrightarrow  \alpha $, and \texttt{put} with $  \text{\unboldmath$\forall$}    \alpha  .    \alpha  \times  \alpha   \, \mathsf{ref}   \hookrightarrow   \mathsf{unit}  $ for abstract datatype $ \alpha  \, \mathsf{ref} $---even without the restriction on
handlers because each operation clause in a handler assigns type variables
locally and it is impossible to share such type variables between operation
clauses.\footnote{One possible approach to dealing with ML-style references is to
extend algebraic effects and handlers so that a handler for \emph{parameterized}
effects can be connected with dynamic
resources~\cite{Bauer/Pretnar_2015_JLAMP}.}
Nevertheless, their approaches would be applicable to algebraic effects and
handlers.

A common idea in the literature is to restrict the form of
expressions bound by polymorphic let.
Thus, they are complementary to our approach in that they restrict how effect
operations are used whereas we restrict how effect operations are implemented.

%

Value restriction~\cite{Tofte_1990_IC,Wright_1995_LSC}, a standard way adopted
in ML-like languages, restricts polymorphic let-bound expressions to syntactic values.
Garrigue~\cite{Garrigue_2004_FLOPS} relaxes the value restriction so that, if a
let-bound expression is not a syntactic value, type variables that appear only
at positive positions in the type of the expression can be generalized.
Although the (relaxed) value restriction is a quite clear criterion that indicates what
let-bound expressions can be polymorphic safely and it even accepts interfering
handlers, it is too restrictive in some cases.
We give an example for such a case below.
\lstinputlisting[xleftmargin=1.3\parindent]{source/vr_problem1.ml}
In the definition of function \texttt{f1}, variable \texttt{g} is used
polymorphically.
Execution of this function under an appropriate handler would succeed, and 
in fact our calculus accepts it.
By contrast, the (relaxed) value restriction rejects it because the
let-bound expression \texttt{\#{\choosep}(fst,snd)} is not a syntactic
value and the type variable appear in both positive and negative
positions, and so \texttt{g} is assigned a monomorphic type.
A workaround for this problem is to make a function wrapper that calls either of
\texttt{fst} or \texttt{snd} depending on the Boolean value chosen by
\texttt{\choosep}:
\lstinputlisting[xleftmargin=1.3\parindent]{source/vr_problem1_solved.ml}
However, this workaround makes the program complicated and incurs
additional run-time cost for the branching and an extra call to the wrapper
function.

Asai and Kameyama~\cite{Asai/Kameyama_2007_APLAS} study a combination of
let-polymorphism with delimited control operators
shift/reset~\cite{Danvy/Filinski_1990_LFP}.
They allow a let-bound expression to be polymorphic if it invokes no control
operation.
Thus, the function \texttt{f1} above would be rejected in their approach.

Another research line to restrict the use of effects is to allow only type
variables unrelated to effect invocations to be generalized.
Tofte~\cite{Tofte_1990_IC} distinguishes between applicative type variables, which
cannot be used for effect invocations, and imperative ones, which can be used,
and proposes a type system that enforces restrictions that (1) type variables of
imperative operations can be instantiated only with types wherein all type
variables are imperative and (2) if a let-bound expression is not a syntactic
value, only applicative type variables can be generalized.
Leroy and Weis~\cite{Leroy/Weis_1991_POPL} allow generalization only of type
variables that do not appear in a parameter type to the reference type in the
type of a let-expression.
To detect the hidden use of references, their type system gives a term not only a
type but also the types of free variables used in the term.
Standard ML of New Jersey (before ML97) adopted weak
polymorphism~\cite{Appel/MacQueen_1991_PLILP}, which was later formalized and
investigated deeply by Hoang et al.~\cite{Hoang/Mitchell/Viswanathan_1993_LICS}.
Weak polymorphism equips a type variable with the number of function calls
after which a value of a type containing the type variable will be passed to an
imperative operation.
The type system ensures that type variables with positive numbers are not
related to imperative constructs, and so such type variables can be generalized
safely.
In this line of research, the function \texttt{f1} above would not typecheck
because generalized type variables are used to instantiate those of the effect
signature, although it could be rewritten to an acceptable one by taking care not
to involve type variables in effect invocation.
\lstinputlisting[xleftmargin=1.3\parindent]{source/vr_problem2_solved.ml}

More recently, Kammar and Pretnar~\cite{Kammar/Pretnar_2017_JFP} show that
\emph{parameterized} algebraic effects and handlers do not need the value
restriction \emph{if} the type variables used in an effect invocation are not
generalized.
Thus, as the other work that restricts generalized type variables, their
approach would reject function \texttt{f1} but would accept \texttt{f3}.

\subsection{Algebraic effects and handlers}
Algebraic effects~\cite{Plotkin/Power_2003_ACS} are a way to represent the
denotation of an effect by giving a set of operations and an equational theory
that capture their properties.
Algebraic effect handlers, introduced by Plotkin and
Pretnar~\cite{Plotkin/Pretnar_2009_ESOP}, make it possible to provide user-defined
effects.
Algebraic effect handlers have been gaining popularity owing to their flexibility and
have been made available as
libraries~\cite{Kammar/Lindley/Oury_2013_ICFP,Wu/Schrijvers/Hinze_2014_Haskell,Kiselyov/Ishii_2015_Haskell}
or as primitive features of languages, such as Eff~\cite{Bauer/Pretnar_2015_JLAMP},
Koka~\cite{Leijen_2017_POPL}, Frank~\cite{Lindley/MacBrid/McLaughlin_2017_POPL},
and Multicore OCaml~\cite{multicoreOCaml_2017_TFP}.
In these languages, let-bound expressions that can be polymorphic are restricted
to values or pure expressions.

Recently, Forster et al.~\cite{Forster/Kammar/Lindley/Pretnar_2017_ICFP}
investigate the relationships between algebraic effect handlers and other
mechanisms for user-defined effects---delimited control
shift0~\cite{Materzok/Biernack_2012_APLAS} and monadic
reflection~\cite{Filinski_1994_POPL,Filinski_2010_POPL}---conjecturing that
there would be no type-preserving translation from a language with delimited control or
monadic reflection to one with algebraic effect handlers.
It would be an interesting direction to export our idea to delimited control and
monadic reflection.

\section{Conclusion}
\label{sec:conclusion}

There has been a long history of collaboration between effects and let-polymor\-phism.
This work focuses on polymorphic algebraic effects and handlers, wherein the type
signature of an effect operation can be polymorphic and an operation clause has
a type binder, and shows that a naive combination of polymorphic effects and let-polymorphism breaks
type safety.
Our novel observation to address this problem is that any let-bound expression
can be polymorphic safely if resumptions from a handler do not interfere with each
other.
We formalized this idea by developing a type system that requires the argument of
each resumption expression to have a type obtained by renaming the type variables
assigned in the operation clause to those assigned in the resumption.
We have proven that a well-typed program in our type system does not get stuck via
elaboration to an intermediate language wherein type information appears
explicitly.

There are many directions for future work.
The first is to address the problem, described at the end of
\refsec{surfacelang}, that renaming the type variables assigned in an operation
clause to those assigned in a resumption expression is allowed for the argument
of the clause but not for variables bound by lambda abstractions and
let-expressions outside the resumption expression.
Second, we are interested in incorporating other features from the literature
on algebraic effect handlers, such as dynamic
resources~\cite{Bauer/Pretnar_2015_JLAMP} and parameterized algebraic effects,
and restriction techniques that have been developed for type-safe imperative
programming with let-polymorphism such as (relaxed) value
restriction~\cite{Tofte_1990_IC,Wright_1995_LSC,Garrigue_2004_FLOPS}.
For example, we would like to develop a type system that enforces the
non-interfering restriction only to handlers implementing effect operations
invoked in polymorphic computation.
We also expect that it is possible to determine whether implementations of an effect
operation have no interfering resumption from the type signature of the
operation, as relaxed value restriction makes it possible to find safely
generalizable type variables from the type of a let-bound
expression~\cite{Garrigue_2004_FLOPS}.
Finally, we are also interested in implementing our idea for a language with
effect handlers such as Koka~\cite{Leijen_2017_POPL} and in applying the idea of
analyzing handlers to other settings such as dependent typing.

\section*{Acknowledgments}
We would like to thank the anonymous reviewers for their valuable comments.
This work was supported in part by ERATO HASUO Metamathematics for Systems
Design Project (No.\ JPMJER1603), JST (Sekiyama), and JSPS KAKENHI Grant
Number JP15H05706 (Igarashi).

\bibliographystyle{splncs04}
\bibliography{main}

\begin{thebibliography}{10}
\providecommand{\url}[1]{\texttt{#1}}
\providecommand{\urlprefix}{URL }
\providecommand{\doi}[1]{https://doi.org/#1}

\bibitem{Appel/MacQueen_1991_PLILP}
Appel, A.W., MacQueen, D.B.: {Standard {ML} of New Jersey}. In: Programming
  Language Implementation and Logic Programming, 3rd International Symposium,
  PLILP 1991, Proceedings. pp. 1--13 (1991). \doi{10.1007/3-540-54444-5\_83}

\bibitem{Asai/Kameyama_2007_APLAS}
Asai, K., Kameyama, Y.: Polymorphic delimited continuations. In: Programming
  Languages and Systems, 5th Asian Symposium, {APLAS} 2007, Proceedings. pp.
  239--254 (2007). \doi{10.1007/978-3-540-76637-7\_16}

\bibitem{Bauer/Pretnar_2015_JLAMP}
Bauer, A., Pretnar, M.: Programming with algebraic effects and handlers.
  Journal of Logical and Algebraic Methods in Programming  \textbf{84}(1),
  108--123 (2015). \doi{10.1016/j.jlamp.2014.02.001}

\bibitem{Danvy/Filinski_1990_LFP}
Danvy, O., Filinski, A.: Abstracting control. In: {LISP} and Functional
  Programming. pp. 151--160 (1990). \doi{10.1145/91556.91622}

\bibitem{multicoreOCaml_2017_TFP}
Dolan, S., Eliopoulos, S., Hillerstr{\"{o}}m, D., Madhavapeddy, A.,
  Sivaramakrishnan, K.C., White, L.: Concurrent system programming with effect
  handlers. In: Trends in Functional Programming - 18th International
  Symposium, {TFP} 2017, Revised Selected Papers. pp. 98--117 (2017).
  \doi{10.1007/978-3-319-89719-6\_6}

\bibitem{Felleisen/Hieb_1992_TCS}
Felleisen, M., Hieb, R.: The revised report on the syntactic theories of
  sequential control and state. Theorical Computer Science  \textbf{103}(2),
  235--271 (1992). \doi{10.1016/0304-3975(92)90014-7}

\bibitem{Filinski_1994_POPL}
Filinski, A.: Representing monads. In: Proceedings of the 21st {ACM}
  {SIGPLAN-SIGACT} Symposium on Principles of Programming Languages. pp.
  446--457 (1994). \doi{10.1145/174675.178047}

\bibitem{Filinski_2010_POPL}
Filinski, A.: Monads in action. In: Proceedings of the 37th {ACM}
  {SIGPLAN-SIGACT} Symposium on Principles of Programming Languages, {POPL}
  2010. pp. 483--494 (2010). \doi{10.1145/1706299.1706354}

\bibitem{Forster/Kammar/Lindley/Pretnar_2017_ICFP}
Forster, Y., Kammar, O., Lindley, S., Pretnar, M.: On the expressive power of
  user-defined effects: effect handlers, monadic reflection, delimited control.
  {PACMPL}  \textbf{1}({ICFP}),  13:1--13:29 (2017). \doi{10.1145/3110257}

\bibitem{Garrigue_2004_FLOPS}
Garrigue, J.: Relaxing the value restriction. In: Functional and Logic
  Programming, 7th International Symposium, {FLOPS} 2004, Proceedings. pp.
  196--213 (2004). \doi{10.1007/978-3-540-24754-8\_15}

\bibitem{Harper/Lillibridge_1993_LSC}
Harper, R., Lillibridge, M.: Polymorphic type assignment and {CPS} conversion.
  Lisp and Symbolic Computation  \textbf{6}(3-4),  361--380 (1993)

\bibitem{Hoang/Mitchell/Viswanathan_1993_LICS}
Hoang, M., Mitchell, J.C., Viswanathan, R.: Standard {ML-NJ} weak polymorphism
  and imperative constructs. In: Proceedings of the Eighth Annual Symposium on
  Logic in Computer Science {(LICS} '93)

\bibitem{Kammar/Lindley/Oury_2013_ICFP}
Kammar, O., Lindley, S., Oury, N.: Handlers in action. In: {ACM} {SIGPLAN}
  International Conference on Functional Programming, ICFP 2013. pp. 145--158
  (2013). \doi{10.1145/2500365.2500590}

\bibitem{Kammar/Pretnar_2017_JFP}
Kammar, O., Pretnar, M.: No value restriction is needed for algebraic effects
  and handlers. Journal of Functional Programming  \textbf{27}, ~e7 (2017).
  \doi{10.1017/S0956796816000320}

\bibitem{Kiselyov/Ishii_2015_Haskell}
Kiselyov, O., Ishii, H.: Freer monads, more extensible effects. In: Proceedings
  of the 8th {ACM} {SIGPLAN} Symposium on Haskell, Haskell 2015. pp. 94--105
  (2015). \doi{10.1145/2804302.2804319}

\bibitem{Leijen_2017_POPL}
Leijen, D.: Type directed compilation of row-typed algebraic effects. In:
  Proceedings of the 44th {ACM} {SIGPLAN} Symposium on Principles of
  Programming Languages, {POPL} 2017. pp. 486--499 (2017),
  \url{http://dl.acm.org/citation.cfm?id=3009872}

\bibitem{Leroy/Weis_1991_POPL}
Leroy, X., Weis, P.: Polymorphic type inference and assignment. In: Proceedings
  of the 18th Annual {ACM} Symposium on Principles of Programming Languages.
  pp. 291--302 (1991). \doi{10.1145/99583.99622}

\bibitem{Lindley/MacBrid/McLaughlin_2017_POPL}
Lindley, S., McBride, C., McLaughlin, C.: Do be do be do. In: Proceedings of
  the 44th {ACM} {SIGPLAN} Symposium on Principles of Programming Languages,
  {POPL} 2017. pp. 500--514 (2017),
  \url{http://dl.acm.org/citation.cfm?id=3009897}

\bibitem{Materzok/Biernack_2012_APLAS}
Materzok, M., Biernacki, D.: A dynamic interpretation of the {CPS} hierarchy.
  In: Programming Languages and Systems - 10th Asian Symposium, {APLAS} 2012.
  Proceedings. pp. 296--311 (2012). \doi{10.1007/978-3-642-35182-2\_21}

\bibitem{Plotkin/Power_2003_ACS}
Plotkin, G.D., Power, J.: Algebraic operations and generic effects. Applied
  Categorical Structures  \textbf{11}(1),  69--94 (2003).
  \doi{10.1023/A:1023064908962}

\bibitem{Plotkin/Pretnar_2009_ESOP}
Plotkin, G.D., Pretnar, M.: Handlers of algebraic effects. In: Programming
  Languages and Systems, 18th European Symposium on Programming, {ESOP} 2009,
  Held as Part of the Joint European Conferences on Theory and Practice of
  Software, {ETAPS} 2009, Proceedings. pp. 80--94 (2009).
  \doi{10.1007/978-3-642-00590-9\_7}

\bibitem{Reynolds_IFIP_1983}
Reynolds, J.C.: Types, abstraction and parametric polymorphism. In: {IFIP}
  Congress. pp. 513--523 (1983)

\bibitem{Tofte_1990_IC}
Tofte, M.: Type inference for polymorphic references. Inf. Comput.
  \textbf{89}(1),  1--34 (1990). \doi{10.1016/0890-5401(90)90018-D}

\bibitem{Wright_1995_LSC}
Wright, A.K.: Simple imperative polymorphism. Lisp and Symbolic Computation
  \textbf{8}(4),  343--355 (1995)

\bibitem{Wright/Felleisen_1994_IC}
Wright, A.K., Felleisen, M.: A syntactic approach to type soundness. Inf.
  Comput.  \textbf{115}(1),  38--94 (1994). \doi{10.1006/inco.1994.1093}

\bibitem{Wu/Schrijvers/Hinze_2014_Haskell}
Wu, N., Schrijvers, T., Hinze, R.: Effect handlers in scope. In: Proceedings of
  the 2014 {ACM} {SIGPLAN} symposium on Haskell, Haskell 2014. pp. 1--12
  (2014). \doi{10.1145/2633357.2633358}

\end{thebibliography}

{\iffull

\clearpage
\section*{Errata}
The following typographical errors in the ESOP'19 paper are fixed (the page
numbers are those of the ESOP'19).

\begin{description}
 \item[Page 363, Definition 1]
            ``$\ottnt{R} ::=  \mathsf{none}  \mid \ottsym{(}   \algeffseqover{ \alpha }   \ottsym{,}  \ottnt{A}  \ottsym{,}   \ottnt{B}   \rightarrow  \!  \epsilon  \;  \ottnt{C}   \ottsym{)}$'' is corrected to ``$\ottnt{R} ::=  \mathsf{none}  \mid \ottsym{(}   \algeffseqover{ \alpha }   \ottsym{,}  \mathit{x} \,  \mathord{:}  \, \ottnt{A}  \ottsym{,}   \ottnt{B}   \rightarrow  \!  \epsilon  \;  \ottnt{C}   \ottsym{)}$''.
 \item[Page 367]
            ``We define the syntax, operational semantics, and type system of {\surfacelang}'' is corrected to ``We define ... of {\interlang}''.
 \item[page 370]
            ``The basic idea is to substitute $ \algeffseqoverindex{ \ottnt{A} }{ \text{\unboldmath$\mathit{I}$} } $ for $ \algeffseqoverindex{ \beta }{ \text{\unboldmath$\mathit{I}$} } $'' is corrected to ``The basic idea is ... for $ \algeffseqoverindex{ \alpha }{ \text{\unboldmath$\mathit{I}$} } $''.
\end{description}

\appendix

\section{Definition}

\subsection{Surface language}

\subsubsection{Syntax}

\[
  \begin{array}{rrl}
   \multicolumn{3}{l}{\textbf{Effect operations}} \\
    \mathsf{op} \\[.5ex]
   \multicolumn{3}{l}{\textbf{Effects}} \\
    \epsilon & ::= & \text{sets of effect operations} \\[.5ex]
   \multicolumn{3}{l}{\textbf{Base types}} \\
    \iota & ::= &  \mathsf{bool}  \mid  \mathsf{int}  \mid  \bot  \mid ... \\[.5ex]
   \multicolumn{3}{l}{\textbf{Type variables}} \\
    \alpha, \beta, \gamma \\[.5ex]
   \multicolumn{3}{l}{\textbf{Types}} \\
    \ottnt{A}, \ottnt{B}, \ottnt{C}, \ottnt{D} & ::= & \alpha \mid \iota \mid  \ottnt{A}   \rightarrow  \!  \epsilon  \;  \ottnt{B}  \\[.5ex]
   \multicolumn{3}{l}{\textbf{Type schemes}} \\
    \sigma & ::= & \ottnt{A} \mid  \text{\unboldmath$\forall$}  \, \alpha  \ottsym{.}  \sigma \\[.5ex]
   \multicolumn{3}{l}{\textbf{Constants}} \\
    \ottnt{c} & ::= &  \mathsf{true}  \mid  \mathsf{false}  \mid  0  \mid  \mathsf{+}  \mid ... \\[.5ex]
   \multicolumn{3}{l}{\textbf{Terms}} \\
   \ottnt{M} & ::= & \mathit{x} \mid \ottnt{c} \mid  \lambda\!  \, \mathit{x}  \ottsym{.}  \ottnt{M} \mid \ottnt{M_{{\mathrm{1}}}} \, \ottnt{M_{{\mathrm{2}}}} \mid
    \mathsf{let} \, \mathit{x}  \ottsym{=}  \ottnt{M_{{\mathrm{1}}}} \,  \mathsf{in}  \, \ottnt{M_{{\mathrm{2}}}} \mid \\ &&
     \textup{\texttt{\#}\relax}  \mathsf{op}   \ottsym{(}   \ottnt{M}   \ottsym{)}  \mid
    \mathsf{handle} \, \ottnt{M} \, \mathsf{with} \, \ottnt{H} \mid
    \mathsf{resume} \, \ottnt{M}
    \\[.5ex]
   \multicolumn{3}{l}{\textbf{Handlers}} \\
    \ottnt{H} & ::= & \mathsf{return} \, \mathit{x}  \rightarrow  \ottnt{M} \mid \ottnt{H}  \ottsym{;}  \mathsf{op}  \ottsym{(}  \mathit{x}  \ottsym{)}  \rightarrow  \ottnt{M}
    \\[.5ex]
   \multicolumn{3}{l}{\textbf{Typing contexts}} \\
   \Gamma & ::= &  \emptyset  \mid
    \Gamma  \ottsym{,}  \mathit{x} \,  \mathord{:}  \, \sigma \mid \Gamma  \ottsym{,}  \alpha
    \\[.5ex]
  \end{array}
  \]

\begin{conv}
We write $ \text{\unboldmath$\forall$}  \,  \algeffseqoverindex{ \alpha }{ \ottmv{i}  \in  \text{\unboldmath$\mathit{I}$} }   \ottsym{.}  \ottnt{A}$ for $ \text{\unboldmath$\forall$}  \, \alpha_{{\mathrm{1}}}  \ottsym{.}  ...   \text{\unboldmath$\forall$}  \, \alpha_{\ottmv{n}}  \ottsym{.}  \ottnt{A}$
where $\text{\unboldmath$\mathit{I}$} = \{ 1, ..., n \}$.
We often omit indices ($\ottmv{i}$ and $\ottmv{j}$) and index sets ($\text{\unboldmath$\mathit{I}$}$ and $\text{\unboldmath$\mathit{J}$}$) if
they are not important: for example, we often abbreviate $ \text{\unboldmath$\forall$}  \,  \algeffseqoverindex{ \alpha }{ \ottmv{i}  \in  \text{\unboldmath$\mathit{I}$} }   \ottsym{.}  \ottnt{A}$ to
$ \text{\unboldmath$\forall$}  \,  \algeffseqoverindex{ \alpha }{ \text{\unboldmath$\mathit{I}$} }   \ottsym{.}  \ottnt{A}$ or even $ \text{\unboldmath$\forall$}  \,  \algeffseqover{ \alpha }   \ottsym{.}  \ottnt{A}$.
Similarly, we use a bold font for other sequences ($ \algeffseqoverindex{ \ottnt{A} }{ \ottmv{i}  \in  \text{\unboldmath$\mathit{I}$} } $ for a
sequence of types, $ \algeffseqoverindex{ \ottnt{v} }{ \ottmv{i}  \in  \text{\unboldmath$\mathit{I}$} } $ for a sequence of values, and so on).
We sometimes write $\ottsym{\{}   \algeffseqover{ \alpha }   \ottsym{\}}$ to view the sequence $ \algeffseqover{ \alpha } $ as a set by
ignoring the order.
We also write $ \algeffseqoverindex{  \text{\unboldmath$\forall$}  \,  \algeffseqoverindex{ \alpha }{ \text{\unboldmath$\mathit{I}$} }   \ottsym{.}  \sigma }{ \text{\unboldmath$\mathit{J}$} } $ for a sequence $ \text{\unboldmath$\forall$}  \,  \algeffseqoverindex{ \alpha }{ \text{\unboldmath$\mathit{I}$} }   \ottsym{.}  \sigma_{j_1}$, \ldots,
$ \text{\unboldmath$\forall$}  \,  \algeffseqoverindex{ \alpha }{ \text{\unboldmath$\mathit{I}$} }   \ottsym{.}  \sigma_{j_n}$ of type schemes (where $\text{\unboldmath$\mathit{J}$} = \{j_1,\ldots,j_n\}$).
\end{conv}

\begin{defn}[Domain of typing contexts]
 We define $ \mathit{dom}  (  \Gamma  ) $ as follows.
 \[\begin{array}{lll}
   \mathit{dom}  (  \Gamma  \ottsym{,}  \mathit{x} \,  \mathord{:}  \, \sigma  )  &\defeq&  \mathit{dom}  (  \Gamma  )  \,  \mathbin{\cup}  \, \ottsym{\{}  \mathit{x}  \ottsym{\}} \\
     \mathit{dom}  (  \Gamma  \ottsym{,}  \alpha  )   &\defeq&  \mathit{dom}  (  \Gamma  )  \,  \mathbin{\cup}  \, \ottsym{\{}  \alpha  \ottsym{\}} \\
   \end{array}\]
\end{defn}

\begin{defn}[Free type variables and type substitution in type schemes]
 Free type variables $ \mathit{ftv}  (  \sigma  ) $ in a type scheme $\sigma$ and type
 substitution $ \ottnt{B}    [   \algeffseqover{ \ottnt{A} }   \ottsym{/}   \algeffseqover{ \alpha }   ]  $ of $ \algeffseqover{ \ottnt{A} } $ for type variables $ \algeffseqover{ \alpha } $ in
 $\ottnt{B}$ are defined as usual.
\end{defn}

\begin{assum}
 We suppose that each constant $\ottnt{c}$ is assigned a first-order closed type
 $ \mathit{ty}  (  \ottnt{c}  ) $ of the form $\iota_{{\mathrm{1}}}  \rightarrow \! \langle \rangle \; \cdots  \rightarrow \! \langle \rangle  \;
 \iota_{\ottmv{n}}$ and that each effect operation $\mathsf{op}$ is assigned a signature of the
 form $  \text{\unboldmath$\forall$}     \algeffseqover{ \alpha }   .  \ottnt{A}  \hookrightarrow  \ottnt{B} $.
 We also assume that, for $\mathit{ty} \, \ottsym{(}  \mathsf{op}  \ottsym{)} \,  =  \,   \text{\unboldmath$\forall$}     \algeffseqover{ \alpha }   .  \ottnt{A}  \hookrightarrow  \ottnt{B} $, $ \mathit{ftv}  (  \ottnt{A}  )  \,  \subseteq  \, \ottsym{\{}   \algeffseqover{ \alpha }   \ottsym{\}}$ and
 $ \mathit{ftv}  (  \ottnt{B}  )  \,  \subseteq  \, \ottsym{\{}   \algeffseqover{ \alpha }   \ottsym{\}}$.
 
 Suppose that, for any $\iota$, there is a set $ K_{ \iota } $ of constants of
 $\iota$.
 For any constant $\ottnt{c}$, $ \mathit{ty}  (  \ottnt{c}  )  \,  =  \, \iota$ if and only if $\ottnt{c} \,  \in  \,  K_{ \iota } $.
 The function $ \zeta $ gives a denotation to pairs of constants.
 In particular, for any constants $\ottnt{c_{{\mathrm{1}}}}$ and $\ottnt{c_{{\mathrm{2}}}}$:
 (1) $ \zeta  (  \ottnt{c_{{\mathrm{1}}}}  ,  \ottnt{c_{{\mathrm{2}}}}  ) $ is defined if and only if
 $ \mathit{ty}  (  \ottnt{c_{{\mathrm{1}}}}  )  \,  =  \,  \iota   \rightarrow  \!   \langle \rangle   \;  \ottnt{A} $ and $ \mathit{ty}  (  \ottnt{c_{{\mathrm{2}}}}  )  \,  =  \, \iota$ for some $\ottnt{A}$; and
 (2) if $ \zeta  (  \ottnt{c_{{\mathrm{1}}}}  ,  \ottnt{c_{{\mathrm{2}}}}  ) $ is defined, $ \zeta  (  \ottnt{c_{{\mathrm{1}}}}  ,  \ottnt{c_{{\mathrm{2}}}}  ) $ is a constant and
 $ \mathit{ty}  (   \zeta  (  \ottnt{c_{{\mathrm{1}}}}  ,  \ottnt{c_{{\mathrm{2}}}}  )   )  \,  =  \, \ottnt{A}$ where $ \mathit{ty}  (  \ottnt{c_{{\mathrm{1}}}}  )  \,  =  \,  \iota   \rightarrow  \!   \langle \rangle   \;  \ottnt{A} $.
\end{assum}

\subsubsection{Typing}

\begin{defn}[Resumption type]
 We define resumption type $\ottnt{R}$ as follows.
 \[\begin{array}{lll}
  \ottnt{R}        & ::= &
    \mathsf{none}  \mid
   \ottsym{(}   \algeffseqover{ \alpha }   \ottsym{,}  \mathit{x} \,  \mathord{:}  \, \ottnt{A}  \ottsym{,}   \ottnt{B}   \rightarrow  \!  \epsilon  \;  \ottnt{C}   \ottsym{)} \\ &&
   \qquad\qquad \text{(if $ \mathit{ftv}  (  \ottnt{A}  )  \,  \mathbin{\cup}  \,  \mathit{ftv}  (  \ottnt{B}  )  \,  \subseteq  \, \ottsym{\{}   \algeffseqover{ \alpha }   \ottsym{\}}$ and $ \mathit{ftv}  (  \ottnt{C}  )  \,  \mathbin{\cap}  \, \ottsym{\{}   \algeffseqover{ \alpha }   \ottsym{\}} \,  =  \,  \emptyset $)}
   \end{array}\]
\end{defn}

\begin{defn}[Type scheme well-formedness]
 We write $\Gamma  \vdash  \sigma$ if and only if $ \mathit{ftv}  (  \sigma  )  \,  \subseteq  \,  \mathit{dom}  (  \Gamma  ) $.
\end{defn}

\begin{figure}[t!]
\noindent
\textbf{Well-formed rules for typing contexts} \\[1ex]
\framebox{$\vdash  \Gamma$}
\begin{center}
 $\ottdruleWFXXEmpty{}$ \hfil
 $\ottdruleWFXXVar{}$ \hfil
 $\ottdruleWFXXTyVar{}$ \\[3ex]
\end{center}
\textbf{Typing rules} \\[1ex]
\framebox{$\Gamma  \ottsym{;}  \ottnt{R}  \vdash  \ottnt{M}  \ottsym{:}  \ottnt{A} \,  |  \, \epsilon$}
\begin{center}
 $\ottdruleTSXXVar{}$ \hfil
 $\ottdruleTSXXConst{}$ \\[3ex]
 $\ottdruleTSXXAbs{}$ \hfil
 $\ottdruleTSXXApp{}$ \\[3ex]
 $\ottdruleTSXXOp{}$ \\[3ex]
 $\ottdruleTSXXLet{}$ \\[3ex]
 $\ottdruleTSXXWeak{}$ \hfil
 $\ottdruleTSXXHandle{}$ \\[3ex]
 $\ottdruleTSXXResume{}$ \hfil
\end{center}
\framebox{$\Gamma  \ottsym{;}  \ottnt{R}  \vdash  \ottnt{H}  \ottsym{:}  \ottnt{A} \,  |  \, \epsilon  \Rightarrow  \ottnt{B} \,  |  \, \epsilon'$}
\begin{center}
 $\ottdruleTHSXXReturn{}$ \\[3ex]
 $\ottdruleTHSXXOp{}$
\end{center}
 \caption{Typing rules in {\surfacelang}.}
 \label{fig:app-surface-typing}
\end{figure}

\begin{defn}
 Judgments $\vdash  \Gamma$ and $\Gamma  \ottsym{;}  \ottnt{R}  \vdash  \ottnt{M}  \ottsym{:}  \ottnt{A} \,  |  \, \epsilon$ and
 $\Gamma  \ottsym{;}  \ottnt{R}  \vdash  \ottnt{H}  \ottsym{:}  \ottnt{A} \,  |  \, \epsilon  \Rightarrow  \ottnt{B} \,  |  \, \epsilon'$ are the least relations
 satisfying the rules in \reffig{app-surface-typing}.

 Term $\ottnt{M}$ is a well-typed program of $\ottnt{A}$ if and only if
 $ \emptyset   \ottsym{;}   \mathsf{none}   \vdash  \ottnt{M}  \ottsym{:}  \ottnt{A} \,  |  \,  \langle \rangle $.
\end{defn}

\subsection{Intermediate language}
\subsubsection{Syntax}

\[
  \begin{array}{rrl}
   \multicolumn{3}{l}{\textbf{Values}} \\
    \ottnt{v} & ::= & \ottnt{c} \mid  \lambda\!  \, \mathit{x}  \ottsym{.}  \ottnt{e} \\[.5ex]
   \multicolumn{3}{l}{\textbf{Polymorphic values}} \\
    \ottnt{w} & ::= & \ottnt{v} \mid  \Lambda\!  \, \alpha  \ottsym{.}  \ottnt{w} \\[.5ex]
   \multicolumn{3}{l}{\textbf{Terms}} \\
    \ottnt{e} & ::= & \mathit{x} \,  \algeffseqover{ \ottnt{A} }  \mid \ottnt{c} \mid  \lambda\!  \, \mathit{x}  \ottsym{.}  \ottnt{e} \mid \ottnt{e_{{\mathrm{1}}}} \, \ottnt{e_{{\mathrm{2}}}} \mid
                  \mathsf{let} \, \mathit{x}  \ottsym{=}   \Lambda\!  \,  \algeffseqover{ \alpha }   \ottsym{.}  \ottnt{e_{{\mathrm{1}}}} \,  \mathsf{in}  \, \ottnt{e_{{\mathrm{2}}}} \mid
                   \textup{\texttt{\#}\relax}  \mathsf{op}   \ottsym{(}    \algeffseqover{ \ottnt{A} }    \ottsym{,}   \ottnt{e}   \ottsym{)}  \mid
                   \textup{\texttt{\#}\relax}  \mathsf{op}   \ottsym{(}    \algeffseqover{ \sigma }    \ottsym{,}   \ottnt{w}   \ottsym{,}   \ottnt{E}   \ottsym{)}  \mid \\ &&
                  \mathsf{handle} \, \ottnt{e} \, \mathsf{with} \, \ottnt{h} \mid
                  \mathsf{resume} \,  \algeffseqover{ \alpha }  \, \mathit{x}  \ottsym{.}  \ottnt{e}
                  \\[.5ex]
   \multicolumn{3}{l}{\textbf{Handlers}} \\
    \ottnt{h} & ::= & \mathsf{return} \, \mathit{x}  \rightarrow  \ottnt{e} \mid
                  \ottnt{h}  \ottsym{;}   \Lambda\!  \,  \algeffseqover{ \alpha }   \ottsym{.}  \mathsf{op}  \ottsym{(}  \mathit{x}  \ottsym{)}  \rightarrow  \ottnt{e} \\[.5ex]
   \multicolumn{3}{l}{\textbf{Evaluation contexts}} \\
     \ottnt{E} ^{  \algeffseqoverindex{ \alpha }{ \text{\unboldmath$\mathit{I}$} }  }  & ::= &  [\,]  \ \text{(if $ \algeffseqoverindex{ \alpha }{ \text{\unboldmath$\mathit{I}$} }  \,  =  \,  \emptyset $)} \mid
                      \ottnt{E} ^{  \algeffseqoverindex{ \alpha }{ \text{\unboldmath$\mathit{I}$} }  }  \, \ottnt{e_{{\mathrm{2}}}} \mid \ottnt{v_{{\mathrm{1}}}} \,  \ottnt{E} ^{  \algeffseqoverindex{ \alpha }{ \text{\unboldmath$\mathit{I}$} }  }  \mid
                     \mathsf{let} \, \mathit{x}  \ottsym{=}   \Lambda\!  \,  \algeffseqoverindex{ \beta }{ \text{\unboldmath$\mathit{J_{{\mathrm{1}}}}$} }   \ottsym{.}   \ottnt{E} ^{  \algeffseqoverindex{ \gamma }{ \text{\unboldmath$\mathit{J_{{\mathrm{2}}}}$} }  }  \,  \mathsf{in}  \, \ottnt{e_{{\mathrm{2}}}}
                       \ \text{(if $ \algeffseqoverindex{ \alpha }{ \text{\unboldmath$\mathit{I}$} }  \,  =  \,  \algeffseqoverindex{ \beta }{ \text{\unboldmath$\mathit{J_{{\mathrm{1}}}}$} }   \ottsym{,}   \algeffseqoverindex{ \gamma }{ \text{\unboldmath$\mathit{J_{{\mathrm{2}}}}$} } $)} \mid \\ &&
                      \textup{\texttt{\#}\relax}  \mathsf{op}   \ottsym{(}    \algeffseqoverindex{ \ottnt{A} }{ \text{\unboldmath$\mathit{J}$} }    \ottsym{,}    \ottnt{E} ^{  \algeffseqoverindex{ \alpha }{ \text{\unboldmath$\mathit{I}$} }  }    \ottsym{)}  \mid \mathsf{handle} \,  \ottnt{E} ^{  \algeffseqoverindex{ \alpha }{ \text{\unboldmath$\mathit{I}$} }  }  \, \mathsf{with} \, \ottnt{h}
                     \\[.5ex]
   \multicolumn{3}{l}{\textbf{Top-level typing contexts}} \\
    \Delta & ::= &  \emptyset  \mid
                    \Delta  \ottsym{,}  \alpha
  \end{array}
\]

\begin{conv}
 We write $\ottnt{E}$ for $ \ottnt{E} ^{  \algeffseqover{ \alpha }  } $ if $ \algeffseqover{ \alpha } $ is not important.
\end{conv}

\begin{defn}[Free type variables]
 We write $ \mathit{ftv}  (  \ottnt{e}  ) $ and $ \mathit{ftv}  (  \ottnt{E}  ) $ for sets of type variables that occur
 free in $\ottnt{e}$ and $\ottnt{E}$, respectively.
 The notion of free type variables is defined as usual.
\end{defn}

\begin{defn}[Substitution]
 Substitution $ \ottnt{e}    [   \algeffseqover{ \ottnt{A} }   \ottsym{/}   \algeffseqover{ \alpha }   ]  $ of $ \algeffseqover{ \ottnt{A} } $ for $ \algeffseqover{ \alpha } $ in $\ottnt{e}$ is defined
 in a capture-avoiding manner as usual.
 Substitution $ \ottnt{e}    [  \ottnt{w}  \ottsym{/}  \mathit{x}  ]  $ of polymorphic value $\ottnt{w}$ for variable $\mathit{x}$ in
 $\ottnt{e}$ is also defined in a standard capture-avoiding manner: in particular,
 \[\begin{array}{rcl}
   \ottsym{(}  \mathit{x} \,  \algeffseqover{ \ottnt{A} }   \ottsym{)}    [   \Lambda\!  \,  \algeffseqover{ \alpha }   \ottsym{.}  \ottnt{v}  \ottsym{/}  \mathit{x}  ]   &\defeq&  \ottnt{v}    [   \algeffseqover{ \ottnt{A} }   \ottsym{/}   \algeffseqover{ \alpha }   ]  
   \end{array}\]

 Substitution $ \ottnt{e}    [   \ottnt{E} ^{  \algeffseqoverindex{ \beta }{ \text{\unboldmath$\mathit{J}$} }  }   /  \mathsf{resume}  ]^{  \algeffseqoverindex{  \text{\unboldmath$\forall$}  \,  \algeffseqoverindex{ \beta }{ \text{\unboldmath$\mathit{J}$} }   \ottsym{.}  \ottnt{A} }{ \text{\unboldmath$\mathit{I}$} }  }_{  \Lambda\!  \,  \algeffseqoverindex{ \beta }{ \text{\unboldmath$\mathit{J}$} }   \ottsym{.}  \ottnt{v} }  $ of continuation
 $ \ottnt{E} ^{  \algeffseqoverindex{ \beta }{ \text{\unboldmath$\mathit{J}$} }  } $ for resumptions in $\ottnt{e}$ is defined in a capture-avoiding manner,
 as follows (we describe only important cases).
 \[\begin{array}{rcl}
   \ottsym{(}  \mathsf{resume} \,  \algeffseqoverindex{ \alpha }{ \text{\unboldmath$\mathit{I}$} }  \, \mathit{x}  \ottsym{.}  \ottnt{e}  \ottsym{)}    [   \ottnt{E} ^{  \algeffseqoverindex{ \beta }{ \text{\unboldmath$\mathit{J}$} }  }   /  \mathsf{resume}  ]^{  \algeffseqoverindex{  \text{\unboldmath$\forall$}  \,  \algeffseqoverindex{ \beta }{ \text{\unboldmath$\mathit{J}$} }   \ottsym{.}  \ottnt{A} }{ \text{\unboldmath$\mathit{I}$} }  }_{  \Lambda\!  \,  \algeffseqoverindex{ \beta }{ \text{\unboldmath$\mathit{J}$} }   \ottsym{.}  \ottnt{v} }   &\defeq \\
   \multicolumn{3}{r}{
   \mathsf{let} \, \mathit{y}  \ottsym{=}   \Lambda\!  \,  \algeffseqoverindex{ \beta }{ \text{\unboldmath$\mathit{J}$} }   \ottsym{.}     \ottnt{e}    [   \ottnt{E} ^{  \algeffseqoverindex{ \beta }{ \text{\unboldmath$\mathit{J}$} }  }   /  \mathsf{resume}  ]^{  \algeffseqoverindex{  \text{\unboldmath$\forall$}  \,  \algeffseqoverindex{ \beta }{ \text{\unboldmath$\mathit{J}$} }   \ottsym{.}  \ottnt{A} }{ \text{\unboldmath$\mathit{I}$} }  }_{  \Lambda\!  \,  \algeffseqoverindex{ \beta }{ \text{\unboldmath$\mathit{J}$} }   \ottsym{.}  \ottnt{v} }      [   \algeffseqoverindex{ \ottnt{A} }{ \text{\unboldmath$\mathit{I}$} }   \ottsym{/}   \algeffseqoverindex{ \alpha }{ \text{\unboldmath$\mathit{I}$} }   ]      [  \ottnt{v}  \ottsym{/}  \mathit{x}  ]   \,  \mathsf{in}  \,   \ottnt{E} ^{  \algeffseqoverindex{ \beta }{ \text{\unboldmath$\mathit{J}$} }  }   [  \mathit{y} \,  \algeffseqoverindex{ \beta }{ \text{\unboldmath$\mathit{J}$} }   ] } \\
    \multicolumn{3}{r}{\text{(if $\ottsym{(}   \mathit{ftv}  (  \ottnt{e}  )  \,  \mathbin{\cup}  \,  \mathit{ftv}  (   \ottnt{E} ^{  \algeffseqoverindex{ \beta }{ \text{\unboldmath$\mathit{J}$} }  }   )   \ottsym{)} \,  \mathbin{\cap}  \, \ottsym{\{}   \algeffseqoverindex{ \beta }{ \text{\unboldmath$\mathit{J}$} }   \ottsym{\}} \,  =  \,  \emptyset $ and $\mathit{y}$ is fresh)}} \\[.5ex]
      \ottsym{(}  \mathsf{return} \, \mathit{x}  \rightarrow  \ottnt{e}  \ottsym{)}    [  \ottnt{E}  /  \mathsf{resume}  ]^{  \algeffseqover{ \sigma }  }_{ \ottnt{w} }   &\defeq& \mathsf{return} \, \mathit{x}  \rightarrow   \ottnt{e}    [  \ottnt{E}  /  \mathsf{resume}  ]^{  \algeffseqover{ \sigma }  }_{ \ottnt{w} }   \\
      \ottsym{(}  \ottnt{h'}  \ottsym{;}   \Lambda\!  \,  \algeffseqoverindex{ \alpha }{ \text{\unboldmath$\mathit{J}$} }   \ottsym{.}  \mathsf{op}  \ottsym{(}  \mathit{x}  \ottsym{)}  \rightarrow  \ottnt{e}  \ottsym{)}    [  \ottnt{E}  /  \mathsf{resume}  ]^{  \algeffseqoverindex{ \sigma }{ \text{\unboldmath$\mathit{I}$} }  }_{ \ottnt{w} }   &\defeq&  \ottnt{h'}    [  \ottnt{E}  /  \mathsf{resume}  ]^{  \algeffseqoverindex{ \sigma }{ \text{\unboldmath$\mathit{I}$} }  }_{ \ottnt{w} }    \ottsym{;}   \Lambda\!  \,  \algeffseqoverindex{ \alpha }{ \text{\unboldmath$\mathit{J}$} }   \ottsym{.}  \mathsf{op}  \ottsym{(}  \mathit{x}  \ottsym{)}  \rightarrow  \ottnt{e} \\
   \end{array}\]
\end{defn}

\begin{defn}[Resumption type]
 We define resumption type $\ottnt{r}$ as follows.
 \[\begin{array}{lll}
  \ottnt{r}        & ::= &  \mathsf{none}  \mid \ottsym{(}   \algeffseqover{ \alpha }   \ottsym{,}  \ottnt{A}  \ottsym{,}   \ottnt{B}   \rightarrow  \!  \epsilon  \;  \ottnt{C}   \ottsym{)}
   \quad \text{(if $ \mathit{ftv}  (  \ottnt{A}  )  \,  \mathbin{\cup}  \,  \mathit{ftv}  (  \ottnt{B}  )  \,  \subseteq  \, \ottsym{\{}   \algeffseqover{ \alpha }   \ottsym{\}}$ and $ \mathit{ftv}  (  \ottnt{C}  )  \,  \mathbin{\cap}  \, \ottsym{\{}   \algeffseqover{ \alpha }   \ottsym{\}} \,  =  \,  \emptyset $)}
   \end{array}
 \]
 We also define a set of type variables captured by a resume type:
 \[\begin{array}{lll}
   \mathit{tyvars}  (   \mathsf{none}   )  &\defeq&  \emptyset  \\
   \mathit{tyvars}  (  \ottsym{(}   \algeffseqover{ \alpha }   \ottsym{,}  \ottnt{A}  \ottsym{,}   \ottnt{B}   \rightarrow  \!  \epsilon  \;  \ottnt{C}   \ottsym{)}  )  &\defeq& \ottsym{\{}   \algeffseqover{ \alpha }   \ottsym{\}}
   \end{array}\]
 \end{defn}

\newpage

\subsubsection{Semantics}

\begin{figure}[t!]
 \textbf{Reduction rules} \quad \framebox{$\ottnt{e_{{\mathrm{1}}}}  \rightsquigarrow  \ottnt{e_{{\mathrm{2}}}}$}
 \[\begin{array}{rcll@{\quad}rcll}
  \ottnt{c_{{\mathrm{1}}}} \, \ottnt{c_{{\mathrm{2}}}}               &  \rightsquigarrow  &  \zeta  (  \ottnt{c_{{\mathrm{1}}}}  ,  \ottnt{c_{{\mathrm{2}}}}  )  & \R{Const} &
  \ottsym{(}   \lambda\!  \, \mathit{x}  \ottsym{.}  \ottnt{e}  \ottsym{)} \, \ottnt{v}            &  \rightsquigarrow  &  \ottnt{e}    [  \ottnt{v}  \ottsym{/}  \mathit{x}  ]        & \R{Beta} \\[1.5ex]
  \multirow{2}{*}{\ensuremath{\mathsf{let} \, \mathit{x}  \ottsym{=}   \Lambda\!  \,  \algeffseqover{ \alpha }   \ottsym{.}  \ottnt{v} \,  \mathsf{in}  \, \ottnt{e}}} & \multirow{2}{*}{$ \rightsquigarrow $} & \multirow{2}{*}{$ \ottnt{e}    [   \Lambda\!  \,  \algeffseqover{ \alpha }   \ottsym{.}  \ottnt{v}  \ottsym{/}  \mathit{x}  ]  $} & \multirow{2}{*}{\R{Let}} &
  \mathsf{handle} \, \ottnt{v} \, \mathsf{with} \, \ottnt{h}     &  \rightsquigarrow  &  \ottnt{e}    [  \ottnt{v}  \ottsym{/}  \mathit{x}  ]   & \R{Return} \\
    &&&& \multicolumn{4}{r}{\text{(where $ \ottnt{h} ^\mathsf{return}  \,  =  \, \mathsf{return} \, \mathit{x}  \rightarrow  \ottnt{e}$)}} \\[1.5ex]
   \textup{\texttt{\#}\relax}  \mathsf{op}   \ottsym{(}    \algeffseqover{ \ottnt{A} }    \ottsym{,}   \ottnt{v}   \ottsym{)}            &  \rightsquigarrow  &  \textup{\texttt{\#}\relax}  \mathsf{op}   \ottsym{(}    \algeffseqover{ \ottnt{A} }    \ottsym{,}   \ottnt{v}   \ottsym{,}    [\,]    \ottsym{)}  & \R{Op} \\
   \end{array}\]
 \[\begin{array}{rcl@{\quad}l}
   \textup{\texttt{\#}\relax}  \mathsf{op}   \ottsym{(}    \algeffseqover{ \sigma }    \ottsym{,}   \ottnt{w}   \ottsym{,}   \ottnt{E}   \ottsym{)}  \, \ottnt{e_{{\mathrm{2}}}}   &  \rightsquigarrow  &  \textup{\texttt{\#}\relax}  \mathsf{op}   \ottsym{(}    \algeffseqover{ \sigma }    \ottsym{,}   \ottnt{w}   \ottsym{,}   \ottnt{E} \, \ottnt{e_{{\mathrm{2}}}}   \ottsym{)}  & \R{OpApp1} \\[1ex]
  \ottnt{v_{{\mathrm{1}}}} \,  \textup{\texttt{\#}\relax}  \mathsf{op}   \ottsym{(}    \algeffseqover{ \sigma }    \ottsym{,}   \ottnt{w}   \ottsym{,}   \ottnt{E}   \ottsym{)}    &  \rightsquigarrow  &  \textup{\texttt{\#}\relax}  \mathsf{op}   \ottsym{(}    \algeffseqover{ \sigma }    \ottsym{,}   \ottnt{w}   \ottsym{,}   \ottnt{v_{{\mathrm{1}}}} \, \ottnt{E}   \ottsym{)}  & \R{OpApp2} \\[1ex]
   \textup{\texttt{\#}\relax}  \mathsf{op}'   \ottsym{(}    \algeffseqoverindex{ \ottnt{A} }{ \text{\unboldmath$\mathit{I}$} }    \ottsym{,}    \textup{\texttt{\#}\relax}  \mathsf{op}   \ottsym{(}    \algeffseqoverindex{ \sigma }{ \text{\unboldmath$\mathit{J}$} }    \ottsym{,}   \ottnt{w}   \ottsym{,}   \ottnt{E}   \ottsym{)}    \ottsym{)}  &  \rightsquigarrow  &  \textup{\texttt{\#}\relax}  \mathsf{op}   \ottsym{(}    \algeffseqoverindex{ \sigma }{ \text{\unboldmath$\mathit{J}$} }    \ottsym{,}   \ottnt{w}   \ottsym{,}    \textup{\texttt{\#}\relax}  \mathsf{op}'   \ottsym{(}    \algeffseqoverindex{ \ottnt{A} }{ \text{\unboldmath$\mathit{I}$} }    \ottsym{,}   \ottnt{E}   \ottsym{)}    \ottsym{)}  & \R{OpOp} \\[1ex]
  \mathsf{handle} \,  \textup{\texttt{\#}\relax}  \mathsf{op}   \ottsym{(}    \algeffseqover{ \sigma }    \ottsym{,}   \ottnt{w}   \ottsym{,}   \ottnt{E}   \ottsym{)}  \, \mathsf{with} \, \ottnt{h} &  \rightsquigarrow  &  \textup{\texttt{\#}\relax}  \mathsf{op}   \ottsym{(}    \algeffseqover{ \sigma }    \ottsym{,}   \ottnt{w}   \ottsym{,}   \mathsf{handle} \, \ottnt{E} \, \mathsf{with} \, \ottnt{h}   \ottsym{)}  & \multirow{2}{*}{\R{OpHandle}} \\
    && \multicolumn{1}{r@{\quad}}{\text{(where $\mathsf{op} \,  \not\in  \,  \mathit{ops}  (  \ottnt{h}  ) $)}} \\[1ex]
  \mathsf{let} \, \mathit{x}  \ottsym{=}   \Lambda\!  \,  \algeffseqoverindex{ \alpha }{ \text{\unboldmath$\mathit{I}$} }   \ottsym{.}   \textup{\texttt{\#}\relax}  \mathsf{op}   \ottsym{(}    \algeffseqoverindex{ \sigma }{ \text{\unboldmath$\mathit{J}$} }    \ottsym{,}   \ottnt{w}   \ottsym{,}   \ottnt{E}   \ottsym{)}  \,  \mathsf{in}  \, \ottnt{e_{{\mathrm{2}}}} &  \rightsquigarrow  & & \multirow{2}{*}{\R{OpLet}} \\
    \multicolumn{3}{r@{\quad}}{ \textup{\texttt{\#}\relax}  \mathsf{op}   \ottsym{(}    \algeffseqoverindex{  \text{\unboldmath$\forall$}  \,  \algeffseqoverindex{ \alpha }{ \text{\unboldmath$\mathit{I}$} }   \ottsym{.}  \sigma }{ \text{\unboldmath$\mathit{J}$} }    \ottsym{,}    \Lambda\!  \,  \algeffseqoverindex{ \alpha }{ \text{\unboldmath$\mathit{I}$} }   \ottsym{.}  \ottnt{w}   \ottsym{,}   \mathsf{let} \, \mathit{x}  \ottsym{=}   \Lambda\!  \,  \algeffseqoverindex{ \alpha }{ \text{\unboldmath$\mathit{I}$} }   \ottsym{.}  \ottnt{E} \,  \mathsf{in}  \, \ottnt{e_{{\mathrm{2}}}}   \ottsym{)} } \\[1.5ex]
  \mathsf{handle} \,  \textup{\texttt{\#}\relax}  \mathsf{op}   \ottsym{(}    \algeffseqoverindex{  \text{\unboldmath$\forall$}  \,  \algeffseqoverindex{ \beta }{ \text{\unboldmath$\mathit{J}$} }   \ottsym{.}  \ottnt{A} }{ \text{\unboldmath$\mathit{I}$} }    \ottsym{,}    \Lambda\!  \,  \algeffseqoverindex{ \beta }{ \text{\unboldmath$\mathit{J}$} }   \ottsym{.}  \ottnt{v}   \ottsym{,}    \ottnt{E} ^{  \algeffseqoverindex{ \beta }{ \text{\unboldmath$\mathit{J}$} }  }    \ottsym{)}  \, \mathsf{with} \, \ottnt{h} &  \rightsquigarrow  & \\ \multicolumn{3}{r@{\quad}}{
      \ottnt{e}    [  \mathsf{handle} \,  \ottnt{E} ^{  \algeffseqoverindex{ \beta }{ \text{\unboldmath$\mathit{J}$} }  }  \, \mathsf{with} \, \ottnt{h}  /  \mathsf{resume}  ]^{  \algeffseqoverindex{  \text{\unboldmath$\forall$}  \,  \algeffseqoverindex{ \beta }{ \text{\unboldmath$\mathit{J}$} }   \ottsym{.}  \ottnt{A} }{ \text{\unboldmath$\mathit{I}$} }  }_{  \Lambda\!  \,  \algeffseqoverindex{ \beta }{ \text{\unboldmath$\mathit{J}$} }   \ottsym{.}  \ottnt{v} }      [    \algeffseqoverindex{ \ottnt{A} }{ \text{\unboldmath$\mathit{I}$} }     [   \algeffseqover{  \bot  }   \ottsym{/}   \algeffseqoverindex{ \beta }{ \text{\unboldmath$\mathit{J}$} }   ]    \ottsym{/}   \algeffseqoverindex{ \alpha }{ \text{\unboldmath$\mathit{I}$} }   ]      [   \ottnt{v}    [   \algeffseqover{  \bot  }   \ottsym{/}   \algeffseqoverindex{ \beta }{ \text{\unboldmath$\mathit{J}$} }   ]    \ottsym{/}  \mathit{x}  ]  } & \R{Handle} \\
   \multicolumn{3}{r@{\quad}}{\text{(where $ \ottnt{h} ^{ \mathsf{op} }  \,  =  \,  \Lambda\!  \,  \algeffseqoverindex{ \alpha }{ \text{\unboldmath$\mathit{I}$} }   \ottsym{.}  \mathsf{op}  \ottsym{(}  \mathit{x}  \ottsym{)}  \rightarrow  \ottnt{e}$)}}
   \end{array}\]
 \textbf{Evaluation rules} \quad \framebox{$\ottnt{e_{{\mathrm{1}}}}  \longrightarrow  \ottnt{e_{{\mathrm{2}}}}$}
 \begin{center}
  $\ottdruleEXXEval{}$
 \end{center}
 \caption{Semantics of {\interlang}.}
 \label{fig:app-inter-semantics}
\end{figure}

\begin{defn}
 Relations $ \longrightarrow $ and $ \rightsquigarrow $ are the smallest relations
 satisfying the rules in \reffig{app-inter-semantics}.
\end{defn}
 
\begin{defn}[Multi-step evaluation]
 Binary relation $ \longrightarrow^{*} $ over terms is the reflexive and transitive closure of $ \longrightarrow $.
\end{defn}

\begin{defn}[Nonreducible terms]
 We write $\ottnt{e}  \centernot\longrightarrow$ if there no terms $\ottnt{e'}$ such that $\ottnt{e}  \longrightarrow  \ottnt{e'}$.
\end{defn}

\newpage

\subsubsection{Typing}

\begin{figure}[t!]
\textbf{Typing rules} \\[1ex]
\framebox{$\Gamma  \ottsym{;}  \ottnt{r} \,   \vdash  \ottnt{e} \,   \ottsym{:}  \ottnt{A} \,  |  \, \epsilon$}
\begin{center}
 $\ottdruleTXXVar{}$ \hfil
 $\ottdruleTXXConst{}$ \\[3ex]
 $\ottdruleTXXAbs{}$ \hfil
 $\ottdruleTXXApp{}$ \\[3ex]
 $\ottdruleTXXOp{}$ \\[3ex]
 $\ottdruleTXXOpCont{}$ \\[3ex]
 $\ottdruleTXXWeak{}$ \hfil
 $\ottdruleTXXHandle{}$ \\[3ex]
 $\ottdruleTXXLet{}$ \\[3ex]
 $\ottdruleTXXResume{}$ \hfil
\end{center}
 \caption{Typing rules for terms in {\interlang}.}
 \label{fig:app-inter-typing1}
\end{figure}
\begin{figure}[t!]
\framebox{$\Gamma  \ottsym{;}  \ottnt{r} \,   \vdash  \ottnt{h}  \ottsym{:}  \ottnt{A} \,  |  \, \epsilon  \Rightarrow  \ottnt{B} \,  |  \, \epsilon'$}
\begin{center}
 $\ottdruleTHXXReturn{}$ \\[3ex]
 $\ottdruleTHXXOp{}$
\end{center}
\framebox{$ \Gamma   \vdash   \ottnt{E}   \ottsym{:}    \sigma  \multimap  \ottnt{A}   \,  |  \, \epsilon $}
\begin{center}
 $\ottdruleTEXXHole{}$ \\[3ex]
 $\ottdruleTEXXAppOne{}$ \\[3ex]
 $\ottdruleTEXXAppTwo{}$ \\[3ex]
 $\ottdruleTEXXOp{}$ \\[3ex]
 $\ottdruleTEXXHandle{}$ \\[3ex]
 $\ottdruleTEXXWeak{}$ \\[3ex]
 $\ottdruleTEXXLet{}$ \\[3ex]
\end{center}
 \caption{Typing rules for handlers and continuations in {\interlang}.}
 \label{fig:app-inter-typing2}
\end{figure}

\begin{defn}
 Judgments $\Gamma  \ottsym{;}  \ottnt{r} \,   \vdash  \ottnt{e} \,   \ottsym{:}  \ottnt{A} \,  |  \, \epsilon$ and $\Gamma  \ottsym{;}  \ottnt{r} \,   \vdash  \ottnt{h}  \ottsym{:}  \ottnt{A} \,  |  \, \epsilon  \Rightarrow  \ottnt{B} \,  |  \, \epsilon'$
 and $ \Gamma   \vdash   \ottnt{E}   \ottsym{:}    \sigma  \multimap  \ottnt{A}   \,  |  \, \epsilon $ are the smallest relations satisfying
 the rules in Figures~\ref{fig:app-inter-typing1} and \ref{fig:app-inter-typing2}.
\end{defn}

\begin{conv}[Typing judgments of values]
 We write $\Gamma  \vdash  \ottnt{v}  \ottsym{:}  \ottnt{A}$ if $\Gamma  \ottsym{;}   \mathsf{none}  \,   \vdash  \ottnt{v} \,   \ottsym{:}  \ottnt{A} \,  |  \, \epsilon$; effect $\epsilon$
 given to value $\ottnt{v}$ can be any (validated by \reflem{val-any-eff}).
\end{conv}

\subsubsection{Elaboration}

\begin{figure}[t!]
 \textbf{Term elaboration rules} \quad \framebox{$ \Gamma ;  \ottnt{R}   \vdash   \ottnt{M}  :  \ottnt{A}  \,  |   \, \epsilon  \mathrel{\algefftransarrow{ \ottnt{S} } }  \ottnt{e} $}
 \begin{center}
  $\ottdruleElabXXVar{}$ \hfil
  $\ottdruleElabXXConst{}$ \\[3ex]
  $\ottdruleElabXXAbs{}$ \\[3ex]
  $\ottdruleElabXXApp{}$ \\[3ex]
  $\ottdruleElabXXOp{}$ \\[3ex]
  $\ottdruleElabXXHandle{}$ \\[3ex]
  $\ottdruleElabXXLet{}$ \\[3ex]
  $\ottdruleElabXXResume{}$ \\[3ex]
  $\ottdruleElabXXWeak{}$ \\[3ex]
 \end{center}
 \textbf{Handler elaboration rules} \quad \framebox{$ \Gamma ;  \ottnt{R}   \vdash   \ottnt{H}  :  \ottnt{A}  \,  |   \, \epsilon  \, \Rightarrow   \ottnt{B}  \,  |   \, \epsilon'  \mathrel{\algefftransarrow{ \ottnt{S} } }  \ottnt{h} $}
 \begin{center}
  $\ottdruleElabHXXReturn{}$ \\[3ex]
  $\ottdruleElabHXXOp{}$ \\[3ex]
 \end{center}
 \caption{Elaboration rules.}
 \label{fig:app-elaboration}
\end{figure}

\begin{defn}
 Relation $ \Gamma ;  \ottnt{R}   \vdash   \ottnt{M}  :  \ottnt{A}  \,  |   \, \epsilon  \mathrel{\algefftransarrow{ \ottnt{S} } }  \ottnt{e} $ is the smallest relation satisfying
 the rules in \reffig{app-elaboration}.
\end{defn}
\section{Proofs}

\subsection{Type soundness of {\interlang}}


\begin{lemma}{weakening-typing-context}
 Suppose that $\vdash  \Gamma_{{\mathrm{1}}}  \ottsym{,}  \Gamma_{{\mathrm{2}}}$ and $\vdash  \Gamma_{{\mathrm{1}}}  \ottsym{,}  \Gamma_{{\mathrm{3}}}$ and
 $ \mathit{dom}  (  \Gamma_{{\mathrm{2}}}  )  \,  \mathbin{\cap}  \,  \mathit{dom}  (  \Gamma_{{\mathrm{3}}}  )  \,  =  \,  \emptyset $.
 Then, $\vdash  \Gamma_{{\mathrm{1}}}  \ottsym{,}  \Gamma_{{\mathrm{2}}}  \ottsym{,}  \Gamma_{{\mathrm{3}}}$.
\end{lemma}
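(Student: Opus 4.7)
The plan is to proceed by straightforward induction on the structure of $\Gamma_{{\mathrm{3}}}$ (equivalently, on the derivation of $\vdash  \Gamma_{{\mathrm{1}}}  \ottsym{,}  \Gamma_{{\mathrm{3}}}$). The well-formedness rules \WF{Empty}, \WF{Var}, and \WF{TyVar} give a unique last rule for each form of $\Gamma_{{\mathrm{3}}}$, so inversion is unambiguous.

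In the base case $\Gamma_{{\mathrm{3}}} \,  =  \,  \emptyset $, the goal $\vdash  \Gamma_{{\mathrm{1}}}  \ottsym{,}  \Gamma_{{\mathrm{2}}}  \ottsym{,}   \emptyset $ is exactly the first hypothesis. For the step case, I would split on whether the last binding in $\Gamma_{{\mathrm{3}}}$ is a term-variable binding $\mathit{x} \,  \mathord{:}  \, \sigma$ (by \WF{Var}) or a type-variable binding $\alpha$ (by \WF{TyVar}). In each subcase, I first invert $\vdash  \Gamma_{{\mathrm{1}}}  \ottsym{,}  \Gamma_{{\mathrm{3}}}$ to obtain well-formedness of the smaller prefix (and, for \WF{Var}, the well-formedness $\Gamma_{{\mathrm{1}}}  \ottsym{,}  \Gamma_{{\mathrm{3}}}' \vdash \sigma$ of the scheme). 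Applying the induction hypothesis with the smaller $\Gamma_{{\mathrm{3}}}'$ yields $\vdash  \Gamma_{{\mathrm{1}}}  \ottsym{,}  \Gamma_{{\mathrm{2}}}  \ottsym{,}  \Gamma_{{\mathrm{3}}}'$, after which I reapply \WF{Var} or \WF{TyVar} to extend with the last binding.

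Two side conditions must be discharged when reapplying the well-formedness rule. The freshness condition---that the newly added variable does not clash with any variable in $\Gamma_{{\mathrm{1}}}  \ottsym{,}  \Gamma_{{\mathrm{2}}}  \ottsym{,}  \Gamma_{{\mathrm{3}}}'$---follows from three facts: freshness with respect to $\Gamma_{{\mathrm{1}}}  \ottsym{,}  \Gamma_{{\mathrm{3}}}'$ comes from inversion, freshness with respect to $\Gamma_{{\mathrm{2}}}$ comes from the disjointness hypothesis $ \mathit{dom}  (  \Gamma_{{\mathrm{2}}}  )  \,  \mathbin{\cap}  \,  \mathit{dom}  (  \Gamma_{{\mathrm{3}}}  )  \,  =  \,  \emptyset $ (since the new variable lies in $ \mathit{dom}  (  \Gamma_{{\mathrm{3}}}  ) $). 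The scheme well-formedness condition $\Gamma_{{\mathrm{1}}}  \ottsym{,}  \Gamma_{{\mathrm{2}}}  \ottsym{,}  \Gamma_{{\mathrm{3}}}' \vdash \sigma$ follows immediately from $\Gamma_{{\mathrm{1}}}  \ottsym{,}  \Gamma_{{\mathrm{3}}}' \vdash \sigma$ because scheme well-formedness only demands $ \mathit{ftv}  (  \sigma  )  \,  \subseteq  \,  \mathit{dom}  (  \cdot  ) $, a condition preserved under domain extension; this is a trivial auxiliary lemma I would state in passing.

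I do not anticipate a genuine obstacle here: the proof is a standard structural weakening argument, with all the work lying in carefully propagating the disjointness hypothesis into the freshness side conditions. The only place to be mildly careful is recording the IH with the right instantiation---specifically, applying it to the prefix $\Gamma_{{\mathrm{3}}}'$ requires that $ \mathit{dom}  (  \Gamma_{{\mathrm{2}}}  )  \,  \mathbin{\cap}  \,  \mathit{dom}  (  \Gamma_{{\mathrm{3}}}'  )  \,  =  \,  \emptyset $, which follows from the original disjointness assumption since $ \mathit{dom}  (  \Gamma_{{\mathrm{3}}}'  )  \,  \subseteq  \,  \mathit{dom}  (  \Gamma_{{\mathrm{3}}}  ) $.
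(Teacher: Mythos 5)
Your proposal is correct and follows exactly the route the paper takes: the paper's proof is simply ``straightforward by induction on $\Gamma_{{\mathrm{3}}}$,'' and your plan fills in precisely the expected details (inversion of \WF{Var}/\WF{TyVar}, freshness from the disjointness hypothesis, and monotonicity of scheme well-formedness under domain extension).
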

\begin{proof}
 Straightforward by induction on $\Gamma_{{\mathrm{3}}}$.
\end{proof}

\begin{lemmap}{Weakening}{weakening}
 Suppose that $\vdash  \Gamma_{{\mathrm{1}}}  \ottsym{,}  \Gamma_{{\mathrm{2}}}$.
 Let $\Gamma_{{\mathrm{3}}}$ be a typing context such that
 $ \mathit{dom}  (  \Gamma_{{\mathrm{2}}}  )  \,  \mathbin{\cap}  \,  \mathit{dom}  (  \Gamma_{{\mathrm{3}}}  )  \,  =  \,  \emptyset $.
 \begin{enumerate}
  \item \label{lem:weakening:term}
        If $\Gamma_{{\mathrm{1}}}  \ottsym{,}  \Gamma_{{\mathrm{3}}}  \ottsym{;}  \ottnt{r} \,   \vdash  \ottnt{e} \,   \ottsym{:}  \ottnt{A} \,  |  \, \epsilon$,
        then $\Gamma_{{\mathrm{1}}}  \ottsym{,}  \Gamma_{{\mathrm{2}}}  \ottsym{,}  \Gamma_{{\mathrm{3}}}  \ottsym{;}  \ottnt{r} \,   \vdash  \ottnt{e} \,   \ottsym{:}  \ottnt{A} \,  |  \, \epsilon$.
  \item If $\Gamma_{{\mathrm{1}}}  \ottsym{,}  \Gamma_{{\mathrm{3}}}  \ottsym{;}  \ottnt{r} \,   \vdash  \ottnt{h}  \ottsym{:}  \ottnt{A} \,  |  \, \epsilon  \Rightarrow  \ottnt{B} \,  |  \, \epsilon'$,
        then $\Gamma_{{\mathrm{1}}}  \ottsym{,}  \Gamma_{{\mathrm{2}}}  \ottsym{,}  \Gamma_{{\mathrm{3}}}  \ottsym{;}  \ottnt{r} \,   \vdash  \ottnt{h}  \ottsym{:}  \ottnt{A} \,  |  \, \epsilon  \Rightarrow  \ottnt{B} \,  |  \, \epsilon'$.
  \item \label{lem:weakening:ectx}
        If $ \Gamma_{{\mathrm{1}}}  \ottsym{,}  \Gamma_{{\mathrm{3}}}   \vdash   \ottnt{E}   \ottsym{:}    \ottnt{A}  \multimap  \ottnt{B}   \,  |  \, \epsilon $,
        then $ \Gamma_{{\mathrm{1}}}  \ottsym{,}  \Gamma_{{\mathrm{2}}}  \ottsym{,}  \Gamma_{{\mathrm{3}}}   \vdash   \ottnt{E}   \ottsym{:}    \ottnt{A}  \multimap  \ottnt{B}   \,  |  \, \epsilon $.
 \end{enumerate}
\end{lemmap}
\begin{proof}
 By mutual induction on the typing derivations.
 We mention only the interesting cases.
 \begin{caseanalysis}
  \case \T{Var} and \T{Const}: By \reflem{weakening-typing-context}.

  \case \T{Abs}:
  We are given $\Gamma_{{\mathrm{1}}}  \ottsym{,}  \Gamma_{{\mathrm{3}}}  \ottsym{;}  \ottnt{r} \,   \vdash   \lambda\!  \, \mathit{x}  \ottsym{.}  \ottnt{e'} \,   \ottsym{:}   \ottnt{A'}   \rightarrow  \!  \epsilon'  \;  \ottnt{B'}  \,  |  \, \epsilon$ and,
  by inversion, $\Gamma_{{\mathrm{1}}}  \ottsym{,}  \Gamma_{{\mathrm{3}}}  \ottsym{,}  \mathit{x} \,  \mathord{:}  \, \ottnt{A'}  \ottsym{;}  \ottnt{r} \,   \vdash  \ottnt{e'} \,   \ottsym{:}  \ottnt{B'} \,  |  \, \epsilon'$.
  With loss of generality, we can suppose that
  $\mathit{x} \,  \not\in  \,  \mathit{dom}  (  \Gamma_{{\mathrm{1}}}  \ottsym{,}  \Gamma_{{\mathrm{2}}}  \ottsym{,}  \Gamma_{{\mathrm{3}}}  ) $.
  By the IH, $\Gamma_{{\mathrm{1}}}  \ottsym{,}  \Gamma_{{\mathrm{2}}}  \ottsym{,}  \Gamma_{{\mathrm{3}}}  \ottsym{,}  \mathit{x} \,  \mathord{:}  \, \ottnt{A'}  \ottsym{;}  \ottnt{r} \,   \vdash  \ottnt{e'} \,   \ottsym{:}  \ottnt{B'} \,  |  \, \epsilon'$.
  Thus, by \T{Abs},
  $\Gamma_{{\mathrm{1}}}  \ottsym{,}  \Gamma_{{\mathrm{2}}}  \ottsym{,}  \Gamma_{{\mathrm{3}}}  \ottsym{;}  \ottnt{r} \,   \vdash   \lambda\!  \, \mathit{x}  \ottsym{.}  \ottnt{e'} \,   \ottsym{:}   \ottnt{A'}   \rightarrow  \!  \epsilon'  \;  \ottnt{B'}  \,  |  \, \epsilon$.
 \end{caseanalysis}
\end{proof}


\begin{lemma}{ty-subst-typing-context}
 If $\vdash  \Gamma_{{\mathrm{1}}}  \ottsym{,}  \alpha  \ottsym{,}  \Gamma_{{\mathrm{2}}}$ and $\Gamma_{{\mathrm{1}}}  \vdash  \ottnt{A}$, then $\vdash  \Gamma_{{\mathrm{1}}}  \ottsym{,}  \Gamma_{{\mathrm{2}}} \,  [  \ottnt{A}  \ottsym{/}  \alpha  ] $.
\end{lemma}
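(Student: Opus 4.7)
I would proceed by induction on the structure of $\Gamma_2$ (equivalently, on the derivation of $\vdash \Gamma_1, \alpha, \Gamma_2$). The base case is $\Gamma_2 =  \emptyset $, for which $\Gamma_2 [ \ottnt{A} / \alpha ] =  \emptyset $ as well; inverting \WF{TyVar} on $\vdash \Gamma_1, \alpha$ yields $\vdash \Gamma_1$ directly, which is what we need.

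For the inductive step, I would split on the form of the last binding in $\Gamma_2$. If $\Gamma_2 = \Gamma_2', \mathit{y} :\sigma$, then inversion via \WF{Var} gives $\vdash \Gamma_1, \alpha, \Gamma_2'$, $\mathit{y} \not\in  \mathit{dom} (\Gamma_1, \alpha, \Gamma_2')$, and $\Gamma_1, \alpha, \Gamma_2' \vdash \sigma$. The induction hypothesis then yields $\vdash \Gamma_1, \Gamma_2'[\ottnt{A}/\alpha]$. Since type substitution leaves $ \mathit{dom} (\cdot)$ unchanged, the freshness condition on $\mathit{y}$ transfers. It remains to establish $\Gamma_1, \Gamma_2'[\ottnt{A}/\alpha] \vdash \sigma[\ottnt{A}/\alpha]$, after which \WF{Var} closes the case. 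The case $\Gamma_2 = \Gamma_2', \beta$ is symmetric, using \WF{TyVar} and the fact that $(\Gamma_2',\beta)[\ottnt{A}/\alpha] = \Gamma_2'[\ottnt{A}/\alpha],\beta$ (choosing $\beta \neq \alpha$ by the well-formedness of the outer context).

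The main obstacle is the auxiliary well-formedness preservation claim: if $\Gamma_1, \alpha, \Gamma_2' \vdash \sigma$ and $\Gamma_1 \vdash \ottnt{A}$, then $\Gamma_1, \Gamma_2'[\ottnt{A}/\alpha] \vdash \sigma[\ottnt{A}/\alpha]$. Unfolding the definition of type scheme well-formedness as $ \mathit{ftv} (\sigma) \subseteq  \mathit{dom} (\Gamma_1,\alpha,\Gamma_2')$, this reduces to checking that $ \mathit{ftv} (\sigma[\ottnt{A}/\alpha]) \subseteq ( \mathit{ftv} (\sigma) \setminus \{\alpha\}) \cup  \mathit{ftv} (\ottnt{A})$, combined with $ \mathit{ftv} (\ottnt{A}) \subseteq  \mathit{dom} (\Gamma_1)$ from the assumption $\Gamma_1 \vdash \ottnt{A}$ and the preservation of $ \mathit{dom} $ under substitution. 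I would isolate this as a short sub-lemma proved by induction on $\sigma$; once it is in hand, the induction above goes through with no further complication.
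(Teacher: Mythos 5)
Your proposal is correct and follows exactly the route the paper takes: the paper's own proof is simply ``by induction on $\Gamma_{{\mathrm{2}}}$,'' and your argument is a faithful elaboration of that induction, with the inversion cases for \WF{Var} and \WF{TyVar} and the free-type-variable bookkeeping for $\Gamma  \vdash  \sigma$ spelled out correctly.
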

\begin{proof}
 By induction on $\Gamma_{{\mathrm{2}}}$.
\end{proof}

\begin{lemma}{typing-context-wf}
 If $\Gamma  \ottsym{;}  \ottnt{r} \,   \vdash  \ottnt{e} \,   \ottsym{:}  \ottnt{A} \,  |  \, \epsilon$, then $\vdash  \Gamma$.
\end{lemma}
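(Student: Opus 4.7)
The plan is to proceed by induction on the derivation of $\Gamma  \ottsym{;}  \ottnt{r} \,   \vdash  \ottnt{e} \,   \ottsym{:}  \ottnt{A} \,  |  \, \epsilon$, performed together with the analogous statements for the handler-typing and continuation-typing judgments, since these judgments are mutually referenced (e.g.\ \T{Handle} invokes handler typing, and \T{OpCont} invokes continuation typing). In each leaf rule, $\vdash  \Gamma$ already appears as a premise: this is immediate for \T{Var} and \T{Const}, and trivial for \TE{Hole} once we strengthen the statement for continuations so that the well-formedness of $\Gamma$ is asserted explicitly (adding \T{Const}-style context premises where needed, or extracting them from the other premises).

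Before starting the induction, I would establish a small auxiliary lemma: if $\vdash  \Gamma_{{\mathrm{1}}}  \ottsym{,}  \Gamma_{{\mathrm{2}}}$ then $\vdash  \Gamma_{{\mathrm{1}}}$, proved by a routine induction on $\Gamma_{{\mathrm{2}}}$ using inversion of \WF{Var} and \WF{TyVar}. This lemma is what lets us strip the additional bindings introduced by rules such as \T{Abs} (which extends $\Gamma$ with $\mathit{x} \,  \mathord{:}  \, \ottnt{A}$), \T{Let} (which extends with $ \algeffseqover{ \alpha } $ and then with $\mathit{x} \,  \mathord{:}  \,  \text{\unboldmath$\forall$}  \,  \algeffseqover{ \alpha }   \ottsym{.}  \ottnt{A}$), \T{Resume} (which extends with $ \algeffseqover{ \beta } $ and $\mathit{x} \,  \mathord{:}  \,  \ottnt{A}    [   \algeffseqover{ \beta }   \ottsym{/}   \algeffseqover{ \alpha }   ]  $), and \T{OpCont} (which extends with $ \algeffseqoverindex{ \beta }{ \text{\unboldmath$\mathit{J}$} } $ on the value-typing premise).

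With that lemma in hand, the inductive cases become uniform: for each rule, pick any premise that has $\Gamma$ (possibly extended) on the left, apply the induction hypothesis to obtain $\vdash$ of the extended context, and then apply the stripping lemma to recover $\vdash  \Gamma$. For instance, in \T{Abs} the IH on the body gives $\vdash  \Gamma  \ottsym{,}  \mathit{x} \,  \mathord{:}  \, \ottnt{A}$, whence $\vdash  \Gamma$; in \T{OpCont}, the IH on $\Gamma  \ottsym{,}   \algeffseqoverindex{ \beta }{ \text{\unboldmath$\mathit{J}$} }   \ottsym{;}  \ottnt{r} \,   \vdash  \ottnt{v} \,   \ottsym{:}   \ottnt{A}    [   \algeffseqoverindex{ \ottnt{C} }{ \text{\unboldmath$\mathit{I}$} }   \ottsym{/}   \algeffseqoverindex{ \alpha }{ \text{\unboldmath$\mathit{I}$} }   ]   \,  |  \, \epsilon$ gives $\vdash  \Gamma  \ottsym{,}   \algeffseqoverindex{ \beta }{ \text{\unboldmath$\mathit{J}$} } $, and stripping yields $\vdash  \Gamma$; \T{Handle}, \T{App}, \T{Op}, and \T{Weak} reduce directly to the IH on any one of their typing premises.

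The only mild subtlety is that the continuation judgment $ \Gamma   \vdash   \ottnt{E}   \ottsym{:}    \sigma  \multimap  \ottnt{A}   \,  |  \, \epsilon $ in rule \TE{Hole} has no direct $\vdash  \Gamma$ premise in the figure; to keep the mutual induction going one either reads \TE{Hole} as implicitly assuming $\vdash  \Gamma$ (the standard convention) or notes that \TE{Hole} is only used inside larger derivations whose outer rules (\TE{App1}, \TE{App2}, \TE{Op}, \TE{Handle}, \TE{Let}, \TE{Weak}) all carry typing premises on terms or handlers from which $\vdash  \Gamma$ is recoverable by the IH. This is the main (minor) obstacle; everything else is bookkeeping.
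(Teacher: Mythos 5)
Your proposal is correct and matches the paper's proof, which is simply ``by induction on the derivation of $\Gamma  \ottsym{;}  \ottnt{r} \,   \vdash  \ottnt{e} \,   \ottsym{:}  \ottnt{A} \,  |  \, \epsilon$''; the context-stripping lemma you identify is exactly the bookkeeping that induction needs. The one wrinkle you raise about \TE{Hole} dissolves once you notice the continuation-typing judgment need not be part of the induction at all: the only term rule mentioning it, \T{OpCont}, also carries the value premise $\Gamma  \ottsym{,}   \algeffseqoverindex{ \beta }{ \text{\unboldmath$\mathit{J}$} }   \ottsym{;}  \ottnt{r} \,   \vdash  \ottnt{v} \,   \ottsym{:}   \ottnt{A}    [   \algeffseqoverindex{ \ottnt{C} }{ \text{\unboldmath$\mathit{I}$} }   \ottsym{/}   \algeffseqoverindex{ \alpha }{ \text{\unboldmath$\mathit{I}$} }   ]   \,  |  \, \epsilon$, from which you already recover $\vdash  \Gamma$, so neither reinterpreting \TE{Hole} nor appealing to surrounding derivations is necessary.
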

\begin{proof}
 By induction on the derivation of $\Gamma  \ottsym{;}  \ottnt{r} \,   \vdash  \ottnt{e} \,   \ottsym{:}  \ottnt{A} \,  |  \, \epsilon$.
\end{proof}

\begin{lemmap}{Type substitution}{ty-subst}
 Suppose that $\Gamma_{{\mathrm{1}}}  \vdash  \ottnt{A}$ and $\alpha \,  \not\in  \,  \mathit{tyvars}  (  \ottnt{r}  ) $ and
 $ \mathit{tyvars}  (  \ottnt{r}  )  \,  \subseteq  \,  \mathit{dom}  (  \Gamma_{{\mathrm{2}}}  ) $.
 \begin{enumerate}
  \item \label{lem:ty-subst:term}
        If $\Gamma_{{\mathrm{1}}}  \ottsym{,}  \alpha  \ottsym{,}  \Gamma_{{\mathrm{2}}}  \ottsym{;}  \ottnt{r} \,   \vdash  \ottnt{e} \,   \ottsym{:}  \ottnt{B} \,  |  \, \epsilon$,
        then $\Gamma_{{\mathrm{1}}}  \ottsym{,}  \Gamma_{{\mathrm{2}}} \,  [  \ottnt{A}  \ottsym{/}  \alpha  ]   \ottsym{;}   \ottnt{r}    [  \ottnt{A}  \ottsym{/}  \alpha  ]   \,   \vdash   \ottnt{e}    [  \ottnt{A}  \ottsym{/}  \alpha  ]   \,   \ottsym{:}   \ottnt{B}    [  \ottnt{A}  \ottsym{/}  \alpha  ]   \,  |  \, \epsilon$.

  \item If $\Gamma_{{\mathrm{1}}}  \ottsym{,}  \alpha  \ottsym{,}  \Gamma_{{\mathrm{2}}}  \ottsym{;}  \ottnt{r} \,   \vdash  \ottnt{h}  \ottsym{:}  \ottnt{B} \,  |  \, \epsilon  \Rightarrow  \ottnt{C} \,  |  \, \epsilon'$,
        then $\Gamma_{{\mathrm{1}}}  \ottsym{,}  \Gamma_{{\mathrm{2}}} \,  [  \ottnt{A}  \ottsym{/}  \alpha  ]   \ottsym{;}   \ottnt{r}    [  \ottnt{A}  \ottsym{/}  \alpha  ]   \,   \vdash   \ottnt{h}    [  \ottnt{A}  \ottsym{/}  \alpha  ]    \ottsym{:}   \ottnt{B}    [  \ottnt{A}  \ottsym{/}  \alpha  ]   \,  |  \, \epsilon  \Rightarrow   \ottnt{C}    [  \ottnt{A}  \ottsym{/}  \alpha  ]   \,  |  \, \epsilon'$.

  \item \label{lem:ty-subst:ectx}
        If $ \Gamma_{{\mathrm{1}}}  \ottsym{,}  \alpha  \ottsym{,}  \Gamma_{{\mathrm{2}}}   \vdash   \ottnt{E}   \ottsym{:}    \ottnt{B}  \multimap  \ottnt{C}   \,  |  \, \epsilon $,
        then $ \Gamma_{{\mathrm{1}}}  \ottsym{,}  \Gamma_{{\mathrm{2}}} \,  [  \ottnt{A}  \ottsym{/}  \alpha  ]    \vdash    \ottnt{E}    [  \ottnt{A}  \ottsym{/}  \alpha  ]     \ottsym{:}     \ottnt{B}    [  \ottnt{A}  \ottsym{/}  \alpha  ]    \multimap   \ottnt{C}    [  \ottnt{A}  \ottsym{/}  \alpha  ]     \,  |  \, \epsilon $.
 \end{enumerate}
\end{lemmap}
\begin{proof}
 By mutual induction on the typing derivations.
 We mention only the interesting cases.
 \begin{caseanalysis}
  \case \T{Var} and \T{Const}: By \reflem{ty-subst-typing-context}.

  \case \T{Resume}:
  We are given
  $\Gamma_{{\mathrm{1}}}  \ottsym{,}  \alpha  \ottsym{,}  \Gamma_{{\mathrm{2}}}  \ottsym{;}  \ottsym{(}   \algeffseqover{ \beta }   \ottsym{,}  \ottnt{A'}  \ottsym{,}   \ottnt{B'}   \rightarrow  \!  \epsilon'  \;  \ottnt{B}   \ottsym{)} \,   \vdash  \mathsf{resume} \,  \algeffseqover{ \gamma }  \, \mathit{x}  \ottsym{.}  \ottnt{e'} \,   \ottsym{:}  \ottnt{B} \,  |  \, \epsilon$
  and, by inversion,
  \begin{itemize}
   \item $ \algeffseqover{ \beta }  \,  \in  \, \Gamma_{{\mathrm{1}}}  \ottsym{,}  \alpha  \ottsym{,}  \Gamma_{{\mathrm{2}}}$,
   \item $\Gamma_{{\mathrm{1}}}  \ottsym{,}  \alpha  \ottsym{,}  \Gamma_{{\mathrm{2}}}  \ottsym{,}   \algeffseqover{ \gamma }   \ottsym{,}  \mathit{x} \,  \mathord{:}  \, \ottnt{A'} \,  [   \algeffseqover{ \gamma }   \ottsym{/}   \algeffseqover{ \beta }   ]   \ottsym{;}  \ottsym{(}   \algeffseqover{ \beta }   \ottsym{,}  \ottnt{A'}  \ottsym{,}   \ottnt{B'}   \rightarrow  \!  \epsilon'  \;  \ottnt{B}   \ottsym{)} \,   \vdash  \ottnt{e'} \,   \ottsym{:}   \ottnt{B'}    [   \algeffseqover{ \gamma }   \ottsym{/}   \algeffseqover{ \beta }   ]   \,  |  \, \epsilon'$, and
   \item $\epsilon' \,  \subseteq  \, \epsilon$.
  \end{itemize}
  Without loss of generality, we can suppose that $\alpha \,  \not\in  \,  \algeffseqover{ \gamma } $.
  By the IH,
  \[
  \Gamma_{{\mathrm{1}}}  \ottsym{,}  \Gamma_{{\mathrm{2}}} \,  [  \ottnt{A}  \ottsym{/}  \alpha  ]   \ottsym{,}   \algeffseqover{ \gamma }   \ottsym{,}  \mathit{x} \,  \mathord{:}  \, \ottnt{A'} \,  [   \algeffseqover{ \gamma }   \ottsym{/}   \algeffseqover{ \beta }   ]   \ottsym{;}  \ottsym{(}   \algeffseqover{ \beta }   \ottsym{,}  \ottnt{A'}  \ottsym{,}   \ottnt{B'}   \rightarrow  \!  \epsilon'  \;   \ottnt{B}    [  \ottnt{A}  \ottsym{/}  \alpha  ]     \ottsym{)} \,   \vdash   \ottnt{e'}    [  \ottnt{A}  \ottsym{/}  \alpha  ]   \,   \ottsym{:}   \ottnt{B'}    [   \algeffseqover{ \gamma }   \ottsym{/}   \algeffseqover{ \beta }   ]   \,  |  \, \epsilon'.
  \]
  Note that
  (1) $ \ottnt{A'}    [  \ottnt{A}  \ottsym{/}  \alpha  ]   \,  =  \, \ottnt{A'}$, $ \ottnt{B'}    [  \ottnt{A}  \ottsym{/}  \alpha  ]   \,  =  \, \ottnt{B'}$, and
      $  \ottnt{B'}    [   \algeffseqover{ \gamma }   \ottsym{/}   \algeffseqover{ \beta }   ]      [  \ottnt{A}  \ottsym{/}  \alpha  ]   \,  =  \,  \ottnt{B'}    [   \algeffseqover{ \gamma }   \ottsym{/}   \algeffseqover{ \beta }   ]  $ by the requirement
      to resume types and
  (2) $ \mathit{ftv}  (   \ottnt{B}    [  \ottnt{A}  \ottsym{/}  \alpha  ]    )  \,  \mathbin{\cap}  \,  \algeffseqover{ \beta }  \,  =  \,  \emptyset $
      since $\Gamma_{{\mathrm{1}}}  \vdash  \ottnt{A}$ and
            $ \algeffseqover{ \beta }  =  \mathit{tyvars}  (  \ottsym{(}   \algeffseqover{ \beta }   \ottsym{,}  \ottnt{A'}  \ottsym{,}   \ottnt{B'}   \rightarrow  \!  \epsilon'  \;  \ottnt{B}   \ottsym{)}  )  \,  \subseteq  \,  \mathit{dom}  (  \Gamma_{{\mathrm{2}}}  ) $ and
            $ \mathit{dom}  (  \Gamma_{{\mathrm{1}}}  )  \,  \mathbin{\cap}  \,  \mathit{dom}  (  \Gamma_{{\mathrm{2}}}  )  \,  =  \,  \emptyset $ by \reflem{typing-context-wf}.
  By \T{Resume}, we finish.
 \end{caseanalysis}
\end{proof}


\begin{lemmap}{Values can be given any effects}{val-any-eff}
 If $\Gamma  \ottsym{;}  \ottnt{r} \,   \vdash  \ottnt{v} \,   \ottsym{:}  \ottnt{A} \,  |  \, \epsilon$, then $\Gamma  \ottsym{;}  \ottnt{r} \,   \vdash  \ottnt{v} \,   \ottsym{:}  \ottnt{A} \,  |  \, \epsilon'$
 for any $\epsilon'$.
\end{lemmap}
\begin{proof}
 Straightforward by induction on the derivation of $\Gamma  \ottsym{;}  \ottnt{r} \,   \vdash  \ottnt{v} \,   \ottsym{:}  \ottnt{A} \,  |  \, \epsilon$.
\end{proof}

\begin{lemma}{resume-type-activate}
 \begin{enumerate}
  \item If $\Gamma  \ottsym{;}   \mathsf{none}  \,   \vdash  \ottnt{e} \,   \ottsym{:}  \ottnt{A} \,  |  \, \epsilon$, then $\Gamma  \ottsym{;}  \ottnt{r} \,   \vdash  \ottnt{e} \,   \ottsym{:}  \ottnt{A} \,  |  \, \epsilon$ for any $\ottnt{r}$.
  \item If $\Gamma  \ottsym{;}   \mathsf{none}  \,   \vdash  \ottnt{h}  \ottsym{:}  \ottnt{A} \,  |  \, \epsilon  \Rightarrow  \ottnt{B} \,  |  \, \epsilon'$, then $\Gamma  \ottsym{;}  \ottnt{r} \,   \vdash  \ottnt{h}  \ottsym{:}  \ottnt{A} \,  |  \, \epsilon  \Rightarrow  \ottnt{B} \,  |  \, \epsilon'$ for any $\ottnt{r}$.
 \end{enumerate}
\end{lemma}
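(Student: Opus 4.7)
The plan is to proceed by straightforward simultaneous induction on the derivations of $\Gamma  \ottsym{;}   \mathsf{none}  \,   \vdash  \ottnt{e} \,   \ottsym{:}  \ottnt{A} \,  |  \, \epsilon$ (for (1)) and $\Gamma  \ottsym{;}   \mathsf{none}  \,   \vdash  \ottnt{h}  \ottsym{:}  \ottnt{A} \,  |  \, \epsilon  \Rightarrow  \ottnt{B} \,  |  \, \epsilon'$ (for (2)). The crucial observation is that, among all typing rules for terms, only \T{Resume} demands that the resumption type appearing in the conclusion be of the specific non-$\mathsf{none}$ form $\ottsym{(}   \algeffseqover{ \alpha }   \ottsym{,}  \ottnt{A}  \ottsym{,}   \ottnt{B}   \rightarrow  \!  \epsilon  \;  \ottnt{C}   \ottsym{)}$; every other rule either ignores $\ottnt{r}$ entirely (\T{Var}, \T{Const}) or simply threads it through unchanged to the subderivations for subterms (\T{Abs}, \T{App}, \T{Op}, \T{OpCont}, \T{Let}, \T{Weak}, \T{Handle}). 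Since we are given that the resumption type in the conclusion is $\mathsf{none}$, \T{Resume} cannot be the last rule of the derivation, so that case is vacuous.

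For each remaining case, I would invoke the IH on each subderivation that shares the ambient resumption type, replacing $\mathsf{none}$ with the target $\ottnt{r}$, and then reassemble using the same rule with $\ottnt{r}$ in place of $\mathsf{none}$. For example, for \T{App}, the two subderivations for $\ottnt{e_{{\mathrm{1}}}}$ and $\ottnt{e_{{\mathrm{2}}}}$ are both under $\mathsf{none}$; the IH gives their analogs under $\ottnt{r}$, and reapplying \T{App} yields the result. The case \T{Handle} appeals to both IHs: one (mutual) for the handled expression and one (mutual) for the handler, both of which carry the ambient $\ottnt{r}$ in the original derivation. The cases \T{Var} and \T{Const} need only that the premise $\vdash \Gamma$ does not mention $\ottnt{r}$, so the conclusion holds for any $\ottnt{r}$ without invoking the IH.

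The only mildly subtle case is \TH{Op}: it has two subderivations, one for the inner handler $\ottnt{h}$ (typed under the ambient $\ottnt{r}$, so the mutual IH applies), and one for the operation clause body $\ottnt{e}$. The latter is always typed under the fixed resumption type $\ottsym{(}   \algeffseqover{ \alpha }   \ottsym{,}  \ottnt{C}  \ottsym{,}   \ottnt{D}   \rightarrow  \!  \epsilon'  \;  \ottnt{B}   \ottsym{)}$ determined entirely by the operation's signature and the result type of the handler, \emph{independent} of the ambient $\ottnt{r}$; hence that subderivation is reused unchanged, and only the subderivation for $\ottnt{h}$ needs translation. There is no real obstacle; the proof is purely structural, and no substitution or weakening lemma is required.
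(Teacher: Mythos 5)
Your proof is correct and takes essentially the same route as the paper, which simply states ``Straightforward by mutual induction on the typing derivations.'' Your case analysis --- in particular noting that \T{Resume} is vacuous when the conclusion's resumption type is $ \mathsf{none} $, and that the operation-clause body in \THrule{Op} is typed under a resumption type fixed by the signature and hence reused unchanged --- correctly fills in the details the paper leaves implicit.
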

\begin{proof}
 Straightforward by mutual induction on the typing derivations.
\end{proof}

\begin{lemma}{strengthening-typing-context}
 If $\vdash  \Gamma_{{\mathrm{1}}}  \ottsym{,}  \mathit{x} \,  \mathord{:}  \, \sigma  \ottsym{,}  \Gamma_{{\mathrm{2}}}$, then $\vdash  \Gamma_{{\mathrm{1}}}  \ottsym{,}  \Gamma_{{\mathrm{2}}}$.
\end{lemma}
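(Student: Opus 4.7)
The plan is to proceed by induction on the structure of $\Gamma_{{\mathrm{2}}}$, following the grammar of typing contexts. The base case $\Gamma_{{\mathrm{2}}} =  \emptyset $ is immediate: the derivation of $\vdash  \Gamma_{{\mathrm{1}}}  \ottsym{,}  \mathit{x} \,  \mathord{:}  \, \sigma$ must end in \WF{Var}, whose first premise gives exactly $\vdash  \Gamma_{{\mathrm{1}}}$.

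For the inductive step, I would case-split on the last entry of $\Gamma_{{\mathrm{2}}}$. If $\Gamma_{{\mathrm{2}}} \,  =  \, \Gamma_{{\mathrm{2}}}' \ottsym{,}  \mathit{y} \,  \mathord{:}  \, \sigma'$, inverting \WF{Var} on $\vdash  \Gamma_{{\mathrm{1}}}  \ottsym{,}  \mathit{x} \,  \mathord{:}  \, \sigma  \ottsym{,}  \Gamma_{{\mathrm{2}}}' \ottsym{,}  \mathit{y} \,  \mathord{:}  \, \sigma'$ yields $\vdash  \Gamma_{{\mathrm{1}}}  \ottsym{,}  \mathit{x} \,  \mathord{:}  \, \sigma  \ottsym{,}  \Gamma_{{\mathrm{2}}}'$, $\mathit{y} \,  \not\in  \,  \mathit{dom}  (  \Gamma_{{\mathrm{1}}}  \ottsym{,}  \mathit{x} \,  \mathord{:}  \, \sigma  \ottsym{,}  \Gamma_{{\mathrm{2}}}'  ) $, and $\Gamma_{{\mathrm{1}}}  \ottsym{,}  \mathit{x} \,  \mathord{:}  \, \sigma  \ottsym{,}  \Gamma_{{\mathrm{2}}}'  \vdash  \sigma'$. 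The induction hypothesis gives $\vdash  \Gamma_{{\mathrm{1}}}  \ottsym{,}  \Gamma_{{\mathrm{2}}}'$, and the freshness side-condition transfers trivially since $ \mathit{dom}  (  \Gamma_{{\mathrm{1}}}  \ottsym{,}  \Gamma_{{\mathrm{2}}}'  )  \,  \subseteq  \,  \mathit{dom}  (  \Gamma_{{\mathrm{1}}}  \ottsym{,}  \mathit{x} \,  \mathord{:}  \, \sigma  \ottsym{,}  \Gamma_{{\mathrm{2}}}'  ) $. I then re-apply \WF{Var} to conclude $\vdash  \Gamma_{{\mathrm{1}}}  \ottsym{,}  \Gamma_{{\mathrm{2}}}' \ottsym{,}  \mathit{y} \,  \mathord{:}  \, \sigma'$. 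The case $\Gamma_{{\mathrm{2}}} \,  =  \, \Gamma_{{\mathrm{2}}}'\ottsym{,}  \alpha$ is analogous via \WF{TyVar}, and even simpler since no type-scheme well-formedness premise is involved.

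The only point that requires a brief justification is preservation of the premise $\Gamma_{{\mathrm{1}}}  \ottsym{,}  \Gamma_{{\mathrm{2}}}'  \vdash  \sigma'$ in the \WF{Var} case. By definition, $\Gamma_{{\mathrm{1}}}  \ottsym{,}  \mathit{x} \,  \mathord{:}  \, \sigma  \ottsym{,}  \Gamma_{{\mathrm{2}}}'  \vdash  \sigma'$ unfolds to $ \mathit{ftv}  (  \sigma'  )  \,  \subseteq  \,  \mathit{dom}  (  \Gamma_{{\mathrm{1}}}  \ottsym{,}  \mathit{x} \,  \mathord{:}  \, \sigma  \ottsym{,}  \Gamma_{{\mathrm{2}}}'  ) $. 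Since $\mathit{x}$ is a term variable, its presence or absence does not change the set of \emph{type} variables bound by the context; hence $ \mathit{dom}  (  \Gamma_{{\mathrm{1}}}  \ottsym{,}  \mathit{x} \,  \mathord{:}  \, \sigma  \ottsym{,}  \Gamma_{{\mathrm{2}}}'  ) $ and $ \mathit{dom}  (  \Gamma_{{\mathrm{1}}}  \ottsym{,}  \Gamma_{{\mathrm{2}}}'  ) $ contain the same type variables, and the subset relation on free \emph{type} variables of $\sigma'$ survives the removal of $\mathit{x}$.

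There is no substantial obstacle here; the lemma is a routine structural induction. The only thing to be mildly careful about is the phrasing in terms of domains: using the convention that $\Gamma$'s domain includes both term and type variables, strengthening works because type-scheme well-formedness only inspects the type-variable portion of the domain, which is untouched by erasing a term binding.
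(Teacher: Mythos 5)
Your proof is correct and follows the same route as the paper, which simply states ``straightforward by induction on $\Gamma_{{\mathrm{2}}}$.'' Your extra care in noting that $\Gamma_{{\mathrm{1}}}  \ottsym{,}  \mathit{x} \,  \mathord{:}  \, \sigma  \ottsym{,}  \Gamma_{{\mathrm{2}}}'  \vdash  \sigma'$ survives the removal of the term binding $\mathit{x}$ --- because $ \mathit{ftv}  (  \sigma'  ) $ contains only type variables, which $\mathit{x}$ cannot be --- is exactly the right observation and fills in the only detail the paper leaves implicit.
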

\begin{proof}
 Straightforward by induction on $\Gamma_{{\mathrm{2}}}$.
\end{proof}

\begin{lemmap}{Value substitution}{val-subst}
 Suppose that $\Gamma_{{\mathrm{1}}}  \ottsym{,}   \algeffseqoverindex{ \alpha }{ \text{\unboldmath$\mathit{I}$} }   \vdash  \ottnt{v}  \ottsym{:}  \ottnt{A}$.
 \begin{enumerate}
  \item \label{lem:val-subst:term}
        If $\Gamma_{{\mathrm{1}}}  \ottsym{,}  \mathit{x} \,  \mathord{:}  \,  \text{\unboldmath$\forall$}  \,  \algeffseqoverindex{ \alpha }{ \text{\unboldmath$\mathit{I}$} }   \ottsym{.}  \ottnt{A}  \ottsym{,}  \Gamma_{{\mathrm{2}}}  \ottsym{;}  \ottnt{r} \,   \vdash  \ottnt{e} \,   \ottsym{:}  \ottnt{B} \,  |  \, \epsilon$,
        then $\Gamma_{{\mathrm{1}}}  \ottsym{,}  \Gamma_{{\mathrm{2}}}  \ottsym{;}  \ottnt{r} \,   \vdash   \ottnt{e}    [   \Lambda\!  \,  \algeffseqoverindex{ \alpha }{ \text{\unboldmath$\mathit{I}$} }   \ottsym{.}  \ottnt{v}  \ottsym{/}  \mathit{x}  ]   \,   \ottsym{:}  \ottnt{B} \,  |  \, \epsilon$.

  \item If $\Gamma_{{\mathrm{1}}}  \ottsym{,}  \mathit{x} \,  \mathord{:}  \,  \text{\unboldmath$\forall$}  \,  \algeffseqoverindex{ \alpha }{ \text{\unboldmath$\mathit{I}$} }   \ottsym{.}  \ottnt{A}  \ottsym{,}  \Gamma_{{\mathrm{2}}}  \ottsym{;}  \ottnt{r} \,   \vdash  \ottnt{h}  \ottsym{:}  \ottnt{B} \,  |  \, \epsilon  \Rightarrow  \ottnt{C} \,  |  \, \epsilon'$,
        then $\Gamma_{{\mathrm{1}}}  \ottsym{,}  \Gamma_{{\mathrm{2}}}  \ottsym{;}  \ottnt{r} \,   \vdash   \ottnt{h}    [   \Lambda\!  \,  \algeffseqoverindex{ \alpha }{ \text{\unboldmath$\mathit{I}$} }   \ottsym{.}  \ottnt{v}  \ottsym{/}  \mathit{x}  ]    \ottsym{:}  \ottnt{B} \,  |  \, \epsilon  \Rightarrow  \ottnt{C} \,  |  \, \epsilon'$.

  \item If $ \Gamma_{{\mathrm{1}}}  \ottsym{,}  \mathit{x} \,  \mathord{:}  \,  \text{\unboldmath$\forall$}  \,  \algeffseqoverindex{ \alpha }{ \text{\unboldmath$\mathit{I}$} }   \ottsym{.}  \ottnt{A}  \ottsym{,}  \Gamma_{{\mathrm{2}}}   \vdash   \ottnt{E}   \ottsym{:}    \ottnt{B}  \multimap  \ottnt{C}   \,  |  \, \epsilon $,
        then $ \Gamma_{{\mathrm{1}}}  \ottsym{,}  \Gamma_{{\mathrm{2}}}   \vdash    \ottnt{E}    [   \Lambda\!  \,  \algeffseqoverindex{ \alpha }{ \text{\unboldmath$\mathit{I}$} }   \ottsym{.}  \ottnt{v}  \ottsym{/}  \mathit{x}  ]     \ottsym{:}    \ottnt{B}  \multimap  \ottnt{C}   \,  |  \, \epsilon $.
 \end{enumerate}
\end{lemmap}
\begin{proof}
 By mutual induction on the typing derivations.
 We mention only the interesting cases.
 \begin{caseanalysis}
  \case \T{Var}:
  We are given $\Gamma_{{\mathrm{1}}}  \ottsym{,}  \mathit{x} \,  \mathord{:}  \,  \text{\unboldmath$\forall$}  \,  \algeffseqoverindex{ \alpha }{ \text{\unboldmath$\mathit{I}$} }   \ottsym{.}  \ottnt{A}  \ottsym{,}  \Gamma_{{\mathrm{2}}}  \ottsym{;}  \ottnt{r} \,   \vdash  \mathit{y} \,  \algeffseqoverindex{ \ottnt{C} }{ \text{\unboldmath$\mathit{J}$} }  \,   \ottsym{:}   \ottnt{D}    [   \algeffseqoverindex{ \ottnt{C} }{ \text{\unboldmath$\mathit{J}$} }   \ottsym{/}   \algeffseqoverindex{ \beta }{ \text{\unboldmath$\mathit{J}$} }   ]   \,  |  \, \epsilon$ and, by inversion,
  \begin{itemize}
   \item $\vdash  \Gamma_{{\mathrm{1}}}  \ottsym{,}  \mathit{x} \,  \mathord{:}  \,  \text{\unboldmath$\forall$}  \,  \algeffseqoverindex{ \alpha }{ \text{\unboldmath$\mathit{I}$} }   \ottsym{.}  \ottnt{A}  \ottsym{,}  \Gamma_{{\mathrm{2}}}$,
   \item $\mathit{y} \,  \mathord{:}  \,  \text{\unboldmath$\forall$}  \,  \algeffseqoverindex{ \beta }{ \text{\unboldmath$\mathit{J}$} }   \ottsym{.}  \ottnt{D} \,  \in  \, \Gamma_{{\mathrm{1}}}  \ottsym{,}  \mathit{x} \,  \mathord{:}  \,  \text{\unboldmath$\forall$}  \,  \algeffseqoverindex{ \alpha }{ \text{\unboldmath$\mathit{I}$} }   \ottsym{.}  \ottnt{A}  \ottsym{,}  \Gamma_{{\mathrm{2}}}$, and
   \item $\Gamma_{{\mathrm{1}}}  \ottsym{,}  \mathit{x} \,  \mathord{:}  \,  \text{\unboldmath$\forall$}  \,  \algeffseqoverindex{ \alpha }{ \text{\unboldmath$\mathit{I}$} }   \ottsym{.}  \ottnt{A}  \ottsym{,}  \Gamma_{{\mathrm{2}}}  \vdash   \algeffseqoverindex{ \ottnt{C} }{ \text{\unboldmath$\mathit{J}$} } $.
  \end{itemize}
  By \reflem{strengthening-typing-context}, $\vdash  \Gamma_{{\mathrm{1}}}  \ottsym{,}  \Gamma_{{\mathrm{2}}}$.

  If $\mathit{x} \,  \not=  \, \mathit{y}$, then the conclusion is obvious by \T{Var}.
  Otherwise, if $\mathit{x} \,  =  \, \mathit{y}$, then $ \text{\unboldmath$\forall$}  \,  \algeffseqoverindex{ \alpha }{ \text{\unboldmath$\mathit{I}$} }   \ottsym{.}  \ottnt{A} \,  =  \,  \text{\unboldmath$\forall$}  \,  \algeffseqoverindex{ \beta }{ \text{\unboldmath$\mathit{J}$} }   \ottsym{.}  \ottnt{D}$ and so we have to show
  that
  \[
   \Gamma_{{\mathrm{1}}}  \ottsym{,}  \Gamma_{{\mathrm{2}}}  \ottsym{;}  \ottnt{r} \,   \vdash   \ottnt{v}    [   \algeffseqoverindex{ \ottnt{C} }{ \text{\unboldmath$\mathit{I}$} }   \ottsym{/}   \algeffseqoverindex{ \alpha }{ \text{\unboldmath$\mathit{I}$} }   ]   \,   \ottsym{:}   \ottnt{A}    [   \algeffseqoverindex{ \ottnt{C} }{ \text{\unboldmath$\mathit{I}$} }   \ottsym{/}   \algeffseqoverindex{ \alpha }{ \text{\unboldmath$\mathit{I}$} }   ]   \,  |  \, \epsilon.
  \]
  Since $\Gamma_{{\mathrm{1}}}  \ottsym{,}   \algeffseqoverindex{ \alpha }{ \text{\unboldmath$\mathit{I}$} }   \vdash  \ottnt{v}  \ottsym{:}  \ottnt{A}$ (i.e., $\Gamma_{{\mathrm{1}}}  \ottsym{,}   \algeffseqoverindex{ \alpha }{ \text{\unboldmath$\mathit{I}$} }   \ottsym{;}   \mathsf{none}  \,   \vdash  \ottnt{v} \,   \ottsym{:}  \ottnt{A} \,  |  \, \epsilon'$ for some $\epsilon'$),
  we have it
  by Lemmas~\ref{lem:ty-subst}, \ref{lem:weakening}, \ref{lem:val-any-eff}, and \ref{lem:resume-type-activate}.

  \case \T{Const}: By \reflem{strengthening-typing-context}.

 \end{caseanalysis}
\end{proof}


\begin{lemma}{ectx-placement}
 If $ \Gamma   \vdash    \ottnt{E} ^{  \algeffseqoverindex{ \alpha }{ \text{\unboldmath$\mathit{J}$} }  }    \ottsym{:}     \text{\unboldmath$\forall$}  \,  \algeffseqoverindex{ \alpha }{ \text{\unboldmath$\mathit{J}$} }   \ottsym{.}  \ottnt{A}  \multimap  \ottnt{B}   \,  |  \, \epsilon $ and $\Gamma  \ottsym{,}   \algeffseqoverindex{ \alpha }{ \text{\unboldmath$\mathit{J}$} }   \ottsym{;}   \mathsf{none}  \,   \vdash  \ottnt{e} \,   \ottsym{:}  \ottnt{A} \,  |  \,  \langle \rangle $,
 then $\Gamma  \ottsym{;}   \mathsf{none}  \,   \vdash    \ottnt{E} ^{  \algeffseqoverindex{ \alpha }{ \text{\unboldmath$\mathit{J}$} }  }   [  \ottnt{e}  ]  \,   \ottsym{:}  \ottnt{B} \,  |  \, \epsilon$.
\end{lemma}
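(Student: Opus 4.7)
The plan is to prove this by induction on the derivation of the continuation typing judgment $ \Gamma   \vdash    \ottnt{E} ^{  \algeffseqoverindex{ \alpha }{ \text{\unboldmath$\mathit{J}$} }  }    \ottsym{:}     \text{\unboldmath$\forall$}  \,  \algeffseqoverindex{ \alpha }{ \text{\unboldmath$\mathit{J}$} }   \ottsym{.}  \ottnt{A}  \multimap  \ottnt{B}   \,  |  \, \epsilon $, doing case analysis on the last rule applied. In each case, inversion gives the typing of the immediate sub-continuation (sharing the same binder prefix $ \algeffseqoverindex{ \alpha }{ \text{\unboldmath$\mathit{J}$} } $, by the syntactic discipline on $ \ottnt{E} ^{  \algeffseqoverindex{ \alpha }{ \text{\unboldmath$\mathit{J}$} }  } $), we apply the IH to that sub-continuation, and then we reassemble using the corresponding term-typing rule. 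The key invariant to preserve is that the index sequence tracked by the superscript on $\ottnt{E}$ exactly matches the type-variable context inside which $\ottnt{e}$ is supplied.

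For the base case \TE{Hole}, we have $ \algeffseqoverindex{ \alpha }{ \text{\unboldmath$\mathit{J}$} }  =  \emptyset $ and $\ottnt{A} = \ottnt{B}$, so the hypothesis becomes $\Gamma  \ottsym{;}   \mathsf{none}  \,   \vdash  \ottnt{e} \,   \ottsym{:}  \ottnt{A} \,  |  \,  \langle \rangle $, and one application of \T{Weak} (using $ \langle \rangle  \,  \subseteq  \, \epsilon$) gives the conclusion. The cases \TE{App1}, \TE{App2}, \TE{Op}, \TE{Handle}, and \TE{Weak} are routine: for each, inversion decomposes the context into a sub-context $ \ottnt{E_{{\mathrm{1}}}} ^{  \algeffseqoverindex{ \alpha }{ \text{\unboldmath$\mathit{J}$} }  } $ (with the same superscript) plus some already-typed term or handler piece under $ \mathsf{none} $; the IH yields a typing for $ \ottnt{E_{{\mathrm{1}}}} ^{  \algeffseqoverindex{ \alpha }{ \text{\unboldmath$\mathit{J}$} }  }   [  \ottnt{e}  ] $ and then the same term-level rule (\T{App}, \T{Op}, \T{Handle}, or \T{Weak}) reconstructs the whole.

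The only subtle case is \TE{Let}, where the context has the form $\mathsf{let} \, \mathit{x}  \ottsym{=}   \Lambda\!  \,  \algeffseqoverindex{ \beta }{ \text{\unboldmath$\mathit{J_{{\mathrm{1}}}}$} }   \ottsym{.}   \ottnt{E_{{\mathrm{1}}}} ^{  \algeffseqoverindex{ \gamma }{ \text{\unboldmath$\mathit{J_{{\mathrm{2}}}}$} }  }  \,  \mathsf{in}  \, \ottnt{e_{{\mathrm{2}}}}$ with $ \algeffseqoverindex{ \alpha }{ \text{\unboldmath$\mathit{J}$} }  \,  =  \,  \algeffseqoverindex{ \beta }{ \text{\unboldmath$\mathit{J_{{\mathrm{1}}}}$} }   \ottsym{,}   \algeffseqoverindex{ \gamma }{ \text{\unboldmath$\mathit{J_{{\mathrm{2}}}}$} } $. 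Reading the rule \TE{Let} in the opposite direction and matching its conclusion type $  \text{\unboldmath$\forall$}  \,  \algeffseqoverindex{ \beta }{ \text{\unboldmath$\mathit{J_{{\mathrm{1}}}}$} }   \ottsym{.}  \sigma  \multimap  \ottnt{B} $ against our assumed hole type $ \text{\unboldmath$\forall$}  \,  \algeffseqoverindex{ \alpha }{ \text{\unboldmath$\mathit{J}$} }   \ottsym{.}  \ottnt{A}$ forces $\sigma \,  =  \,   \text{\unboldmath$\forall$}  \,  \algeffseqoverindex{ \gamma }{ \text{\unboldmath$\mathit{J_{{\mathrm{2}}}}$} }   \ottsym{.}  \ottnt{A}$, so inversion yields $ \Gamma  \ottsym{,}   \algeffseqoverindex{ \beta }{ \text{\unboldmath$\mathit{J_{{\mathrm{1}}}}$} }    \vdash    \ottnt{E_{{\mathrm{1}}}} ^{  \algeffseqoverindex{ \gamma }{ \text{\unboldmath$\mathit{J_{{\mathrm{2}}}}$} }  }    \ottsym{:}     \text{\unboldmath$\forall$}  \,  \algeffseqoverindex{ \gamma }{ \text{\unboldmath$\mathit{J_{{\mathrm{2}}}}$} }   \ottsym{.}  \ottnt{A}  \multimap  \ottnt{A_{{\mathrm{2}}}}   \,  |  \, \epsilon $ and $\Gamma  \ottsym{,}  \mathit{x} \,  \mathord{:}  \,  \text{\unboldmath$\forall$}  \,  \algeffseqoverindex{ \beta }{ \text{\unboldmath$\mathit{J_{{\mathrm{1}}}}$} }   \ottsym{.}  \ottnt{A_{{\mathrm{2}}}}  \ottsym{;}   \mathsf{none}  \,   \vdash  \ottnt{e_{{\mathrm{2}}}} \,   \ottsym{:}  \ottnt{B} \,  |  \, \epsilon$. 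Since $\Gamma  \ottsym{,}   \algeffseqoverindex{ \alpha }{ \text{\unboldmath$\mathit{J}$} }  \,  =  \, \Gamma  \ottsym{,}   \algeffseqoverindex{ \beta }{ \text{\unboldmath$\mathit{J_{{\mathrm{1}}}}$} }   \ottsym{,}   \algeffseqoverindex{ \gamma }{ \text{\unboldmath$\mathit{J_{{\mathrm{2}}}}$} } $, the assumption on $\ottnt{e}$ is already in the right form to invoke the IH for $ \ottnt{E_{{\mathrm{1}}}} ^{  \algeffseqoverindex{ \gamma }{ \text{\unboldmath$\mathit{J_{{\mathrm{2}}}}$} }  } $ under the extended context $\Gamma  \ottsym{,}   \algeffseqoverindex{ \beta }{ \text{\unboldmath$\mathit{J_{{\mathrm{1}}}}$} } $, yielding $\Gamma  \ottsym{,}   \algeffseqoverindex{ \beta }{ \text{\unboldmath$\mathit{J_{{\mathrm{1}}}}$} }   \ottsym{;}   \mathsf{none}  \,   \vdash    \ottnt{E_{{\mathrm{1}}}} ^{  \algeffseqoverindex{ \gamma }{ \text{\unboldmath$\mathit{J_{{\mathrm{2}}}}$} }  }   [  \ottnt{e}  ]  \,   \ottsym{:}  \ottnt{A_{{\mathrm{2}}}} \,  |  \, \epsilon$. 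One application of \T{Let} then puts the pieces back together.

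The hard part, such as it is, will be the bookkeeping in the \TE{Let} case: one has to split $ \algeffseqoverindex{ \alpha }{ \text{\unboldmath$\mathit{J}$} } $ correctly into the outer binder $ \algeffseqoverindex{ \beta }{ \text{\unboldmath$\mathit{J_{{\mathrm{1}}}}$} } $ and the remaining prefix $ \algeffseqoverindex{ \gamma }{ \text{\unboldmath$\mathit{J_{{\mathrm{2}}}}$} } $ exposed by the sub-context, and verify that the inverted hole type scheme factors as $  \text{\unboldmath$\forall$}  \,  \algeffseqoverindex{ \beta }{ \text{\unboldmath$\mathit{J_{{\mathrm{1}}}}$} }   \ottsym{.}   \text{\unboldmath$\forall$}  \,  \algeffseqoverindex{ \gamma }{ \text{\unboldmath$\mathit{J_{{\mathrm{2}}}}$} }   \ottsym{.}  \ottnt{A} $; this is guaranteed by the well-formedness side condition on $ \ottnt{E} ^{  \algeffseqoverindex{ \alpha }{ \text{\unboldmath$\mathit{J}$} }  } $ built into the grammar, so no additional renaming argument is needed. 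Every other case is a direct application of \reflem{weakening} or the IH together with the corresponding term-typing rule.
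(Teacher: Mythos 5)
Your proposal is correct and matches the paper's proof exactly: induction on the continuation typing derivation, with \TE{Hole} discharged by \T{Weak} and every other case closed by the IH plus the corresponding term-typing rule. Your more detailed treatment of the \TE{Let} case—splitting $ \algeffseqoverindex{ \alpha }{ \text{\unboldmath$\mathit{J}$} } $ into $ \algeffseqoverindex{ \beta }{ \text{\unboldmath$\mathit{J_{{\mathrm{1}}}}$} }   \ottsym{,}   \algeffseqoverindex{ \gamma }{ \text{\unboldmath$\mathit{J_{{\mathrm{2}}}}$} } $ and factoring the hole type scheme—is exactly the bookkeeping the paper leaves implicit in its one-line case list.
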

\begin{proof}
 By induction on the derivation of $ \Gamma   \vdash    \ottnt{E} ^{  \algeffseqoverindex{ \alpha }{ \text{\unboldmath$\mathit{J}$} }  }    \ottsym{:}     \text{\unboldmath$\forall$}  \,  \algeffseqoverindex{ \alpha }{ \text{\unboldmath$\mathit{J}$} }   \ottsym{.}  \ottnt{A}  \multimap  \ottnt{B}   \,  |  \, \epsilon $.
 \begin{caseanalysis}
  \case \TE{Hole}: By \T{Weak}.
  \case \TE{App1} and \TE{App2}: By the IH, \T{Weak}, and \T{App}.
  \case \TE{Op}: By the IH and \T{Op}.
  \case \TE{Handle}: By the IH and \T{Handle}.
  \case \TE{Weak}: By the IH and \T{Weak}.
  \case \TE{Let}: By the IH and \T{Let}.
 \end{caseanalysis}
\end{proof}

\ifrestate
\lemmContSubst*
\else
\begin{lemmap}{Continuation substitution}{cont-subst}
 Suppose that
 $ \Gamma   \vdash    \ottnt{E} ^{  \algeffseqoverindex{ \beta }{ \text{\unboldmath$\mathit{J}$} }  }    \ottsym{:}     \text{\unboldmath$\forall$}  \,  \algeffseqoverindex{ \beta }{ \text{\unboldmath$\mathit{J}$} }   \ottsym{.}  \ottsym{(}   \ottnt{B}    [   \algeffseqoverindex{ \ottnt{C} }{ \text{\unboldmath$\mathit{I}$} }   \ottsym{/}   \algeffseqoverindex{ \alpha }{ \text{\unboldmath$\mathit{I}$} }   ]    \ottsym{)}  \multimap  \ottnt{D}   \,  |  \, \epsilon $ and
 $\Gamma  \ottsym{,}   \algeffseqoverindex{ \beta }{ \text{\unboldmath$\mathit{J}$} }   \vdash  \ottnt{v}  \ottsym{:}   \ottnt{A}    [   \algeffseqoverindex{ \ottnt{C} }{ \text{\unboldmath$\mathit{I}$} }   \ottsym{/}   \algeffseqoverindex{ \alpha }{ \text{\unboldmath$\mathit{I}$} }   ]  $ and
 $\Gamma  \vdash   \algeffseqoverindex{  \text{\unboldmath$\forall$}  \,  \algeffseqoverindex{ \beta }{ \text{\unboldmath$\mathit{J}$} }   \ottsym{.}  \ottnt{C} }{ \text{\unboldmath$\mathit{I}$} } $.
 \begin{enumerate}
  \item If $\Gamma  \ottsym{;}  \ottsym{(}   \algeffseqoverindex{ \alpha }{ \text{\unboldmath$\mathit{I}$} }   \ottsym{,}  \ottnt{A}  \ottsym{,}   \ottnt{B}   \rightarrow  \!  \epsilon  \;  \ottnt{D}   \ottsym{)} \,   \vdash  \ottnt{e} \,   \ottsym{:}  \ottnt{D'} \,  |  \, \epsilon'$, then
        $\Gamma  \ottsym{;}   \mathsf{none}  \,   \vdash   \ottnt{e}    [   \ottnt{E} ^{  \algeffseqoverindex{ \beta }{ \text{\unboldmath$\mathit{J}$} }  }   /  \mathsf{resume}  ]^{  \algeffseqoverindex{  \text{\unboldmath$\forall$}  \,  \algeffseqoverindex{ \beta }{ \text{\unboldmath$\mathit{J}$} }   \ottsym{.}  \ottnt{C} }{ \text{\unboldmath$\mathit{I}$} }  }_{  \Lambda\!  \,  \algeffseqoverindex{ \beta }{ \text{\unboldmath$\mathit{J}$} }   \ottsym{.}  \ottnt{v} }   \,   \ottsym{:}  \ottnt{D'} \,  |  \, \epsilon'$.

  \item If $\Gamma  \ottsym{;}  \ottsym{(}   \algeffseqoverindex{ \alpha }{ \text{\unboldmath$\mathit{I}$} }   \ottsym{,}  \ottnt{A}  \ottsym{,}   \ottnt{B}   \rightarrow  \!  \epsilon  \;  \ottnt{D}   \ottsym{)} \,   \vdash  \ottnt{h}  \ottsym{:}  \ottnt{D_{{\mathrm{1}}}} \,  |  \, \epsilon_{{\mathrm{1}}}  \Rightarrow  \ottnt{D_{{\mathrm{2}}}} \,  |  \, \epsilon_{{\mathrm{2}}}$, then
        $\Gamma  \ottsym{;}   \mathsf{none}  \,   \vdash   \ottnt{h}    [   \ottnt{E} ^{  \algeffseqoverindex{ \beta }{ \text{\unboldmath$\mathit{J}$} }  }   /  \mathsf{resume}  ]^{  \algeffseqoverindex{  \text{\unboldmath$\forall$}  \,  \algeffseqoverindex{ \beta }{ \text{\unboldmath$\mathit{J}$} }   \ottsym{.}  \ottnt{C} }{ \text{\unboldmath$\mathit{I}$} }  }_{  \Lambda\!  \,  \algeffseqoverindex{ \beta }{ \text{\unboldmath$\mathit{J}$} }   \ottsym{.}  \ottnt{v} }    \ottsym{:}  \ottnt{D_{{\mathrm{1}}}} \,  |  \, \epsilon_{{\mathrm{1}}}  \Rightarrow  \ottnt{D_{{\mathrm{2}}}} \,  |  \, \epsilon_{{\mathrm{2}}}$.
 \end{enumerate}
\end{lemmap}
\fi
\begin{proof}
 By mutual induction on the typing derivations.
 \begin{enumerate}
  \item By case analysis on the typing rule applied last.
       \begin{caseanalysis}
        \case \T{Var} and \T{Const}: Obvious.
        \case \T{Abs}, \T{App}, \T{Op}, \T{Weak}, \T{Handle}, and \T{Let}: By the IH(s) with (if necessary) weakening (Lemmas~\ref{lem:typing-context-wf} and \ref{lem:weakening}).
        \case \T{OpCont}: By the IH; note that
        \[
           \textup{\texttt{\#}\relax}  \mathsf{op}   \ottsym{(}    \algeffseqover{ \sigma }    \ottsym{,}   \ottnt{w}   \ottsym{,}   \ottnt{E'}   \ottsym{)}     [   \ottnt{E} ^{  \algeffseqoverindex{ \beta }{ \text{\unboldmath$\mathit{J}$} }  }   /  \mathsf{resume}  ]^{  \algeffseqoverindex{  \text{\unboldmath$\forall$}  \,  \algeffseqoverindex{ \beta }{ \text{\unboldmath$\mathit{J}$} }   \ottsym{.}  \ottnt{C} }{ \text{\unboldmath$\mathit{I}$} }  }_{  \Lambda\!  \,  \algeffseqoverindex{ \beta }{ \text{\unboldmath$\mathit{J}$} }   \ottsym{.}  \ottnt{v} }   \,  =  \,  \textup{\texttt{\#}\relax}  \mathsf{op}   \ottsym{(}    \algeffseqover{ \sigma }    \ottsym{,}    \ottnt{w}    [   \ottnt{E} ^{  \algeffseqoverindex{ \beta }{ \text{\unboldmath$\mathit{J}$} }  }   /  \mathsf{resume}  ]^{  \algeffseqoverindex{  \text{\unboldmath$\forall$}  \,  \algeffseqoverindex{ \beta }{ \text{\unboldmath$\mathit{J}$} }   \ottsym{.}  \ottnt{C} }{ \text{\unboldmath$\mathit{I}$} }  }_{  \Lambda\!  \,  \algeffseqoverindex{ \beta }{ \text{\unboldmath$\mathit{J}$} }   \ottsym{.}  \ottnt{v} }     \ottsym{,}   \ottnt{E'}   \ottsym{)} .
        \]

        \case \T{Resume}:
        We are given
        $\Gamma  \ottsym{;}  \ottsym{(}   \algeffseqoverindex{ \alpha }{ \text{\unboldmath$\mathit{I}$} }   \ottsym{,}  \ottnt{A}  \ottsym{,}   \ottnt{B}   \rightarrow  \!  \epsilon  \;  \ottnt{D}   \ottsym{)} \,   \vdash  \mathsf{resume} \,  \algeffseqoverindex{ \gamma }{ \text{\unboldmath$\mathit{I}$} }  \, \mathit{x}  \ottsym{.}  \ottnt{e'} \,   \ottsym{:}  \ottnt{D} \,  |  \, \epsilon'$ and,
        by inversion,
        \begin{itemize}
         \item $ \algeffseqoverindex{ \alpha }{ \text{\unboldmath$\mathit{I}$} }  \,  \in  \, \Gamma$,
         \item $\Gamma  \ottsym{,}   \algeffseqoverindex{ \gamma }{ \text{\unboldmath$\mathit{I}$} }   \ottsym{,}  \mathit{x} \,  \mathord{:}  \, \ottnt{A} \,  [   \algeffseqoverindex{ \gamma }{ \text{\unboldmath$\mathit{I}$} }   \ottsym{/}   \algeffseqoverindex{ \alpha }{ \text{\unboldmath$\mathit{I}$} }   ]   \ottsym{;}  \ottsym{(}   \algeffseqoverindex{ \alpha }{ \text{\unboldmath$\mathit{I}$} }   \ottsym{,}  \ottnt{A}  \ottsym{,}   \ottnt{B}   \rightarrow  \!  \epsilon  \;  \ottnt{D}   \ottsym{)} \,   \vdash  \ottnt{e'} \,   \ottsym{:}   \ottnt{B}    [   \algeffseqoverindex{ \gamma }{ \text{\unboldmath$\mathit{I}$} }   \ottsym{/}   \algeffseqoverindex{ \alpha }{ \text{\unboldmath$\mathit{I}$} }   ]   \,  |  \, \epsilon'$ and
         \item $\epsilon \,  \subseteq  \, \epsilon'$.
        \end{itemize}
        Without loss of generality, we can suppose that each type variable
        of $ \algeffseqoverindex{ \gamma }{ \text{\unboldmath$\mathit{I}$} } $ is distinct from $ \algeffseqoverindex{ \beta }{ \text{\unboldmath$\mathit{J}$} } $.
        Thus, by weakening (Lemmas~\ref{lem:typing-context-wf} and \ref{lem:weakening}) and the IH,
        \[
         \Gamma  \ottsym{,}   \algeffseqoverindex{ \gamma }{ \text{\unboldmath$\mathit{I}$} }   \ottsym{,}  \mathit{x} \,  \mathord{:}  \, \ottnt{A} \,  [   \algeffseqoverindex{ \gamma }{ \text{\unboldmath$\mathit{I}$} }   \ottsym{/}   \algeffseqoverindex{ \alpha }{ \text{\unboldmath$\mathit{I}$} }   ]   \ottsym{;}   \mathsf{none}  \,   \vdash   \ottnt{e'}    [   \ottnt{E} ^{  \algeffseqoverindex{ \beta }{ \text{\unboldmath$\mathit{J}$} }  }   /  \mathsf{resume}  ]^{  \algeffseqoverindex{  \text{\unboldmath$\forall$}  \,  \algeffseqoverindex{ \beta }{ \text{\unboldmath$\mathit{J}$} }   \ottsym{.}  \ottnt{C} }{ \text{\unboldmath$\mathit{I}$} }  }_{  \Lambda\!  \,  \algeffseqoverindex{ \beta }{ \text{\unboldmath$\mathit{J}$} }   \ottsym{.}  \ottnt{v} }   \,   \ottsym{:}   \ottnt{B}    [   \algeffseqoverindex{ \gamma }{ \text{\unboldmath$\mathit{I}$} }   \ottsym{/}   \algeffseqoverindex{ \alpha }{ \text{\unboldmath$\mathit{I}$} }   ]   \,  |  \, \epsilon'.
        \]
        By \reflem{weakening},
        \[
         \Gamma  \ottsym{,}   \algeffseqoverindex{ \beta }{ \text{\unboldmath$\mathit{J}$} }   \ottsym{,}   \algeffseqoverindex{ \gamma }{ \text{\unboldmath$\mathit{I}$} }   \ottsym{,}  \mathit{x} \,  \mathord{:}  \, \ottnt{A} \,  [   \algeffseqoverindex{ \gamma }{ \text{\unboldmath$\mathit{I}$} }   \ottsym{/}   \algeffseqoverindex{ \alpha }{ \text{\unboldmath$\mathit{I}$} }   ]   \ottsym{;}   \mathsf{none}  \,   \vdash   \ottnt{e'}    [   \ottnt{E} ^{  \algeffseqoverindex{ \beta }{ \text{\unboldmath$\mathit{J}$} }  }   /  \mathsf{resume}  ]^{  \algeffseqoverindex{  \text{\unboldmath$\forall$}  \,  \algeffseqoverindex{ \beta }{ \text{\unboldmath$\mathit{J}$} }   \ottsym{.}  \ottnt{C} }{ \text{\unboldmath$\mathit{I}$} }  }_{  \Lambda\!  \,  \algeffseqoverindex{ \beta }{ \text{\unboldmath$\mathit{J}$} }   \ottsym{.}  \ottnt{v} }   \,   \ottsym{:}   \ottnt{B}    [   \algeffseqoverindex{ \gamma }{ \text{\unboldmath$\mathit{I}$} }   \ottsym{/}   \algeffseqoverindex{ \alpha }{ \text{\unboldmath$\mathit{I}$} }   ]   \,  |  \, \epsilon'.
        \]
        Since $\Gamma  \ottsym{,}   \algeffseqoverindex{ \beta }{ \text{\unboldmath$\mathit{J}$} }   \vdash   \algeffseqoverindex{ \ottnt{C} }{ \text{\unboldmath$\mathit{I}$} } $, we have
        \[
         \Gamma  \ottsym{,}   \algeffseqoverindex{ \beta }{ \text{\unboldmath$\mathit{J}$} }   \ottsym{,}  \mathit{x} \,  \mathord{:}  \, \ottnt{A} \,  [   \algeffseqoverindex{ \ottnt{C} }{ \text{\unboldmath$\mathit{I}$} }   \ottsym{/}   \algeffseqoverindex{ \alpha }{ \text{\unboldmath$\mathit{I}$} }   ]   \ottsym{;}   \mathsf{none}  \,   \vdash    \ottnt{e'}    [   \ottnt{E} ^{  \algeffseqoverindex{ \beta }{ \text{\unboldmath$\mathit{J}$} }  }   /  \mathsf{resume}  ]^{  \algeffseqoverindex{  \text{\unboldmath$\forall$}  \,  \algeffseqoverindex{ \beta }{ \text{\unboldmath$\mathit{J}$} }   \ottsym{.}  \ottnt{C} }{ \text{\unboldmath$\mathit{I}$} }  }_{  \Lambda\!  \,  \algeffseqoverindex{ \beta }{ \text{\unboldmath$\mathit{J}$} }   \ottsym{.}  \ottnt{v} }      [   \algeffseqoverindex{ \ottnt{C} }{ \text{\unboldmath$\mathit{I}$} }   \ottsym{/}   \algeffseqoverindex{ \gamma }{ \text{\unboldmath$\mathit{I}$} }   ]   \,   \ottsym{:}   \ottnt{B}    [   \algeffseqoverindex{ \ottnt{C} }{ \text{\unboldmath$\mathit{I}$} }   \ottsym{/}   \algeffseqoverindex{ \alpha }{ \text{\unboldmath$\mathit{I}$} }   ]   \,  |  \, \epsilon'
        \]
        by \reflem{ty-subst}.
        Since $\Gamma  \ottsym{,}   \algeffseqoverindex{ \beta }{ \text{\unboldmath$\mathit{J}$} }   \vdash  \ottnt{v}  \ottsym{:}   \ottnt{A}    [   \algeffseqoverindex{ \ottnt{C} }{ \text{\unboldmath$\mathit{I}$} }   \ottsym{/}   \algeffseqoverindex{ \alpha }{ \text{\unboldmath$\mathit{I}$} }   ]  $,
        we have
        \begin{equation}
         \Gamma  \ottsym{,}   \algeffseqoverindex{ \beta }{ \text{\unboldmath$\mathit{J}$} }   \ottsym{;}   \mathsf{none}  \,   \vdash     \ottnt{e'}    [   \ottnt{E} ^{  \algeffseqoverindex{ \beta }{ \text{\unboldmath$\mathit{J}$} }  }   /  \mathsf{resume}  ]^{  \algeffseqoverindex{  \text{\unboldmath$\forall$}  \,  \algeffseqoverindex{ \beta }{ \text{\unboldmath$\mathit{J}$} }   \ottsym{.}  \ottnt{C} }{ \text{\unboldmath$\mathit{I}$} }  }_{  \Lambda\!  \,  \algeffseqoverindex{ \beta }{ \text{\unboldmath$\mathit{J}$} }   \ottsym{.}  \ottnt{v} }      [   \algeffseqoverindex{ \ottnt{C} }{ \text{\unboldmath$\mathit{I}$} }   \ottsym{/}   \algeffseqoverindex{ \gamma }{ \text{\unboldmath$\mathit{I}$} }   ]      [  \ottnt{v}  \ottsym{/}  \mathit{x}  ]   \,   \ottsym{:}   \ottnt{B}    [   \algeffseqoverindex{ \ottnt{C} }{ \text{\unboldmath$\mathit{I}$} }   \ottsym{/}   \algeffseqoverindex{ \alpha }{ \text{\unboldmath$\mathit{I}$} }   ]   \,  |  \, \epsilon'
         \label{eqn:cont-subst:one}
        \end{equation}
        by \reflem{val-subst}.

        Since $ \Gamma   \vdash    \ottnt{E} ^{  \algeffseqoverindex{ \beta }{ \text{\unboldmath$\mathit{J}$} }  }    \ottsym{:}     \text{\unboldmath$\forall$}  \,  \algeffseqoverindex{ \beta }{ \text{\unboldmath$\mathit{J}$} }   \ottsym{.}  \ottsym{(}   \ottnt{B}    [   \algeffseqoverindex{ \ottnt{C} }{ \text{\unboldmath$\mathit{I}$} }   \ottsym{/}   \algeffseqoverindex{ \alpha }{ \text{\unboldmath$\mathit{I}$} }   ]    \ottsym{)}  \multimap  \ottnt{D}   \,  |  \, \epsilon $, we have
        \[
          \Gamma  \ottsym{,}  \mathit{y} \,  \mathord{:}  \,  \text{\unboldmath$\forall$}  \,  \algeffseqoverindex{ \beta }{ \text{\unboldmath$\mathit{J}$} }   \ottsym{.}  \ottnt{B} \,  [   \algeffseqoverindex{ \ottnt{C} }{ \text{\unboldmath$\mathit{I}$} }   \ottsym{/}   \algeffseqoverindex{ \alpha }{ \text{\unboldmath$\mathit{I}$} }   ]    \vdash    \ottnt{E} ^{  \algeffseqoverindex{ \beta }{ \text{\unboldmath$\mathit{J}$} }  }    \ottsym{:}     \text{\unboldmath$\forall$}  \,  \algeffseqoverindex{ \beta }{ \text{\unboldmath$\mathit{J}$} }   \ottsym{.}  \ottsym{(}   \ottnt{B}    [   \algeffseqoverindex{ \ottnt{C} }{ \text{\unboldmath$\mathit{I}$} }   \ottsym{/}   \algeffseqoverindex{ \alpha }{ \text{\unboldmath$\mathit{I}$} }   ]    \ottsym{)}  \multimap  \ottnt{D}   \,  |  \, \epsilon 
        \]
        for some fresh variable $\mathit{y}$ by \reflem{weakening}.
        Since $\Gamma  \ottsym{,}  \mathit{y} \,  \mathord{:}  \,  \text{\unboldmath$\forall$}  \,  \algeffseqoverindex{ \beta }{ \text{\unboldmath$\mathit{J}$} }   \ottsym{.}  \ottnt{B} \,  [   \algeffseqoverindex{ \ottnt{C} }{ \text{\unboldmath$\mathit{I}$} }   \ottsym{/}   \algeffseqoverindex{ \alpha }{ \text{\unboldmath$\mathit{I}$} }   ]   \ottsym{,}   \algeffseqoverindex{ \beta }{ \text{\unboldmath$\mathit{J}$} }   \ottsym{;}   \mathsf{none}  \,   \vdash  \mathit{y} \,  \algeffseqoverindex{ \beta }{ \text{\unboldmath$\mathit{J}$} }  \,   \ottsym{:}   \ottnt{B}    [   \algeffseqoverindex{ \ottnt{C} }{ \text{\unboldmath$\mathit{I}$} }   \ottsym{/}   \algeffseqoverindex{ \alpha }{ \text{\unboldmath$\mathit{I}$} }   ]   \,  |  \,  \langle \rangle $
        by \T{Var},
        we have
        \begin{equation}
         \Gamma  \ottsym{,}  \mathit{y} \,  \mathord{:}  \,  \text{\unboldmath$\forall$}  \,  \algeffseqoverindex{ \beta }{ \text{\unboldmath$\mathit{J}$} }   \ottsym{.}  \ottnt{B} \,  [   \algeffseqoverindex{ \ottnt{C} }{ \text{\unboldmath$\mathit{I}$} }   \ottsym{/}   \algeffseqoverindex{ \alpha }{ \text{\unboldmath$\mathit{I}$} }   ]   \ottsym{;}   \mathsf{none}  \,   \vdash    \ottnt{E} ^{  \algeffseqoverindex{ \beta }{ \text{\unboldmath$\mathit{J}$} }  }   [  \mathit{y} \,  \algeffseqoverindex{ \beta }{ \text{\unboldmath$\mathit{J}$} }   ]  \,   \ottsym{:}  \ottnt{D} \,  |  \, \epsilon'
         \label{eqn:cont-subst:two}
        \end{equation}
        by \reflem{ectx-placement} and \T{Weak}.

        By (\ref{eqn:cont-subst:one}), (\ref{eqn:cont-subst:two}), and \T{Let},
        \[
         \Gamma  \ottsym{;}   \mathsf{none}  \,   \vdash  \mathsf{let} \, \mathit{y}  \ottsym{=}   \Lambda\!  \,  \algeffseqoverindex{ \beta }{ \text{\unboldmath$\mathit{J}$} }   \ottsym{.}     \ottnt{e'}    [   \ottnt{E} ^{  \algeffseqoverindex{ \beta }{ \text{\unboldmath$\mathit{J}$} }  }   /  \mathsf{resume}  ]^{  \algeffseqoverindex{  \text{\unboldmath$\forall$}  \,  \algeffseqoverindex{ \beta }{ \text{\unboldmath$\mathit{J}$} }   \ottsym{.}  \ottnt{C} }{ \text{\unboldmath$\mathit{I}$} }  }_{  \Lambda\!  \,  \algeffseqoverindex{ \beta }{ \text{\unboldmath$\mathit{J}$} }   \ottsym{.}  \ottnt{v} }      [   \algeffseqoverindex{ \ottnt{C} }{ \text{\unboldmath$\mathit{I}$} }   \ottsym{/}   \algeffseqoverindex{ \gamma }{ \text{\unboldmath$\mathit{I}$} }   ]      [  \ottnt{v}  \ottsym{/}  \mathit{x}  ]   \,  \mathsf{in}  \,   \ottnt{E} ^{  \algeffseqoverindex{ \beta }{ \text{\unboldmath$\mathit{J}$} }  }   [  \mathit{y} \,  \algeffseqoverindex{ \beta }{ \text{\unboldmath$\mathit{J}$} }   ]  \,   \ottsym{:}  \ottnt{D} \,  |  \, \epsilon',
        \]
        which is what we have to show by definition of substitution for $ \mathsf{resume} $.
       \end{caseanalysis}

  \item By case analysis on the typing rule applied last.
        \begin{caseanalysis}
         \case \THrule{Return}: By the IH.
         \case \THrule{Op}: By the IH; note that
         \[\begin{array}{ll}
          &  \ottsym{(}  \ottnt{h'}  \ottsym{;}   \Lambda\!  \,  \algeffseqoverindex{ \gamma }{ \text{\unboldmath$\mathit{I'}$} }   \ottsym{.}  \mathsf{op}  \ottsym{(}  \mathit{x}  \ottsym{)}  \rightarrow  \ottnt{e'}  \ottsym{)}    [   \ottnt{E} ^{  \algeffseqoverindex{ \beta }{ \text{\unboldmath$\mathit{J}$} }  }   /  \mathsf{resume}  ]^{  \algeffseqoverindex{  \text{\unboldmath$\forall$}  \,  \algeffseqoverindex{ \beta }{ \text{\unboldmath$\mathit{J}$} }   \ottsym{.}  \ottnt{C} }{ \text{\unboldmath$\mathit{I}$} }  }_{  \Lambda\!  \,  \algeffseqoverindex{ \beta }{ \text{\unboldmath$\mathit{J}$} }   \ottsym{.}  \ottnt{v} }   \\
        = &  \ottnt{h'}    [   \ottnt{E} ^{  \algeffseqoverindex{ \beta }{ \text{\unboldmath$\mathit{J}$} }  }   /  \mathsf{resume}  ]^{  \algeffseqoverindex{  \text{\unboldmath$\forall$}  \,  \algeffseqoverindex{ \beta }{ \text{\unboldmath$\mathit{J}$} }   \ottsym{.}  \ottnt{C} }{ \text{\unboldmath$\mathit{I}$} }  }_{  \Lambda\!  \,  \algeffseqoverindex{ \beta }{ \text{\unboldmath$\mathit{J}$} }   \ottsym{.}  \ottnt{v} }    \ottsym{;}   \Lambda\!  \,  \algeffseqoverindex{ \gamma }{ \text{\unboldmath$\mathit{I'}$} }   \ottsym{.}  \mathsf{op}  \ottsym{(}  \mathit{x}  \ottsym{)}  \rightarrow  \ottnt{e'}.
         \end{array}\]
        \end{caseanalysis}
 \end{enumerate}
\end{proof}


\begin{lemmap}{Constant inversion}{value-inversion-constant}
 If $\Gamma  \ottsym{;}  \ottnt{r} \,   \vdash  \ottnt{c} \,   \ottsym{:}  \ottnt{A} \,  |  \, \epsilon$, then $ \mathit{ty}  (  \ottnt{c}  )  \,  =  \, \ottnt{A}$.
\end{lemmap}
\begin{proof}
 Straightforward by induction on the derivation of $\Gamma  \ottsym{;}  \ottnt{r} \,   \vdash  \ottnt{c} \,   \ottsym{:}  \ottnt{A} \,  |  \, \epsilon$.
\end{proof}

\begin{lemmap}{Abstraction inversion}{value-inversion-abs}
 If $\Gamma  \ottsym{;}  \ottnt{r} \,   \vdash   \lambda\!  \, \mathit{x}  \ottsym{.}  \ottnt{e} \,   \ottsym{:}   \ottnt{A}   \rightarrow  \!  \epsilon'  \;  \ottnt{B}  \,  |  \, \epsilon$, then
 $\Gamma  \ottsym{,}  \mathit{x} \,  \mathord{:}  \, \ottnt{A}  \ottsym{;}  \ottnt{r} \,   \vdash  \ottnt{e} \,   \ottsym{:}  \ottnt{B} \,  |  \, \epsilon'$.
\end{lemmap}
\begin{proof}
 Straightforward by induction on the derivation of $\Gamma  \ottsym{;}  \ottnt{r} \,   \vdash   \lambda\!  \, \mathit{x}  \ottsym{.}  \ottnt{e} \,   \ottsym{:}   \ottnt{A}   \rightarrow  \!  \epsilon'  \;  \ottnt{B}  \,  |  \, \epsilon$.
\end{proof}

\begin{lemmap}{Continuation inversion}{cont-inversion}
 If $\Gamma  \ottsym{;}  \ottnt{r} \,   \vdash   \textup{\texttt{\#}\relax}  \mathsf{op}   \ottsym{(}    \algeffseqoverindex{ \sigma }{ \text{\unboldmath$\mathit{I}$} }    \ottsym{,}   \ottnt{w}   \ottsym{,}   \ottnt{E}   \ottsym{)}  \,   \ottsym{:}  \ottnt{D} \,  |  \, \epsilon$, then
 \begin{itemize}
  \item $ \algeffseqoverindex{ \sigma }{ \text{\unboldmath$\mathit{I}$} }  \,  =  \,  \algeffseqoverindex{  \text{\unboldmath$\forall$}  \,  \algeffseqoverindex{ \beta }{ \text{\unboldmath$\mathit{J}$} }   \ottsym{.}  \ottnt{C} }{ \text{\unboldmath$\mathit{I}$} } $,
  \item $\ottnt{w} \,  =  \,  \Lambda\!  \,  \algeffseqoverindex{ \beta }{ \text{\unboldmath$\mathit{J}$} }   \ottsym{.}  \ottnt{v}$,
  \item $\ottnt{E}$ captures $ \algeffseqoverindex{ \beta }{ \text{\unboldmath$\mathit{J}$} } $ at the hole,
  \item $\epsilon' \,  \subseteq  \, \epsilon$,
  \item $\mathit{ty} \, \ottsym{(}  \mathsf{op}  \ottsym{)} \,  =  \,   \text{\unboldmath$\forall$}     \algeffseqoverindex{ \alpha }{ \text{\unboldmath$\mathit{I}$} }   .  \ottnt{A}  \hookrightarrow  \ottnt{B} $,
  \item $\mathsf{op} \,  \in  \, \epsilon'$,
  \item $\Gamma  \vdash   \algeffseqoverindex{  \text{\unboldmath$\forall$}  \,  \algeffseqoverindex{ \beta }{ \text{\unboldmath$\mathit{J}$} }   \ottsym{.}  \ottnt{C} }{ \text{\unboldmath$\mathit{I}$} } $,
  \item $\Gamma  \ottsym{,}   \algeffseqoverindex{ \beta }{ \text{\unboldmath$\mathit{J}$} }   \ottsym{;}  \ottnt{r} \,   \vdash  \ottnt{v} \,   \ottsym{:}   \ottnt{A}    [   \algeffseqoverindex{ \ottnt{C} }{ \text{\unboldmath$\mathit{I}$} }   \ottsym{/}   \algeffseqoverindex{ \alpha }{ \text{\unboldmath$\mathit{I}$} }   ]   \,  |  \, \epsilon'$, and
  \item $ \Gamma   \vdash   \ottnt{E}   \ottsym{:}     \text{\unboldmath$\forall$}  \,  \algeffseqoverindex{ \beta }{ \text{\unboldmath$\mathit{J}$} }   \ottsym{.}  \ottsym{(}   \ottnt{B}    [   \algeffseqoverindex{ \ottnt{C} }{ \text{\unboldmath$\mathit{I}$} }   \ottsym{/}   \algeffseqoverindex{ \alpha }{ \text{\unboldmath$\mathit{I}$} }   ]    \ottsym{)}  \multimap  \ottnt{D}   \,  |  \, \epsilon' $
 \end{itemize}
 for some $ \algeffseqoverindex{ \alpha }{ \text{\unboldmath$\mathit{I}$} } $, $ \algeffseqoverindex{ \beta }{ \text{\unboldmath$\mathit{J}$} } $, $ \algeffseqoverindex{ \ottnt{C} }{ \text{\unboldmath$\mathit{I}$} } $, $\ottnt{A}$, $\ottnt{B}$, $\ottnt{v}$, and $\epsilon'$.
\end{lemmap}
\begin{proof}
 Straightforward by induction on the derivation of $\Gamma  \ottsym{;}  \ottnt{r} \,   \vdash   \textup{\texttt{\#}\relax}  \mathsf{op}   \ottsym{(}    \algeffseqoverindex{ \sigma }{ \text{\unboldmath$\mathit{I}$} }    \ottsym{,}   \ottnt{w}   \ottsym{,}   \ottnt{E}   \ottsym{)}  \,   \ottsym{:}  \ottnt{D} \,  |  \, \epsilon$.
\end{proof}

\begin{lemmap}{Handler inversion}{handler-inversion}
 Suppose that $\Gamma  \ottsym{;}  \ottnt{r} \,   \vdash  \ottnt{h}  \ottsym{:}  \ottnt{A} \,  |  \, \epsilon  \Rightarrow  \ottnt{B} \,  |  \, \epsilon'$.
 \begin{enumerate}
  \item If $ \ottnt{h} ^\mathsf{return}  \,  =  \, \mathsf{return} \, \mathit{x}  \rightarrow  \ottnt{e}$, then
       $\Gamma  \ottsym{,}  \mathit{x} \,  \mathord{:}  \, \ottnt{A}  \ottsym{;}  \ottnt{r} \,   \vdash  \ottnt{e} \,   \ottsym{:}  \ottnt{B} \,  |  \, \epsilon'$ for some $\mathit{x}$ and $\ottnt{e}$.
  \item For any $\mathsf{op} \,  \in  \,  \mathit{ops}  (  \ottnt{h}  ) $,
        \begin{itemize}
         \item $ \ottnt{h} ^{ \mathsf{op} }  \,  =  \,  \Lambda\!  \,  \algeffseqoverindex{ \alpha }{ \text{\unboldmath$\mathit{I}$} }   \ottsym{.}  \mathsf{op}  \ottsym{(}  \mathit{x}  \ottsym{)}  \rightarrow  \ottnt{e}$,
         \item $\mathit{ty} \, \ottsym{(}  \mathsf{op}  \ottsym{)} \,  =  \,   \text{\unboldmath$\forall$}     \algeffseqoverindex{ \alpha }{ \text{\unboldmath$\mathit{I}$} }   .  \ottnt{C}  \hookrightarrow  \ottnt{D} $, and
         \item $\Gamma  \ottsym{,}   \algeffseqoverindex{ \alpha }{ \text{\unboldmath$\mathit{I}$} }   \ottsym{,}  \mathit{x} \,  \mathord{:}  \, \ottnt{C}  \ottsym{;}  \ottsym{(}   \algeffseqoverindex{ \alpha }{ \text{\unboldmath$\mathit{I}$} }   \ottsym{,}  \ottnt{C}  \ottsym{,}   \ottnt{D}   \rightarrow  \!  \epsilon'  \;  \ottnt{B}   \ottsym{)} \,   \vdash  \ottnt{e} \,   \ottsym{:}  \ottnt{B} \,  |  \, \epsilon'$
        \end{itemize}
        for some $ \algeffseqoverindex{ \alpha }{ \text{\unboldmath$\mathit{I}$} } $, $\mathit{x}$, $\ottnt{e}$, $\ottnt{C}$, and $\ottnt{D}$.
 \end{enumerate}
\end{lemmap}
\begin{proof}
 Straightforward by induction on the derivation of
 $\Gamma  \ottsym{;}  \ottnt{r} \,   \vdash  \ottnt{h}  \ottsym{:}  \ottnt{A} \,  |  \, \epsilon  \Rightarrow  \ottnt{B} \,  |  \, \epsilon'$.
\end{proof}


\begin{lemmap}{Canonical forms}{canonical-forms}
 If $\Gamma  \ottsym{;}  \ottnt{r} \,   \vdash  \ottnt{v} \,   \ottsym{:}  \iota \,  |  \, \epsilon$, then $\ottnt{v} \,  =  \, \ottnt{c}$.
\end{lemmap}
\begin{proof}
 Straightforward by induction on the derivation.
\end{proof}

\begin{lemma}{handler-op-inheritance}
 If $\Gamma  \ottsym{;}  \ottnt{r} \,   \vdash  \ottnt{h}  \ottsym{:}  \ottnt{A} \,  |  \, \epsilon  \Rightarrow  \ottnt{B} \,  |  \, \epsilon'$ and $\mathsf{op} \,  \in  \, \epsilon$ and $\mathsf{op} \,  \not\in  \,  \mathit{ops}  (  \ottnt{h}  ) $,
 then $\mathsf{op} \,  \in  \, \epsilon'$
\end{lemma}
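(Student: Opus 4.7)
The plan is to prove this by induction on the derivation of $\Gamma  \ottsym{;}  \ottnt{r} \,   \vdash  \ottnt{h}  \ottsym{:}  \ottnt{A} \,  |  \, \epsilon  \Rightarrow  \ottnt{B} \,  |  \, \epsilon'$. There are only two typing rules for handlers (\THrule{Return} and \THrule{Op}), so the case analysis is short, and the key observation is that each rule either tightly relates $\epsilon$ to $\epsilon'$ (in the return case) or peels off exactly one operation that must be in $ \mathit{ops}  (  \ottnt{h}  ) $ (in the op case).

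First I would handle the base case \THrule{Return}. Here $\ottnt{h} \,  =  \, \mathsf{return} \, \mathit{x}  \rightarrow  \ottnt{e}$, and the rule's premise directly gives $\epsilon \,  \subseteq  \, \epsilon'$. Since $\mathsf{op} \,  \in  \, \epsilon$ by assumption, we conclude $\mathsf{op} \,  \in  \, \epsilon'$ immediately; the side-condition $\mathsf{op} \,  \not\in  \,  \mathit{ops}  (  \ottnt{h}  ) $ is not even needed here, as $ \mathit{ops}  (  \mathsf{return} \, \mathit{x}  \rightarrow  \ottnt{e}  )  \,  =  \,  \emptyset $.

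Next I would handle the inductive case \THrule{Op}, where $\ottnt{h} \,  =  \, \ottnt{h'}  \ottsym{;}   \Lambda\!  \,  \algeffseqover{ \alpha }   \ottsym{.}  \mathsf{op}'  \ottsym{(}  \mathit{x}  \ottsym{)}  \rightarrow  \ottnt{e}$, and the rule forces the handled effect to be $ \epsilon_{{\mathrm{0}}}    \mathbin{\uplus}    \ottsym{\{}  \mathsf{op}'  \ottsym{\}} $ with a subderivation $\Gamma  \ottsym{;}  \ottnt{r} \,   \vdash  \ottnt{h'}  \ottsym{:}  \ottnt{A} \,  |  \, \epsilon_{{\mathrm{0}}}  \Rightarrow  \ottnt{B} \,  |  \, \epsilon'$. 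From $\mathsf{op} \,  \not\in  \,  \mathit{ops}  (  \ottnt{h}  )  \,  =  \,  \mathit{ops}  (  \ottnt{h'}  )  \,  \mathbin{\cup}  \, \ottsym{\{}  \mathsf{op}'  \ottsym{\}}$ we get both $\mathsf{op} \,  \not=  \, \mathsf{op}'$ and $\mathsf{op} \,  \not\in  \,  \mathit{ops}  (  \ottnt{h'}  ) $. Combined with $\mathsf{op} \,  \in  \,  \epsilon_{{\mathrm{0}}}    \mathbin{\uplus}    \ottsym{\{}  \mathsf{op}'  \ottsym{\}} $, this gives $\mathsf{op} \,  \in  \, \epsilon_{{\mathrm{0}}}$, and the induction hypothesis applied to the subderivation yields $\mathsf{op} \,  \in  \, \epsilon'$, as desired.

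There is no real obstacle: the proof is a two-line induction with trivial set reasoning. The only point worth being slightly careful about is unpacking the disjoint-union notation $ \mathbin{\uplus} $ to conclude $\mathsf{op} \,  \in  \, \epsilon_{{\mathrm{0}}}$ from $\mathsf{op} \,  \in  \,  \epsilon_{{\mathrm{0}}}    \mathbin{\uplus}    \ottsym{\{}  \mathsf{op}'  \ottsym{\}} $ together with $\mathsf{op} \,  \not=  \, \mathsf{op}'$, which is immediate from the definition of $ \mathbin{\uplus} $ as the union of disjoint sets. No auxiliary lemmas beyond inversion on the handler typing rules (which is just reading off the premises, since the two rules are syntax-directed) are required.
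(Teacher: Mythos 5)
Your proof is correct and follows exactly the route the paper takes: the paper's entire proof is ``Straightforward by induction on the derivation,'' and your two-case analysis (using $\epsilon \,  \subseteq  \, \epsilon'$ from \THrule{Return} and peeling off the newly handled operation in \THrule{Op}) is precisely the intended expansion of that induction.
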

\begin{proof}
 Straightforward by induction on the derivation of $\Gamma  \ottsym{;}  \ottnt{r} \,   \vdash  \ottnt{h}  \ottsym{:}  \ottnt{A} \,  |  \, \epsilon  \Rightarrow  \ottnt{B} \,  |  \, \epsilon'$.
\end{proof}

\ifrestate
\lemmProgress*
\else
\begin{lemmap}{Progress}{progress}
 If $\Delta  \ottsym{;}   \mathsf{none}  \,   \vdash  \ottnt{e} \,   \ottsym{:}  \ottnt{A} \,  |  \, \epsilon$, then
 (1) $\ottnt{e}  \longrightarrow  \ottnt{e'}$ for some $\ottnt{e'}$,
 (2) $\ottnt{e}$ is a value, or
 (3) $\ottnt{e} \,  =  \,  \textup{\texttt{\#}\relax}  \mathsf{op}   \ottsym{(}    \algeffseqover{ \sigma }    \ottsym{,}   \ottnt{w}   \ottsym{,}   \ottnt{E}   \ottsym{)} $ for some $\mathsf{op} \,  \in  \, \epsilon$, $ \algeffseqover{ \sigma } $, $\ottnt{w}$, and $\ottnt{E}$.
\end{lemmap}
\fi
\begin{proof}
 By induction on the derivation of $\Delta  \ottsym{;}   \mathsf{none}  \,   \vdash  \ottnt{e} \,   \ottsym{:}  \ottnt{A} \,  |  \, \epsilon$.
 \begin{caseanalysis}
  \case \T{Var}: Contradictory.
  \case \T{Const}: Obvious.
  \case \T{Abs}: Obvious.
  \case \T{App}:
  We are given
  \begin{itemize}
   \item $\ottnt{e} \,  =  \, \ottnt{e_{{\mathrm{1}}}} \, \ottnt{e_{{\mathrm{2}}}}$,
   \item $\Delta  \ottsym{;}   \mathsf{none}  \,   \vdash  \ottnt{e_{{\mathrm{1}}}} \, \ottnt{e_{{\mathrm{2}}}} \,   \ottsym{:}  \ottnt{A} \,  |  \, \epsilon$,
   \item $\Delta  \ottsym{;}   \mathsf{none}  \,   \vdash  \ottnt{e_{{\mathrm{1}}}} \,   \ottsym{:}   \ottnt{B}   \rightarrow  \!  \epsilon'  \;  \ottnt{A}  \,  |  \, \epsilon$,
   \item $\Delta  \ottsym{;}   \mathsf{none}  \,   \vdash  \ottnt{e_{{\mathrm{2}}}} \,   \ottsym{:}  \ottnt{B} \,  |  \, \epsilon$, and
   \item $\epsilon' \,  \subseteq  \, \epsilon$.
  \end{itemize}
  By case analysis on the behavior of $\ottnt{e_{{\mathrm{1}}}}$.
  We have three cases to be considered by the IH.
  \begin{caseanalysis}
   \case $\ottnt{e_{{\mathrm{1}}}}  \longrightarrow  \ottnt{e'_{{\mathrm{1}}}}$ for some $\ottnt{e'_{{\mathrm{1}}}}$: We have $\ottnt{e}  \longrightarrow  \ottnt{e'_{{\mathrm{1}}}} \, \ottnt{e_{{\mathrm{2}}}}$.
   \case $\ottnt{e_{{\mathrm{1}}}} \,  =  \,  \textup{\texttt{\#}\relax}  \mathsf{op}   \ottsym{(}    \algeffseqover{ \sigma }    \ottsym{,}   \ottnt{w}   \ottsym{,}   \ottnt{E}   \ottsym{)} $ for some $\mathsf{op} \,  \in  \, \epsilon$, $ \algeffseqover{ \sigma } $, $\ottnt{w}$, and $\ottnt{E}$:
     By \R{OpApp1} and \E{Eval}.
   \case $\ottnt{e_{{\mathrm{1}}}} \,  =  \, \ottnt{v_{{\mathrm{1}}}}$ for some $\ottnt{v_{{\mathrm{1}}}}$:
   By case analysis on the behavior of $\ottnt{e_{{\mathrm{2}}}}$ with the IH.
   \begin{caseanalysis}
    \case $\ottnt{e_{{\mathrm{2}}}}  \longrightarrow  \ottnt{e'_{{\mathrm{2}}}}$ for some $\ottnt{e'_{{\mathrm{2}}}}$: We have $\ottnt{e}  \longrightarrow  \ottnt{v_{{\mathrm{1}}}} \, \ottnt{e'_{{\mathrm{2}}}}$.
    \case $\ottnt{e_{{\mathrm{2}}}} \,  =  \,  \textup{\texttt{\#}\relax}  \mathsf{op}   \ottsym{(}    \algeffseqover{ \sigma }    \ottsym{,}   \ottnt{w}   \ottsym{,}   \ottnt{E}   \ottsym{)} $ for some $\mathsf{op} \,  \in  \, \epsilon$, $ \algeffseqover{ \sigma } $, $\ottnt{w}$, and $\ottnt{E}$: By \R{OpApp2} and \E{Eval}.
    \case $\ottnt{e_{{\mathrm{2}}}} \,  =  \, \ottnt{v_{{\mathrm{2}}}}$ for some $\ottnt{v_{{\mathrm{2}}}}$:
    If $\ottnt{v_{{\mathrm{1}}}} \,  =  \, \ottnt{c_{{\mathrm{1}}}}$ for some $\ottnt{c_{{\mathrm{1}}}}$,
    then $\ottnt{B} \,  =  \, \iota$ and $ \mathit{ty}  (  \ottnt{c_{{\mathrm{1}}}}  )  \,  =  \,  \iota   \rightarrow  \!   \langle \rangle   \;  \ottnt{A} $ and
    $\epsilon' \,  =  \,  \langle \rangle $ by \reflem{value-inversion-constant}.
    Since $\Delta  \ottsym{;}   \mathsf{none}  \,   \vdash  \ottnt{v_{{\mathrm{2}}}} \,   \ottsym{:}  \iota \,  |  \, \epsilon$,
    there exists some $\ottnt{c_{{\mathrm{2}}}}$ such that $\ottnt{v_{{\mathrm{2}}}} \,  =  \, \ottnt{c_{{\mathrm{2}}}}$ and $ \mathit{ty}  (  \ottnt{c_{{\mathrm{2}}}}  )  \,  =  \, \iota$.
    By the assumption about constants, $ \zeta  (  \ottnt{c_{{\mathrm{1}}}}  ,  \ottnt{c_{{\mathrm{2}}}}  ) $ is defined and
    $ \zeta  (  \ottnt{c_{{\mathrm{1}}}}  ,  \ottnt{c_{{\mathrm{2}}}}  ) $ is a constant and $ \mathit{ty}  (   \zeta  (  \ottnt{c_{{\mathrm{1}}}}  ,  \ottnt{c_{{\mathrm{2}}}}  )   )  \,  =  \, \ottnt{A}$.
    Thus, $\ottnt{e} = \ottnt{c_{{\mathrm{1}}}} \, \ottnt{c_{{\mathrm{2}}}}  \longrightarrow   \zeta  (  \ottnt{c_{{\mathrm{1}}}}  ,  \ottnt{c_{{\mathrm{2}}}}  ) $ by \R{Const}/\E{Eval}.

    If $\ottnt{v_{{\mathrm{1}}}} \,  =  \,  \lambda\!  \, \mathit{x}  \ottsym{.}  \ottnt{e'}$ for some $\mathit{x}$ and $\ottnt{e'}$, then
    $\ottnt{e} = \ottsym{(}   \lambda\!  \, \mathit{x}  \ottsym{.}  \ottnt{e'}  \ottsym{)} \, \ottnt{v_{{\mathrm{2}}}}  \longrightarrow   \ottnt{e'}    [  \ottnt{v_{{\mathrm{2}}}}  \ottsym{/}  \mathit{x}  ]  $ by \R{Beta}/\E{Eval}.
    Note that substitution of $\ottnt{v_{{\mathrm{2}}}}$ for $\mathit{x}$ in $\ottnt{e'}$ is defined
    since $\Delta  \ottsym{,}  \mathit{x} \,  \mathord{:}  \, \ottnt{B}  \ottsym{;}   \mathsf{none}  \,   \vdash  \ottnt{e'} \,   \ottsym{:}  \ottnt{A} \,  |  \, \epsilon'$ by \reflem{value-inversion-abs}.

   \end{caseanalysis}
  \end{caseanalysis}

  \case \T{Op}:
  We are given
  \begin{itemize}
   \item $\ottnt{e} \,  =  \,  \textup{\texttt{\#}\relax}  \mathsf{op}   \ottsym{(}    \algeffseqoverindex{ \ottnt{C} }{ \text{\unboldmath$\mathit{I}$} }    \ottsym{,}   \ottnt{e'}   \ottsym{)} $,
   \item $\ottnt{A} \,  =  \,  \ottnt{B'}    [   \algeffseqoverindex{ \ottnt{C} }{ \text{\unboldmath$\mathit{I}$} }   \ottsym{/}   \algeffseqoverindex{ \alpha }{ \text{\unboldmath$\mathit{I}$} }   ]  $,
   \item $\Delta  \ottsym{;}   \mathsf{none}  \,   \vdash   \textup{\texttt{\#}\relax}  \mathsf{op}   \ottsym{(}    \algeffseqoverindex{ \ottnt{C} }{ \text{\unboldmath$\mathit{I}$} }    \ottsym{,}   \ottnt{e'}   \ottsym{)}  \,   \ottsym{:}   \ottnt{B'}    [   \algeffseqoverindex{ \ottnt{C} }{ \text{\unboldmath$\mathit{I}$} }   \ottsym{/}   \algeffseqoverindex{ \alpha }{ \text{\unboldmath$\mathit{I}$} }   ]   \,  |  \, \epsilon$
   \item $\mathit{ty} \, \ottsym{(}  \mathsf{op}  \ottsym{)} \,  =  \,   \text{\unboldmath$\forall$}     \algeffseqoverindex{ \alpha }{ \text{\unboldmath$\mathit{I}$} }   .  \ottnt{A'}  \hookrightarrow  \ottnt{B'} $,
   \item $\mathsf{op} \,  \in  \, \epsilon$,
   \item $\Delta  \ottsym{;}   \mathsf{none}  \,   \vdash  \ottnt{e'} \,   \ottsym{:}   \ottnt{A'}    [   \algeffseqoverindex{ \ottnt{C} }{ \text{\unboldmath$\mathit{I}$} }   \ottsym{/}   \algeffseqoverindex{ \alpha }{ \text{\unboldmath$\mathit{I}$} }   ]   \,  |  \, \epsilon$, and
   \item $\Delta  \vdash   \algeffseqoverindex{ \ottnt{C} }{ \text{\unboldmath$\mathit{I}$} } $.
  \end{itemize}
  By case analysis on the behavior of $\ottnt{e'}$ with the IH.
  \begin{caseanalysis}
   \case $\ottnt{e'}  \longrightarrow  \ottnt{e''}$ for some $\ottnt{e''}$: We have $\ottnt{e}  \longrightarrow   \textup{\texttt{\#}\relax}  \mathsf{op}   \ottsym{(}    \algeffseqoverindex{ \ottnt{C} }{ \text{\unboldmath$\mathit{I}$} }    \ottsym{,}   \ottnt{e''}   \ottsym{)} $.
   \case $\ottnt{e'} \,  =  \,  \textup{\texttt{\#}\relax}  \mathsf{op}'   \ottsym{(}    \algeffseqoverindex{ \sigma }{ \text{\unboldmath$\mathit{I'}$} }    \ottsym{,}   \ottnt{w}   \ottsym{,}   \ottnt{E}   \ottsym{)} $ for some $\mathsf{op}' \,  \in  \, \epsilon$, $ \algeffseqoverindex{ \sigma }{ \text{\unboldmath$\mathit{I'}$} } $, $\ottnt{w}$, and $\ottnt{E}$: By \R{OpOp} and \E{Eval}.
   \case $\ottnt{e'} \,  =  \, \ottnt{v}$ for some $\ottnt{v}$: By \R{Op}/\E{Eval}.
  \end{caseanalysis}

  \case \T{OpCont}: Obvious.
  \case \T{Weak}: By the IH.
  \case \T{Handle}:
  We are given
  \begin{itemize}
   \item $\ottnt{e} \,  =  \, \mathsf{handle} \, \ottnt{e'} \, \mathsf{with} \, \ottnt{h}$,
   \item $\Delta  \ottsym{;}   \mathsf{none}  \,   \vdash  \mathsf{handle} \, \ottnt{e'} \, \mathsf{with} \, \ottnt{h} \,   \ottsym{:}  \ottnt{A} \,  |  \, \epsilon$,
   \item $\Delta  \ottsym{;}   \mathsf{none}  \,   \vdash  \ottnt{e'} \,   \ottsym{:}  \ottnt{B} \,  |  \, \epsilon'$, and
   \item $\Delta  \ottsym{;}   \mathsf{none}  \,   \vdash  \ottnt{h}  \ottsym{:}  \ottnt{B} \,  |  \, \epsilon'  \Rightarrow  \ottnt{A} \,  |  \, \epsilon$.
  \end{itemize}
  By case analysis on the behavior of $\ottnt{e'}$ with the IH.
  \begin{caseanalysis}
   \case $\ottnt{e'}  \longrightarrow  \ottnt{e''}$ for some $\ottnt{e''}$: We have $\ottnt{e}  \longrightarrow  \mathsf{handle} \, \ottnt{e''} \, \mathsf{with} \, \ottnt{h}$.
   \case $\ottnt{e'} \,  =  \,  \textup{\texttt{\#}\relax}  \mathsf{op}   \ottsym{(}    \algeffseqover{ \sigma }    \ottsym{,}   \ottnt{w}   \ottsym{,}   \ottnt{E}   \ottsym{)} $ for some $\mathsf{op} \,  \in  \, \epsilon'$, $ \algeffseqover{ \sigma } $, $\ottnt{w}$, and $\ottnt{E}$:
   If $\mathsf{op} \,  \in  \,  \mathit{ops}  (  \ottnt{h}  ) $, then we finish by \reflem{cont-inversion} and \R{Handle}/\E{Eval}.
   Otherwise, if $\mathsf{op} \,  \not\in  \,  \mathit{ops}  (  \ottnt{h}  ) $, we have $\ottnt{e} = \mathsf{handle} \,  \textup{\texttt{\#}\relax}  \mathsf{op}   \ottsym{(}    \algeffseqover{ \sigma }    \ottsym{,}   \ottnt{w}   \ottsym{,}   \ottnt{E}   \ottsym{)}  \, \mathsf{with} \, \ottnt{h}  \longrightarrow   \textup{\texttt{\#}\relax}  \mathsf{op}   \ottsym{(}    \algeffseqover{ \sigma }    \ottsym{,}   \ottnt{w}   \ottsym{,}   \mathsf{handle} \, \ottnt{E} \, \mathsf{with} \, \ottnt{h}   \ottsym{)} $
   by \R{OpHandle}/\E{Eval}.
   Note that $\mathsf{op} \,  \in  \, \epsilon$ by \reflem{handler-op-inheritance}.

   \case $\ottnt{e'} \,  =  \, \ottnt{v}$ for some $\ottnt{v}$: By \R{Return}/\E{Eval}.
  \end{caseanalysis}

  \case \T{Resume}: Contradictory.
  \case \T{Let}:
  We are given
  \begin{itemize}
   \item $\ottnt{e} \,  =  \, \mathsf{let} \, \mathit{x}  \ottsym{=}   \Lambda\!  \,  \algeffseqoverindex{ \alpha }{ \text{\unboldmath$\mathit{I}$} }   \ottsym{.}  \ottnt{e_{{\mathrm{1}}}} \,  \mathsf{in}  \, \ottnt{e_{{\mathrm{2}}}}$,
   \item $\Delta  \ottsym{,}   \algeffseqoverindex{ \alpha }{ \text{\unboldmath$\mathit{I}$} }   \ottsym{;}   \mathsf{none}  \,   \vdash  \ottnt{e_{{\mathrm{1}}}} \,   \ottsym{:}  \ottnt{B} \,  |  \, \epsilon$, and
   \item $\Delta  \ottsym{,}  \mathit{x} \,  \mathord{:}  \,  \text{\unboldmath$\forall$}  \,  \algeffseqoverindex{ \alpha }{ \text{\unboldmath$\mathit{I}$} }   \ottsym{.}  \ottnt{B}  \ottsym{;}   \mathsf{none}  \,   \vdash  \ottnt{e_{{\mathrm{2}}}} \,   \ottsym{:}  \ottnt{A} \,  |  \, \epsilon$.
  \end{itemize}
  By case analysis on the behavior of $\ottnt{e_{{\mathrm{1}}}}$.
  \begin{caseanalysis}
   \case $\ottnt{e_{{\mathrm{1}}}}  \longrightarrow  \ottnt{e'_{{\mathrm{1}}}}$ for some $\ottnt{e'_{{\mathrm{1}}}}$:
   \case $\ottnt{e_{{\mathrm{1}}}} \,  =  \,  \textup{\texttt{\#}\relax}  \mathsf{op}   \ottsym{(}    \algeffseqoverindex{ \sigma }{ \text{\unboldmath$\mathit{J}$} }    \ottsym{,}   \ottnt{w}   \ottsym{,}   \ottnt{E}   \ottsym{)} $ for some $\mathsf{op} \,  \in  \, \epsilon$, $ \algeffseqoverindex{ \sigma }{ \text{\unboldmath$\mathit{J}$} } $, $\ottnt{w}$, and $\ottnt{E}$:
   By \R{OpLet} and \E{Eval}.

   \case $\ottnt{e_{{\mathrm{1}}}} \,  =  \, \ottnt{v_{{\mathrm{1}}}}$ for some $\ottnt{v_{{\mathrm{1}}}}$: By \R{Let}/\E{Eval}.
   Note that substitution of $ \Lambda\!  \,  \algeffseqoverindex{ \alpha }{ \text{\unboldmath$\mathit{I}$} }   \ottsym{.}  \ottnt{v}$ for $\mathit{x}$ in $\ottnt{e_{{\mathrm{2}}}}$ is defined
   since $\Delta  \ottsym{,}  \mathit{x} \,  \mathord{:}  \,  \text{\unboldmath$\forall$}  \,  \algeffseqoverindex{ \alpha }{ \text{\unboldmath$\mathit{I}$} }   \ottsym{.}  \ottnt{B}  \ottsym{;}   \mathsf{none}  \,   \vdash  \ottnt{e_{{\mathrm{2}}}} \,   \ottsym{:}  \ottnt{A} \,  |  \, \epsilon$.
  \end{caseanalysis}
 \end{caseanalysis}
\end{proof}


\begin{lemma}{type-wf}
 \begin{enumerate}
  \item If $\Gamma  \ottsym{;}   \mathsf{none}  \,   \vdash  \ottnt{e} \,   \ottsym{:}  \ottnt{A} \,  |  \, \epsilon$, then $\Gamma  \vdash  \ottnt{A}$.
  \item If $\Gamma  \ottsym{;}   \mathsf{none}  \,   \vdash  \ottnt{h}  \ottsym{:}  \ottnt{A} \,  |  \, \epsilon  \Rightarrow  \ottnt{B} \,  |  \, \epsilon'$, then $\Gamma  \vdash  \ottnt{B}$.
  \item If $ \Gamma   \vdash   \ottnt{E}   \ottsym{:}    \sigma  \multimap  \ottnt{A}   \,  |  \, \epsilon $ and $\Gamma  \vdash  \sigma$, then $\Gamma  \vdash  \ottnt{A}$.
 \end{enumerate}
\end{lemma}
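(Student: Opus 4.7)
The plan is to prove the three parts by simultaneous induction on the typing derivations. However, the statement as given is too weak for the induction to go through directly: T\_Handle's premise types the handler under $\mathsf{none}$, but THrule{Op} recursively types the body of an operation clause under a non-trivial resumption type $\ottsym{(}   \algeffseqover{ \alpha }   \ottsym{,}  \ottnt{C}  \ottsym{,}   \ottnt{D}   \rightarrow  \!  \epsilon'  \;  \ottnt{B}   \ottsym{)}$, so when we reach \T{Resume} its conclusion type is not obviously well formed in $\Gamma$. I will therefore strengthen part~(1) to an induction hypothesis of the form: if $\Gamma  \ottsym{;}  \ottnt{r} \,   \vdash  \ottnt{e} \,   \ottsym{:}  \ottnt{A} \,  |  \, \epsilon$ and $\ottnt{r}$ is either $ \mathsf{none} $ or $\ottsym{(}   \algeffseqover{ \alpha }   \ottsym{,}  \ottnt{A'}  \ottsym{,}   \ottnt{B'}   \rightarrow  \!  \epsilon  \;  \ottnt{C'}   \ottsym{)}$ with $\Gamma  \vdash  \ottnt{C'}$, then $\Gamma  \vdash  \ottnt{A}$ (the invariant on $\ottnt{r}$ is preserved because the definition of resumption types already enforces $ \mathit{ftv}  (  \ottnt{C'}  )  \,  \mathbin{\cap}  \, \ottsym{\{}   \algeffseqover{ \alpha }   \ottsym{\}} \,  =  \,  \emptyset $, so dropping $ \algeffseqover{ \alpha } $ from the context is safe). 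The corresponding strengthenings for (2) and (3) are immediate.

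The easy cases are handled as follows. For \T{Var}, the bound scheme $ \text{\unboldmath$\forall$}  \,  \algeffseqover{ \alpha }   \ottsym{.}  \ottnt{A}$ is well formed in $\Gamma$ by $\vdash  \Gamma$, so $ \mathit{ftv}  (  \ottnt{A}  )  \,  \subseteq  \,  \mathit{dom}  (  \Gamma  )  \,  \mathbin{\cup}  \, \ottsym{\{}   \algeffseqover{ \alpha }   \ottsym{\}}$, and since $\Gamma  \vdash   \algeffseqover{ \ottnt{B} } $ the substituted type $ \ottnt{A}    [   \algeffseqover{ \ottnt{B} }   \ottsym{/}   \algeffseqover{ \alpha }   ]  $ is well formed in $\Gamma$. \T{Const} follows from the assumption that constants have closed types. \T{Abs}, \T{App}, \T{Weak}, \T{Handle}, and \T{Let} follow directly from the appropriate IHs, using the handler part~(2) for \T{Handle}. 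For \T{Op} and \T{OpCont} we argue as in \T{Var}: the assumed constraint $ \mathit{ftv}  (  \ottnt{B}  )  \,  \subseteq  \, \ottsym{\{}   \algeffseqover{ \alpha }   \ottsym{\}}$ on effect signatures together with well-formedness of the type arguments $ \algeffseqover{ \ottnt{C} } $ gives well-formedness of the result. For \T{OpCont} specifically, we use the continuation IH~(3) applied to $ \ottnt{E} ^{  \algeffseqoverindex{ \beta }{ \text{\unboldmath$\mathit{J}$} }  } $, noting that its domain type $\sigma =  \text{\unboldmath$\forall$}  \,  \algeffseqoverindex{ \beta }{ \text{\unboldmath$\mathit{J}$} }   \ottsym{.}  \ottsym{(}   \ottnt{B}    [   \algeffseqoverindex{ \ottnt{C} }{ \text{\unboldmath$\mathit{I}$} }   \ottsym{/}   \algeffseqoverindex{ \alpha }{ \text{\unboldmath$\mathit{I}$} }   ]    \ottsym{)}$ is well formed in $\Gamma$.

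The main obstacle is \T{Resume}. Its conclusion type is the result type $\ottnt{C}$ drawn from the resumption type $\ottsym{(}   \algeffseqover{ \alpha }   \ottsym{,}  \ottnt{A}  \ottsym{,}   \ottnt{B}   \rightarrow  \!  \epsilon  \;  \ottnt{C}   \ottsym{)}$, and neither $\ottnt{C}$ nor its free variables appear in the premises in a way that lets us derive $\Gamma  \vdash  \ottnt{C}$ locally; this is precisely what the strengthened invariant on $\ottnt{r}$ provides. For the handler part~(2), both \THrule{Return} and \THrule{Op} pass the result type $\ottnt{B}$ unchanged along the handler spine, so it suffices to observe that in \THrule{Op} the new resumption type $\ottsym{(}   \algeffseqover{ \alpha }   \ottsym{,}  \ottnt{C}  \ottsym{,}   \ottnt{D}   \rightarrow  \!  \epsilon'  \;  \ottnt{B}   \ottsym{)}$ installed for the operation body satisfies the invariant thanks to the IH on the sub-handler (which yields $\Gamma  \vdash  \ottnt{B}$) and then invoke the strengthened part~(1). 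The continuation part~(3) is routine induction on $\ottnt{E}$: each structural rule reduces to the appropriate IH, and \TE{Let} uses the fact that $ \algeffseqover{ \alpha } $ may appear in $\sigma$ but must not in the surrounding judgment's well-formedness requirement, which matches the well-formedness of $ \text{\unboldmath$\forall$}  \,  \algeffseqover{ \alpha }   \ottsym{.}  \sigma$ in $\Gamma$.
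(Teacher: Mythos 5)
Your proof is correct, but the premise on which you build it is mistaken, and the paper's own argument is the direct one you dismissed. You claim the statement ``as given is too weak for the induction to go through directly'' because \THrule{Op} types the operation-clause body under a non-trivial resumption type and one would then ``reach \T{Resume}.'' In fact one never does: to establish part~(2) for \THrule{Op} it suffices to apply the part-(2) induction hypothesis to the \emph{sub-handler} premise $\Gamma  \ottsym{;}   \mathsf{none}  \,   \vdash  \ottnt{h}  \ottsym{:}  \ottnt{A} \,  |  \, \epsilon  \Rightarrow  \ottnt{B} \,  |  \, \epsilon'$, which already carries the result type $\ottnt{B}$ and is typed under $ \mathsf{none} $; the spine bottoms out at \THrule{Return}, whose body is again typed under $ \mathsf{none} $, so the operation-clause body never needs to be inspected. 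Every other rule passes the resumption type unchanged to its term and handler premises, so under the hypothesis $\ottnt{r} \,  =  \,  \mathsf{none} $ the case \T{Resume} is simply vacuous, and the mutual induction closes with only \reflem{typing-context-wf} (for \T{Abs}, \T{Let}, \THrule{Return}, and the well-formedness of bound schemes in \T{Var}) plus the closedness assumption $ \mathit{ftv}  (  \ottnt{B}  )  \,  \subseteq  \, \ottsym{\{}   \algeffseqover{ \alpha }   \ottsym{\}}$ on operation signatures for \T{Op} and \T{OpCont}. This is exactly the paper's (one-line) proof. Your strengthened invariant---tracking well-formedness of the result type stored in $\ottnt{r}$---is preserved by all the rules and does yield the lemma as a corollary, so the argument is sound; it just buys nothing here, and the redundant appeal to part~(1) on the operation-clause body in \THrule{Op} only re-derives a weakening of a fact you already have from the sub-handler. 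The generalization would earn its keep only if one needed the lemma for judgments with $\ottnt{r} \,  \not=  \,  \mathsf{none} $, which the development does not.
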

\begin{proof}
 Straightforward by mutual induction on the typing derivations with \reflem{typing-context-wf}.
\end{proof}

\ifrestate
\lemmSubjectRed*
\else
\begin{lemmap}{Subject reduction}{subject-red}
 \begin{enumerate}
  \item If $\Delta  \ottsym{;}   \mathsf{none}  \,   \vdash  \ottnt{e_{{\mathrm{1}}}} \,   \ottsym{:}  \ottnt{A} \,  |  \, \epsilon$ and $\ottnt{e_{{\mathrm{1}}}}  \rightsquigarrow  \ottnt{e_{{\mathrm{2}}}}$,
        then $\Delta  \ottsym{;}   \mathsf{none}  \,   \vdash  \ottnt{e_{{\mathrm{2}}}} \,   \ottsym{:}  \ottnt{A} \,  |  \, \epsilon$.
  \item If $\Delta  \ottsym{;}   \mathsf{none}  \,   \vdash  \ottnt{e_{{\mathrm{1}}}} \,   \ottsym{:}  \ottnt{A} \,  |  \, \epsilon$ and $\ottnt{e_{{\mathrm{1}}}}  \longrightarrow  \ottnt{e_{{\mathrm{2}}}}$,
        then $\Delta  \ottsym{;}   \mathsf{none}  \,   \vdash  \ottnt{e_{{\mathrm{2}}}} \,   \ottsym{:}  \ottnt{A} \,  |  \, \epsilon$.
 \end{enumerate}
\end{lemmap}
\fi
\begin{proof}
 \begin{enumerate}
  \item Suppose that $\Delta  \ottsym{;}   \mathsf{none}  \,   \vdash  \ottnt{e_{{\mathrm{1}}}} \,   \ottsym{:}  \ottnt{A} \,  |  \, \epsilon$ and $\ottnt{e_{{\mathrm{1}}}}  \rightsquigarrow  \ottnt{e_{{\mathrm{2}}}}$.
        By induction on $\Delta  \ottsym{;}   \mathsf{none}  \,   \vdash  \ottnt{e_{{\mathrm{1}}}} \,   \ottsym{:}  \ottnt{A} \,  |  \, \epsilon$.
        \begin{caseanalysis}
         \case \T{Var}: Contradictory.

         \case \T{Const}: Contradictory;
         no reduction rules can be applied to constants.

         \case \T{Abs}: Contradictory;
         no reduction rules can be applied to lambda abstractions.

         \case \T{App}: We have four reduction rules which can be applied to
         function applications.
         \begin{caseanalysis}
          \case \R{Const}:
          We are given
          \begin{itemize}
           \item $\ottnt{e_{{\mathrm{1}}}} \,  =  \, \ottnt{c_{{\mathrm{1}}}} \, \ottnt{c_{{\mathrm{2}}}}$,
           \item $\ottnt{e_{{\mathrm{2}}}} \,  =  \,  \zeta  (  \ottnt{c_{{\mathrm{1}}}}  ,  \ottnt{c_{{\mathrm{2}}}}  ) $,
           \item $\Delta  \ottsym{;}   \mathsf{none}  \,   \vdash  \ottnt{c_{{\mathrm{1}}}} \, \ottnt{c_{{\mathrm{2}}}} \,   \ottsym{:}  \ottnt{A} \,  |  \, \epsilon$,
           \item $\Delta  \ottsym{;}   \mathsf{none}  \,   \vdash  \ottnt{c_{{\mathrm{1}}}} \,   \ottsym{:}   \ottnt{B}   \rightarrow  \!  \epsilon'  \;  \ottnt{A}  \,  |  \, \epsilon$,
           \item $\Delta  \ottsym{;}   \mathsf{none}  \,   \vdash  \ottnt{c_{{\mathrm{2}}}} \,   \ottsym{:}  \ottnt{B} \,  |  \, \epsilon$, and
           \item $\epsilon' \,  \subseteq  \, \epsilon$.
          \end{itemize}
          By \reflem{value-inversion-constant} and the assumption about constants, we have
          $\ottnt{B} =  \mathit{ty}  (  \ottnt{c_{{\mathrm{2}}}}  )  = \iota$ and
          $ \mathit{ty}  (  \ottnt{c_{{\mathrm{1}}}}  )  \,  =  \,  \iota   \rightarrow  \!   \langle \rangle   \;  \ottnt{A} $ and
          $\epsilon' \,  =  \,  \langle \rangle $
          for some $\iota$.
          By the assumption about $ \zeta $,
          $ \zeta  (  \ottnt{c_{{\mathrm{1}}}}  ,  \ottnt{c_{{\mathrm{2}}}}  ) $ is a constant and $ \mathit{ty}  (   \zeta  (  \ottnt{c_{{\mathrm{1}}}}  ,  \ottnt{c_{{\mathrm{2}}}}  )   )  \,  =  \, \ottnt{A}$.
          Thus, by \T{Const}, $\Delta  \ottsym{;}   \mathsf{none}  \,   \vdash   \zeta  (  \ottnt{c_{{\mathrm{1}}}}  ,  \ottnt{c_{{\mathrm{2}}}}  )  \,   \ottsym{:}  \ottnt{A} \,  |  \, \epsilon$;
          note that $\vdash  \Delta$ by \reflem{typing-context-wf}.

          \case \R{Beta}:
          We are given
          \begin{itemize}
           \item $\ottnt{e_{{\mathrm{1}}}} \,  =  \, \ottsym{(}   \lambda\!  \, \mathit{x}  \ottsym{.}  \ottnt{e}  \ottsym{)} \, \ottnt{v}$,
           \item $\ottnt{e_{{\mathrm{2}}}} \,  =  \,  \ottnt{e}    [  \ottnt{v}  \ottsym{/}  \mathit{x}  ]  $,
           \item $\Delta  \ottsym{;}   \mathsf{none}  \,   \vdash  \ottsym{(}   \lambda\!  \, \mathit{x}  \ottsym{.}  \ottnt{e}  \ottsym{)} \, \ottnt{v} \,   \ottsym{:}  \ottnt{A} \,  |  \, \epsilon$,
           \item $\Delta  \ottsym{;}   \mathsf{none}  \,   \vdash   \lambda\!  \, \mathit{x}  \ottsym{.}  \ottnt{e} \,   \ottsym{:}   \ottnt{B}   \rightarrow  \!  \epsilon'  \;  \ottnt{A}  \,  |  \, \epsilon$,
           \item $\Delta  \ottsym{;}   \mathsf{none}  \,   \vdash  \ottnt{v} \,   \ottsym{:}  \ottnt{B} \,  |  \, \epsilon$, and
           \item $\epsilon' \,  \subseteq  \, \epsilon$.
          \end{itemize}
          By \reflem{value-inversion-abs},
          $\Delta  \ottsym{,}  \mathit{x} \,  \mathord{:}  \, \ottnt{B}  \ottsym{;}   \mathsf{none}  \,   \vdash  \ottnt{e} \,   \ottsym{:}  \ottnt{A} \,  |  \, \epsilon'$.
          By \T{Weak}, $\Delta  \ottsym{,}  \mathit{x} \,  \mathord{:}  \, \ottnt{B}  \ottsym{;}   \mathsf{none}  \,   \vdash  \ottnt{e} \,   \ottsym{:}  \ottnt{A} \,  |  \, \epsilon$.
          By \reflem{val-subst} (\ref{lem:val-subst:term}),
          $\Delta  \ottsym{;}   \mathsf{none}  \,   \vdash   \ottnt{e}    [  \ottnt{v}  \ottsym{/}  \mathit{x}  ]   \,   \ottsym{:}  \ottnt{A} \,  |  \, \epsilon$.

          \case \R{OpApp1}:
          By \reflem{cont-inversion}, we are given
          \begin{itemize}
           \item $\ottnt{e_{{\mathrm{1}}}} \,  =  \,  \textup{\texttt{\#}\relax}  \mathsf{op}   \ottsym{(}    \algeffseqoverindex{  \text{\unboldmath$\forall$}  \,  \algeffseqoverindex{ \beta }{ \text{\unboldmath$\mathit{J}$} }   \ottsym{.}  \ottnt{C} }{ \text{\unboldmath$\mathit{I}$} }    \ottsym{,}    \Lambda\!  \,  \algeffseqoverindex{ \beta }{ \text{\unboldmath$\mathit{J}$} }   \ottsym{.}  \ottnt{v}   \ottsym{,}    \ottnt{E} ^{  \algeffseqoverindex{ \beta }{ \text{\unboldmath$\mathit{J}$} }  }    \ottsym{)}  \, \ottnt{e'_{{\mathrm{2}}}}$,
           \item $\ottnt{e_{{\mathrm{2}}}} \,  =  \,  \textup{\texttt{\#}\relax}  \mathsf{op}   \ottsym{(}    \algeffseqoverindex{  \text{\unboldmath$\forall$}  \,  \algeffseqoverindex{ \beta }{ \text{\unboldmath$\mathit{J}$} }   \ottsym{.}  \ottnt{C} }{ \text{\unboldmath$\mathit{I}$} }    \ottsym{,}    \Lambda\!  \,  \algeffseqoverindex{ \beta }{ \text{\unboldmath$\mathit{J}$} }   \ottsym{.}  \ottnt{v}   \ottsym{,}    \ottnt{E} ^{  \algeffseqover{ \beta }  }  \, \ottnt{e'_{{\mathrm{2}}}}   \ottsym{)} $,
           \item $\Delta  \ottsym{;}   \mathsf{none}  \,   \vdash   \textup{\texttt{\#}\relax}  \mathsf{op}   \ottsym{(}    \algeffseqoverindex{  \text{\unboldmath$\forall$}  \,  \algeffseqoverindex{ \beta }{ \text{\unboldmath$\mathit{J}$} }   \ottsym{.}  \ottnt{C} }{ \text{\unboldmath$\mathit{I}$} }    \ottsym{,}    \Lambda\!  \,  \algeffseqoverindex{ \beta }{ \text{\unboldmath$\mathit{J}$} }   \ottsym{.}  \ottnt{v}   \ottsym{,}    \ottnt{E} ^{  \algeffseqoverindex{ \beta }{ \text{\unboldmath$\mathit{J}$} }  }    \ottsym{)}  \, \ottnt{e'_{{\mathrm{2}}}} \,   \ottsym{:}  \ottnt{A} \,  |  \, \epsilon$,
           \item $\Delta  \ottsym{;}   \mathsf{none}  \,   \vdash   \textup{\texttt{\#}\relax}  \mathsf{op}   \ottsym{(}    \algeffseqoverindex{  \text{\unboldmath$\forall$}  \,  \algeffseqoverindex{ \beta }{ \text{\unboldmath$\mathit{J}$} }   \ottsym{.}  \ottnt{C} }{ \text{\unboldmath$\mathit{I}$} }    \ottsym{,}    \Lambda\!  \,  \algeffseqoverindex{ \beta }{ \text{\unboldmath$\mathit{J}$} }   \ottsym{.}  \ottnt{v}   \ottsym{,}    \ottnt{E} ^{  \algeffseqoverindex{ \beta }{ \text{\unboldmath$\mathit{J}$} }  }    \ottsym{)}  \,   \ottsym{:}   \ottnt{B}   \rightarrow  \!  \epsilon'  \;  \ottnt{A}  \,  |  \, \epsilon$,
           \item $\Delta  \ottsym{;}   \mathsf{none}  \,   \vdash  \ottnt{e'_{{\mathrm{2}}}} \,   \ottsym{:}  \ottnt{B} \,  |  \, \epsilon$,
           \item $\epsilon' \,  \subseteq  \, \epsilon$,
           \item $\mathit{ty} \, \ottsym{(}  \mathsf{op}  \ottsym{)} \,  =  \,   \text{\unboldmath$\forall$}     \algeffseqoverindex{ \alpha }{ \text{\unboldmath$\mathit{I}$} }   .  \ottnt{A'}  \hookrightarrow  \ottnt{B'} $
           \item $\epsilon'' \,  \subseteq  \, \epsilon$,
           \item $\mathsf{op} \,  \in  \, \epsilon''$,
           \item $\Delta  \vdash   \algeffseqoverindex{  \text{\unboldmath$\forall$}  \,  \algeffseqoverindex{ \beta }{ \text{\unboldmath$\mathit{J}$} }   \ottsym{.}  \ottnt{C} }{ \text{\unboldmath$\mathit{I}$} } $,
           \item $\Delta  \ottsym{,}   \algeffseqoverindex{ \beta }{ \text{\unboldmath$\mathit{J}$} }   \ottsym{;}   \mathsf{none}  \,   \vdash  \ottnt{v} \,   \ottsym{:}   \ottnt{A'}    [   \algeffseqoverindex{ \ottnt{C} }{ \text{\unboldmath$\mathit{I}$} }   \ottsym{/}   \algeffseqoverindex{ \alpha }{ \text{\unboldmath$\mathit{I}$} }   ]   \,  |  \, \epsilon''$, and
           \item $ \Delta   \vdash    \ottnt{E} ^{  \algeffseqoverindex{ \beta }{ \text{\unboldmath$\mathit{J}$} }  }    \ottsym{:}     \text{\unboldmath$\forall$}  \,  \algeffseqoverindex{ \beta }{ \text{\unboldmath$\mathit{J}$} }   \ottsym{.}  \ottsym{(}   \ottnt{B'}    [   \algeffseqoverindex{ \ottnt{C} }{ \text{\unboldmath$\mathit{I}$} }   \ottsym{/}   \algeffseqoverindex{ \alpha }{ \text{\unboldmath$\mathit{I}$} }   ]    \ottsym{)}  \multimap   \ottnt{B}   \rightarrow  \!  \epsilon'  \;  \ottnt{A}    \,  |  \, \epsilon'' $
          \end{itemize}
          for some $\ottnt{e'_{{\mathrm{2}}}}$, $\epsilon'$, $ \algeffseqoverindex{ \alpha }{ \text{\unboldmath$\mathit{I}$} } $, $ \algeffseqoverindex{ \beta }{ \text{\unboldmath$\mathit{J}$} } $, $\ottnt{B}$, $ \algeffseqoverindex{ \ottnt{C} }{ \text{\unboldmath$\mathit{I}$} } $, $\ottnt{A'}$, $\ottnt{B'}$, and $\epsilon''$.

          Since $\epsilon' \,  \subseteq  \, \epsilon$ and $\epsilon'' \,  \subseteq  \, \epsilon$, we have
          \[
            \Delta   \vdash    \ottnt{E} ^{  \algeffseqoverindex{ \beta }{ \text{\unboldmath$\mathit{J}$} }  }  \, \ottnt{e'_{{\mathrm{2}}}}   \ottsym{:}     \text{\unboldmath$\forall$}  \,  \algeffseqoverindex{ \beta }{ \text{\unboldmath$\mathit{J}$} }   \ottsym{.}  \ottsym{(}   \ottnt{B'}    [   \algeffseqoverindex{ \ottnt{C} }{ \text{\unboldmath$\mathit{I}$} }   \ottsym{/}   \algeffseqoverindex{ \alpha }{ \text{\unboldmath$\mathit{I}$} }   ]    \ottsym{)}  \multimap  \ottnt{A}   \,  |  \, \epsilon 
          \]
          by \TE{Weak} and \TE{App1}.
          By \T{Weak},
          \[
           \Delta  \ottsym{,}   \algeffseqoverindex{ \beta }{ \text{\unboldmath$\mathit{J}$} }   \ottsym{;}   \mathsf{none}  \,   \vdash  \ottnt{v} \,   \ottsym{:}   \ottnt{A'}    [   \algeffseqoverindex{ \ottnt{C} }{ \text{\unboldmath$\mathit{I}$} }   \ottsym{/}   \algeffseqoverindex{ \alpha }{ \text{\unboldmath$\mathit{I}$} }   ]   \,  |  \, \epsilon.
          \]
          Thus, by \T{OpCont}, we have the conclusion.

          \case \R{OpApp2}: Similar to the case of \R{OpApp1}.
         \end{caseanalysis}

         \case \T{Op}:
         We have two reduction rules which can be applied to effect invocation.
         \begin{caseanalysis}
          \case \R{Op}:
          We are given
          \begin{itemize}
           \item $\ottnt{e_{{\mathrm{1}}}} \,  =  \,  \textup{\texttt{\#}\relax}  \mathsf{op}   \ottsym{(}    \algeffseqoverindex{ \ottnt{C} }{ \text{\unboldmath$\mathit{I}$} }    \ottsym{,}   \ottnt{v}   \ottsym{)} $,
           \item $\ottnt{e_{{\mathrm{2}}}} \,  =  \,  \textup{\texttt{\#}\relax}  \mathsf{op}   \ottsym{(}    \algeffseqoverindex{ \ottnt{C} }{ \text{\unboldmath$\mathit{I}$} }    \ottsym{,}   \ottnt{v}   \ottsym{,}    [\,]    \ottsym{)} $,
           \item $\ottnt{A} \,  =  \,  \ottnt{B'}    [   \algeffseqoverindex{ \ottnt{C} }{ \text{\unboldmath$\mathit{I}$} }   \ottsym{/}   \algeffseqoverindex{ \alpha }{ \text{\unboldmath$\mathit{I}$} }   ]  $,
           \item $\Delta  \ottsym{;}   \mathsf{none}  \,   \vdash   \textup{\texttt{\#}\relax}  \mathsf{op}   \ottsym{(}    \algeffseqoverindex{ \ottnt{C} }{ \text{\unboldmath$\mathit{I}$} }    \ottsym{,}   \ottnt{v}   \ottsym{)}  \,   \ottsym{:}   \ottnt{B'}    [   \algeffseqoverindex{ \ottnt{C} }{ \text{\unboldmath$\mathit{I}$} }   \ottsym{/}   \algeffseqoverindex{ \alpha }{ \text{\unboldmath$\mathit{I}$} }   ]   \,  |  \, \epsilon$,
           \item $\mathit{ty} \, \ottsym{(}  \mathsf{op}  \ottsym{)} \,  =  \,   \text{\unboldmath$\forall$}     \algeffseqoverindex{ \alpha }{ \text{\unboldmath$\mathit{I}$} }   .  \ottnt{A'}  \hookrightarrow  \ottnt{B'} $,
           \item $\mathsf{op} \,  \in  \, \epsilon$,
           \item $\Delta  \ottsym{;}   \mathsf{none}  \,   \vdash  \ottnt{v} \,   \ottsym{:}   \ottnt{A'}    [   \algeffseqoverindex{ \ottnt{C} }{ \text{\unboldmath$\mathit{I}$} }   \ottsym{/}   \algeffseqoverindex{ \alpha }{ \text{\unboldmath$\mathit{I}$} }   ]   \,  |  \, \epsilon$, and
           \item $\Delta  \vdash   \algeffseqoverindex{ \ottnt{C} }{ \text{\unboldmath$\mathit{I}$} } $.
          \end{itemize}
          By \TE{Hole} and \T{OpCont}, we have the conclusion.

          \case \R{OpOp}:
          By \reflem{cont-inversion}, we are given
          \begin{itemize}
           \item $\ottnt{e_{{\mathrm{1}}}} \,  =  \,  \textup{\texttt{\#}\relax}  \mathsf{op}'   \ottsym{(}    \algeffseqoverindex{ \ottnt{C'} }{ \text{\unboldmath$\mathit{I'}$} }    \ottsym{,}    \textup{\texttt{\#}\relax}  \mathsf{op}   \ottsym{(}    \algeffseqoverindex{  \text{\unboldmath$\forall$}  \,  \algeffseqoverindex{ \beta }{ \text{\unboldmath$\mathit{J}$} }   \ottsym{.}  \ottnt{C} }{ \text{\unboldmath$\mathit{I}$} }    \ottsym{,}    \Lambda\!  \,  \algeffseqoverindex{ \beta }{ \text{\unboldmath$\mathit{J}$} }   \ottsym{.}  \ottnt{v}   \ottsym{,}    \ottnt{E} ^{  \algeffseqoverindex{ \beta }{ \text{\unboldmath$\mathit{J}$} }  }    \ottsym{)}    \ottsym{)} $,
           \item $\ottnt{e_{{\mathrm{2}}}} \,  =  \,  \textup{\texttt{\#}\relax}  \mathsf{op}   \ottsym{(}    \algeffseqoverindex{  \text{\unboldmath$\forall$}  \,  \algeffseqoverindex{ \beta }{ \text{\unboldmath$\mathit{J}$} }   \ottsym{.}  \ottnt{C} }{ \text{\unboldmath$\mathit{I}$} }    \ottsym{,}    \Lambda\!  \,  \algeffseqoverindex{ \beta }{ \text{\unboldmath$\mathit{J}$} }   \ottsym{.}  \ottnt{v}   \ottsym{,}    \textup{\texttt{\#}\relax}  \mathsf{op}'   \ottsym{(}    \algeffseqoverindex{ \ottnt{C'} }{ \text{\unboldmath$\mathit{I'}$} }    \ottsym{,}    \ottnt{E} ^{  \algeffseqoverindex{ \beta }{ \text{\unboldmath$\mathit{J}$} }  }    \ottsym{)}    \ottsym{)} $,
           \item $\ottnt{A} \,  =  \,  \ottnt{B'}    [   \algeffseqoverindex{ \ottnt{C'} }{ \text{\unboldmath$\mathit{I'}$} }   \ottsym{/}   \algeffseqoverindex{ \gamma }{ \text{\unboldmath$\mathit{I'}$} }   ]  $,
           \item $\Delta  \ottsym{;}   \mathsf{none}  \,   \vdash   \textup{\texttt{\#}\relax}  \mathsf{op}'   \ottsym{(}    \algeffseqoverindex{ \ottnt{C'} }{ \text{\unboldmath$\mathit{I'}$} }    \ottsym{,}    \textup{\texttt{\#}\relax}  \mathsf{op}   \ottsym{(}    \algeffseqoverindex{  \text{\unboldmath$\forall$}  \,  \algeffseqoverindex{ \beta }{ \text{\unboldmath$\mathit{J}$} }   \ottsym{.}  \ottnt{C} }{ \text{\unboldmath$\mathit{I}$} }    \ottsym{,}    \Lambda\!  \,  \algeffseqoverindex{ \beta }{ \text{\unboldmath$\mathit{J}$} }   \ottsym{.}  \ottnt{v}   \ottsym{,}    \ottnt{E} ^{  \algeffseqoverindex{ \beta }{ \text{\unboldmath$\mathit{J}$} }  }    \ottsym{)}    \ottsym{)}  \,   \ottsym{:}   \ottnt{B'}    [   \algeffseqoverindex{ \ottnt{C'} }{ \text{\unboldmath$\mathit{I'}$} }   \ottsym{/}   \algeffseqoverindex{ \gamma }{ \text{\unboldmath$\mathit{I'}$} }   ]   \,  |  \, \epsilon$
           \item $\mathit{ty} \, \ottsym{(}  \mathsf{op}'  \ottsym{)} \,  =  \,   \text{\unboldmath$\forall$}     \algeffseqoverindex{ \gamma }{ \text{\unboldmath$\mathit{I'}$} }   .  \ottnt{A'}  \hookrightarrow  \ottnt{B'} $,
           \item $\mathsf{op}' \,  \in  \, \epsilon$,
           \item $\Delta  \vdash   \algeffseqoverindex{ \ottnt{C'} }{ \text{\unboldmath$\mathit{I'}$} } $,
           \item $\Delta  \ottsym{;}   \mathsf{none}  \,   \vdash   \textup{\texttt{\#}\relax}  \mathsf{op}   \ottsym{(}    \algeffseqoverindex{  \text{\unboldmath$\forall$}  \,  \algeffseqoverindex{ \beta }{ \text{\unboldmath$\mathit{J}$} }   \ottsym{.}  \ottnt{C} }{ \text{\unboldmath$\mathit{I}$} }    \ottsym{,}    \Lambda\!  \,  \algeffseqoverindex{ \beta }{ \text{\unboldmath$\mathit{J}$} }   \ottsym{.}  \ottnt{v}   \ottsym{,}    \ottnt{E} ^{  \algeffseqoverindex{ \beta }{ \text{\unboldmath$\mathit{J}$} }  }    \ottsym{)}  \,   \ottsym{:}   \ottnt{A'}    [   \algeffseqoverindex{ \ottnt{C'} }{ \text{\unboldmath$\mathit{I'}$} }   \ottsym{/}   \algeffseqoverindex{ \gamma }{ \text{\unboldmath$\mathit{I'}$} }   ]   \,  |  \, \epsilon$,
           \item $\mathit{ty} \, \ottsym{(}  \mathsf{op}  \ottsym{)} \,  =  \,   \text{\unboldmath$\forall$}     \algeffseqoverindex{ \alpha }{ \text{\unboldmath$\mathit{I}$} }   .  \ottnt{A''}  \hookrightarrow  \ottnt{B''} $,
           \item $\mathsf{op} \,  \in  \, \epsilon'$,
           \item $\epsilon' \,  \subseteq  \, \epsilon$,
           \item $\Delta  \vdash   \algeffseqoverindex{  \text{\unboldmath$\forall$}  \,  \algeffseqoverindex{ \beta }{ \text{\unboldmath$\mathit{J}$} }   \ottsym{.}  \ottnt{C} }{ \text{\unboldmath$\mathit{I}$} } $,
           \item $\Delta  \ottsym{,}   \algeffseqoverindex{ \beta }{ \text{\unboldmath$\mathit{J}$} }   \ottsym{;}   \mathsf{none}  \,   \vdash  \ottnt{v} \,   \ottsym{:}   \ottnt{A''}    [   \algeffseqoverindex{ \ottnt{C} }{ \text{\unboldmath$\mathit{I}$} }   \ottsym{/}   \algeffseqoverindex{ \alpha }{ \text{\unboldmath$\mathit{I}$} }   ]   \,  |  \, \epsilon'$, and
           \item $ \Delta   \vdash    \ottnt{E} ^{  \algeffseqoverindex{ \beta }{ \text{\unboldmath$\mathit{J}$} }  }    \ottsym{:}     \text{\unboldmath$\forall$}  \,  \algeffseqoverindex{ \beta }{ \text{\unboldmath$\mathit{J}$} }   \ottsym{.}   \ottnt{B''}    [   \algeffseqoverindex{ \ottnt{C} }{ \text{\unboldmath$\mathit{I}$} }   \ottsym{/}   \algeffseqoverindex{ \alpha }{ \text{\unboldmath$\mathit{I}$} }   ]    \multimap   \ottnt{A'}    [   \algeffseqoverindex{ \ottnt{C'} }{ \text{\unboldmath$\mathit{I'}$} }   \ottsym{/}   \algeffseqoverindex{ \gamma }{ \text{\unboldmath$\mathit{I'}$} }   ]     \,  |  \, \epsilon' $.
          \end{itemize}

          By \TE{Weak} and \TE{Op},
          \[
            \Delta   \vdash    \textup{\texttt{\#}\relax}  \mathsf{op}   \ottsym{(}    \algeffseqoverindex{ \ottnt{C'} }{ \text{\unboldmath$\mathit{I'}$} }    \ottsym{,}    \ottnt{E} ^{  \algeffseqoverindex{ \beta }{ \text{\unboldmath$\mathit{J}$} }  }    \ottsym{)}    \ottsym{:}     \text{\unboldmath$\forall$}  \,  \algeffseqoverindex{ \beta }{ \text{\unboldmath$\mathit{J}$} }   \ottsym{.}   \ottnt{B''}    [   \algeffseqoverindex{ \ottnt{C} }{ \text{\unboldmath$\mathit{I}$} }   \ottsym{/}   \algeffseqoverindex{ \alpha }{ \text{\unboldmath$\mathit{I}$} }   ]    \multimap   \ottnt{B'}    [   \algeffseqoverindex{ \ottnt{C'} }{ \text{\unboldmath$\mathit{I'}$} }   \ottsym{/}   \algeffseqoverindex{ \gamma }{ \text{\unboldmath$\mathit{I'}$} }   ]     \,  |  \, \epsilon .
          \]
          By \T{Weak},
          \[
           \Delta  \ottsym{,}   \algeffseqoverindex{ \beta }{ \text{\unboldmath$\mathit{J}$} }   \ottsym{;}   \mathsf{none}  \,   \vdash  \ottnt{v} \,   \ottsym{:}   \ottnt{A''}    [   \algeffseqoverindex{ \ottnt{C} }{ \text{\unboldmath$\mathit{I}$} }   \ottsym{/}   \algeffseqoverindex{ \alpha }{ \text{\unboldmath$\mathit{I}$} }   ]   \,  |  \, \epsilon.
          \]
          By \T{OpCont}, we have the conclusion.

         \end{caseanalysis}

         \case \T{OpCont}: Contradictory.
         \case \T{Weak}: By the IH and \T{Weak}.

         \case \T{Handle}:
         We have three reduction rules which can be applied to handler expressions.
         \begin{caseanalysis}
         \case \R{Return}:
         We are given
         \begin{itemize}
          \item $\ottnt{e_{{\mathrm{1}}}} \,  =  \, \mathsf{handle} \, \ottnt{v} \, \mathsf{with} \, \ottnt{h}$,
          \item $ \ottnt{h} ^\mathsf{return}  \,  =  \, \mathsf{return} \, \mathit{x}  \rightarrow  \ottnt{e}$,
          \item $\ottnt{e_{{\mathrm{2}}}} \,  =  \,  \ottnt{e}    [  \ottnt{v}  \ottsym{/}  \mathit{x}  ]  $,
          \item $\Delta  \ottsym{;}   \mathsf{none}  \,   \vdash  \mathsf{handle} \, \ottnt{v} \, \mathsf{with} \, \ottnt{h} \,   \ottsym{:}  \ottnt{A} \,  |  \, \epsilon$,
          \item $\Delta  \ottsym{;}   \mathsf{none}  \,   \vdash  \ottnt{v} \,   \ottsym{:}  \ottnt{B} \,  |  \, \epsilon'$, and
          \item $\Delta  \ottsym{;}   \mathsf{none}  \,   \vdash  \ottnt{h}  \ottsym{:}  \ottnt{B} \,  |  \, \epsilon'  \Rightarrow  \ottnt{A} \,  |  \, \epsilon$.
         \end{itemize}
         By \reflem{handler-inversion},
         $\Gamma  \ottsym{,}  \mathit{x} \,  \mathord{:}  \, \ottnt{B}  \ottsym{;}   \mathsf{none}  \,   \vdash  \ottnt{e} \,   \ottsym{:}  \ottnt{A} \,  |  \, \epsilon$.
         By \reflem{val-subst}, we finish.

         \case \R{Handle}:
         By Lemmas~\ref{lem:handler-inversion} and \ref{lem:cont-inversion},
         we are given
         \begin{itemize}
          \item $\ottnt{e_{{\mathrm{1}}}} \,  =  \, \mathsf{handle} \,  \textup{\texttt{\#}\relax}  \mathsf{op}   \ottsym{(}    \algeffseqoverindex{  \text{\unboldmath$\forall$}  \,  \algeffseqoverindex{ \beta }{ \text{\unboldmath$\mathit{J}$} }   \ottsym{.}  \ottnt{C} }{ \text{\unboldmath$\mathit{I}$} }    \ottsym{,}    \Lambda\!  \,  \algeffseqoverindex{ \beta }{ \text{\unboldmath$\mathit{J}$} }   \ottsym{.}  \ottnt{v}   \ottsym{,}    \ottnt{E} ^{  \algeffseqoverindex{ \beta }{ \text{\unboldmath$\mathit{J}$} }  }    \ottsym{)}  \, \mathsf{with} \, \ottnt{h}$,
          \item $ \ottnt{h} ^{ \mathsf{op} }  \,  =  \,  \Lambda\!  \,  \algeffseqoverindex{ \alpha }{ \text{\unboldmath$\mathit{I}$} }   \ottsym{.}  \mathsf{op}  \ottsym{(}  \mathit{x}  \ottsym{)}  \rightarrow  \ottnt{e}$,
          \item $\ottnt{e_{{\mathrm{2}}}} \,  =  \,    \ottnt{e}    [  \mathsf{handle} \,  \ottnt{E} ^{  \algeffseqoverindex{ \beta }{ \text{\unboldmath$\mathit{J}$} }  }  \, \mathsf{with} \, \ottnt{h}  /  \mathsf{resume}  ]^{  \algeffseqoverindex{  \text{\unboldmath$\forall$}  \,  \algeffseqoverindex{ \beta }{ \text{\unboldmath$\mathit{J}$} }   \ottsym{.}  \ottnt{C} }{ \text{\unboldmath$\mathit{I}$} }  }_{  \Lambda\!  \,  \algeffseqoverindex{ \beta }{ \text{\unboldmath$\mathit{J}$} }   \ottsym{.}  \ottnt{v} }      [   \algeffseqoverindex{  \ottnt{C}    [   \algeffseqoverindex{  \bot  }{ \text{\unboldmath$\mathit{J}$} }   \ottsym{/}   \algeffseqoverindex{ \beta }{ \text{\unboldmath$\mathit{J}$} }   ]   }{ \text{\unboldmath$\mathit{I}$} }   \ottsym{/}   \algeffseqoverindex{ \alpha }{ \text{\unboldmath$\mathit{I}$} }   ]      [   \ottnt{v}    [   \algeffseqoverindex{  \bot  }{ \text{\unboldmath$\mathit{J}$} }   \ottsym{/}   \algeffseqoverindex{ \beta }{ \text{\unboldmath$\mathit{J}$} }   ]    \ottsym{/}  \mathit{x}  ]  $,
          \item $\Delta  \ottsym{;}   \mathsf{none}  \,   \vdash  \mathsf{handle} \,  \textup{\texttt{\#}\relax}  \mathsf{op}   \ottsym{(}    \algeffseqoverindex{  \text{\unboldmath$\forall$}  \,  \algeffseqoverindex{ \beta }{ \text{\unboldmath$\mathit{J}$} }   \ottsym{.}  \ottnt{C} }{ \text{\unboldmath$\mathit{I}$} }    \ottsym{,}    \Lambda\!  \,  \algeffseqoverindex{ \beta }{ \text{\unboldmath$\mathit{J}$} }   \ottsym{.}  \ottnt{v}   \ottsym{,}    \ottnt{E} ^{  \algeffseqoverindex{ \beta }{ \text{\unboldmath$\mathit{J}$} }  }    \ottsym{)}  \, \mathsf{with} \, \ottnt{h} \,   \ottsym{:}  \ottnt{A} \,  |  \, \epsilon$,
          \item $\Delta  \ottsym{;}   \mathsf{none}  \,   \vdash  \ottnt{h}  \ottsym{:}  \ottnt{B} \,  |  \, \epsilon'  \Rightarrow  \ottnt{A} \,  |  \, \epsilon$,
          \item $\Delta  \ottsym{;}   \mathsf{none}  \,   \vdash   \textup{\texttt{\#}\relax}  \mathsf{op}   \ottsym{(}    \algeffseqoverindex{  \text{\unboldmath$\forall$}  \,  \algeffseqoverindex{ \beta }{ \text{\unboldmath$\mathit{J}$} }   \ottsym{.}  \ottnt{C} }{ \text{\unboldmath$\mathit{I}$} }    \ottsym{,}    \Lambda\!  \,  \algeffseqoverindex{ \beta }{ \text{\unboldmath$\mathit{J}$} }   \ottsym{.}  \ottnt{v}   \ottsym{,}    \ottnt{E} ^{  \algeffseqoverindex{ \beta }{ \text{\unboldmath$\mathit{J}$} }  }    \ottsym{)}  \,   \ottsym{:}  \ottnt{B} \,  |  \, \epsilon'$,
          \item $\mathit{ty} \, \ottsym{(}  \mathsf{op}  \ottsym{)} \,  =  \,   \text{\unboldmath$\forall$}     \algeffseqoverindex{ \alpha }{ \text{\unboldmath$\mathit{I}$} }   .  \ottnt{A'}  \hookrightarrow  \ottnt{B'} $,
          \item $\Delta  \ottsym{,}   \algeffseqoverindex{ \alpha }{ \text{\unboldmath$\mathit{I}$} }   \ottsym{,}  \mathit{x} \,  \mathord{:}  \, \ottnt{A'}  \ottsym{;}  \ottsym{(}   \algeffseqoverindex{ \alpha }{ \text{\unboldmath$\mathit{I}$} }   \ottsym{,}  \ottnt{A'}  \ottsym{,}   \ottnt{B'}   \rightarrow  \!  \epsilon  \;  \ottnt{A}   \ottsym{)} \,   \vdash  \ottnt{e} \,   \ottsym{:}  \ottnt{A} \,  |  \, \epsilon$,
          \item $\mathsf{op} \,  \in  \, \epsilon''$,
          \item $\epsilon'' \,  \subseteq  \, \epsilon'$,
          \item $\Delta  \vdash   \algeffseqoverindex{  \text{\unboldmath$\forall$}  \,  \algeffseqoverindex{ \beta }{ \text{\unboldmath$\mathit{J}$} }   \ottsym{.}  \ottnt{C} }{ \text{\unboldmath$\mathit{I}$} } $,
          \item $\Delta  \ottsym{,}   \algeffseqoverindex{ \beta }{ \text{\unboldmath$\mathit{J}$} }   \ottsym{;}   \mathsf{none}  \,   \vdash  \ottnt{v} \,   \ottsym{:}   \ottnt{A'}    [   \algeffseqoverindex{ \ottnt{C} }{ \text{\unboldmath$\mathit{I}$} }   \ottsym{/}   \algeffseqoverindex{ \alpha }{ \text{\unboldmath$\mathit{I}$} }   ]   \,  |  \, \epsilon''$, and
          \item $ \Delta   \vdash    \ottnt{E} ^{  \algeffseqoverindex{ \beta }{ \text{\unboldmath$\mathit{J}$} }  }    \ottsym{:}     \text{\unboldmath$\forall$}  \,  \algeffseqoverindex{ \beta }{ \text{\unboldmath$\mathit{J}$} }   \ottsym{.}   \ottnt{B'}    [   \algeffseqoverindex{ \ottnt{C} }{ \text{\unboldmath$\mathit{I}$} }   \ottsym{/}   \algeffseqoverindex{ \alpha }{ \text{\unboldmath$\mathit{I}$} }   ]    \multimap  \ottnt{B}   \,  |  \, \epsilon'' $.
         \end{itemize}
         Since $ \Delta   \vdash    \ottnt{E} ^{  \algeffseqoverindex{ \beta }{ \text{\unboldmath$\mathit{J}$} }  }    \ottsym{:}     \text{\unboldmath$\forall$}  \,  \algeffseqoverindex{ \beta }{ \text{\unboldmath$\mathit{J}$} }   \ottsym{.}   \ottnt{B'}    [   \algeffseqoverindex{ \ottnt{C} }{ \text{\unboldmath$\mathit{I}$} }   \ottsym{/}   \algeffseqoverindex{ \alpha }{ \text{\unboldmath$\mathit{I}$} }   ]    \multimap  \ottnt{B}   \,  |  \, \epsilon'' $ and
         $\Delta  \ottsym{;}   \mathsf{none}  \,   \vdash  \ottnt{h}  \ottsym{:}  \ottnt{B} \,  |  \, \epsilon'  \Rightarrow  \ottnt{A} \,  |  \, \epsilon$,
         we have
         \begin{equation}
           \Delta  \ottsym{,}   \algeffseqoverindex{ \alpha }{ \text{\unboldmath$\mathit{I}$} }   \ottsym{,}  \mathit{x} \,  \mathord{:}  \, \ottnt{A'}   \vdash   \mathsf{handle} \,  \ottnt{E} ^{  \algeffseqoverindex{ \beta }{ \text{\unboldmath$\mathit{J}$} }  }  \, \mathsf{with} \, \ottnt{h}   \ottsym{:}     \text{\unboldmath$\forall$}  \,  \algeffseqoverindex{ \beta }{ \text{\unboldmath$\mathit{J}$} }   \ottsym{.}   \ottnt{B'}    [   \algeffseqoverindex{ \ottnt{C} }{ \text{\unboldmath$\mathit{I}$} }   \ottsym{/}   \algeffseqoverindex{ \alpha }{ \text{\unboldmath$\mathit{I}$} }   ]    \multimap  \ottnt{A}   \,  |  \, \epsilon 
          \label{eqn:subject-red:handle:one}
         \end{equation}
         by \TE{Weak}, \TE{Handle}, and weakening (Lemmas~\ref{lem:typing-context-wf} and \ref{lem:weakening}).
         Since $\Delta  \ottsym{,}   \algeffseqoverindex{ \beta }{ \text{\unboldmath$\mathit{J}$} }   \ottsym{;}   \mathsf{none}  \,   \vdash  \ottnt{v} \,   \ottsym{:}   \ottnt{A'}    [   \algeffseqoverindex{ \ottnt{C} }{ \text{\unboldmath$\mathit{I}$} }   \ottsym{/}   \algeffseqoverindex{ \alpha }{ \text{\unboldmath$\mathit{I}$} }   ]   \,  |  \, \epsilon''$, we have
         \begin{equation}
          \Delta  \ottsym{,}   \algeffseqoverindex{ \alpha }{ \text{\unboldmath$\mathit{I}$} }   \ottsym{,}  \mathit{x} \,  \mathord{:}  \, \ottnt{A'}  \ottsym{,}   \algeffseqoverindex{ \beta }{ \text{\unboldmath$\mathit{J}$} }   \vdash  \ottnt{v}  \ottsym{:}   \ottnt{A'}    [   \algeffseqoverindex{ \ottnt{C} }{ \text{\unboldmath$\mathit{I}$} }   \ottsym{/}   \algeffseqoverindex{ \alpha }{ \text{\unboldmath$\mathit{I}$} }   ]  
          \label{eqn:subject-red:handle:two}
         \end{equation}
         by weakening.
         Since $\Delta  \vdash   \algeffseqoverindex{  \text{\unboldmath$\forall$}  \,  \algeffseqoverindex{ \beta }{ \text{\unboldmath$\mathit{J}$} }   \ottsym{.}  \ottnt{C} }{ \text{\unboldmath$\mathit{I}$} } $, we have
         \begin{equation}
          \Delta  \ottsym{,}   \algeffseqoverindex{ \alpha }{ \text{\unboldmath$\mathit{I}$} }   \ottsym{,}  \mathit{x} \,  \mathord{:}  \, \ottnt{A'}  \vdash   \algeffseqoverindex{  \text{\unboldmath$\forall$}  \,  \algeffseqoverindex{ \beta }{ \text{\unboldmath$\mathit{J}$} }   \ottsym{.}  \ottnt{C} }{ \text{\unboldmath$\mathit{I}$} } .
          \label{eqn:subject-red:handle:three}
         \end{equation}
         Since $\Delta  \ottsym{,}   \algeffseqoverindex{ \alpha }{ \text{\unboldmath$\mathit{I}$} }   \ottsym{,}  \mathit{x} \,  \mathord{:}  \, \ottnt{A'}  \ottsym{;}  \ottsym{(}   \algeffseqoverindex{ \alpha }{ \text{\unboldmath$\mathit{I}$} }   \ottsym{,}  \ottnt{A'}  \ottsym{,}   \ottnt{B'}   \rightarrow  \!  \epsilon  \;  \ottnt{A}   \ottsym{)} \,   \vdash  \ottnt{e} \,   \ottsym{:}  \ottnt{A} \,  |  \, \epsilon$,
         we have
         \[
          \Delta  \ottsym{,}   \algeffseqoverindex{ \alpha }{ \text{\unboldmath$\mathit{I}$} }   \ottsym{,}  \mathit{x} \,  \mathord{:}  \, \ottnt{A'}  \ottsym{;}   \mathsf{none}  \,   \vdash   \ottnt{e}    [  \mathsf{handle} \,  \ottnt{E} ^{  \algeffseqoverindex{ \beta }{ \text{\unboldmath$\mathit{J}$} }  }  \, \mathsf{with} \, \ottnt{h}  /  \mathsf{resume}  ]^{  \algeffseqoverindex{  \text{\unboldmath$\forall$}  \,  \algeffseqoverindex{ \beta }{ \text{\unboldmath$\mathit{J}$} }   \ottsym{.}  \ottnt{C} }{ \text{\unboldmath$\mathit{I}$} }  }_{  \Lambda\!  \,  \algeffseqoverindex{ \beta }{ \text{\unboldmath$\mathit{J}$} }   \ottsym{.}  \ottnt{v} }   \,   \ottsym{:}  \ottnt{A} \,  |  \, \epsilon
         \]
         by \reflem{cont-subst} with
         (\ref{eqn:subject-red:handle:one}),
         (\ref{eqn:subject-red:handle:two}), and
         (\ref{eqn:subject-red:handle:three}).
         Since $\Delta  \vdash   \algeffseqoverindex{  \text{\unboldmath$\forall$}  \,  \algeffseqoverindex{ \beta }{ \text{\unboldmath$\mathit{J}$} }   \ottsym{.}  \ottnt{C} }{ \text{\unboldmath$\mathit{I}$} } $,
         we have $\Delta  \vdash   \algeffseqoverindex{  \ottnt{C}    [   \algeffseqoverindex{  \bot  }{ \text{\unboldmath$\mathit{J}$} }   \ottsym{/}   \algeffseqoverindex{ \beta }{ \text{\unboldmath$\mathit{J}$} }   ]   }{ \text{\unboldmath$\mathit{I}$} } $.
         Thus,
         \[\begin{array}{l}
          \Delta  \ottsym{,}  \mathit{x} \,  \mathord{:}  \, \ottnt{A'} \,  [   \algeffseqoverindex{  \ottnt{C}    [   \algeffseqoverindex{  \bot  }{ \text{\unboldmath$\mathit{J}$} }   \ottsym{/}   \algeffseqoverindex{ \beta }{ \text{\unboldmath$\mathit{J}$} }   ]   }{ \text{\unboldmath$\mathit{I}$} }   \ottsym{/}   \algeffseqoverindex{ \alpha }{ \text{\unboldmath$\mathit{I}$} }   ]   \ottsym{;}   \mathsf{none}  \,   \\     \quad    \vdash    \ottnt{e}    [  \mathsf{handle} \,  \ottnt{E} ^{  \algeffseqoverindex{ \beta }{ \text{\unboldmath$\mathit{J}$} }  }  \, \mathsf{with} \, \ottnt{h}  /  \mathsf{resume}  ]^{  \algeffseqoverindex{  \text{\unboldmath$\forall$}  \,  \algeffseqoverindex{ \beta }{ \text{\unboldmath$\mathit{J}$} }   \ottsym{.}  \ottnt{C} }{ \text{\unboldmath$\mathit{I}$} }  }_{  \Lambda\!  \,  \algeffseqoverindex{ \beta }{ \text{\unboldmath$\mathit{J}$} }   \ottsym{.}  \ottnt{v} }      [   \algeffseqoverindex{  \ottnt{C}    [   \algeffseqoverindex{  \bot  }{ \text{\unboldmath$\mathit{J}$} }   \ottsym{/}   \algeffseqoverindex{ \beta }{ \text{\unboldmath$\mathit{J}$} }   ]   }{ \text{\unboldmath$\mathit{I}$} }   \ottsym{/}   \algeffseqoverindex{ \alpha }{ \text{\unboldmath$\mathit{I}$} }   ]   \,   \ottsym{:}  \ottnt{A} \,  |  \, \epsilon
         \end{array}\]
         by \reflem{ty-subst},
         where note that $ \ottnt{A}    [   \algeffseqoverindex{  \ottnt{C}    [   \algeffseqoverindex{  \bot  }{ \text{\unboldmath$\mathit{J}$} }   \ottsym{/}   \algeffseqoverindex{ \beta }{ \text{\unboldmath$\mathit{J}$} }   ]   }{ \text{\unboldmath$\mathit{I}$} }   \ottsym{/}   \algeffseqoverindex{ \alpha }{ \text{\unboldmath$\mathit{I}$} }   ]   \,  =  \, \ottnt{A}$ because $\Delta  \vdash  \ottnt{A}$ by \reflem{type-wf}.
         Since $\Delta  \ottsym{,}   \algeffseqoverindex{ \beta }{ \text{\unboldmath$\mathit{J}$} }   \ottsym{;}   \mathsf{none}  \,   \vdash  \ottnt{v} \,   \ottsym{:}   \ottnt{A'}    [   \algeffseqoverindex{ \ottnt{C} }{ \text{\unboldmath$\mathit{I}$} }   \ottsym{/}   \algeffseqoverindex{ \alpha }{ \text{\unboldmath$\mathit{I}$} }   ]   \,  |  \, \epsilon''$, we have
         \[
          \Delta  \vdash   \ottnt{v}    [   \algeffseqoverindex{  \bot  }{ \text{\unboldmath$\mathit{J}$} }   \ottsym{/}   \algeffseqoverindex{ \beta }{ \text{\unboldmath$\mathit{J}$} }   ]    \ottsym{:}   \ottnt{A'}    [   \algeffseqoverindex{  \ottnt{C}    [   \algeffseqoverindex{  \bot  }{ \text{\unboldmath$\mathit{J}$} }   \ottsym{/}   \algeffseqoverindex{ \beta }{ \text{\unboldmath$\mathit{J}$} }   ]   }{ \text{\unboldmath$\mathit{I}$} }   \ottsym{/}   \algeffseqoverindex{ \alpha }{ \text{\unboldmath$\mathit{I}$} }   ]  
         \]
         by \reflem{ty-subst}, where note that $ \algeffseqoverindex{ \beta }{ \text{\unboldmath$\mathit{J}$} } $ do not occur free in $\ottnt{A'}$
         since $\ottnt{A'}$ is the argument type of $\mathsf{op}$.
         By \reflem{val-subst},
         \[\begin{array}{l}
          \Delta  \ottsym{;}   \mathsf{none}  \,   \\     \quad    \vdash     \ottnt{e}    [  \mathsf{handle} \,  \ottnt{E} ^{  \algeffseqoverindex{ \beta }{ \text{\unboldmath$\mathit{J}$} }  }  \, \mathsf{with} \, \ottnt{h}  /  \mathsf{resume}  ]^{  \algeffseqoverindex{  \text{\unboldmath$\forall$}  \,  \algeffseqoverindex{ \beta }{ \text{\unboldmath$\mathit{J}$} }   \ottsym{.}  \ottnt{C} }{ \text{\unboldmath$\mathit{I}$} }  }_{  \Lambda\!  \,  \algeffseqoverindex{ \beta }{ \text{\unboldmath$\mathit{J}$} }   \ottsym{.}  \ottnt{v} }      [   \algeffseqoverindex{  \ottnt{C}    [   \algeffseqoverindex{  \bot  }{ \text{\unboldmath$\mathit{J}$} }   \ottsym{/}   \algeffseqoverindex{ \beta }{ \text{\unboldmath$\mathit{J}$} }   ]   }{ \text{\unboldmath$\mathit{I}$} }   \ottsym{/}   \algeffseqoverindex{ \alpha }{ \text{\unboldmath$\mathit{I}$} }   ]   \, \ottnt{v}    [   \algeffseqoverindex{  \bot  }{ \text{\unboldmath$\mathit{J}$} }   \ottsym{/}   \algeffseqoverindex{ \beta }{ \text{\unboldmath$\mathit{J}$} }   ]   \,   \ottsym{:}  \ottnt{A} \,  |  \, \epsilon,
         \end{array}\]
         which is what we have to show.

         \case \R{OpHandle}:
         By \reflem{cont-inversion}, we are given
         \begin{itemize}
          \item $\ottnt{e_{{\mathrm{1}}}} \,  =  \, \mathsf{handle} \,  \textup{\texttt{\#}\relax}  \mathsf{op}   \ottsym{(}    \algeffseqoverindex{  \text{\unboldmath$\forall$}  \,  \algeffseqoverindex{ \beta }{ \text{\unboldmath$\mathit{J}$} }   \ottsym{.}  \ottnt{C} }{ \text{\unboldmath$\mathit{I}$} }    \ottsym{,}    \Lambda\!  \,  \algeffseqoverindex{ \beta }{ \text{\unboldmath$\mathit{J}$} }   \ottsym{.}  \ottnt{v}   \ottsym{,}    \ottnt{E} ^{  \algeffseqoverindex{ \beta }{ \text{\unboldmath$\mathit{J}$} }  }    \ottsym{)}  \, \mathsf{with} \, \ottnt{h}$,
          \item $\ottnt{e_{{\mathrm{2}}}} \,  =  \,  \textup{\texttt{\#}\relax}  \mathsf{op}   \ottsym{(}    \algeffseqoverindex{  \text{\unboldmath$\forall$}  \,  \algeffseqoverindex{ \beta }{ \text{\unboldmath$\mathit{J}$} }   \ottsym{.}  \ottnt{C} }{ \text{\unboldmath$\mathit{I}$} }    \ottsym{,}    \Lambda\!  \,  \algeffseqoverindex{ \beta }{ \text{\unboldmath$\mathit{J}$} }   \ottsym{.}  \ottnt{v}   \ottsym{,}   \mathsf{handle} \,  \ottnt{E} ^{  \algeffseqoverindex{ \beta }{ \text{\unboldmath$\mathit{J}$} }  }  \, \mathsf{with} \, \ottnt{h}   \ottsym{)} $,
          \item $\mathsf{op} \,  \not\in  \,  \mathit{ops}  (  \ottnt{h}  ) $,
          \item $\Delta  \ottsym{;}   \mathsf{none}  \,   \vdash  \mathsf{handle} \,  \textup{\texttt{\#}\relax}  \mathsf{op}   \ottsym{(}    \algeffseqoverindex{  \text{\unboldmath$\forall$}  \,  \algeffseqoverindex{ \beta }{ \text{\unboldmath$\mathit{J}$} }   \ottsym{.}  \ottnt{C} }{ \text{\unboldmath$\mathit{I}$} }    \ottsym{,}    \Lambda\!  \,  \algeffseqoverindex{ \beta }{ \text{\unboldmath$\mathit{J}$} }   \ottsym{.}  \ottnt{v}   \ottsym{,}    \ottnt{E} ^{  \algeffseqoverindex{ \beta }{ \text{\unboldmath$\mathit{J}$} }  }    \ottsym{)}  \, \mathsf{with} \, \ottnt{h} \,   \ottsym{:}  \ottnt{A} \,  |  \, \epsilon$,
          \item $\Delta  \ottsym{;}   \mathsf{none}  \,   \vdash   \textup{\texttt{\#}\relax}  \mathsf{op}   \ottsym{(}    \algeffseqoverindex{  \text{\unboldmath$\forall$}  \,  \algeffseqoverindex{ \beta }{ \text{\unboldmath$\mathit{J}$} }   \ottsym{.}  \ottnt{C} }{ \text{\unboldmath$\mathit{I}$} }    \ottsym{,}    \Lambda\!  \,  \algeffseqoverindex{ \beta }{ \text{\unboldmath$\mathit{J}$} }   \ottsym{.}  \ottnt{v}   \ottsym{,}    \ottnt{E} ^{  \algeffseqoverindex{ \beta }{ \text{\unboldmath$\mathit{J}$} }  }    \ottsym{)}  \,   \ottsym{:}  \ottnt{B} \,  |  \, \epsilon'$,
          \item $\Delta  \ottsym{;}   \mathsf{none}  \,   \vdash  \ottnt{h}  \ottsym{:}  \ottnt{B} \,  |  \, \epsilon'  \Rightarrow  \ottnt{A} \,  |  \, \epsilon$,
          \item $\epsilon'' \,  \subseteq  \, \epsilon'$,
          \item $\mathit{ty} \, \ottsym{(}  \mathsf{op}  \ottsym{)} \,  =  \,   \text{\unboldmath$\forall$}     \algeffseqoverindex{ \alpha }{ \text{\unboldmath$\mathit{I}$} }   .  \ottnt{A'}  \hookrightarrow  \ottnt{B'} $,
          \item $\mathsf{op} \,  \in  \, \epsilon''$,
          \item $\Delta  \vdash   \algeffseqoverindex{  \text{\unboldmath$\forall$}  \,  \algeffseqoverindex{ \beta }{ \text{\unboldmath$\mathit{J}$} }   \ottsym{.}  \ottnt{C} }{ \text{\unboldmath$\mathit{I}$} } $,
          \item $\Delta  \ottsym{,}   \algeffseqoverindex{ \beta }{ \text{\unboldmath$\mathit{J}$} }   \ottsym{;}   \mathsf{none}  \,   \vdash  \ottnt{v} \,   \ottsym{:}   \ottnt{A'}    [   \algeffseqoverindex{ \ottnt{C} }{ \text{\unboldmath$\mathit{I}$} }   \ottsym{/}   \algeffseqoverindex{ \alpha }{ \text{\unboldmath$\mathit{I}$} }   ]   \,  |  \, \epsilon''$, and
          \item $ \Delta   \vdash    \ottnt{E} ^{  \algeffseqoverindex{ \beta }{ \text{\unboldmath$\mathit{J}$} }  }    \ottsym{:}     \text{\unboldmath$\forall$}  \,  \algeffseqoverindex{ \beta }{ \text{\unboldmath$\mathit{J}$} }   \ottsym{.}  \ottsym{(}   \ottnt{B'}    [   \algeffseqoverindex{ \ottnt{C} }{ \text{\unboldmath$\mathit{I}$} }   \ottsym{/}   \algeffseqoverindex{ \alpha }{ \text{\unboldmath$\mathit{I}$} }   ]    \ottsym{)}  \multimap  \ottnt{B}   \,  |  \, \epsilon'' $.
         \end{itemize}
         By \TE{Weak} and \TE{Handle},
         \[
           \Delta   \vdash   \mathsf{handle} \,  \ottnt{E} ^{  \algeffseqoverindex{ \beta }{ \text{\unboldmath$\mathit{J}$} }  }  \, \mathsf{with} \, \ottnt{h}   \ottsym{:}     \text{\unboldmath$\forall$}  \,  \algeffseqoverindex{ \beta }{ \text{\unboldmath$\mathit{J}$} }   \ottsym{.}  \ottsym{(}   \ottnt{B'}    [   \algeffseqoverindex{ \ottnt{C} }{ \text{\unboldmath$\mathit{I}$} }   \ottsym{/}   \algeffseqoverindex{ \alpha }{ \text{\unboldmath$\mathit{I}$} }   ]    \ottsym{)}  \multimap  \ottnt{A}   \,  |  \, \epsilon .
         \]
         Since $\Delta  \ottsym{,}   \algeffseqoverindex{ \beta }{ \text{\unboldmath$\mathit{J}$} }   \ottsym{;}   \mathsf{none}  \,   \vdash  \ottnt{v} \,   \ottsym{:}   \ottnt{A'}    [   \algeffseqoverindex{ \ottnt{C} }{ \text{\unboldmath$\mathit{I}$} }   \ottsym{/}   \algeffseqoverindex{ \alpha }{ \text{\unboldmath$\mathit{I}$} }   ]   \,  |  \, \epsilon''$, we have
         \[
          \Delta  \ottsym{,}   \algeffseqoverindex{ \beta }{ \text{\unboldmath$\mathit{J}$} }   \ottsym{;}   \mathsf{none}  \,   \vdash  \ottnt{v} \,   \ottsym{:}   \ottnt{A'}    [   \algeffseqoverindex{ \ottnt{C} }{ \text{\unboldmath$\mathit{I}$} }   \ottsym{/}   \algeffseqoverindex{ \alpha }{ \text{\unboldmath$\mathit{I}$} }   ]   \,  |  \, \epsilon
         \]
         by \reflem{val-any-eff}.
         Since $\mathsf{op} \,  \in  \, \epsilon'' \subseteq \epsilon'$ and $\Delta  \ottsym{;}   \mathsf{none}  \,   \vdash  \ottnt{h}  \ottsym{:}  \ottnt{B} \,  |  \, \epsilon'  \Rightarrow  \ottnt{A} \,  |  \, \epsilon$ and $\mathsf{op} \,  \not\in  \,  \mathit{ops}  (  \ottnt{h}  ) $,
         we have $\mathsf{op} \,  \in  \, \epsilon$ by \reflem{handler-op-inheritance}.
         Thus, we finish by \T{OpCont}.
         \end{caseanalysis}

         \case \T{Resume}: Contradictory.
         \case \T{Let}:
         We have two reduction rules which can be applied to let expressions.
         \begin{caseanalysis}
          \case \R{Let}:
          We are given
          \begin{itemize}
           \item $\ottnt{e_{{\mathrm{1}}}} \,  =  \, \mathsf{let} \, \mathit{x}  \ottsym{=}   \Lambda\!  \,  \algeffseqover{ \alpha }   \ottsym{.}  \ottnt{v} \,  \mathsf{in}  \, \ottnt{e}$,
           \item $\ottnt{e_{{\mathrm{2}}}} \,  =  \,  \ottnt{e}    [   \Lambda\!  \,  \algeffseqover{ \alpha }   \ottsym{.}  \ottnt{v}  \ottsym{/}  \mathit{x}  ]  $,
           \item $\Delta  \ottsym{;}   \mathsf{none}  \,   \vdash  \mathsf{let} \, \mathit{x}  \ottsym{=}   \Lambda\!  \,  \algeffseqover{ \alpha }   \ottsym{.}  \ottnt{v} \,  \mathsf{in}  \, \ottnt{e} \,   \ottsym{:}  \ottnt{A} \,  |  \, \epsilon$,
           \item $\Delta  \ottsym{,}   \algeffseqover{ \alpha }   \ottsym{;}   \mathsf{none}  \,   \vdash  \ottnt{v} \,   \ottsym{:}  \ottnt{B} \,  |  \, \epsilon$, and
           \item $\Delta  \ottsym{,}  \mathit{x} \,  \mathord{:}  \,  \text{\unboldmath$\forall$}  \,  \algeffseqover{ \alpha }   \ottsym{.}  \ottnt{B}  \ottsym{;}   \mathsf{none}  \,   \vdash  \ottnt{e} \,   \ottsym{:}  \ottnt{A} \,  |  \, \epsilon$.
          \end{itemize}
          We have the conclusion by \reflem{val-subst}.

          \case \R{OpLet}:
          By \reflem{cont-inversion}, we are given
          \begin{itemize}
           \item $\ottnt{e_{{\mathrm{1}}}} \,  =  \, \mathsf{let} \, \mathit{x}  \ottsym{=}   \Lambda\!  \,  \algeffseqoverindex{ \alpha }{ \text{\unboldmath$\mathit{I}$} }   \ottsym{.}   \textup{\texttt{\#}\relax}  \mathsf{op}   \ottsym{(}    \algeffseqoverindex{  \text{\unboldmath$\forall$}  \,  \algeffseqoverindex{ \beta }{ \text{\unboldmath$\mathit{J}$} }   \ottsym{.}  \ottnt{C} }{ \text{\unboldmath$\mathit{I'}$} }    \ottsym{,}    \Lambda\!  \,  \algeffseqoverindex{ \beta }{ \text{\unboldmath$\mathit{J}$} }   \ottsym{.}  \ottnt{v}   \ottsym{,}    \ottnt{E} ^{  \algeffseqoverindex{ \beta }{ \text{\unboldmath$\mathit{J}$} }  }    \ottsym{)}  \,  \mathsf{in}  \, \ottnt{e}$,
           \item $\ottnt{e_{{\mathrm{2}}}} \,  =  \,  \textup{\texttt{\#}\relax}  \mathsf{op}   \ottsym{(}    \algeffseqoverindex{  \text{\unboldmath$\forall$}  \,  \algeffseqoverindex{ \alpha }{ \text{\unboldmath$\mathit{I}$} }   \ottsym{.}   \text{\unboldmath$\forall$}  \,  \algeffseqoverindex{ \beta }{ \text{\unboldmath$\mathit{J}$} }   \ottsym{.}  \ottnt{C} }{ \text{\unboldmath$\mathit{I'}$} }    \ottsym{,}    \Lambda\!  \,  \algeffseqoverindex{ \alpha }{ \text{\unboldmath$\mathit{I}$} }   \ottsym{.}   \Lambda\!  \,  \algeffseqoverindex{ \beta }{ \text{\unboldmath$\mathit{J}$} }   \ottsym{.}  \ottnt{v}   \ottsym{,}   \mathsf{let} \, \mathit{x}  \ottsym{=}   \Lambda\!  \,  \algeffseqoverindex{ \alpha }{ \text{\unboldmath$\mathit{I}$} }   \ottsym{.}   \ottnt{E} ^{  \algeffseqoverindex{ \beta }{ \text{\unboldmath$\mathit{J}$} }  }  \,  \mathsf{in}  \, \ottnt{e}   \ottsym{)} $,
           \item $\Delta  \ottsym{;}   \mathsf{none}  \,   \vdash  \mathsf{let} \, \mathit{x}  \ottsym{=}   \Lambda\!  \,  \algeffseqoverindex{ \alpha }{ \text{\unboldmath$\mathit{I}$} }   \ottsym{.}   \textup{\texttt{\#}\relax}  \mathsf{op}   \ottsym{(}    \algeffseqoverindex{  \text{\unboldmath$\forall$}  \,  \algeffseqoverindex{ \beta }{ \text{\unboldmath$\mathit{J}$} }   \ottsym{.}  \ottnt{C} }{ \text{\unboldmath$\mathit{I'}$} }    \ottsym{,}    \Lambda\!  \,  \algeffseqoverindex{ \beta }{ \text{\unboldmath$\mathit{J}$} }   \ottsym{.}  \ottnt{v}   \ottsym{,}    \ottnt{E} ^{  \algeffseqoverindex{ \beta }{ \text{\unboldmath$\mathit{J}$} }  }    \ottsym{)}  \,  \mathsf{in}  \, \ottnt{e} \,   \ottsym{:}  \ottnt{A} \,  |  \, \epsilon$,
           \item $\Delta  \ottsym{,}   \algeffseqoverindex{ \alpha }{ \text{\unboldmath$\mathit{I}$} }   \ottsym{;}   \mathsf{none}  \,   \vdash   \textup{\texttt{\#}\relax}  \mathsf{op}   \ottsym{(}    \algeffseqoverindex{  \text{\unboldmath$\forall$}  \,  \algeffseqoverindex{ \beta }{ \text{\unboldmath$\mathit{J}$} }   \ottsym{.}  \ottnt{C} }{ \text{\unboldmath$\mathit{I'}$} }    \ottsym{,}    \Lambda\!  \,  \algeffseqoverindex{ \beta }{ \text{\unboldmath$\mathit{J}$} }   \ottsym{.}  \ottnt{v}   \ottsym{,}    \ottnt{E} ^{  \algeffseqoverindex{ \beta }{ \text{\unboldmath$\mathit{J}$} }  }    \ottsym{)}  \,   \ottsym{:}  \ottnt{B} \,  |  \, \epsilon$,
           \item $\Delta  \ottsym{,}  \mathit{x} \,  \mathord{:}  \,  \text{\unboldmath$\forall$}  \,  \algeffseqoverindex{ \alpha }{ \text{\unboldmath$\mathit{I}$} }   \ottsym{.}  \ottnt{B}  \ottsym{;}   \mathsf{none}  \,   \vdash  \ottnt{e} \,   \ottsym{:}  \ottnt{A} \,  |  \, \epsilon$,
           \item $\epsilon' \,  \subseteq  \, \epsilon$,
           \item $\mathit{ty} \, \ottsym{(}  \mathsf{op}  \ottsym{)} \,  =  \,   \text{\unboldmath$\forall$}     \algeffseqoverindex{ \gamma }{ \text{\unboldmath$\mathit{I'}$} }   .  \ottnt{A'}  \hookrightarrow  \ottnt{B'} $,
           \item $\mathsf{op} \,  \in  \, \epsilon'$,
           \item $\Delta  \ottsym{,}   \algeffseqoverindex{ \alpha }{ \text{\unboldmath$\mathit{I}$} }   \vdash   \algeffseqoverindex{  \text{\unboldmath$\forall$}  \,  \algeffseqoverindex{ \beta }{ \text{\unboldmath$\mathit{J}$} }   \ottsym{.}  \ottnt{C} }{ \text{\unboldmath$\mathit{I'}$} } $,
           \item $\Delta  \ottsym{,}   \algeffseqoverindex{ \alpha }{ \text{\unboldmath$\mathit{I}$} }   \ottsym{,}   \algeffseqoverindex{ \beta }{ \text{\unboldmath$\mathit{J}$} }   \ottsym{;}   \mathsf{none}  \,   \vdash  \ottnt{v} \,   \ottsym{:}   \ottnt{A'}    [   \algeffseqoverindex{ \ottnt{C} }{ \text{\unboldmath$\mathit{I'}$} }   \ottsym{/}   \algeffseqoverindex{ \gamma }{ \text{\unboldmath$\mathit{I'}$} }   ]   \,  |  \, \epsilon'$, and
           \item $ \Delta  \ottsym{,}   \algeffseqoverindex{ \alpha }{ \text{\unboldmath$\mathit{I}$} }    \vdash    \ottnt{E} ^{  \algeffseqoverindex{ \beta }{ \text{\unboldmath$\mathit{J}$} }  }    \ottsym{:}     \text{\unboldmath$\forall$}  \,  \algeffseqoverindex{ \beta }{ \text{\unboldmath$\mathit{J}$} }   \ottsym{.}  \ottsym{(}   \ottnt{B'}    [   \algeffseqoverindex{ \ottnt{C} }{ \text{\unboldmath$\mathit{I'}$} }   \ottsym{/}   \algeffseqoverindex{ \gamma }{ \text{\unboldmath$\mathit{I'}$} }   ]    \ottsym{)}  \multimap  \ottnt{B}   \,  |  \, \epsilon' $.
          \end{itemize}
          By \TE{Weak} and \TE{Let},
          \[
            \Delta   \vdash   \mathsf{let} \, \mathit{x}  \ottsym{=}   \Lambda\!  \,  \algeffseqoverindex{ \alpha }{ \text{\unboldmath$\mathit{I}$} }   \ottsym{.}   \ottnt{E} ^{  \algeffseqoverindex{ \beta }{ \text{\unboldmath$\mathit{J}$} }  }  \,  \mathsf{in}  \, \ottnt{e}   \ottsym{:}     \text{\unboldmath$\forall$}  \,  \algeffseqoverindex{ \alpha }{ \text{\unboldmath$\mathit{I}$} }   \ottsym{.}   \text{\unboldmath$\forall$}  \,  \algeffseqoverindex{ \beta }{ \text{\unboldmath$\mathit{J}$} }   \ottsym{.}  \ottsym{(}   \ottnt{B'}    [   \algeffseqoverindex{ \ottnt{C} }{ \text{\unboldmath$\mathit{I'}$} }   \ottsym{/}   \algeffseqoverindex{ \gamma }{ \text{\unboldmath$\mathit{I'}$} }   ]    \ottsym{)}  \multimap  \ottnt{A}   \,  |  \, \epsilon .
          \]
          Since $\Delta  \ottsym{,}   \algeffseqoverindex{ \alpha }{ \text{\unboldmath$\mathit{I}$} }   \vdash   \algeffseqoverindex{  \text{\unboldmath$\forall$}  \,  \algeffseqoverindex{ \beta }{ \text{\unboldmath$\mathit{J}$} }   \ottsym{.}  \ottnt{C} }{ \text{\unboldmath$\mathit{I'}$} } $,
          \[
           \Delta  \vdash   \algeffseqoverindex{  \text{\unboldmath$\forall$}  \,  \algeffseqoverindex{ \alpha }{ \text{\unboldmath$\mathit{I}$} }   \ottsym{.}   \text{\unboldmath$\forall$}  \,  \algeffseqoverindex{ \beta }{ \text{\unboldmath$\mathit{J}$} }   \ottsym{.}  \ottnt{C} }{ \text{\unboldmath$\mathit{I'}$} } .
          \]
          Thus, by \T{OpCont}, we have the conclusion.
         \end{caseanalysis}
        \end{caseanalysis}

  \item Suppose that $\Delta  \ottsym{;}   \mathsf{none}  \,   \vdash  \ottnt{e_{{\mathrm{1}}}} \,   \ottsym{:}  \ottnt{A} \,  |  \, \epsilon$ and $\ottnt{e_{{\mathrm{1}}}}  \longrightarrow  \ottnt{e_{{\mathrm{2}}}}$.
        By definition, there exists some $\ottnt{E}$, $\ottnt{e'_{{\mathrm{1}}}}$, and $\ottnt{e'_{{\mathrm{2}}}}$ such that
        $\ottnt{e_{{\mathrm{1}}}} \,  =  \,  \ottnt{E}  [  \ottnt{e'_{{\mathrm{1}}}}  ] $, $\ottnt{e_{{\mathrm{2}}}} \,  =  \,  \ottnt{E}  [  \ottnt{e'_{{\mathrm{2}}}}  ] $, and $\ottnt{e'_{{\mathrm{1}}}}  \rightsquigarrow  \ottnt{e'_{{\mathrm{2}}}}$.
        By induction on the derivation of $\Delta  \ottsym{;}   \mathsf{none}  \,   \vdash   \ottnt{E}  [  \ottnt{e'_{{\mathrm{1}}}}  ]  \,   \ottsym{:}  \ottnt{A} \,  |  \, \epsilon$.
        If $\ottnt{E} \,  =  \,  [\,] $, then we have the conclusion by the first case.
        In what follows, we suppose that $\ottnt{E} \,  \not=  \,  [\,] $.
        By case analysis on the typing rule applied last to derive
        $\Delta  \ottsym{;}   \mathsf{none}  \,   \vdash   \ottnt{E}  [  \ottnt{e'_{{\mathrm{1}}}}  ]  \,   \ottsym{:}  \ottnt{A} \,  |  \, \epsilon$.
        \begin{caseanalysis}
         \case \T{Var}, \T{Const}, \T{Abs}, \T{OpCont}, and \T{Resume}:
         It is contradictory because $\ottnt{E} \,  =  \,  [\,] $.

         \case \T{App}:
         By case analysis on $\ottnt{E}$.
         \begin{caseanalysis}
          \case $\ottnt{E} \,  =  \, \ottnt{E'} \, \ottnt{e}$:
          We are given
          \begin{itemize}
           \item $\Delta  \ottsym{;}   \mathsf{none}  \,   \vdash   \ottnt{E'}  [  \ottnt{e'_{{\mathrm{1}}}}  ]  \,   \ottsym{:}   \ottnt{B}   \rightarrow  \!  \epsilon'  \;  \ottnt{A}  \,  |  \, \epsilon$,
           \item $\Delta  \ottsym{;}   \mathsf{none}  \,   \vdash  \ottnt{e} \,   \ottsym{:}  \ottnt{B} \,  |  \, \epsilon$, and
           \item $\epsilon' \,  \subseteq  \, \epsilon$
          \end{itemize}
          for some $\ottnt{B}$ and $\epsilon'$.
          By the IH, $\Delta  \ottsym{;}   \mathsf{none}  \,   \vdash   \ottnt{E'}  [  \ottnt{e'_{{\mathrm{2}}}}  ]  \,   \ottsym{:}   \ottnt{B}   \rightarrow  \!  \epsilon'  \;  \ottnt{A}  \,  |  \, \epsilon$.
          By \T{App}, we finish.

          \case $\ottnt{E} \,  =  \, \ottnt{v} \, \ottnt{E'}$: By the IH.
         \end{caseanalysis}

         \case \T{Op}: By the IH.
         \case \T{Weak}: By the IH.
         \case \T{Handle}: By the IH.
         \case \T{Let}: By the IH.
        \end{caseanalysis}
 \end{enumerate}
\end{proof}

\ifrestate
\thmTypeSoundness*
\else
\begin{theorem}[Type Soundness of {\interlang}]{type-sound}
 If $\Delta  \ottsym{;}   \mathsf{none}  \,   \vdash  \ottnt{e} \,   \ottsym{:}  \ottnt{A} \,  |  \, \epsilon$ and $\ottnt{e}  \longrightarrow^{*}  \ottnt{e'}$ and
 $\ottnt{e'}  \centernot\longrightarrow$, then (1) $\ottnt{e'}$ is a value or (2) $\ottnt{e'} \,  =  \,  \textup{\texttt{\#}\relax}  \mathsf{op}   \ottsym{(}    \algeffseqover{ \sigma }    \ottsym{,}   \ottnt{w}   \ottsym{,}   \ottnt{E}   \ottsym{)} $
 for some $\mathsf{op} \,  \in  \, \epsilon$, $ \algeffseqover{ \sigma } $, $\ottnt{w}$, and $\ottnt{E}$.
\end{theorem}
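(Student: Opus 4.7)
The plan is to follow the standard Wright--Felleisen recipe, combining the already-established Progress (\reflem{progress}) and Subject Reduction (\reflem{subject-red}) lemmas. Since both lemmas are stated earlier in the excerpt and all the heavy lifting---especially the continuation substitution lemma needed to justify \R{Handle}---has already been done, the theorem itself should reduce to a short induction.

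More concretely, I would first observe that $\ottnt{e}  \longrightarrow^{*}  \ottnt{e'}$ means there is a finite chain $\ottnt{e} \,  =  \, \ottnt{e_{{\mathrm{0}}}}  \longrightarrow  \ottnt{e_{{\mathrm{1}}}}  \longrightarrow  \cdots  \longrightarrow  \ottnt{e_{\ottmv{n}}} \,  =  \, \ottnt{e'}$. I would then prove by induction on $\ottmv{n}$ that $\Delta  \ottsym{;}   \mathsf{none}  \,   \vdash  \ottnt{e_{\ottmv{n}}} \,   \ottsym{:}  \ottnt{A} \,  |  \, \epsilon$: the base case is the hypothesis, and the inductive step follows from \reflem{subject-red} (the evaluation part) applied to $\ottnt{e_{\ottmv{n}-{\mathrm{1}}}}  \longrightarrow  \ottnt{e_{\ottmv{n}}}$. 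Thus the typing of $\ottnt{e'}$ is preserved under multi-step evaluation.

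Given $\Delta  \ottsym{;}   \mathsf{none}  \,   \vdash  \ottnt{e'} \,   \ottsym{:}  \ottnt{A} \,  |  \, \epsilon$, I would then apply \reflem{progress} to $\ottnt{e'}$, which yields three cases: either (i) $\ottnt{e'}  \longrightarrow  \ottnt{e''}$ for some $\ottnt{e''}$, (ii) $\ottnt{e'}$ is a value, or (iii) $\ottnt{e'} \,  =  \,  \textup{\texttt{\#}\relax}  \mathsf{op}   \ottsym{(}    \algeffseqover{ \sigma }    \ottsym{,}   \ottnt{w}   \ottsym{,}   \ottnt{E}   \ottsym{)} $ with $\mathsf{op} \,  \in  \, \epsilon$. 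Case (i) directly contradicts the hypothesis $\ottnt{e'}  \centernot\longrightarrow$, so it is ruled out; cases (ii) and (iii) are exactly the two disjuncts in the conclusion of the theorem.

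Since all the work has been pushed into Progress and Subject Reduction, I do not anticipate any genuine obstacle in this proof---the only place where one could get tripped up is remembering that Subject Reduction is stated for the top-level evaluation relation $ \longrightarrow $ (not just the reduction $ \rightsquigarrow $), so its use in the induction over the evaluation chain is immediate and does not require re-doing the context decomposition.
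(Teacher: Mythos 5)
Your proposal is correct and matches the paper's proof exactly: the paper likewise applies Subject Reduction (iterated over the multi-step evaluation) to conclude $\Delta  \ottsym{;}   \mathsf{none}  \,   \vdash  \ottnt{e'} \,   \ottsym{:}  \ottnt{A} \,  |  \, \epsilon$ and then invokes Progress, with the first disjunct ruled out by $\ottnt{e'}  \centernot\longrightarrow$. Your version merely makes the induction on the length of the evaluation chain explicit where the paper leaves it implicit.
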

\fi
\begin{proof}
 By \reflem{subject-red}, $\Delta  \ottsym{;}   \mathsf{none}  \,   \vdash  \ottnt{e'} \,   \ottsym{:}  \ottnt{A} \,  |  \, \epsilon$.
 We have the conclusion by \reflem{progress}.
\end{proof}

\subsection{Elaboration is type-preserving}

\begin{defn}
 Elaboration $ \Gamma  \mathrel{\algefftransarrow{ \ottnt{S} } }  \Gamma' $ of $\Gamma$ to $\Gamma'$ with $\ottnt{S}$
 is the least relation that satisfies the following rules.
 \begin{center}
  $\ottdruleElabGXXEmpty{}$ \hfil
  $\ottdruleElabGXXVar{}$ \\[1ex]
  $\ottdruleElabGXXTyVar{}$
 \end{center}
\end{defn}

\begin{defn}
 Elaboration $ \ottnt{R}  \mathrel{\algefftransarrow{} }  \ottnt{r} $ of $\ottnt{R}$ to $\ottnt{r}$ is defined as follows.
 \[\begin{array}{c}
    \mathsf{none}   \mathrel{\algefftransarrow{} }   \mathsf{none}   \qquad
   \ottsym{(}   \algeffseqover{ \alpha }   \ottsym{,}  \mathit{x} \,  \mathord{:}  \, \ottnt{A}  \ottsym{,}   \ottnt{B}   \rightarrow  \!  \epsilon  \;  \ottnt{C}   \ottsym{)}  \mathrel{\algefftransarrow{} }  \ottsym{(}   \algeffseqover{ \alpha }   \ottsym{,}  \ottnt{A}  \ottsym{,}   \ottnt{B}   \rightarrow  \!  \epsilon  \;  \ottnt{C}   \ottsym{)} 
   \end{array}\]
\end{defn}

\begin{lemma}{trans-var-preserving}
 If $ \Gamma  \mathrel{\algefftransarrow{ \ottnt{S} } }  \Gamma' $, then, for any $\mathit{x} \,  \mathord{:}  \, \sigma \,  \in  \, \Gamma$,
 $\ottnt{S}  \ottsym{(}  \mathit{x}  \ottsym{)} \,  \mathord{:}  \, \sigma \,  \in  \, \Gamma'$.
\end{lemma}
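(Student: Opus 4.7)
The plan is to prove this by structural induction on the derivation of $ \Gamma  \mathrel{\algefftransarrow{ \ottnt{S} } }  \Gamma' $. The statement is essentially a straightforward bookkeeping property of the elaboration of typing contexts, which renames term variables according to $\ottnt{S}$ while preserving their associated type schemes. Inspecting the three rules that define $ \Gamma  \mathrel{\algefftransarrow{ \ottnt{S} } }  \Gamma' $ (namely \ElabG{Empty}, \ElabG{Var}, and \ElabG{TyVar}), I would carry out a three-way case analysis on the final rule used.

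In the base case \ElabG{Empty}, we have $\Gamma =  \emptyset $, so the hypothesis $\mathit{x} \,  \mathord{:}  \, \sigma \,  \in  \, \Gamma$ is vacuous and there is nothing to prove. In the case \ElabG{TyVar}, the derivation concludes $ \Gamma_{{\mathrm{0}}}  \ottsym{,}  \alpha  \mathrel{\algefftransarrow{ \ottnt{S} } }  \Gamma'_{{\mathrm{0}}}  \ottsym{,}  \alpha $ from $ \Gamma_{{\mathrm{0}}}  \mathrel{\algefftransarrow{ \ottnt{S} } }  \Gamma'_{{\mathrm{0}}} $; any term binding $\mathit{x} \,  \mathord{:}  \, \sigma \,  \in  \, \Gamma_{{\mathrm{0}}}  \ottsym{,}  \alpha$ must actually sit in $\Gamma_{{\mathrm{0}}}$, so the induction hypothesis gives $\ottnt{S}  \ottsym{(}  \mathit{x}  \ottsym{)} \,  \mathord{:}  \, \sigma \,  \in  \, \Gamma'_{{\mathrm{0}}}$ and hence in $\Gamma'_{{\mathrm{0}}}  \ottsym{,}  \alpha$.

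The interesting, though still routine, case is \ElabG{Var}, where the derivation concludes $ \Gamma_{{\mathrm{0}}}  \ottsym{,}  \mathit{y} \,  \mathord{:}  \, \sigma'  \mathrel{\algefftransarrow{ \ottnt{S} } }  \Gamma'_{{\mathrm{0}}}  \ottsym{,}  \ottnt{S}  \ottsym{(}  \mathit{y}  \ottsym{)} \,  \mathord{:}  \, \sigma' $ from $ \Gamma_{{\mathrm{0}}}  \mathrel{\algefftransarrow{ \ottnt{S} } }  \Gamma'_{{\mathrm{0}}} $. Given $\mathit{x} \,  \mathord{:}  \, \sigma \,  \in  \, \Gamma_{{\mathrm{0}}}  \ottsym{,}  \mathit{y} \,  \mathord{:}  \, \sigma'$, I split on whether $\mathit{x} \,  =  \, \mathit{y}$: if so, then $\sigma \,  =  \, \sigma'$ and $\ottnt{S}  \ottsym{(}  \mathit{x}  \ottsym{)} \,  \mathord{:}  \, \sigma \,  \in  \, \Gamma'_{{\mathrm{0}}}  \ottsym{,}  \ottnt{S}  \ottsym{(}  \mathit{y}  \ottsym{)} \,  \mathord{:}  \, \sigma'$ directly by the construction of $\Gamma'$; otherwise $\mathit{x} \,  \mathord{:}  \, \sigma \,  \in  \, \Gamma_{{\mathrm{0}}}$, the induction hypothesis yields $\ottnt{S}  \ottsym{(}  \mathit{x}  \ottsym{)} \,  \mathord{:}  \, \sigma \,  \in  \, \Gamma'_{{\mathrm{0}}}$, and the conclusion follows from $\Gamma'_{{\mathrm{0}}} \subseteq \Gamma'_{{\mathrm{0}}}  \ottsym{,}  \ottnt{S}  \ottsym{(}  \mathit{y}  \ottsym{)} \,  \mathord{:}  \, \sigma'$.

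I do not anticipate any real obstacle: no substitution or renaming interacts subtly with $\sigma$ here (elaboration leaves the type scheme unchanged and only renames the term variable), and there are no freshness side conditions on $\ottnt{S}$ that need to be discharged. The only mildly delicate point is that \ElabG{Var} does not require $\mathit{y} \,  \not\in  \,  \mathit{dom}  (  \Gamma_{{\mathrm{0}}}  ) $ explicitly in its rule, so in principle $\Gamma$ could contain duplicate bindings for the same variable; but this does not break the statement since we only need the existence of \emph{some} binding $\ottnt{S}  \ottsym{(}  \mathit{x}  \ottsym{)} \,  \mathord{:}  \, \sigma$ in $\Gamma'$, and the case split above produces one. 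Consequently the proof should be a few lines of routine induction.
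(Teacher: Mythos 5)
Your proof is correct and follows exactly the route the paper takes: the paper's entire proof is ``By induction on the derivation of $ \Gamma  \mathrel{\algefftransarrow{ \ottnt{S} } }  \Gamma' $,'' and your three-way case analysis on \ElabG{Empty}, \ElabG{Var}, and \ElabG{TyVar} is the intended elaboration of that one line. No gaps.
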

\begin{proof}
 By induction on the derivation of $ \Gamma  \mathrel{\algefftransarrow{ \ottnt{S} } }  \Gamma' $.
\end{proof}

\begin{lemma}{trans-tyvar-preserving}
 If $ \Gamma  \mathrel{\algefftransarrow{ \ottnt{S} } }  \Gamma' $, then, for any $\alpha$,
 $\alpha \,  \in  \, \Gamma$ if and only if $\alpha \,  \in  \, \Gamma'$.
\end{lemma}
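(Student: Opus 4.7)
The plan is to proceed by straightforward induction on the derivation of $ \Gamma  \mathrel{\algefftransarrow{ \ottnt{S} } }  \Gamma' $, using the three rules \textsc{ElabG\_Empty}, \textsc{ElabG\_Var}, and \textsc{ElabG\_TyVar}. There is no subtlety here: the elaboration on typing contexts only renames term variables (via $\ottnt{S}$) and leaves type variable bindings untouched, so the biconditional $\alpha \in \Gamma \iff \alpha \in \Gamma'$ should fall out mechanically at each step.

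First, I would dispatch the base case \textsc{ElabG\_Empty}, where $\Gamma =  \emptyset  = \Gamma'$; then $\alpha \notin  \emptyset $ and the statement holds vacuously in both directions. Next, for \textsc{ElabG\_Var}, the derivation concludes $ \Gamma_0  \ottsym{,}  \mathit{x} \,  \mathord{:}  \, \sigma  \mathrel{\algefftransarrow{ \ottnt{S} } }  \Gamma'_0  \ottsym{,}  \ottnt{S}  \ottsym{(}  \mathit{x}  \ottsym{)} \,  \mathord{:}  \, \sigma $ from $ \Gamma_0  \mathrel{\algefftransarrow{ \ottnt{S} } }  \Gamma'_0 $. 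Since extending with a term variable binding does not introduce or remove any type variable on either side, we have $\alpha \in \Gamma_0  \ottsym{,}  \mathit{x} \,  \mathord{:}  \, \sigma$ iff $\alpha \in \Gamma_0$, and similarly on the right, so the induction hypothesis applied to $ \Gamma_0  \mathrel{\algefftransarrow{ \ottnt{S} } }  \Gamma'_0 $ immediately gives the conclusion. Finally, for \textsc{ElabG\_TyVar}, the derivation concludes $ \Gamma_0  \ottsym{,}  \beta  \mathrel{\algefftransarrow{ \ottnt{S} } }  \Gamma'_0  \ottsym{,}  \beta $; here $\alpha \in \Gamma_0  \ottsym{,}  \beta$ iff ($\alpha = \beta$ or $\alpha \in \Gamma_0$), and the same disjunction characterizes $\alpha \in \Gamma'_0  \ottsym{,}  \beta$, so again the IH closes the case.

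There is no real obstacle: the lemma is the type-variable analogue of \reflem{trans-var-preserving} and the proof is a one-page mechanical induction. The only thing to be slightly careful about is using the convention that $\alpha \,  \in  \, \Gamma$ refers to type-variable membership (as opposed to general context membership), which is exactly the relation built up clause-by-clause in the definition of typing contexts; once that is fixed, every case reduces to reading off the appropriate rule and invoking the IH.
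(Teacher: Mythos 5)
Your proof is correct and takes exactly the approach the paper does: the paper's own proof is just ``By induction on the derivation of $ \Gamma  \mathrel{\algefftransarrow{ \ottnt{S} } }  \Gamma' $,'' and your case analysis over \textsc{ElabG\_Empty}, \textsc{ElabG\_Var}, and \textsc{ElabG\_TyVar} is the obvious elaboration of that one-liner. Nothing to change.
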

\begin{proof}
 By induction on the derivation of $ \Gamma  \mathrel{\algefftransarrow{ \ottnt{S} } }  \Gamma' $.
\end{proof}

\begin{lemma}{surface-typing-context-wf}
 \begin{enumerate}
  \item If $\Gamma  \ottsym{;}  \ottnt{R}  \vdash  \ottnt{M}  \ottsym{:}  \ottnt{A} \,  |  \, \epsilon$, then $\vdash  \Gamma$.
  \item If $\Gamma  \ottsym{;}  \ottnt{R}  \vdash  \ottnt{H}  \ottsym{:}  \ottnt{A} \,  |  \, \epsilon  \Rightarrow  \ottnt{B} \,  |  \, \epsilon'$, then $\vdash  \Gamma$.
 \end{enumerate}
 \end{lemma}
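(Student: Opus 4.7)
The plan is to proceed by straightforward mutual induction on the two typing derivations, case-analyzing the last rule applied. The argument mirrors the proof of \reflem{typing-context-wf} for {\interlang} but must handle the additional complications coming from $\ottnt{R}$-dependent premises.

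First I would observe the auxiliary fact that well-formedness of typing contexts is closed under taking prefixes: if $\vdash  \Gamma_{{\mathrm{1}}}  \ottsym{,}  \Gamma_{{\mathrm{2}}}$, then $\vdash  \Gamma_{{\mathrm{1}}}$. This follows by a simple induction on $\Gamma_{{\mathrm{2}}}$, inverting \WF{Var} and \WF{TyVar} repeatedly (analogous to \reflem{strengthening-typing-context}). With this in hand, almost every case reduces to applying the IH to a premise and, if needed, stripping off trailing bindings.

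For the base cases \TSrule{Var} and \TSrule{Const}, $\vdash  \Gamma$ appears directly among the premises. For \TSrule{App}, \TSrule{Handle}, \TSrule{Weak}, and \TSrule{Op}, the premises all type subterms under the same context $\Gamma$, so the IH immediately gives $\vdash  \Gamma$. For \TSrule{Abs}, \TSrule{Let}, and \THSrule{Return}, the IH applied to a body premise yields well-formedness of $\Gamma$ extended with some $\mathit{x} \,  \mathord{:}  \, \sigma$ and possibly type variables $ \algeffseqover{ \alpha } $; we then strip these by the prefix lemma to obtain $\vdash  \Gamma$. The handler case \THSrule{Op} is similar: the IH on the handler-typing premise for $\ottnt{H}$ gives $\vdash  \Gamma$ directly.

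The only mildly interesting case is \TSrule{Resume}, whose conclusion has the form $\Gamma_{{\mathrm{1}}}  \ottsym{,}  \mathit{x} \,  \mathord{:}  \, \ottnt{D}  \ottsym{,}  \Gamma_{{\mathrm{2}}}  \ottsym{;}   \ldots \vdash  \mathsf{resume} \, \ottnt{M}  \ottsym{:}  \ldots$, but here $\vdash  \Gamma_{{\mathrm{1}}}  \ottsym{,}  \mathit{x} \,  \mathord{:}  \, \ottnt{D}  \ottsym{,}  \Gamma_{{\mathrm{2}}}$ is itself listed as a premise, so there is nothing to prove. Since no case requires deeper reasoning than an appeal to the IH plus trivial inversion, I do not anticipate a real obstacle; the lemma is a mechanical well-formedness check of the typing rules. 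The only thing to be careful about is making sure that every typing rule in \reffig{surface-typing} either has $\vdash  \Gamma$ as a premise (for leaves) or recurses on a judgment whose context extends $\Gamma$ by declarations that can be peeled off by the prefix lemma — a quick inspection of \reffig{surface-typing} confirms this.
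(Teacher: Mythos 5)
Your proof is correct and matches the paper's approach: the paper dispatches this lemma with ``straightforward by induction on the typing derivations,'' and your case analysis (base cases with $\vdash  \Gamma$ as an explicit premise, IH plus a prefix-stripping lemma for the binding rules, and the \TSrule{Resume} case being immediate from its premise) is exactly the routine argument being elided. The auxiliary prefix-closure fact you identify is the only ingredient the paper leaves implicit, and your justification of it by induction on $\Gamma_{{\mathrm{2}}}$ is sound.
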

 \begin{proof}
 Straightforward by induction on the typing derivations.
\end{proof}

\begin{lemma}{trans-var-strengthening}
 If $ \Gamma_{{\mathrm{1}}}  \ottsym{,}  \mathit{x} \,  \mathord{:}  \, \sigma  \ottsym{,}  \Gamma_{{\mathrm{2}}}  \mathrel{\algefftransarrow{ \ottnt{S} } }  \Gamma' $,
 then $\Gamma' \,  =  \, \Gamma'_{{\mathrm{1}}}  \ottsym{,}  \ottnt{S}  \ottsym{(}  \mathit{x}  \ottsym{)} \,  \mathord{:}  \, \sigma  \ottsym{,}  \Gamma'_{{\mathrm{2}}}$ and
 $ \Gamma_{{\mathrm{1}}}  \ottsym{,}  \Gamma_{{\mathrm{2}}}  \mathrel{\algefftransarrow{ \ottnt{S} } }  \Gamma'_{{\mathrm{1}}}  \ottsym{,}  \Gamma'_{{\mathrm{2}}} $ for some $\Gamma'_{{\mathrm{1}}}$ and $\Gamma'_{{\mathrm{2}}}$.
\end{lemma}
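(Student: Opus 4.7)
The plan is to prove this by induction on the structure of $\Gamma_2$ (equivalently, on the derivation of the elaboration judgment $ \Gamma_{{\mathrm{1}}}  \ottsym{,}  \mathit{x} \,  \mathord{:}  \, \sigma  \ottsym{,}  \Gamma_{{\mathrm{2}}}  \mathrel{\algefftransarrow{ \ottnt{S} } }  \Gamma' $). The three elaboration rules for typing contexts (\ElabG{Empty}, \ElabG{Var}, \ElabG{TyVar}) process the rightmost binding of the input context and prepend the translated image on the right of the output, so the derivation's shape follows $\Gamma_2$.

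First I would handle the base case where $\Gamma_2 =  \emptyset $. Then the last rule applied must be \ElabG{Var}, whose inversion gives $\Gamma' = \Gamma'_1, \ottnt{S}  \ottsym{(}  \mathit{x}  \ottsym{)} \,  \mathord{:}  \, \sigma$ with $ \Gamma_{{\mathrm{1}}}  \mathrel{\algefftransarrow{ \ottnt{S} } }  \Gamma'_{{\mathrm{1}}} $; taking $\Gamma'_2 =  \emptyset $ closes the case. For the inductive step, I split on whether the rightmost binding of $\Gamma_2$ is a term-variable declaration $\mathit{y} \,  \mathord{:}  \, \sigma'$ or a type-variable declaration $\alpha$. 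In either case, the last rule applied is \ElabG{Var} or \ElabG{TyVar}, respectively, and inversion yields a subderivation whose input is $\Gamma_{{\mathrm{1}}}  \ottsym{,}  \mathit{x} \,  \mathord{:}  \, \sigma  \ottsym{,}  \Gamma_{{\mathrm{2}}}'$ (for $\Gamma_2 = \Gamma_2', \_$). Applying the IH splits the resulting context as $\Gamma'_1, \ottnt{S}  \ottsym{(}  \mathit{x}  \ottsym{)} \,  \mathord{:}  \, \sigma, \Gamma'_2{}'$ and supplies $ \Gamma_{{\mathrm{1}}}  \ottsym{,}  \Gamma_{{\mathrm{2}}}'  \mathrel{\algefftransarrow{ \ottnt{S} } }  \Gamma'_{{\mathrm{1}}}  \ottsym{,}  \Gamma'_{{\mathrm{2}}}{}' $; reapplying \ElabG{Var} (resp.\ \ElabG{TyVar}) to append the appropriate tail yields the required factorization with $\Gamma'_2 = \Gamma'_2{}', \ottnt{S}(\mathit{y}) \,  \mathord{:}  \, \sigma'$ (resp.\ $\Gamma'_2 = \Gamma'_2{}', \alpha$), and the stripped-context elaboration $ \Gamma_{{\mathrm{1}}}  \ottsym{,}  \Gamma_{{\mathrm{2}}}  \mathrel{\algefftransarrow{ \ottnt{S} } }  \Gamma'_{{\mathrm{1}}}  \ottsym{,}  \Gamma'_{{\mathrm{2}}} $ follows by one more application of the same rule.

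There is no real obstacle here: each elaboration rule peels one binding off the right, so the proof is a routine inversion-plus-IH-plus-reapplication argument. The only mild subtlety is bookkeeping the fact that the rule applied at each step of the derivation is determined by the rightmost binding of the \emph{input} context, which guarantees that the image of the fixed entry $\mathit{x} \,  \mathord{:}  \, \sigma$ as $\ottnt{S}  \ottsym{(}  \mathit{x}  \ottsym{)} \,  \mathord{:}  \, \sigma$ is left untouched when we later strip it out; well-formedness side-conditions on $\ottnt{S}$ (freshness of $\ottnt{S}(\mathit{x})$) are irrelevant because we are only re-using the existing mapping.
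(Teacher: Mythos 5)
Your proof is correct and matches the paper's approach exactly: the paper disposes of this lemma with ``Straightforward by induction on $\Gamma_{{\mathrm{2}}}$,'' and your inversion-plus-IH-plus-reapplication argument is precisely the routine filling-in of that induction.
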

\begin{proof}
 Straightforward by induction on $\Gamma_{{\mathrm{2}}}$.
\end{proof}

\begin{lemma}{trans-preserving}
 Suppose that $ \ottnt{R}  \mathrel{\algefftransarrow{} }  \ottnt{r} $ and $ \Gamma  \mathrel{\algefftransarrow{ \ottnt{S} } }  \Gamma' $ and $\vdash  \Gamma'$.
 \begin{enumerate}
  \item If $\Gamma  \ottsym{;}  \ottnt{R}  \vdash  \ottnt{M}  \ottsym{:}  \ottnt{A} \,  |  \, \epsilon$, then
        there exists some $\ottnt{e}$ such that 
        $ \Gamma ;  \ottnt{R}   \vdash   \ottnt{M}  :  \ottnt{A}  \,  |   \, \epsilon  \mathrel{\algefftransarrow{ \ottnt{S} } }  \ottnt{e} $ and
        $\Gamma'  \ottsym{;}  \ottnt{r} \,   \vdash  \ottnt{e} \,   \ottsym{:}  \ottnt{A} \,  |  \, \epsilon$.
  \item If $\Gamma  \ottsym{;}  \ottnt{R}  \vdash  \ottnt{H}  \ottsym{:}  \ottnt{A} \,  |  \, \epsilon  \Rightarrow  \ottnt{B} \,  |  \, \epsilon'$, then
        there exists some $\ottnt{h}$ such that 
        $ \Gamma ;  \ottnt{R}   \vdash   \ottnt{H}  :  \ottnt{A}  \,  |   \, \epsilon  \, \Rightarrow   \ottnt{B}  \,  |   \, \epsilon'  \mathrel{\algefftransarrow{ \ottnt{S} } }  \ottnt{h} $ and
        $\Gamma'  \ottsym{;}  \ottnt{r} \,   \vdash  \ottnt{h}  \ottsym{:}  \ottnt{A} \,  |  \, \epsilon  \Rightarrow  \ottnt{B} \,  |  \, \epsilon'$.
 \end{enumerate}
\end{lemma}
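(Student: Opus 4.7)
The plan is to prove both statements simultaneously by mutual induction on the derivations of $\Gamma  \ottsym{;}  \ottnt{R}  \vdash  \ottnt{M}  \ottsym{:}  \ottnt{A} \,  |  \, \epsilon$ and $\Gamma  \ottsym{;}  \ottnt{R}  \vdash  \ottnt{H}  \ottsym{:}  \ottnt{A} \,  |  \, \epsilon  \Rightarrow  \ottnt{B} \,  |  \, \epsilon'$, doing a case analysis on the last typing rule applied. In each case I will exhibit the translated term/handler by reading off the conclusion of the corresponding elaboration rule and then derive its intermediate-language typing judgment by applying the matching typing rule to the results of the induction hypotheses. The context translation $ \Gamma  \mathrel{\algefftransarrow{ \ottnt{S} } }  \Gamma' $ is invoked through \reflem{trans-var-preserving} and \reflem{trans-tyvar-preserving} to relate memberships of $\Gamma$ and $\Gamma'$, and \reflem{trans-var-strengthening} is used whenever I need to decompose a translated context around a fresh binding.

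For the easy cases (\TSrule{Const}, \TSrule{Var}, \TSrule{App}, \TSrule{Op}, \TSrule{Weak}, \TSrule{Handle}, \THSrule{Return}) the argument is essentially syntactic: the elaboration rules match the typing rules one-for-one, well-formedness of the extended translated context follows from \reflem{surface-typing-context-wf} together with the fact that freshness conditions on the original $\Gamma$ transfer to $\Gamma'$ (so \WF{Var}/\WF{TyVar} still apply). For \TSrule{Abs} and \TSrule{Let} I extend $\ottnt{S}$ with $ \ottnt{S}  \,\circ\, \{  \mathit{x}  \, {\mapsto} \,  \mathit{x}  \} $, which gives $ \Gamma  \ottsym{,}  \mathit{x} \,  \mathord{:}  \, \ottnt{A}  \mathrel{\algefftransarrow{  \ottnt{S}  \,\circ\, \{  \mathit{x}  \, {\mapsto} \,  \mathit{x}  \}  } }  \Gamma'  \ottsym{,}  \mathit{x} \,  \mathord{:}  \, \ottnt{A} $ directly from \ElabG{Var}, and I feed this to the IH; the let-case additionally requires \ElabG{TyVar} for the generalized variables $ \algeffseqover{ \alpha } $.

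The two interesting cases are \TSrule{Resume} and \THSrule{Op}, where the resumption type participates nontrivially. In \THSrule{Op}, the body $\ottnt{M}$ is typechecked under the new resumption type $\ottnt{R'} = \ottsym{(}   \algeffseqover{ \alpha }   \ottsym{,}  \mathit{x} \,  \mathord{:}  \, \ottnt{C}  \ottsym{,}   \ottnt{D}   \rightarrow  \!  \epsilon'  \;  \ottnt{B}   \ottsym{)}$; I observe that $ \ottnt{R'}  \mathrel{\algefftransarrow{} }  \ottsym{(}   \algeffseqover{ \alpha }   \ottsym{,}  \ottnt{C}  \ottsym{,}   \ottnt{D}   \rightarrow  \!  \epsilon'  \;  \ottnt{B}   \ottsym{)} $ directly by definition and extend the context translation to $\Gamma  \ottsym{,}   \algeffseqover{ \alpha }   \ottsym{,}  \mathit{x} \,  \mathord{:}  \, \ottnt{C}$ using \ElabG{TyVar} and \ElabG{Var}, then appeal to the IH and conclude with \THrule{Op}. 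For \TSrule{Resume}, the premise uses the context $\Gamma_{{\mathrm{1}}}  \ottsym{,}  \Gamma_{{\mathrm{2}}}  \ottsym{,}   \algeffseqover{ \beta }   \ottsym{,}  \mathit{x} \,  \mathord{:}  \, \ottnt{A} \,  [   \algeffseqover{ \beta }   \ottsym{/}   \algeffseqover{ \alpha }   ] $; I pick a fresh $\mathit{y}$ and build $\ottnt{S'} \,  =  \,  \ottnt{S}  \,\circ\, \{  \mathit{x}  \, {\mapsto} \,  \mathit{y}  \} $. Using \reflem{trans-var-strengthening} on $\Gamma \,  =  \, \Gamma_{{\mathrm{1}}}  \ottsym{,}  \mathit{x} \,  \mathord{:}  \, \ottnt{D}  \ottsym{,}  \Gamma_{{\mathrm{2}}}$ I split $\Gamma'$ and form the translated premise context $\Gamma'_{{\mathrm{1}}}  \ottsym{,}  \Gamma'_{{\mathrm{2}}}  \ottsym{,}   \algeffseqover{ \beta }   \ottsym{,}  \mathit{y} \,  \mathord{:}  \, \ottnt{A} \,  [   \algeffseqover{ \beta }   \ottsym{/}   \algeffseqover{ \alpha }   ] $, verify it is well-formed (freshness of $ \algeffseqover{ \beta } $ and $\mathit{y}$), obtain by IH an $\ottnt{e}$ with the appropriate typing under $\ottnt{S'}$, and then apply \T{Resume}, where the side condition $ \algeffseqover{ \alpha }  \,  \in  \, \Gamma_{{\mathrm{1}}}$ transfers to $\Gamma'_{{\mathrm{1}}}$ by \reflem{trans-tyvar-preserving}.

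The main obstacle I anticipate is the bookkeeping in \TSrule{Resume}: I must rename $\mathit{x}$ to a fresh $\mathit{y}$ in a way that is compatible with both the context-translation relation $ \Gamma  \mathrel{\algefftransarrow{ \ottnt{S} } }  \Gamma' $ and the well-formedness hypothesis $\vdash  \Gamma'$, and I must reconcile the fact that in the surface-language premise the variable $\mathit{x}$ is overloaded (it appears in $\Gamma$ with the outer type $\ottnt{D}$ and is rebound in the premise with the renamed type $ \ottnt{A}    [   \algeffseqover{ \beta }   \ottsym{/}   \algeffseqover{ \alpha }   ]  $), whereas in the intermediate language the fresh $\mathit{y}$ cleanly separates these roles. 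Verifying that $ \Gamma_{{\mathrm{1}}}  \ottsym{,}  \Gamma_{{\mathrm{2}}}  \ottsym{,}   \algeffseqover{ \beta }   \ottsym{,}  \mathit{x} \,  \mathord{:}  \, \ottnt{A} \,  [   \algeffseqover{ \beta }   \ottsym{/}   \algeffseqover{ \alpha }   ]   \mathrel{\algefftransarrow{ \ottnt{S'} } }  \Gamma'_{{\mathrm{1}}}  \ottsym{,}  \Gamma'_{{\mathrm{2}}}  \ottsym{,}   \algeffseqover{ \beta }   \ottsym{,}  \mathit{y} \,  \mathord{:}  \, \ottnt{A} \,  [   \algeffseqover{ \beta }   \ottsym{/}   \algeffseqover{ \alpha }   ]  $ and that the target context is well-formed will require a small auxiliary lemma saying that $\ottnt{S'}$ agrees with the extended translation on the relevant variables; all other cases follow the standard rule-by-rule translation pattern and should go through with no surprises.
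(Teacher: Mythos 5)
Your proposal is correct and follows essentially the same route as the paper: mutual induction on the two typing derivations with a rule-by-rule case analysis, using \reflem{trans-var-preserving}, \reflem{trans-tyvar-preserving}, and \reflem{trans-var-strengthening} exactly where the paper does, including the fresh-variable renaming $\ottnt{S'} \,  =  \,  \ottnt{S}  \,\circ\, \{  \mathit{x}  \, {\mapsto} \,  \mathit{y}  \} $ in the \TSrule{Resume} case. The only difference is cosmetic: where you anticipate a small auxiliary lemma for compatibility of $\ottnt{S'}$ with the context translation, the paper simply observes that $\mathit{x} \,  \not\in  \,  \mathit{dom}  (  \Gamma_{{\mathrm{1}}}  \ottsym{,}  \Gamma_{{\mathrm{2}}}  ) $ (from well-formedness), so updating $\ottnt{S}$ at $\mathit{x}$ does not disturb $ \Gamma_{{\mathrm{1}}}  \ottsym{,}  \Gamma_{{\mathrm{2}}}  \mathrel{\algefftransarrow{ \ottnt{S'} } }  \Gamma'_{{\mathrm{1}}}  \ottsym{,}  \Gamma'_{{\mathrm{2}}} $, and a final application of \reflem{weakening} re-inserts $\ottnt{S}  \ottsym{(}  \mathit{x}  \ottsym{)} \,  \mathord{:}  \, \ottnt{D}$ before \T{Resume}.
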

\begin{proof}
 By mutual induction on the typing derivations.

 \begin{enumerate}
  \item By case analysis on the typing rule applied last.
        \begin{caseanalysis}
         \case \TSrule{Var}:
         We are given $\Gamma  \ottsym{;}  \ottnt{R}  \vdash  \mathit{x}  \ottsym{:}   \ottnt{B}    [   \algeffseqover{ \ottnt{C} }   \ottsym{/}   \algeffseqover{ \alpha }   ]   \,  |  \, \epsilon$ and, by inversion,
         $\vdash  \Gamma$ and $\mathit{x} \,  \mathord{:}  \,  \text{\unboldmath$\forall$}  \,  \algeffseqover{ \alpha }   \ottsym{.}  \ottnt{B} \,  \in  \, \Gamma$ and $\Gamma  \vdash   \algeffseqover{ \ottnt{C} } $.
         By \reflem{trans-var-preserving}, $\ottnt{S}  \ottsym{(}  \mathit{x}  \ottsym{)} \,  \mathord{:}  \,  \text{\unboldmath$\forall$}  \,  \algeffseqover{ \alpha }   \ottsym{.}  \ottnt{B} \,  \in  \, \Gamma'$.
         Thus, $\ottnt{S}  \ottsym{(}  \mathit{x}  \ottsym{)}$ is defined, so
         \[
           \Gamma ;  \ottnt{R}   \vdash   \mathit{x}  :   \ottnt{B}    [   \algeffseqover{ \ottnt{C} }   \ottsym{/}   \algeffseqover{ \alpha }   ]    \,  |   \, \epsilon  \mathrel{\algefftransarrow{ \ottnt{S} } }  \ottnt{S}  \ottsym{(}  \mathit{x}  \ottsym{)} \,  \algeffseqover{ \ottnt{C} }  
         \]
         by \Elab{Var}.
         By \reflem{trans-tyvar-preserving},
         $\Gamma'  \vdash   \algeffseqover{ \ottnt{C} } $.
         By \T{Var}, we finish.

         \case \TSrule{Const}: By \Elab{Const} and \T{Const}.
         \case \TSrule{Abs}:
         We are given $\Gamma  \ottsym{;}  \ottnt{R}  \vdash   \lambda\!  \, \mathit{x}  \ottsym{.}  \ottnt{M'}  \ottsym{:}   \ottnt{B}   \rightarrow  \!  \epsilon'  \;  \ottnt{C}  \,  |  \, \epsilon$ and, by inversion,
         $\Gamma  \ottsym{,}  \mathit{x} \,  \mathord{:}  \, \ottnt{B}  \ottsym{;}  \ottnt{R}  \vdash  \ottnt{M'}  \ottsym{:}  \ottnt{C} \,  |  \, \epsilon'$.
         Without loss of generality, we can suppose that $\mathit{x}$ does not occur
         in $\ottnt{S}$ and $\Gamma'$.
         Since $ \Gamma  \mathrel{\algefftransarrow{ \ottnt{S} } }  \Gamma' $, we have $ \Gamma  \ottsym{,}  \mathit{x} \,  \mathord{:}  \, \ottnt{B}  \mathrel{\algefftransarrow{  \ottnt{S}  \,\circ\, \{  \mathit{x}  \, {\mapsto} \,  \mathit{x}  \}  } }  \Gamma  \ottsym{,}  \mathit{x} \,  \mathord{:}  \, \ottnt{B} $
         by \ElabG{Var}.
         By \reflem{surface-typing-context-wf}, $\vdash  \Gamma  \ottsym{,}  \mathit{x} \,  \mathord{:}  \, \ottnt{B}$.
         Thus, $\Gamma  \vdash  \ottnt{B}$.
         By \reflem{trans-tyvar-preserving}, $\Gamma'  \vdash  \ottnt{B}$.
         Thus, by \WF{Var}, $\vdash  \Gamma'  \ottsym{,}  \mathit{x} \,  \mathord{:}  \, \ottnt{B}$.
         By the IH, $ \Gamma  \ottsym{,}  \mathit{x} \,  \mathord{:}  \, \ottnt{B} ;  \ottnt{R}   \vdash   \ottnt{M'}  :  \ottnt{C}  \,  |   \, \epsilon'  \mathrel{\algefftransarrow{  \ottnt{S}  \,\circ\, \{  \mathit{x}  \, {\mapsto} \,  \mathit{x}  \}  } }  \ottnt{e'} $
         for some $\ottnt{e'}$ such that
         $\Gamma'  \ottsym{,}  \mathit{x} \,  \mathord{:}  \, \ottnt{B}  \ottsym{;}  \ottnt{r} \,   \vdash  \ottnt{e'} \,   \ottsym{:}  \ottnt{C} \,  |  \, \epsilon'$.
         By \Elab{Abs} and \T{Abs}, we finish.

         \case \TSrule{App}: By the IHs, \Elab{App}, and \T{App}.
         \case \TSrule{Op}: By the IH, \Elab{Op}, and \T{Op} with \reflem{trans-tyvar-preserving}.
         \case \TSrule{Let}: Similar to \TSrule{Abs}.
         \case \TSrule{Weak}: By the IH, \Elab{Weak}, and \T{Weak}.
         \case \TSrule{Handle}: By the IH, \Elab{Handle}, and \T{Handle}.
         \case \TSrule{Resume}:
         We are given
         $\Gamma_{{\mathrm{1}}}  \ottsym{,}  \mathit{x} \,  \mathord{:}  \, \ottnt{D}  \ottsym{,}  \Gamma_{{\mathrm{2}}}  \ottsym{;}  \ottsym{(}   \algeffseqover{ \alpha }   \ottsym{,}  \mathit{x} \,  \mathord{:}  \, \ottnt{B}  \ottsym{,}   \ottnt{C}   \rightarrow  \!  \epsilon'  \;  \ottnt{A}   \ottsym{)}  \vdash  \mathsf{resume} \, \ottnt{M'}  \ottsym{:}  \ottnt{A} \,  |  \, \epsilon$
         and, by inversion,
         \begin{itemize}
          \item $\vdash  \Gamma_{{\mathrm{1}}}  \ottsym{,}  \mathit{x} \,  \mathord{:}  \, \ottnt{D}  \ottsym{,}  \Gamma_{{\mathrm{2}}}$,
          \item $ \algeffseqover{ \alpha }  \,  \in  \, \Gamma_{{\mathrm{1}}}$,
          \item $\epsilon' \,  \subseteq  \, \epsilon$, and
          \item $\Gamma_{{\mathrm{1}}}  \ottsym{,}  \Gamma_{{\mathrm{2}}}  \ottsym{,}   \algeffseqover{ \beta }   \ottsym{,}  \mathit{x} \,  \mathord{:}  \, \ottnt{B} \,  [   \algeffseqover{ \beta }   \ottsym{/}   \algeffseqover{ \alpha }   ]   \ottsym{;}  \ottsym{(}   \algeffseqover{ \alpha }   \ottsym{,}  \mathit{x} \,  \mathord{:}  \, \ottnt{B}  \ottsym{,}   \ottnt{C}   \rightarrow  \!  \epsilon'  \;  \ottnt{A}   \ottsym{)}  \vdash  \ottnt{M'}  \ottsym{:}   \ottnt{C}    [   \algeffseqover{ \beta }   \ottsym{/}   \algeffseqover{ \alpha }   ]   \,  |  \, \epsilon$.
         \end{itemize}
         Let $\mathit{y}$ be a fresh variable.
         Since $ \Gamma_{{\mathrm{1}}}  \ottsym{,}  \mathit{x} \,  \mathord{:}  \, \ottnt{D}  \ottsym{,}  \Gamma_{{\mathrm{2}}}  \mathrel{\algefftransarrow{ \ottnt{S} } }  \Gamma' $, there exist some $\Gamma'_{{\mathrm{1}}}$ and $\Gamma'_{{\mathrm{2}}}$
         such that
         $\Gamma' \,  =  \, \Gamma'_{{\mathrm{1}}}  \ottsym{,}  \ottnt{S}  \ottsym{(}  \mathit{x}  \ottsym{)} \,  \mathord{:}  \, \ottnt{D}  \ottsym{,}  \Gamma'_{{\mathrm{2}}}$ and $ \Gamma_{{\mathrm{1}}}  \ottsym{,}  \Gamma_{{\mathrm{2}}}  \mathrel{\algefftransarrow{ \ottnt{S} } }  \Gamma'_{{\mathrm{1}}}  \ottsym{,}  \Gamma'_{{\mathrm{2}}} $
         by \reflem{trans-var-strengthening}.
         Since $\vdash  \Gamma_{{\mathrm{1}}}  \ottsym{,}  \mathit{x} \,  \mathord{:}  \, \ottnt{D}  \ottsym{,}  \Gamma_{{\mathrm{2}}}$, $\mathit{x} \,  \not\in  \,  \mathit{dom}  (  \Gamma_{{\mathrm{1}}}  \ottsym{,}  \Gamma_{{\mathrm{2}}}  ) $.
         Thus, $ \Gamma_{{\mathrm{1}}}  \ottsym{,}  \Gamma_{{\mathrm{2}}}  \mathrel{\algefftransarrow{  \ottnt{S}  \,\circ\, \{  \mathit{x}  \, {\mapsto} \,  \mathit{y}  \}  } }  \Gamma'_{{\mathrm{1}}}  \ottsym{,}  \Gamma'_{{\mathrm{2}}} $.
         By \ElabG{Var} and \ElabG{TyVar},
         $ \Gamma_{{\mathrm{1}}}  \ottsym{,}  \Gamma_{{\mathrm{2}}}  \ottsym{,}   \algeffseqover{ \beta }   \ottsym{,}  \mathit{x} \,  \mathord{:}  \, \ottnt{B} \,  [   \algeffseqover{ \beta }   \ottsym{/}   \algeffseqover{ \alpha }   ]   \mathrel{\algefftransarrow{  \ottnt{S}  \,\circ\, \{  \mathit{x}  \, {\mapsto} \,  \mathit{y}  \}  } }  \Gamma'_{{\mathrm{1}}}  \ottsym{,}  \Gamma'_{{\mathrm{2}}}  \ottsym{,}   \algeffseqover{ \beta }   \ottsym{,}  \mathit{y} \,  \mathord{:}  \, \ottnt{B} \,  [   \algeffseqover{ \beta }   \ottsym{/}   \algeffseqover{ \alpha }   ]  $.
         Since $\vdash  \Gamma'_{{\mathrm{1}}}  \ottsym{,}  \ottnt{S}  \ottsym{(}  \mathit{x}  \ottsym{)} \,  \mathord{:}  \, \ottnt{D}  \ottsym{,}  \Gamma'_{{\mathrm{2}}}$, we have $\vdash  \Gamma'_{{\mathrm{1}}}  \ottsym{,}  \Gamma'_{{\mathrm{2}}}  \ottsym{,}   \algeffseqover{ \beta }   \ottsym{,}  \mathit{y} \,  \mathord{:}  \, \ottnt{B} \,  [   \algeffseqover{ \beta }   \ottsym{/}   \algeffseqover{ \alpha }   ] $.
         by Lemmas~\ref{lem:strengthening-typing-context},
         \ref{lem:surface-typing-context-wf},
         \ref{lem:trans-tyvar-preserving}, and
         \ref{lem:weakening-typing-context}.
         Thus, by the IH,
         \[
           \Gamma_{{\mathrm{1}}}  \ottsym{,}  \Gamma_{{\mathrm{2}}}  \ottsym{,}   \algeffseqover{ \beta }   \ottsym{,}  \mathit{x} \,  \mathord{:}  \, \ottnt{B} \,  [   \algeffseqover{ \beta }   \ottsym{/}   \algeffseqover{ \alpha }   ]  ;  \ottsym{(}   \algeffseqover{ \alpha }   \ottsym{,}  \mathit{x} \,  \mathord{:}  \, \ottnt{B}  \ottsym{,}   \ottnt{C}   \rightarrow  \!  \epsilon'  \;  \ottnt{A}   \ottsym{)}   \vdash   \ottnt{M'}  :   \ottnt{C}    [   \algeffseqover{ \beta }   \ottsym{/}   \algeffseqover{ \alpha }   ]    \,  |   \, \epsilon  \mathrel{\algefftransarrow{  \ottnt{S}  \,\circ\, \{  \mathit{x}  \, {\mapsto} \,  \mathit{y}  \}  } }  \ottnt{e'} 
         \]
         for some $\ottnt{e'}$ such that
         $\Gamma'_{{\mathrm{1}}}  \ottsym{,}  \Gamma'_{{\mathrm{2}}}  \ottsym{,}   \algeffseqover{ \beta }   \ottsym{,}  \mathit{y} \,  \mathord{:}  \, \ottnt{B} \,  [   \algeffseqover{ \beta }   \ottsym{/}   \algeffseqover{ \alpha }   ]   \ottsym{;}  \ottnt{r} \,   \vdash  \ottnt{e'} \,   \ottsym{:}   \ottnt{C}    [   \algeffseqover{ \beta }   \ottsym{/}   \algeffseqover{ \alpha }   ]   \,  |  \, \epsilon$.
         By applying \Elab{Resume},
         \[
           \Gamma_{{\mathrm{1}}}  \ottsym{,}  \mathit{x} \,  \mathord{:}  \, \ottnt{D}  \ottsym{,}  \Gamma_{{\mathrm{2}}} ;  \ottsym{(}   \algeffseqover{ \alpha }   \ottsym{,}  \mathit{x} \,  \mathord{:}  \, \ottnt{B}  \ottsym{,}   \ottnt{C}   \rightarrow  \!  \epsilon'  \;  \ottnt{A}   \ottsym{)}   \vdash   \mathsf{resume} \, \ottnt{M'}  :  \ottnt{A}  \,  |   \, \epsilon  \mathrel{\algefftransarrow{ \ottnt{S} } }  \mathsf{resume} \,  \algeffseqover{ \beta }  \, \mathit{y}  \ottsym{.}  \ottnt{e'} .
         \]
         Since $\Gamma'_{{\mathrm{1}}}  \ottsym{,}  \ottnt{S}  \ottsym{(}  \mathit{x}  \ottsym{)} \,  \mathord{:}  \, \ottnt{D}  \ottsym{,}  \Gamma'_{{\mathrm{2}}}  \ottsym{,}   \algeffseqover{ \beta }   \ottsym{,}  \mathit{y} \,  \mathord{:}  \, \ottnt{B} \,  [   \algeffseqover{ \beta }   \ottsym{/}   \algeffseqover{ \alpha }   ]   \ottsym{;}  \ottnt{r} \,   \vdash  \ottnt{e'} \,   \ottsym{:}   \ottnt{C}    [   \algeffseqover{ \beta }   \ottsym{/}   \algeffseqover{ \alpha }   ]   \,  |  \, \epsilon$
         by \reflem{weakening} and
         $ \algeffseqover{ \alpha }  \,  \in  \, \Gamma'_{{\mathrm{1}}}  \ottsym{,}  \ottnt{S}  \ottsym{(}  \mathit{x}  \ottsym{)} \,  \mathord{:}  \, \ottnt{D}  \ottsym{,}  \Gamma'_{{\mathrm{2}}}$ by \reflem{trans-tyvar-preserving},
         we have
         \[
          \Gamma'_{{\mathrm{1}}}  \ottsym{,}  \ottnt{S}  \ottsym{(}  \mathit{x}  \ottsym{)} \,  \mathord{:}  \, \ottnt{D}  \ottsym{,}  \Gamma'_{{\mathrm{2}}}  \ottsym{;}  \ottnt{r} \,   \vdash  \mathsf{resume} \,  \algeffseqover{ \beta }  \, \mathit{y}  \ottsym{.}  \ottnt{e'} \,   \ottsym{:}  \ottnt{A} \,  |  \, \epsilon
         \]
         by \T{Resume}.
         
        \end{caseanalysis}
  \item By case analysis on the typing rule applied last.
        \begin{caseanalysis}
         \case \THSrule{Return}: Similar to \TSrule{Abs}.
         \case \THSrule{Op}: Similar to \TSrule{Abs}.
        \end{caseanalysis}
 \end{enumerate}
\end{proof}

\ifrestate
\thmElab*
\else
\begin{theorem}[Elaboration is type-preserving]{trans-preserving}
 If $\ottnt{M}$ is a well-typed program of $\ottnt{A}$,
 then $  \emptyset  ;   \mathsf{none}    \vdash   \ottnt{M}  :  \ottnt{A}  \,  |   \,  \langle \rangle   \mathrel{\algefftransarrow{  \emptyset  } }  \ottnt{e} $ and
 $ \emptyset   \ottsym{;}   \mathsf{none}  \,   \vdash  \ottnt{e} \,   \ottsym{:}  \ottnt{A} \,  |  \,  \langle \rangle $ for some $\ottnt{e}$.
\end{theorem}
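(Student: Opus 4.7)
The plan is to derive the theorem as an immediate corollary of a strengthened statement, namely a mutual type-preservation lemma for the two elaboration judgments that is parametric in $\Gamma, \ottnt{R}, \ottnt{S}$ and in the resumption-type translation $ \ottnt{R}  \mathrel{\algefftransarrow{} }  \ottnt{r} $. Specifically, I would prove: assuming $ \Gamma  \mathrel{\algefftransarrow{ \ottnt{S} } }  \Gamma' $, $ \ottnt{R}  \mathrel{\algefftransarrow{} }  \ottnt{r} $, and $\vdash  \Gamma'$, (i) every derivation of $\Gamma  \ottsym{;}  \ottnt{R}  \vdash  \ottnt{M}  \ottsym{:}  \ottnt{A} \,  |  \, \epsilon$ yields some $\ottnt{e}$ with $ \Gamma ;  \ottnt{R}   \vdash   \ottnt{M}  :  \ottnt{A}  \,  |   \, \epsilon  \mathrel{\algefftransarrow{ \ottnt{S} } }  \ottnt{e} $ and $\Gamma'  \ottsym{;}  \ottnt{r} \,   \vdash  \ottnt{e} \,   \ottsym{:}  \ottnt{A} \,  |  \, \epsilon$, and (ii) the analogous statement for handlers. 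The theorem then follows by specializing to $\Gamma \,  =  \,  \emptyset $, $\ottnt{R} \,  =  \,  \mathsf{none} $, $\ottnt{S} \,  =  \,  \emptyset $, $\epsilon \,  =  \,  \langle \rangle $, noting that $  \emptyset   \mathrel{\algefftransarrow{  \emptyset  } }   \emptyset  $ by \ElabG{Empty}, $  \mathsf{none}   \mathrel{\algefftransarrow{} }   \mathsf{none}  $, and $\vdash   \emptyset $ by \WF{Empty}.

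The strengthened lemma would be proved by mutual induction on the typing derivations of $\ottnt{M}$ and $\ottnt{H}$. For the straightforward cases such as \TSrule{Const}, \TSrule{App}, \TSrule{Op}, \TSrule{Handle}, and \TSrule{Weak}, I would simply invoke the induction hypotheses and the matching elaboration/typing rule (\Elab{Const}/\T{Const}, and so on), appealing to a small lemma saying that $ \Gamma  \mathrel{\algefftransarrow{ \ottnt{S} } }  \Gamma' $ preserves type-variable membership (so $\Gamma  \vdash   \algeffseqover{ \ottnt{B} } $ transfers to $\Gamma'  \vdash   \algeffseqover{ \ottnt{B} } $). For \TSrule{Var}, I would use a lemma saying that if $\mathit{x} \,  \mathord{:}  \, \sigma \,  \in  \, \Gamma$ then $\ottnt{S}  \ottsym{(}  \mathit{x}  \ottsym{)} \,  \mathord{:}  \, \sigma \,  \in  \, \Gamma'$, matching the elaboration rule \Elab{Var}. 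For binder cases like \TSrule{Abs}, \TSrule{Let}, \THSrule{Return}, and \THSrule{Op}, I would extend $\ottnt{S}$ to map the newly introduced source variable to itself (picking a fresh name in $\Gamma'$) so that $ \Gamma  \ottsym{,}  \mathit{x} \,  \mathord{:}  \, \ottnt{A}  \mathrel{\algefftransarrow{  \ottnt{S}  \,\circ\, \{  \mathit{x}  \, {\mapsto} \,  \mathit{x}  \}  } }  \Gamma'  \ottsym{,}  \mathit{x} \,  \mathord{:}  \, \ottnt{A} $ via \ElabG{Var}, then apply the induction hypothesis; well-formedness of the extended context is obtained from \WF{Var} after moving $\Gamma  \vdash  \ottnt{A}$ to $\Gamma'  \vdash  \ottnt{A}$ through the tyvar-preservation lemma.

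The main obstacle I expect is \TSrule{Resume}, because the source clause uses a single variable $\mathit{x}$ that already appears in the outer $\Gamma_{{\mathrm{1}}}  \ottsym{,}  \mathit{x} \,  \mathord{:}  \, \ottnt{D}  \ottsym{,}  \Gamma_{{\mathrm{2}}}$ while the inner judgment rebinds $\mathit{x}$ at a renamed type $\ottnt{A} \,  [   \algeffseqover{ \beta }   \ottsym{/}   \algeffseqover{ \alpha }   ] $; the target \T{Resume} on the other hand requires an entirely fresh variable $\mathit{y}$. The fix is to pick $\mathit{y}$ fresh for $\Gamma'$ and update $\ottnt{S}$ to $ \ottnt{S}  \,\circ\, \{  \mathit{x}  \, {\mapsto} \,  \mathit{y}  \} $ before recursing. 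To make this go through I need a small strengthening/weakening lemma about the $ \mathrel{\algefftransarrow{ \ottnt{S} } } $ judgment that lets me split $ \Gamma_{{\mathrm{1}}}  \ottsym{,}  \mathit{x} \,  \mathord{:}  \, \ottnt{D}  \ottsym{,}  \Gamma_{{\mathrm{2}}}  \mathrel{\algefftransarrow{ \ottnt{S} } }  \Gamma'_{{\mathrm{1}}}  \ottsym{,}  \ottnt{S}  \ottsym{(}  \mathit{x}  \ottsym{)} \,  \mathord{:}  \, \ottnt{D}  \ottsym{,}  \Gamma'_{{\mathrm{2}}} $ and then rebuild $ \Gamma_{{\mathrm{1}}}  \ottsym{,}  \Gamma_{{\mathrm{2}}}  \ottsym{,}   \algeffseqover{ \beta }   \ottsym{,}  \mathit{x} \,  \mathord{:}  \, \ottnt{A} \,  [   \algeffseqover{ \beta }   \ottsym{/}   \algeffseqover{ \alpha }   ]   \mathrel{\algefftransarrow{  \ottnt{S}  \,\circ\, \{  \mathit{x}  \, {\mapsto} \,  \mathit{y}  \}  } }  \Gamma'_{{\mathrm{1}}}  \ottsym{,}  \Gamma'_{{\mathrm{2}}}  \ottsym{,}   \algeffseqover{ \beta }   \ottsym{,}  \mathit{y} \,  \mathord{:}  \, \ottnt{A} \,  [   \algeffseqover{ \beta }   \ottsym{/}   \algeffseqover{ \alpha }   ] $, after which the induction hypothesis delivers the body $\ottnt{e'}$ and \Elab{Resume}/\T{Resume} give the conclusion. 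Well-formedness of the inner target context follows from $\vdash  \Gamma'_{{\mathrm{1}}}  \ottsym{,}  \ottnt{S}  \ottsym{(}  \mathit{x}  \ottsym{)} \,  \mathord{:}  \, \ottnt{D}  \ottsym{,}  \Gamma'_{{\mathrm{2}}}$ using the strengthening/weakening lemmas for typing contexts and the freshness of $\mathit{y}$ and $ \algeffseqover{ \beta } $. Once that case is dispatched, the remaining cases and the mutual handler judgment cases go through by the same recipe.
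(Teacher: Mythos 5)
Your proposal matches the paper's proof: the theorem is derived by specializing a mutual, context-parametric type-preservation lemma for the two elaboration judgments (proved by mutual induction on the typing derivations), with the same auxiliary lemmas — variable/type-variable preservation under $ \Gamma  \mathrel{\algefftransarrow{ \ottnt{S} } }  \Gamma' $ and a splitting lemma for the context elaboration — and the same treatment of the \TSrule{Resume} case via a fresh $\mathit{y}$ and the update $ \ottnt{S}  \,\circ\, \{  \mathit{x}  \, {\mapsto} \,  \mathit{y}  \} $. No gaps.
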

\fi
\begin{proof}
 By \reflem{trans-preserving}.
\end{proof}

\fi}

\end{document}